\documentclass[11pt]{article}
\usepackage{amsmath}
\usepackage{graphicx}
\usepackage{enumerate}
\usepackage{natbib}
\usepackage{url} 

\usepackage{setspace}
\singlespacing

\usepackage[normalem]{ulem}

\usepackage{color}

\usepackage{xcolor}

\usepackage{floatrow}
\usepackage{graphicx}
\graphicspath{{images/}}
\usepackage{caption}
\usepackage{subcaption}

\usepackage{dirtytalk}

\usepackage{newtxtext}
\usepackage[subscriptcorrection]{newtxmath}

\usepackage{amsthm}

\newtheorem{theorem}{Theorem}
\newtheorem{proposition}{Proposition}

\newtheorem{lemma}{Lemma}
\newtheorem{remark}{Remark}
\newtheorem{assumption}{Assumption}

\usepackage[plain,noend]{algorithm2e}

\makeatletter
\renewcommand{\algocf@captiontext}[2]{#1\algocf@typo. \AlCapFnt{}#2} 
\def\@algocf@capt@plain{top}
\renewcommand{\algocf@makecaption}[2]{%
  \addtolength{\hsize}{\algomargin}%
  \sbox\@tempboxa{\algocf@captiontext{#1}{#2}}%
  \ifdim\wd\@tempboxa >\hsize
    \hskip .5\algomargin%
    \parbox[t]{\hsize}{\algocf@captiontext{#1}{#2}}
  \else%
    \global\@minipagefalse%
    \hbox to\hsize{\box\@tempboxa}
  \fi%
  \addtolength{\hsize}{-\algomargin}%
}
\makeatother

\usepackage{dirtytalk}
\usepackage{comment}

\usepackage{hyperref}
\usepackage{xr-hyper}
\externaldocument{supplementary_material}

\usepackage{multirow}

\newcommand{\blind}{1}

\addtolength{\oddsidemargin}{-.5in}%
\addtolength{\evensidemargin}{-1in}%
\addtolength{\textwidth}{1in}%
\addtolength{\textheight}{1.7in}%
\addtolength{\topmargin}{-1in}%

\begin{document}

\def\spacingset#1{\renewcommand{\baselinestretch}%
{#1}\small\normalsize} \spacingset{1}


\if1\blind
{
  \title{\bf Spectral decomposition-assisted multi-study
  \\ factor analysis}
  \author{Lorenzo Mauri
  \hspace{.2cm}\\
    Department of Statistical Science, Duke University\\
    \small{\texttt{lorenzo.mauri@duke.edu}}\\
    Niccolò Anceschi \\
    Department of Statistical Science, Duke University\\
    and\\
    David B. Dunson\\
        Department of Statistical Science, Duke University\\
    }
  \maketitle
} \fi

\if0\blind
{
  \bigskip
  \bigskip
  \bigskip
  \begin{center}
    {\LARGE\bf Spectral decomposition-assisted multi-study \\ factor analysis}
\end{center}
  \medskip
} \fi

\bigskip


\begin{abstract}
This article focuses on covariance estimation for multi-study data. Popular approaches employ factor-analytic terms with shared and study-specific loadings that decompose the variance into (i) a shared low-rank component, (ii) study-specific low-rank components, and (iii) a diagonal term capturing idiosyncratic variability. Our proposed methodology estimates the latent factors via spectral decompositions, with a novel approach for separating shared and specific factors, and infers the factor loadings and residual variances via surrogate Bayesian regressions. The resulting posterior has a simple product form across outcomes, bypassing the need for Markov chain Monte Carlo sampling and facilitating parallelization. The proposed methodology has major advantages over current Bayesian competitors in terms of computational speed, scalability and stability while also having strong frequentist guarantees. The theory and methods also add to the rich literature on frequentist methods for factor models with shared and group-specific components of variation. The approximation error decreases as the sample size and the data dimension diverge, formalizing a blessing of dimensionality. We show favorable asymptotic properties, including central limit theorems for point estimators and posterior contraction, and excellent empirical performance in simulations. The methods are applied to integrate three studies on gene associations among immune cells.
\end{abstract}

\noindent%
{\it Keywords:} Data integration; 
Factor analysis; 
High-dimensional; 
Multi-study; 
Scalable Bayesian computation;
\vfill

\newpage

\section{Introduction}

Due in part to the importance of reproducibility and generalizability, it is routine to collect the same type of data in multiple studies.
In omics there tends to be substantial study-to-study variation leading to difficulty replicating the findings \citep{aach00, irizarry03}. To make reasonable conclusions from such data, one should analyze the data together using statistical methods that allow inferences on common versus study-specific components of variation. There is a rich recent literature on appropriate methods \citep{bai15, Coroneo2016UnspannedMacroeconomic, Ando2016PanelData, Ando2017ClusteringHuge, franks_hoff_2019, msfa, bmsfa, pfa, bfr, bmsnnmf, combinatorial_msfa, vi_bmsfa, sufa, adaptive_fa}.

A popular approach reduces dimensionality introducing factor analytic terms and models the $i$-th observation from the $s$-th study as
\begin{equation} \label{eq:fact_model}
     \begin{aligned}
      \mathbf{y}_{si}  &=  \mathbf{\Lambda} \mathbf{\eta}_{si} + \mathbf{\Gamma}_s \mathbf{\phi}_{si} + \mathbf{e}_{si}, 
      \end{aligned}  \quad (i=1, \dots, n_s; s=1, \dots, S),
\end{equation}
where 
$s$ indexes the studies, $n_s$ is sample size of study $s$, 
$\mathbf{\eta}_{si}$ and $\mathbf{\phi}_{si}$ are shared and study-specific factors, respectively, and 
${\mathbf{\Lambda}} \in \mathbb R^{p \times k_0 }$ 
and ${\mathbf{\Gamma}}_s \in \mathbb R^{p \times q_s}$ are loadings on these two sets of factors. 
{Inference proceeds after imposing distributional or moment assumptions on the latent factors $\mathbf{\eta}_{si}$ and $\mathbf{\phi}_{si}$ and the residual components $\mathbf{e}_{si}$. For instance, if we assume 
\begin{equation}\label{eq:distributions}\mathbf{\eta}_{si}  \sim N_{k_0}(0, \mathbf{I}_{k_0}), \quad \mathbf{\phi}_{si}\sim N_{q_s}(0, \mathbf{I}_{q_s}), \quad \mathbf{e}_{si} \sim N_p(0, \mathbf{\Sigma}_s) \quad (i=1, \dots, n_s; s=1, \dots, S),\end{equation}} where $ \mathbf{\Sigma}_s = \text{diag}(\sigma_{s1}^2, \dots, \sigma_{sp}^2)$,
integrating out the latent factors, we obtain an equivalent representation: 
\begin{equation}
\mathbf{y}_{si}  \sim N_p\left(0, {\mathbf{\Lambda}} {\mathbf{\Lambda}}^\top + {\mathbf{\Gamma}}_s{\mathbf{\Gamma}}_s^\top + \mathbf{\Sigma}_s \right),\quad (i=1, \dots, n_s; s=1, \dots, S).
    \label{eq:fact_model_int}
\end{equation}
Model \eqref{eq:fact_model_int} decomposes the covariance of observations in each study as a sum of three components: (i) a low-rank  component shared across studies (${\mathbf{\Lambda}} {\mathbf{\Lambda}}^\top $), (ii) a study-specific low-rank component (${\mathbf{\Gamma}}_s{\mathbf{\Gamma}}_s^\top$), and (iii) a diagonal term capturing idiosyncratic variability of each variable ($\mathbf{\Sigma}_s$).

\citet{msfa} developed an expectation conditional maximization algorithm for maximum likelihood estimation of model
\eqref{eq:fact_model}-\eqref{eq:distributions}, while \citet{bmsfa} adopt a Bayesian approach to perform posterior computations using a Gibbs sampler. \citet{bfr} extend the model, allowing some of the variability to be explained by observed covariates, while \citet{combinatorial_msfa} and \citet{adaptive_fa} let some of the loadings be shared only by subsets of studies. {\citet{bai15} focus on dynamic factor models, treating factors as fixed and unknown, and studying the conditions needed to identify factor loading matrices. They rely on Gibbs sampling for posterior computation.}
Both expectation maximization and Gibbs sampling tend to suffer from slow convergence or mixing when the number of variables $p$ is large, which motivated \citet{vi_bmsfa} to develop variational approximations. Such variational inference algorithms lack theoretical guarantees, can massively underestimate uncertainty, and, as we show in the numerical experiments section, the estimation accuracy is often unsatisfactory. 

Subtle identifiability issues arise with the model \eqref{eq:fact_model}. As noted in \citet{sufa}, model \eqref{eq:fact_model_int} is not identifiable without further restriction, since $\mathbf{y}_{si}$ would have the same marginal distribution if ${\mathbf{\Lambda}}$ and ${\mathbf{\Gamma}}_s$ were replaced with a matrix of $0$'s and ${\mathbf{\bar \Gamma}}_s = [{\mathbf{\Lambda}} ~ {\mathbf{\Gamma}}_s]$ respectively. \citet{msfa} ensure identifiability, up to orthogonal rotations of the loadings, by requiring the matrix obtained combining ${\mathbf{\Lambda}}$ and the ${\mathbf{\Gamma}}_s$'s to be full rank; see Assumption \ref{assumption:li} in Section \ref{sec:theory} of this paper. However, they do not impose this constraint in their estimation procedure, leading to poor empirical performance in high dimensions.   \citet{sufa} achieve identifiability through a shared subspace restriction that lets ${\mathbf{\Gamma}}_s = {\mathbf{\Lambda}} A_s$, with $A_s \in \mathbb R^{k_0 \times q_s}$. This choice may be too restrictive in cases with substantial variation between studies in that it requires relatively few study-specific factors.
{There are many other Bayesian approaches to (single-study) factor analysis \citep[e.g.][]{sparse_infinite, cusp, rotate}, but such identifiability issues complicate their adaptation to model \eqref{eq:fact_model}}. \citet{pfa} proposes an alternative to \eqref{eq:fact_model}, which incorporates a study-specific multiplicative perturbation from a shared factor model.

There is a parallel line of research developing spectral estimation techniques for multi-group data where some of the principal axes are shared across groups. \citet{boik02} propose a joint model for the eigenstructure in multiple covariance matrices, with Fisher scoring used to compute maximum likelihood estimates. \citet{hoff_09} develops a related approach using a hierarchical model to allow eigenvectors to be similar but not equal between groups. These approaches have the disadvantage of modeling covariances as exactly low rank without a residual noise term. \citet{franks_hoff_2019} address this problem via a spiked covariance model that incorporates a shared subspace, using an expectation maximization algorithm to infer the shared subspace and a Gibbs sampler for group-specific covariance matrices.  Alternatively, \citet{hu_et_al} jointly estimate multiple covariance matrices while shrinking estimates towards a pooled sample covariance. {\citet{Ando2016PanelData, Ando2017ClusteringHuge} develop a spectral decomposition to a generalization of model \eqref{eq:fact_model}, which also includes observed covariates and unknown grouping structure. However, such methods do not provide uncertainty quantification. Moreover, as we show in our numeric experiments, their estimates are not competitive unless the number of outcomes is extremely large ($p\gg 200$) and the loadings are close to orthogonal.}

\citet{fable} highlighted the poor computational performance of Gibbs samplers in single-study factor models. They estimate factors via a singular value decomposition, and infer loadings and residual error variances via conjugate normal-inverse gamma priors, showing concentration of the induced posterior on the covariance at the true values and validity of the coverage of entrywise credible intervals. This approach is related to joint maximum likelihood or \textit{maximum a posteriori} estimates, which have been shown to be consistent for generalized linear latent variable models \citep{gllvm}, when both the sample size and the data dimensionality $p$ diverge \citep{chen_jmle, chen_identfiability, lee24, flair, fama, basil}. {Although we build on some aspects of their approach, modifications to multi-study contexts require substantial methodological and theoretical novelty.}

Motivated by these considerations, we propose a Bayesian Latent Analysis through Spectral Training (\texttt{BLAST}) methodology for inference under model \eqref{eq:fact_model}. \texttt{BLAST} starts with a novel approach to inferring shared and study-specific factors from spectral decompositions. This is achieved by estimating the directions of variation in each study and inferring the subset spanned by a common ${\mathbf{\Lambda}}$. This, in turn, disentangles the variation generated by the $\mathbf{\eta}_{si}$'s and the $\mathbf{\phi}_{si}$'s respectively, allowing their estimation.   Conditionally on estimated factors, multi-study factor analysis reduces to $p$ separate regression problems. We derive regularized ordinary least squares estimators for loadings matrices ${\mathbf{\Lambda}}$ and ${\mathbf{\Gamma}}_s$, which are interpretable as posterior means under conditionally Gaussian priors. We quantify uncertainty in parameter inference via posterior distributions under surrogate regression models. For independent priors on rows of ${\mathbf{\Lambda}}$ and ${\mathbf{\Gamma}}_s$, the conditional posterior factorizes and inferences can be implemented in parallel, greatly reducing computational burden.

From the posterior on the loadings and residual variances, we induce a posterior on the different components of the covariance in \eqref{eq:fact_model_int}. We provide strong support for the resulting point estimates and credible intervals through a high-dimensional asymptotic theory that allows the data dimensionality $p$ to grow with sample size. Our theory provides consistency and concentration rate results, a central limit theorem for estimators, and even a Bernstein-von Mises theorem. The latter result implies that credible intervals have a slight under-coverage asymptotically, which can be adjusted via variance inflation factors that can be calculated analytically.
To automate the methodology and favor greater data adaptivity, we develop empirical Bayes methodology for hyperparameter estimation.

Hence, \texttt{BLAST} provides a fast algorithm to obtain accurate point and interval estimates in multi-study factor analysis, with excellent theory and empirical support and without expensive and brittle Markov chain Monte Carlo sampling.

\section{Methodology}
\subsection{Notation}\label{subsec:notation}

For a matrix $\mathbf A$, we denote its spectral, Frobenius, and entrywise infinity norm by $||\mathbf A||$, $||\mathbf A||_F$, $||\mathbf A||_{\infty}$ respectively.
We denote by $s_l(\mathbf A)$, its $l$-th largest singular value. 
For a vector $v$, we denote its Euclidean and entrywise infinity norm by $||v||$, $||v||_\infty$, respectively.
For two sequences $(a_n)_{n \geq 1}$, $(b_n)_{n \geq 1}$, and we say $a_n \lesssim b_n$ if $a_n \leq C b_n$ for every $n > N$ for some finite constants $N < \infty$ and $C < \infty$. We say $a_n \asymp b_n$ if and only if $a_n \lesssim b_n$ and $b_n \lesssim a_n$.

\subsection{Latent factor estimation}\label{subsec:factor_pretraining}
This section describes our methodology for estimating latent factors. First, we rewrite model \eqref{eq:fact_model} in its equivalent matrix form:
\begin{equation*}
    \mathbf{Y}_s =  \mathbf{M_s} {\mathbf{\Lambda}}^\top + \mathbf{F}_s {\mathbf{\Gamma}}_s^\top + \mathbf{E}_s, \quad (s=1, \dots, S),
\end{equation*}
where 
$\mathbf{Y}_s = [\mathbf{y}_{s1}  \cdots \mathbf{y}_{s n_s}]^{\top} \in R^{n_s \times p},$
$\mathbf{E}_s = [
\mathbf{e}_{s1} \cdots     \mathbf{e}_{s n_s}]^{\top} \in \mathbb R^{n_s \times p},$
$\mathbf{M_s} = [
        \mathbf{\eta}_{s1} \cdots  \mathbf{\eta}_{s n_s}
]^\top \in \mathbb R^{n_s \times k_0}$, and
$\mathbf{F}_s = [
        \mathbf{\phi}_{s1} \cdots 
        \mathbf{\phi}_{s n_s}
]^\top \in \mathbb R^{n_s \times q_s}$. We denote by $k_s=k_0+q_s$ the latent dimension of each study, summing the shared and study-specific dimensions. Our initial goal is to estimate the latent factor matrices $\mathbf{M_s}$'s and $\mathbf{F}_s$'s. We start by computing the singular value decompositions of each $\mathbf{Y}_s$ 
and use them to identify common axes of variation. Specifically, we take $\mathbf{P}_s = \mathbf{V}_s \mathbf{V}_s^\top$, where $\mathbf{V}_s \in \mathbb R^{p \times k_s}$ denotes the matrix of right singular vectors of $\mathbf{Y}_s$ that correspond to the leading $k_s$ singular values. The proposition \ref{prop:V_s_outer} in the supplementary material shows that the orthogonal projection matrix $\mathbf{P}_s$ approximates the projection into the space spanned by $[{\mathbf{\Lambda}} ~ {\mathbf{\Gamma}}_s]$ in a spectral norm sense for large values of $n_s$ and $p$. Next, we obtain the singular value decomposition of the average of the $\mathbf{P}_s$,
\begin{equation}\label{eq:P_tilde}
    \mathbf{\tilde P} = \frac{1}{S}\sum_{s=1}^S \mathbf{P}_s,
\end{equation}
and set $\mathbf{\bar P} =   \mathbf{\bar V} \mathbf{\bar V}^\top$, where $\mathbf{\bar V}$ is the matrix of singular vectors associated to the leading $k_0$ singular values of $\mathbf{\tilde P}$. 
Recall that each $\mathbf{P}_s$ is a projection onto a space roughly spanned by $q_s$ directions specific to study $s$ and $k_0$ directions shared by all studies. The signal along shared axes is preserved by the averaging operation, while individual directions are dampened, particularly as $S$ increases.
Consequently, the spectrum of $\mathbf{\tilde P}$ consists of $k_0$ leading directions with singular values close to 1, corresponding to the shared directions of variation, well separated from the remaining $\sum_s q_s$ study-specific directions, with singular values $\ll 1$. Hence, $\mathbf{\bar P}$ and $\mathbf{\bar Q} = \mathbf{I}_p - \mathbf{\bar P}$ approximately project onto the space spanned by ${\mathbf{\Lambda}}$ and its orthogonal complement, respectively. 
\begin{remark}
    {In \eqref{eq:P_tilde}, we aggregate the study-specific projections via simple averaging. Alternatives, such as weighting each projection proportionally to the study sample size, are also possible.}
\end{remark}

Letting $\mathbf{V}$ be the matrix of left singular vectors of the true ${\mathbf{\Lambda}}$, Proposition \ref{prop:recovery_P_0} in the supplementary material bounds the spectral norm of the difference of $\mathbf{\bar P}$ and $ \mathbf{V}\mathbf{V}^\top$ in high probability by a multiple of $1/n + 1/p$, where $n = \sum_{s=1}^S n_s$ is the total sample size.

Having identified the shared axes of variation, we can now proceed to estimate the latent factors. By post-multiplying each $\mathbf{Y}_s$ by $\mathbf{\bar Q}$, we eliminate almost all the variation along the common axes of variation, with the remaining signal being predominantly study-specific:
$$\mathbf{\hat Y}_s^\perp = \mathbf{Y}_s \mathbf{\bar Q} \approx  (\mathbf{M_s} {\mathbf{\Lambda}}^\top + \mathbf{F}_s {\mathbf{\Gamma}}_s^\top + \mathbf{E}_s) (\mathbf{I}_p - \mathbf{V}\mathbf{V}^\top) =  \mathbf{F}_s {\mathbf{\Gamma}}_s^\top (\mathbf{I}_p - \mathbf{V}\mathbf{V}^\top)   + \mathbf{E}_s (\mathbf{I}_p - \mathbf{V}\mathbf{V}^\top).
$$
This allows us to estimate the latent factors corresponding to the study-specific variation as $\mathbf{\hat F_s} = \sqrt{n_s} \mathbf{U}_s^\perp$, where $ \mathbf{U}_s^\perp \in \mathbb R^{n_s \times q_s}$ is the matrix of left singular vectors associated to the leading $q_s$ singular vectors of $ \mathbf{\hat Y_s}^\perp$. 

Finally, to estimate the factors corresponding to the shared variation we regress out $\mathbf{\hat  F_s}$ from $\mathbf{Y}_s$, 
eliminating the variation explained by the factors corresponding to the study-specific variation, 
\begin{eqnarray}
   \mathbf{\hat Y}_s^c = (\mathbf{I}_{n_s} - \mathbf{U}_s^\perp \mathbf{U}_s^{\perp \top})  \mathbf{Y}_s \approx 
     (\mathbf{I}_{n_s} - \mathbf{U}_s^\perp \mathbf{U}_s^{\perp \top}) ( \mathbf{M_s} {\mathbf{\Lambda}}^\top+ \mathbf{F}_s {\mathbf{\Gamma}}_s^\top + \mathbf{E}_s) \approx \mathbf{M_s} {\mathbf{\Lambda}}^\top  + \mathbf{E}_s. 
     \label{eq:hatYsc}
\end{eqnarray}
Focusing on the concatenated elements $\mathbf{\hat Y}^c =  [\mathbf{\hat Y}_1^{c\top} \cdots\mathbf{\hat Y}_S^{c\top}]^\top$ and $\mathbf{M} = [ \mathbf{M}_1^\top \cdots  \mathbf{M}_S^\top]^\top$, we estimate the latent factors corresponding to the shared variation, $\mathbf{M}$, by $\mathbf{\hat M} = [\mathbf{\hat M}_1^\top \cdots\mathbf{\hat M}_S^\top]^\top = \sqrt{n} \mathbf{U}^c$, where $\mathbf{U}^c  \in \mathbb R^{n \times k_0} $ is the matrix of left singular vectors associated to the leading singular vectors of $\mathbf{\hat Y}^c$.
Theorem \ref{thm:factors_procrustes_error} shows that this procedure recovers the true latent factors (up to orthogonal transformations) in the high-dimensional and sample size limit.
The above procedure for factor estimation is summarized in Algorithm \ref{alg:factor_pretraining}.

\begin{algorithm}
\caption{Spectral estimation of shared and study-specific factors.}
\label{alg:factor_pretraining}\vspace{-2em}
\begin{tabbing}
   \qquad \enspace \text{Input:} The data matrices $\{\mathbf{Y}_s\}_{s=1}^S$, 
   and the studies' latent dimensions $\{k_s\}_{s=1}^S$.\\
   \qquad \enspace \text{Step 1:} For each $s=1, \dots, S$, compute the singular value decomposition of $\mathbf{Y}_s$\\ \qquad \enspace \hspace{2.8em} and take $\mathbf{P}_s = \mathbf{V}_s \mathbf{V}_s^\top$, where $\mathbf{V}_s \in \mathbb R^{p \times k_s}$ denotes the matrix of the right\\ \qquad \enspace \hspace{2.8em} singular vectors corresponding to the leading $k_s$ singular vectors. \\
   \qquad \enspace  \text{Step 2:} Compute the empirical average of the $\mathbf{P}_s$'s as $\mathbf{\tilde P} = \frac{1}{S} \sum_{s=1}^S \mathbf{P}_s$ and let \\ \qquad \enspace \hspace{2.8em} $\mathbf{\bar P} = \mathbf{\bar V} \mathbf{\bar V}^\top$ and $\mathbf{\bar Q} = \mathbf{I}_p - \mathbf{\bar P}$ where $\mathbf{\bar V} \in \mathbb R^{p \times k_0}$ is the matrix of singular vectors\\ \qquad \enspace \hspace{2.8em}  associated to the leading $k_0$ singular values of $\mathbf{\tilde P}$ and $k_0$ is chosen according \\ \qquad \enspace \hspace{2.8em} to the criterion in \eqref{eq:k_hat}.
\\
\qquad \enspace  \text{Step 3:} For each $s=1, \dots, S$, compute the singular value decomposition of \\ \qquad \enspace \hspace{2.8em} $\mathbf{\hat Y_s}^\perp =\mathbf{Y}_s \mathbf{\bar Q}$ and  let $\mathbf{\hat  F_s} = \sqrt{n_s} \mathbf{U}_s^\perp$, where $ \mathbf{U}_s^\perp \in \mathbb R^{n_s \times q_s}$ is the matrix of left \\ \qquad \enspace \hspace{2.8em} singular vectors associated to the leading $q_s$ singular vectors of $ \mathbf{\hat Y}_s^\perp$.\\

\qquad \enspace  \text{Step 4:} For each $s=1, \dots, S$, compute $\mathbf{\hat Y_s}^c = (\mathbf{I}_{n_s} - \mathbf{U}_s^\perp \mathbf{U}_s^{\perp \top})\mathbf{Y}_s$, define \\ \qquad \enspace \hspace{2.8em} $\mathbf{\hat Y}^c =  [\mathbf{\hat Y}_1^{c\top} \dots\mathbf{\hat Y}_S^{c\top}]^\top$ and let $ \mathbf{\hat M} = [\mathbf{ \hat M}_1^\top \dots \mathbf{\hat M}_S^\top]^\top =  \sqrt{n} \mathbf{U}^c$, where \\ \qquad \enspace \hspace{2.8em} $n = \sum_{s=1}^S n_s$ and $\mathbf{U}^c  \in \mathbb R^{n \times k_0} $ is the matrix of left singular vectors \\
\qquad \enspace  \hspace{2.8em} associated to the $k_0$ leading singular vectors of $\mathbf{\hat Y}^c$.\\

\qquad \enspace \text{Output:} The estimates for the latent factors $\{\mathbf{\hat  F_s}\}_{s=1}^S$ and $\mathbf{\hat M}$, and the matrix $\mathbf{\hat Y}^c$.
\end{tabbing}
\end{algorithm}

\subsection{Inference on shared and study-specific loadings}\label{subsec:inference_Lambda}
This section describes how factor loading matrices are estimated. Conditionally on the latent factors, we first sample the posterior of ${\mathbf{\Lambda}}$ and then propagate its uncertainty in estimating the ${\mathbf{\Gamma}}_s$'s.  
Recall that $\mathbf{\hat Y_s}^c$ is obtained by projecting out $\mathbf{\hat  F_s}$ from $\mathbf{Y}_s$ through \ref{eq:hatYsc}. We propose inferring ${\mathbf{\Lambda}}$ via the surrogate regression problem, 
\begin{equation*} 
   \mathbf{\hat Y}^c = [
       \mathbf{\hat Y}_1^{c\top}  ~ \cdots ~ \mathbf{\hat Y}_S^{c\top}
    ]^\top =\mathbf{\hat M}  {\mathbf{\tilde \Lambda}}^\top + \mathbf{\tilde E}, \quad \text{vec}\big(\mathbf{\tilde E}\big) \sim N_{np}(0, \mathbf{\tilde \Sigma} \otimes \mathbf{I}_n), \quad \mathbf{\tilde \Sigma} = \text{diag}\big(\tilde \sigma_1^2, \dots, \tilde \sigma_p^2\big),
\end{equation*}
where we introduced the new parameters ${\mathbf{\tilde \Lambda}} = [
    \mathbf{\tilde \lambda}_1 ~ \cdots ~\mathbf{\tilde \lambda}_p
]^\top$ and $\{\tilde \sigma_j^2\}_{j=1}^p$. 
We adopt conjugate normal-inverse gamma priors on the rows of ${\mathbf{\tilde \Lambda}}$ and residual error variances ($\{\mathbf{\tilde \lambda}_j, \tilde \sigma_j^2\}_{j}$),
\begin{equation*}
     \mathbf{\tilde \lambda}_{j} \mid \tilde \sigma_{j}^2 \sim N_{k_0} \big(0, \tau_{\Lambda}^2 \tilde \sigma_{j}^2 \mathbf{I}_{k_0}\big),  \quad \tilde \sigma_{j}^2 \sim IG \big( \frac{\nu_0}{2}, \frac{\nu_0 \sigma_0^2}{2}  \big),
\quad (j=1, \dots, p).
      \end{equation*} 
This implicitly assumes that residual error variances do not vary across studies, that is
\begin{equation*}
    \mathbf{\Sigma}_s = \mathbf{\Sigma}, \quad (s=1, \dots, S), \quad \text{where } \mathbf{\Sigma} = \text{diag}\big( \sigma_1^2, \dots,  \sigma_p^2\big).
\end{equation*}
In Section \ref{sec:theory} we discuss implications when this condition is not met,  and in the supplementary material we present alternative inference schemes that take into account the heteroscedastic case. 
This prior specification leads to conjugate posterior distributions, 
\begin{equation*}\begin{aligned}
    (\mathbf{\tilde \lambda}_j, \tilde \sigma_j^2) \mid\mathbf{\hat y}^{c (j)}& \sim  NIG \big(\mathbf{\tilde \lambda}_j, \tilde \sigma_j^2; \mathbf{\mu}_{\lambda_j}, \mathbf{K}, \mathbf{\gamma}_n/2, \mathbf{\gamma}_n \delta_j^2/2\big)\\
    &= N_{k_0}\big(\mathbf{\tilde \lambda}_j; \mathbf{\mu}_{\lambda_j}, \tilde \sigma_j^2 \mathbf{K}\big)IG \big(\tilde \sigma_j^2; \mathbf{\gamma}_n/2, \mathbf{\gamma}_n \delta_j^2/2\big),
\end{aligned}
\end{equation*}
where 
\begin{equation*}
    \begin{aligned}
         \mathbf{\mu}_{\lambda_j} &=  \big(\mathbf{\hat M}^\top\mathbf{\hat M} + \tau_{\Lambda}^{-2} \mathbf{I}_{k_0} \big)^{-1}\hat M^{\top}\mathbf{\hat y}^{c (j)},\\          \mathbf{K} & = \big(\mathbf{\hat M}^\top\mathbf{\hat M} + \tau_{\Lambda}^{-2} \mathbf{I}_{k_0} \big)^{-1} = \frac{1}{n + \tau_{\Lambda}^{-2}} \mathbf{I}_{k_0},\\
         \mathbf{\gamma}_n &= \nu_0 + n,\\
         \delta_j^2 &= \frac{1}{\mathbf{\gamma}_n} \big(\nu_0 \sigma_0^2  +\mathbf{\hat y}^{c (j) \top}\mathbf{\hat y}^{c (j)} - \mathbf{\mu}_{\lambda_j} \mathbf{K}^{-1}\mathbf{\mu}_{\lambda_j}\big),
    \end{aligned}
\end{equation*}
and $\mathbf{\hat y}^{c (j)} $ is the $j$-th column of $\mathbf{\hat Y}^c$. The posterior mean for ${\mathbf{\tilde \Lambda}}$ is given by
\begin{equation}\label{eq:mu_Lambda}
    \mathbf{\mu}_{\Lambda} = [
        \mathbf{\mu}_{\lambda_1} 
        \cdots  
        \mathbf{\mu}_{\lambda_p}
    ]^\top =\mathbf{\hat Y}^{c\top}\mathbf{\hat M} \big(\mathbf{\hat M}^\top\mathbf{\hat M} + \tau_{\Lambda}^{-2} \mathbf{I}_{k_0} \big)^{-1}  = \frac{n^{1/2}}{n + 1/\tau_{\Lambda}^2} \mathbf{V}^{c} \mathbf{D}^c,
\end{equation}
where $\mathbf{D}^c = \text{diag}(d_1^c, \dots, d_{k_0}^c)$, with $d_j^c$ being the $j$-th largest singular value of $\mathbf{\hat Y}^c$ and $\mathbf{V}^c \in \mathbb R^{p \times k_0}$ is the matrix of corresponding right singular vectors. The posterior mean $\mathbf{\mu}_\Lambda$ is the solution to a ridge regression problem \citep{ridge} where $\mathbf{\hat M}$ is treated as the observed design matrix and $\mathbf{\hat Y}^c$ as the matrix of outcome variables:
\begin{equation*}
    \mathbf{\mu}_\Lambda = \underset{ {\mathbf{\Lambda}} \in \mathbb R^{p \times k_0}}{\operatorname{argmin}} ||\mathbf{\hat Y}^c -\mathbf{\hat M} {\mathbf{\Lambda}}^\top||_2^2 + \frac{1}{\tau_{\Lambda}^2}||{\mathbf{\Lambda}}||_2^2.
\end{equation*}
\begin{remark}
    {There is a large literature developing theory of spectral estimators for high-dimensional factor models \citep{Bai2003InferentialTheory, Doz2012QuasiMaximum, Bai2016MaximumLikelihood, Bai2020SimplerProofs}. In such cases, the estimate for the loading matrix can be interpreted as the OLS solution in a surrogate regression task where the spectral estimator for latent factors is used as the observed covariates matrix. The estimator in \eqref{eq:mu_Lambda} differs from existing approaches since the procedure to estimate latent factors is novel and due to the shrinkage effect induced by the prior distribution.}
    \end{remark}
The induced posterior mean for ${\mathbf{\Lambda}} {\mathbf{\Lambda}}^\top$ is
\begin{equation*}\label{eq:post_mean_Lambda_outer}
    \mathbf{\mu}_\Lambda \mathbf{\mu}_\Lambda^\top + \mathbf{\Psi}, \quad \text{where }  \mathbf{\Psi} = \frac{k_0 \mathbf{\gamma}_n}{(n + \tau_{\Lambda}^{-2})(\mathbf{\gamma}_n - 2)}\text{diag}\big(  \delta_1^2, \dots, \delta_p^2\big).
\end{equation*}
This procedure leads to excellent empirical performance and $\mathbf{\mu}_\Lambda \mathbf{\mu}_\Lambda^\top + \mathbf{\Psi}$ can be shown to be a consistent estimate for ${\mathbf{\Lambda}} {\mathbf{\Lambda}}^\top$ as the sample sizes and outcome dimension diverge. However, credible intervals do not have valid frequentist coverage and suffer from mild undercoverage.

To amend this, we propose a simple coverage correction strategy. In particular, we inflate the conditional variance of each $\mathbf{\tilde \lambda}_j$ by a factor $\rho_{\Lambda}^2>1$, which is tuned to achieve asymptotically valid frequentist coverage. More specifically, we introduce the coverage corrected posterior, where 
\begin{equation}\label{eq:lambda_sigma_cc}
\begin{aligned}
    (\mathbf{\tilde \lambda}_j, \tilde \sigma_j^2) \mid\mathbf{\hat y}^{c (j)} \sim & NIG \big(\mathbf{\tilde \lambda}_j, \tilde \sigma_j^2; \mathbf{\mu}_{\lambda_j}, \rho_{\Lambda}^2 K, \mathbf{\gamma}_n/2, \mathbf{\gamma}_n \delta_j^2/2\big)\\
    &= N_{k_0}\big(\mathbf{\tilde \lambda}_j; \mathbf{\mu}_{\lambda_j}, \tilde  \sigma_j^2 \rho_{\Lambda}^2 K\big)IG \big(\tilde \sigma_j^2; \mathbf{\gamma}_n/2, \mathbf{\gamma}_n \delta_j^2/2\big).
\end{aligned}
\end{equation}
In this case, the posterior mean for ${\mathbf{\Lambda}} {\mathbf{\Lambda}}^\top$ is trivially modified to $\mathbf{\mu}_\Lambda \mathbf{\mu}_\Lambda^\top + \rho_{\Lambda}^2\mathbf{\Psi}$.

Finally, we infer study-specific loadings for study $s$, considering the surrogate multivariate linear regression 
problem,
\begin{equation*} 
   \mathbf{Y}_s =\mathbf{\hat M}_s {\mathbf{\tilde \Lambda}}^\top + \mathbf{\hat  F_s} {\mathbf{\tilde \Gamma}}_s^\top +  \mathbf{\tilde E_s}, 
\end{equation*}
with the new parameter ${\mathbf{\tilde \Gamma}}_s = \big[
    \mathbf{\tilde \gamma}_{s1} ~ \cdots ~ \mathbf{\tilde \gamma}_{sp}\big]^\top$
Again, we adopt conjugate priors on rows of ${\mathbf{\tilde \Gamma}}_s$, 
\begin{equation*}
     \mathbf{\tilde \gamma}_{sj} \mid \tilde \sigma_{sj}^2 \sim N_{q_s} \big(0, \tau_{\Gamma_s}^2 \tilde \sigma_{sj}^2 \mathbf{I}_{q_s}\big), 
\quad (j=1, \dots, p).
      \end{equation*}
By propagating uncertainty about ${\mathbf{\tilde \Lambda}}$ and $\mathbf{\tilde \Sigma}$, a sample for $\mathbf{\tilde \gamma}_{sj}$ can be obtained as 
\begin{equation*}
    \mathbf{\tilde \gamma}_{sj} \mid \mathbf{Y}_s,\mathbf{\hat M}_s, \mathbf{\hat  F_s}, \mathbf{\tilde \lambda}_j , \tilde \sigma_{j}^2  \sim N_{q_s} \Big( \frac{1}{n_s + \tau_{\Gamma_s}^{-2}} \mathbf{\hat  F_s}^\top\mathbf{\tilde y}_{sj}, \frac{1}{n_s + \tau_{\Gamma_s}^{-2}} \tilde \sigma_{j}^2 \mathbf{I}_{q_s} \Big), \quad \mathbf{\tilde y}_{sj} = \mathbf{y}_{sj} -\mathbf{\hat M}_s \mathbf{\tilde \lambda}_j,
\end{equation*}
where a sample from $\mathbf{\tilde \lambda}_j$ is obtained via the previous step. Similarly as above, this procedure leads to estimates with excellent empirical performance but credible intervals with mild undercoverage. Therefore, we apply a similar coverage correction strategy, and sample $ \mathbf{\tilde \gamma}_{sj}$ as
\begin{equation}\label{eq:gamma_cc}
    \mathbf{\tilde \gamma}_{sj} \mid \mathbf{Y}_s,\mathbf{\hat M}_s, \mathbf{\hat  F_s}, \mathbf{\tilde \lambda}_j,  \tilde \sigma_{j}^2  \sim N_{q_s} \big( \frac{1}{n_s + \tau_{\Gamma_s}^{-2}} \mathbf{\hat  F_s}^\top \mathbf{\tilde y}_{sj}, \frac{1}{n_s + \tau_{\Gamma_s}^{-2}} \rho_{\Lambda}^2 \tilde \sigma_{j}^2\mathbf{I}_{q_s}  \big), \quad \mathbf{\tilde y}_{sj} = \mathbf{y}_{sj} -\mathbf{\hat M}_s \mathbf{\tilde \lambda}_j.
\end{equation}
Section \ref{subsec:rho} describes how to tune the factors $\rho_{\Lambda}$ and $\rho_{\Gamma_s}$'s.
The posterior mean for ${\mathbf{\tilde \Gamma}}_s$ is $\mathbf{\mu}_{\Gamma_s} = [
    \mathbf{\mu}_{\gamma_{s1}} ~  \cdots ~   \mathbf{\mu}_{\gamma_{sp}}]^\top$
where
\begin{equation}\label{eq:mu_Gamma_s}
    \mathbf{\mu}_{\gamma_{sj}} =  \frac{1}{n_s + \tau_{\Gamma_s}^{-2}} \mathbf{\hat  F_s}^\top    \big(\mathbf{y}_{sj} -\mathbf{\hat M}_s \mathbf{\mu}_{\lambda_j}\big).
\end{equation}
Similarly as above, $ \mathbf{\mu}_{\Gamma_{s}}$ admits an interpretation as a regularized ordinary least squares estimator:
\begin{equation*}
   \mathbf{\mu}_{\Gamma_{s}} = \underset{ {\mathbf{\Gamma}}_s \in \mathbb R^{p \times q_s}}{\operatorname{argmin}} || \mathbf{\bar Y}_s - \mathbf{\hat  F_s} {\mathbf{\Gamma}}_s^\top||_2^2 +\frac{1}{\tau_{\Gamma_s}^2}||{\mathbf{\Gamma}}_s||_2^2, \quad \text{where }\mathbf{\bar Y_s} = \mathbf{Y}_s -\mathbf{\hat M} \mathbf{\mu}_\Lambda^\top.
\end{equation*}
Letting $ \mathbf{\Psi}_s = \frac{q_s \rho_{\Gamma_s}^2 \mathbf{\gamma}_n}{(n_s + \tau_{\Gamma_s}^{-2})(\mathbf{\gamma}_n - 2)}\text{diag}\big(  \delta_1^2, \dots, \delta_p^2\big)$ and $\mathbf{\tilde \Psi}_s = \frac{\rho_{\Lambda}^2 \mathbf{\gamma}_n}{(n + \tau_{\Lambda}^{-2})(\mathbf{\gamma}_n - 2)} \text{diag}\big(  \delta_1^2 \psi_{s}, \dots, \delta_p^2\psi_{s}\big)$, where 
  and $\psi_{s} = E[\mathbf{\eta}^\top\mathbf{\hat M}_s^\top \mathbf{\hat  F_s} \mathbf{\hat  F_s}^\top\mathbf{\hat M}_s  \mathbf{\eta}]$ with $\mathbf{\eta} \sim N_{k_0}(0, \mathbf{I}_{k_0})$ and the expectation being taken with respect to $\mathbf{\eta}$, the induced posterior mean for ${\mathbf{\tilde \Gamma}}_s {\mathbf{\tilde \Gamma}}_s^\top$ is 
\begin{equation*}\begin{aligned}
     \mathbf{\mu}_{\Gamma_s} \mathbf{\mu}_{\Gamma_s}^\top + \mathbf{\Psi}_s +  \mathbf{\tilde \Psi}_s  \approx  \mathbf{\mu}_{\Gamma_s} \mathbf{\mu}_{\Gamma_s}^\top + \mathbf{\Psi}_s,
\end{aligned}
\end{equation*}
where the approximation is due to the fact $\mathbf{ \hat M_s}^\top \mathbf{\hat  F_s} \approx 0$ as shown in the supplemental. We denote the distribution of ${\mathbf{\tilde \Lambda}}, \{{\mathbf{\tilde \Gamma}}_s\}_s$ and $\mathbf{\tilde \Sigma}$ by $\tilde \Pi$. 
Due to the conjugate prior specification, we can sample independently from $\tilde \Pi$, leading to massive improvements over Gibbs samplers, which tend to have slow mixing.
{Moreover, differently from other spectral estimators for model \eqref{eq:fact_model} \citep[e.g.][]{Ando2016PanelData, Ando2017ClusteringHuge} \texttt{BLAST} does not need to iterate between estimating shared factor analytic components and study-specific ones.}

\subsection{Estimation of latent dimensions}\label{subsec:estimation_k}
The latent dimensions are not known in general and need to be estimated. {\citet{Ando2017ClusteringHuge} develop a strategy for estimating latent dimensions in model \eqref{eq:fact_model} that requires fitting each candidate model, which may be computationally intensive. Here, we propose a faster} strategy based on a combination of information criteria and an elbow rule. First, we estimate the latent dimension of study $s$, $k_s = k_0 + q_s$, by minimizing the joint likelihood based information criterion (JIC) proposed in \citet{chen_jic},
\begin{equation*}\label{eq:JIC}
    \text{JIC}_s(k) = -2 l_{sk} + k \max(n_s,p) \log \{\min(n_s,p)\},
\end{equation*}
where $l_{sk}$ is the value of the joint log-likelihood for the $s$-th study computed at the joint maximum likelihood estimate when the latent dimension is equal to $k$. Calculating the joint maximum likelihood estimate for each value of $k$ can be computationally expensive. Therefore, we approximate $l_{sk}$ with $ l_{sk} \approx \hat l_{sk} $, where $\hat l_{sk}$ is the likelihood of the study-specific data matrix where we estimate the latent factors as the left singular vectors associated to the $k$ leading singular values of $\mathbf{Y}_s$ scaled by $\sqrt{n_s}$ and the factor loadings by their conditional mean given such an estimate. More details are provided in the supplementary material.  
Thus, we set
\begin{equation}\label{eq:k_s_hat}
    \hat k_s = \underset{k_s=1, \dots, k_{max}}{\operatorname{argmin}} \hat{ \text{JIC}}_s(k_s), \qquad \hat{ \text{JIC}}_s(k)= -2 \hat l_{sk} + k \max(n_s,p) \log\{\min(n_s,p)\},
\end{equation}
where $k_{max}$ is a conservative upper bound to the latent dimension. 

Next, $k_0$ is estimated by leveraging the gap in the spectrum of $\mathbf{\tilde P}$, defined as in \eqref{eq:P_tilde}. In particular, we set $\hat k_0$ to the number of singular values of $\mathbf{\tilde P}$ that are larger than $1- \tau$ 
\begin{equation}\label{eq:k_hat}
    \hat k_0 = \underset{j=1, \dots, \min\{\hat k_1, \dots, \hat k_S\}}{\operatorname{argmax}} \big\{ \, j \mid  s_{j}(\mathbf{\tilde P}) > 1 - \tau \, \big\}, 
\end{equation}
for some small threshold $\tau$. The rationale for this choice follows from the separation in the spectrum of $\mathbf{\tilde P}$.
Recall that $\mathbf{\tilde P}$ is expected to have $k_0$ singular vectors with singular values close to $1$, corresponding to the directions that are repeated across studies, and the remaining ones with singular values $\ll 1$, corresponding to study-specific directions. 

Finally, given the estimates $\hat k_0$ and $\hat k_s$, we let the estimate of the number of factor loadings specific to the study $q_s$ simply be $\hat q_s = \hat k_s - \hat k_0$. {This procedure picked the correct latent dimensions in all the numerical experiments considered in Section \ref{sec:numerical_experiments}}.

\subsection{Variance inflation terms}\label{subsec:rho}
This section discusses how the variance inflation terms are tuned, which is inspired by \citet{fable}. In particular, we let 
\begin{equation}\label{eq:b}
    b_{jj'} = \begin{cases}
        \big( 1 +  \frac{||\mathbf{\mu}_{\lambda_j}||^2 ||\mathbf{\mu}_{\lambda_{j'} } ||^2 + (\mathbf{\mu}_{\lambda_j}^\top \mathbf{\mu}_{\lambda_{j'}})^2 }{ V_j^2 ||\mathbf{\mu}_{\lambda_{j'}} ||^2 + V_{j'}^2||\mathbf{\mu}_{\lambda_{j}} ||^2} \big)^{1/2}, \quad &\text{if } j \neq j'   \\   \big(1 + \frac{ ||\mathbf{\mu}_{\lambda_j} ||^2 }{ 2V_j^2 }\big)^{1/2}, \quad &\text{otherwise,} 
              \end{cases}
\end{equation}
and, for $s=1, \dots, S$, 
\begin{equation}\label{eq:b_s}
    b_{sjj'} = \begin{cases}
    \begin{aligned}
        \big[ &1 +  (V_j^2 ||\mathbf{\mu}_{\gamma_{sj'}}||^2 + V_{j'}^2||\mathbf{\mu}_{\gamma_{sj}} ||^2)^{-1}\big(||\mathbf{\mu}_{\gamma_{sj}}||^2 ||\mathbf{\mu}_{\gamma_{sj'} } ||^2 + (\mathbf{\mu}_{\gamma_{sj}}^\top \mathbf{\mu}_{\gamma_{sj'}})^2  \\ &+ ||\mathbf{\mu}_{\gamma_{sj}} ||^2 ||\mathbf{\mu}_{\lambda_{j'}}||^2 
      + ||\mathbf{\mu}_{\gamma_{sj'}} ||^2 ||\mathbf{\mu}_{\lambda_{j}}||^2+  2\mathbf{\mu}_{\gamma_{sj}} ^\top \mathbf{\mu}_{\gamma_{sj'}} \mathbf{\mu}_{\lambda_{j}}^\top \mathbf{\mu}_{\lambda_{j'}}\big) \big]^{1/2} 
    \end{aligned}
       , \quad &\text{if } j \neq j'   \\   \big(1 + \frac{ ||\mathbf{\mu}_{\gamma_{sj}} ||^2 +2||\mathbf{\mu}_{\lambda_{j}}||^2 }{ 2V_j^2 }\big)^{1/2}, \quad &\text{otherwise,} 
              \end{cases}
\end{equation}
where $V_j = ||(\mathbf{I}_n - \mathbf{U}^c \mathbf{U}^{c\top})\mathbf{\hat y}^{c (j)}||^2 / n$. Setting $\rho_{\Lambda} =  b_{jj'}$ and $\rho_{\Gamma_s} = b_{sjj'}$ ensures that the credible intervals for the $j,j'$-th elements of ${\mathbf{\Lambda}} {\mathbf{\Lambda}}^\top$ and ${\mathbf{\Gamma}}_s {\mathbf{\Gamma}}_s^\top$ have valid asymptotic coverage. Then, choosing $\rho_{\Lambda} = \max_{j,j'} b_{jj'}$ and $\rho_{\Gamma_s} = \max_{j,j'} b_{sjj'}$ guarantees entrywise asymptotic valid coverage of credible intervals for $\mathbf{\Lambda}\mathbf{\Lambda}^\top$ and $\mathbf{\Gamma}_s \mathbf{\Gamma}_s^\top$ respectively. Alternatively, if $\rho_{\Lambda} = \binom{p}{2}^{-1} \sum_{1 \leq j \leq j' \leq p} b_{jj'}$ and $\rho_{\Gamma_s} = \binom{p}{2}^{-1}\sum_{1 \leq j \leq j' \leq p} b_{sjj'}$ credible intervals have approximately correct valid coverage on average, which is our default choice. We refer to section \ref{sec:theory} for a more in-depth discussion on the impact of the choice of the variance inflation terms on coverage properties of credible intervals.

\subsection{Hyperparameter selection}\label{subsec:hyperparam}
We estimate the prior variances $\tau_{\Lambda}$ and $\{\tau_{\Gamma_s}\}_{s=1}^S$ in a data-adaptive manner as follows. The conditional prior expectation of the squared Frobenius norm of ${\mathbf{\Lambda}}$ can be expressed as $E(||{\mathbf{\Lambda}}||^2 \mid \sigma_1^2, \dots, \sigma_p^2, \tau_{\Lambda}) = k_0 \tau_{\Lambda} \sum_{j=1}^p \sigma_j^2$. We let $\Omega = \sum_{j=1}^p V_j$, where $V_j = ||(\mathbf{I}_n - \mathbf{U}^c \mathbf{U}^{c\top})\mathbf{\hat y}^{c (j)}||^2/n$ as in Section \ref{subsec:rho} and $\Theta = ||\mathbf{\hat Y}^{c \top} \mathbf{U}^c \mathbf{U}^{c\top}\mathbf{\hat Y}^c||^2/n$, which are consistent estimators for $\sum_{j=1}^p \sigma_j^2$ and $||{\mathbf{\Lambda}}||^2$ respectively, and estimate $\tau_{\Lambda}$ via $\hat \tau_{\Lambda} = \frac{\Theta}{k_0 \Omega }$. Analogously, we estimate $\tau_{\Gamma_s}$ via $\hat \tau_{\Gamma_s} = \frac{\Theta_s}{q_s \Omega}$, where $\Theta_s = || \mathbf{\tilde Y_s}^{\top} \mathbf{U}_s^\perp \mathbf{U}_s^{\perp \top} \mathbf{ \tilde Y}^c||^2/n_s$ and $\mathbf{\tilde Y_s} = \mathbf{Y}_s -\mathbf{\hat M}_s \mathbf{\mu}_\Lambda^\top$.
We set $\nu_0$ and $\sigma_0^2$ to 1 as a default value.

Algorithm \ref{alg:ms_fable} summarizes the procedure obtained by combining all the previous sections.

\begin{algorithm}
\caption{\texttt{BLAST} procedure to obtain $N_{MC}$ approximate posterior samples.}\label{alg:ms_fable}\vspace{-2em}
\begin{tabbing}
   \qquad \enspace Input: The data matrices $ \{\mathbf Y_s\}_{s=1}^S$, the number of Monte Carlo samples $N_{MC}$, \\ \qquad \enspace \hspace{2.5em} and an upper bound on the number of factors $k_{max}$. \\
   \qquad \enspace Step 1: For each $s=1, \dots, S$, estimate the number of latent factors for each study \\ \qquad \enspace \hspace{2.8em} $k_s$ via equation \eqref{eq:k_s_hat}.\\
   
   \qquad \enspace  \text{Step 2:} Obtain the estimates for the latent factors $\{\mathbf{\hat  F_s}\}_{s=1}^S$ and $\mathbf{\hat M} = [\mathbf{\hat M}_1^\top \dots\mathbf{\hat M}_S^\top]^\top $  \\ \qquad \enspace \hspace{2.8em} and the data matrix with the shared variation $\mathbf{\hat Y}^c$ using Algorithm \ref{alg:factor_pretraining}. \\
\qquad \enspace  \text{Step 3:} Compute the variance inflation term for ${\mathbf{\Lambda}}$, as $\rho_{\Lambda} =\binom{p}{2}^{-1} \sum_{1 \leq j \leq j' \leq p} b_{jj'}$ \\ \qquad \enspace \hspace{2.8em} where the $b_{jj'}$'s are defined in \eqref{eq:b}.
    
\\ \qquad \enspace \text{Step 4:} Estimate the hyperparameters $\tau_{\Lambda}$ and $\{\tau_{\Gamma_s}\}_{s=1}^S$ as described in Section \ref{subsec:rho}.\\

\qquad \enspace \text{Step 5:} Estimate the mean for $ {\mathbf{\tilde \Lambda}}$, $\mathbf{\mu}_\Lambda$, as in \eqref{eq:mu_Lambda} and, for each $j=1, \dots, p$ in \\ \qquad \enspace \hspace{2.8em} parallel, for $t = 1, \dots, N_{MC}$, sample independently $(  \mathbf{\tilde \lambda}_j^{(t)},  \tilde \sigma_j^{2(t)})$ from \eqref{eq:lambda_sigma_cc}.  \\
\qquad \enspace  \text{Step 6:}  For each $s=1, \dots, S$, compute the variance inflation term for ${\mathbf{\Gamma}}_s$, as \\ \qquad \enspace \hspace{2.8em} $\rho_{\Gamma_s} =\binom{p}{2}^{-1} \sum_{1 \leq j \leq j' \leq p} b_{sjj'}$ where the $b_{sjj'}$'s are defined in \eqref{eq:b_s}.
 \\
\qquad \enspace  \text{Step 7:} For each $s=1, \dots, S$, estimate the mean for $ \tilde {\mathbf{\Gamma}}_s$, $\mathbf{\mu}_{\Gamma_s}$, as in \eqref{eq:mu_Gamma_s} and, for each \\ \qquad \enspace \hspace{2.8em}  $j=1, \dots, p$ in parallel, for $t = 1, \dots, N_{MC}$, sample independently $\mathbf{\tilde \gamma}_{sj}^{(t)} $ \\ \qquad \enspace \hspace{2.8em} from \eqref{eq:gamma_cc}. 

\\ \qquad \enspace \text{Output:} $N_{MC}$ samples of the shared low rank components \\ \qquad \enspace \hspace{2.8em} $ {\mathbf{\tilde \Lambda}}^{(1)} {\mathbf{\tilde \Lambda}}^{(1)\top}, \ldots,  {\mathbf{\tilde \Lambda}}^{(N_{MC})} {\mathbf{\tilde \Lambda}}^{(N_{MC})\top}$, 
of study-specific low rank components \\ \qquad \enspace \hspace{2.8em}  $ {\mathbf{\tilde \Gamma}}_s^{(1)} {\mathbf{\tilde \Gamma}}_s^{(1)\top}, \ldots,  {\mathbf{\tilde \Gamma}}_s^{(N_{MC})} {\mathbf{\tilde \Gamma}}_s^{(N_{MC})\top}$ for  $s=1, \dots, S$, and the residual variances \\ \qquad \enspace \hspace{2.8em} $\{ \tilde \sigma_j^{2 (1)}\}_{j=1}^p, \dots,\{ \tilde \sigma_j^{2 (N_{MC})}\}_{j=1}^p$.\\

\end{tabbing}
\end{algorithm}

\section{Theoretical support}\label{sec:theory}
In this section, we present theoretical support for our methodology. We show favorable properties in the double asymptotic regime, that is, when both the sample sizes and data dimension diverge.
We start by defining some regularity conditions.

\begin{assumption}\label{assumption:model}
    Data are generated under the following model
    {\begin{equation}\label{eq:model_0}
        Y_s = \mathbf{M}_{0s} \mathbf{\Lambda}_0^\top +  \mathbf{F}_{0s} \mathbf{\Gamma}_{0s}^\top + \mathbf E_s, 
        \quad (s=1, \dots, S),
    \end{equation}} with true shared loading matrix ${\mathbf{\Lambda}}_0$ and study-specific loading matrices $\{{\mathbf{\Gamma}}_{0s}\}_s$.  
    We denote by $\{\mathbf{M}_{0s}\}_s$ and $\{\mathbf{F}_{0s}\}_s$ the true latent factors responsible for the shared variation and study-specific variation, respectively. 
    We let $\mathbf{M}_0 =  \big[ \mathbf{M}_{01}^\top ~  \cdots ~  \mathbf{M}_{0S}^\top \big]^\top$. {We consider the number of studies $S \in \mathbb N$ as fixed.}
\end{assumption}
{
\begin{assumption}\label{assumption:distributions}
     We assume the idiosyncratic errors to be zero-mean sub-Gaussian random variables with row and column covariances $\{ \mathbf{\Sigma}_{0sr}, \mathbf{\Sigma}_{0sc} \}_s$, and to be independent across studies. We assume the entries of $\{\mathbf{M}_{0s}\}_s$ and $\{\mathbf{F}_{0s}\}_s$ to be independent standard Gaussian random variables
\end{assumption}
}
{The model in \eqref{eq:model_0} allows cross-sectional or across-units dependence, as $\mathbf{\Sigma}_{0sr}$ and $\mathbf{\Sigma}_{0sc}$ are not required to be diagonal. Assumptions \ref{assumption:model} and \ref{assumption:distributions} are common in the Bayesian multi-study factor modeling literature \citep{msfa, bmsfa, vi_bmsfa} and allow full probabilistic inference on model unknowns. In the supplemental, we show robustness to model misspecification by extending the consistency results to the case where only moment assumptions are made. }

\begin{assumption}[Linear Independence of loading matrices]
\label{assumption:li}
   The matrix $\big[
        {\mathbf{\Lambda}_0} ~  {\mathbf{\Gamma}}_{01} ~  \cdots ~  {\mathbf{\Gamma}}_{0S}
  \big] \in \mathbb R^{p \times k_0 + \sum_{s=1}^S q_s}$ has full column rank. 
\end{assumption}
Assumption \ref{assumption:li} implies $\mathcal C({\mathbf{\Lambda}_0}) \cap \mathcal C({\mathbf{\Gamma}}_{0s}) = \{0\}$ for every $s$ and $\mathcal C({\mathbf{\Gamma}}_{0s}) \cap \mathcal C({\mathbf{\Gamma}}_{0s'}) = \{0\}$ for $s \neq s'$, where $\mathcal C(A) $ denotes the column space of $A$. This is a common requirement in the literature \citep{msfa} to ensure identifiability of model \eqref{eq:fact_model}.
\begin{assumption}\label{assumption:Lambda}
    $s_l({\mathbf{\Lambda}}_0) \asymp ||{\mathbf{\Lambda}}_0|| \asymp \sqrt{p}$ for $l=1, \dots k$, $||{\mathbf{\Lambda}_0}||_{\infty} < \infty$, and $\min_{j=1, \dots, p }||\mathbf{\lambda}_{0j}^2||>c_{\mathbf{\lambda}}$ for some constant $c_{\mathbf{\lambda}}>0$.
\end{assumption}

{Assumption \ref{assumption:Lambda} requires all singular values of $\mathbf \Lambda_0$ to scale as $\sqrt{p}$ as $p \to \infty$, which is required for the signal to be identifiable from the noise in the asymptotic regime, and excludes the presence of outcomes with null or vanishing signals. These are standard requirements in high-dimensional factor analysis \citep{Ando2017ClusteringHuge, Bai2020SimplerProofs}.} 
\begin{assumption}\label{assumption:Gammas}
 $s_{l}\{(\mathbf{I}_p - \mathbf{V}_0\mathbf{V}_0^\top ) {\mathbf{\Gamma}}_{0s}\}  \asymp \sqrt{p }$ for $l=1, \dots q_s$, $||{\mathbf{\Gamma}}_{0s}||_{\infty} < \infty$, and $\min_{j=1, \dots, p }||\mathbf{\gamma}_{0j}^2||>c_{\gamma}$ for some constant $c_{\gamma}>0$, for $s=1, \dots, S$, where $\mathbf{V}_0 \in \mathbb R^{p \times k_0}$ is the matrix of left singular vectors of ${\mathbf{\Lambda}}_0$.
\end{assumption}
{Assumption \ref{assumption:Gammas} implies similar requirements on $\mathbf \Gamma_{0s}$ to those of Assumption \ref{assumption:Lambda}, but with different conditions on the singular values imposed on the 
study-specific loading matrix after projecting out the shared loading matrix. This is required to be able to recover the study-specific signal in Step 3 of Algorithm \ref{alg:factor_pretraining}. This requirement can equivalently be expressed in terms of the singular values of the components of $\mathbf \Gamma_{0s}$ that lie in the null space of $\mathbf \Lambda_0$. }

\begin{assumption}\label{assumption:sv_A}
$s_1\big(\frac{1}{S} \sum_{s=1}^S \mathbf{V}_{0s}^\perp \mathbf{V}_{0s}^{\perp \top}\big) \leq 1- \delta$ for some $\delta >0$, {which does not depend on $p$}, eventually as $p \to \infty$, where $\mathbf{V}_{0s}^\perp\in \mathbb R^{p \times q_s}$ is the matrix of left singular vectors of $(\mathbf{I}_p - \mathbf{V}_0\mathbf{V}_0^\top ) {\mathbf{\Gamma}}_{0s}$.
\end{assumption}
{ Assumption \ref{assumption:sv_A} fails if the column spaces of the components of the $\mathbf \Gamma_{0s}$'s in the null space of $\mathbf{\Lambda}_{0}$ have a common axis up to a perturbation decreasing to 0. However, as long as this does not hold for at least one study, the condition is satisfied. Since the $\mathbf \Gamma_{0s}$'s model batch effects, which tend to vary substantially across studies, we expect this requirement to be satisfied in most cases.}
{\begin{assumption}\label{assumption:sigma}
$s_1\big(\mathbf{\Sigma}_{0sr}\big) \lesssim C_{sr, n_s}^2 \lesssim (\log n_s)^c$ and $s_1\big(\mathbf{\Sigma}_{0sc}\big) \lesssim C_{sc, p}^2 \lesssim (\log p)^c$, for $s=1, \dots, S$ for some fixed $c < \infty$. Letting $ C_{r, \{n_s\}_{s}^{S}} = \sum_{s=1}^S  C_{sr, n_s}$, $ C_{c, p} = \max_{s=1, \dots, S}  C_{sc, p}$, and $\sigma_{0sri}$ and $\sigma_{0scj}$ be the $i$-th and $j$-th diagonal elements of $\mathbf{\Sigma}_{0sr}$ and $\mathbf{\Sigma}_{0sc}$ respectively, we have $\max_{s=1,\dots, S; i=1,\dots, n_s} \sigma_{0sri}^2 \leq C_{\sigma}^2$ and $\max_{s=1,\dots, S; j=1,\dots, p} \sigma_{0scj}^2 \leq C_{\sigma}^2$ for some constant 
    $C_{\sigma} < \infty$.
    \end{assumption}
Assumption \ref{assumption:sigma} controls the violation from the modeling assumption of independence of the idiosyncratic components.}

First, we present a result which bounds the Procrustes error for the latent factor estimates. 
\begin{theorem}[Recovery of latent factors]\label{thm:factors_procrustes_error}
   Suppose Assumptions \ref{assumption:model}--\ref{assumption:sigma} hold and $n_s = \mathcal O(n_{\min}^2)$, where $n_{\min} = \min_{s=1, \dots, S} n_s$, for all $s=1, \dots, S$, then, as $n_1, \dots, n_s, p \to \infty$,
    with probability at least $1-o(1)$,
   { \begin{equation*}
        \begin{aligned}
           \min_{\mathbf{R}_s \in \mathbb R^{k_0 \times k_0} : \mathbf{R}_s^\top \mathbf{R}_s = \mathbf{I}_{k_0}} \frac{1}{\sqrt{n_s}}||\mathbf{\hat M}_s \mathbf{R}_s- \mathbf{M}_{0s}||& \lesssim \frac{1}{\sqrt{n_s}} + (C_{r, \{n_s\}_{s}^{S}} C_{c, p})^2\frac{\sqrt{n/n_s}}{p},\\
             \min_{\mathbf{R}_s \in \mathbb R^{q_s \times q_s} : \mathbf{R}_s^\top \mathbf{R}_s = \mathbf{I}_{q_s}} \frac{1}{\sqrt{n_s}}|| \mathbf{\hat  F_s} \mathbf{R}_s- \mathbf{F}_{0s}||& \lesssim \frac{1}{\sqrt{n_s}} + \frac{(C_{r, \{n_s\}_{s}^{S}} C_{c, p})^2}{p}.
        \end{aligned}
    \end{equation*}}
\end{theorem}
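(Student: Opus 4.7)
The proof strategy is to propagate spectral perturbation bounds through the three subspace-extraction stages of Algorithm \ref{alg:factor_pretraining}. For brevity, write $C_r = C_{r,\{n_s\}_s^S}$ and $C_c = C_{c,p}$. I would first invoke Propositions \ref{prop:V_s_outer} and \ref{prop:recovery_P_0} to obtain, with probability $1-o(1)$, a bound of the form $\|\bar{\mathbf{P}} - \mathbf{V}_0 \mathbf{V}_0^\top\| \lesssim 1/n + 1/p$ (up to multiplicative constants depending on $C_r$ and $C_c$), where $\mathbf{V}_0 \in \mathbb{R}^{p\times k_0}$ is the matrix of left singular vectors of $\mathbf{\Lambda}_0$. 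This controls the output of Step 2 of the algorithm and underlies every subsequent step.

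For the study-specific factors, I would decompose $\hat{\mathbf{Y}}_s^\perp = \mathbf{Y}_s \bar{\mathbf{Q}}$. Since $\mathbf{\Lambda}_0^\top(\mathbf{I}_p - \mathbf{V}_0 \mathbf{V}_0^\top) = 0$, the shared signal cancels against $\mathbf{I}_p - \mathbf{V}_0 \mathbf{V}_0^\top$ and only contributes through the smaller perturbation $\mathbf{V}_0 \mathbf{V}_0^\top - \bar{\mathbf{P}}$. The leading term is the study-specific signal $\mathbf{F}_{0s} \mathbf{\Gamma}_{0s}^\top(\mathbf{I}_p - \mathbf{V}_0 \mathbf{V}_0^\top)$, whose $q_s$-th singular value is of order $\sqrt{n_s p}$ by Assumption \ref{assumption:Gammas} together with the Gaussian concentration $\|\mathbf{F}_{0s}^\top \mathbf{F}_{0s}/n_s - \mathbf{I}_{q_s}\| \lesssim 1/\sqrt{n_s}$. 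A Wedin $\sin\Theta$ inequality then bounds the subspace distance between $\hat{\mathbf{U}}_s^\perp$ and its population counterpart $\mathbf{U}_{0s}^\perp$. To convert this into the claimed Procrustes bound on $\hat{\mathbf{F}}_s = \sqrt{n_s}\hat{\mathbf{U}}_s^\perp$, I would expand $\hat{\mathbf{F}}_s = \sqrt{n_s}\hat{\mathbf{Y}}_s^\perp \hat{\mathbf{V}}_s^\perp (\hat{\mathbf{D}}_s^\perp)^{-1}$ and, via a leave-one-out decoupling of $\hat{\mathbf{V}}_s^\perp$ from $\mathbf{E}_s$, replace $\hat{\mathbf{V}}_s^\perp$ by the deterministic $\mathbf{V}_{0s}^\perp$ up to a lower-order remainder. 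This permits bounding the noise contribution by $\|\mathbf{E}_s \bar{\mathbf{Q}} \mathbf{V}_{0s}^\perp\| \lesssim C_r C_c \sqrt{n_s}$ via sub-Gaussian concentration on a $q_s$-dimensional projection, which, divided by the signal strength $\sqrt{n_s p}$, produces the $(C_r C_c)^2/p$ contribution. The $1/\sqrt{n_s}$ term arises separately from expressing $\sqrt{n_s}\mathbf{U}_{0s}^\perp$ in terms of $\mathbf{F}_{0s}$ up to an orthogonal matrix, which is only possible to $O(1/\sqrt{n_s})$ accuracy because $\mathbf{F}_{0s}^\top \mathbf{F}_{0s}/n_s \neq \mathbf{I}_{q_s}$ exactly.

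For the shared factors I would analyze $\hat{\mathbf{Y}}_s^c = (\mathbf{I}_{n_s} - \mathbf{U}_s^\perp \mathbf{U}_s^{\perp\top})\mathbf{Y}_s$ analogously, using the Procrustes bound just obtained to show that the residual study-specific signal is of smaller order than the leading shared signal. After stacking across studies, the surviving signal $\mathbf{M}_0 \mathbf{\Lambda}_0^\top \in \mathbb{R}^{n\times p}$ has $k_0$-th singular value of order $\sqrt{np}$ by Assumption \ref{assumption:Lambda}, and the same Wedin-plus-decoupling argument applied to the stacked $\hat{\mathbf{Y}}^c$ yields $\min_R (1/\sqrt{n})\|\hat{\mathbf{M}} R - \mathbf{M}_0\| \lesssim 1/\sqrt{n} + (C_r C_c)^2 / p$. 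Since $\hat{\mathbf{M}}_s$ is a row sub-block of $\hat{\mathbf{M}}$, one has $\|\hat{\mathbf{M}}_s R - \mathbf{M}_{0s}\| \leq \|\hat{\mathbf{M}} R - \mathbf{M}_0\|$, and dividing by $\sqrt{n_s}$ rather than $\sqrt{n}$ produces the $\sqrt{n/n_s}$ factor on the second term in the stated bound.

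The main technical obstacle is the noise projection step. A naive spectral-norm bound $\|\mathbf{E}_s \bar{\mathbf{Q}}\| \cdot \|\hat{\mathbf{V}}_s^\perp\| \lesssim C_r C_c\sqrt{p}$ would yield only a rate of $C_r C_c/\sqrt{n_s}$, strictly worse than the $(C_r C_c)^2/p$ asserted by the theorem in the high-dimensional regime. Overcoming this requires exploiting the low-rank structure of the estimated projector and decoupling it from $\mathbf{E}_s$; tracking how these refinements compose through the two-stage estimation, especially how the $1/n + 1/p$ bound of Proposition \ref{prop:recovery_P_0} interacts with Assumption \ref{assumption:sv_A} on the separation of study-specific column spaces and with Assumption \ref{assumption:sigma} on the idiosyncratic covariances, is the most delicate part of the argument.
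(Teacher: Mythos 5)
Your overall architecture matches the paper's: both arguments reduce the theorem to (i) a bound on the distance between the estimated and population singular subspaces at each stage of Algorithm \ref{alg:factor_pretraining}, (ii) a Davis--Kahan step converting projection-matrix error into a Procrustes bound on orthonormal bases, and (iii) concentration of the singular values of the Gaussian factor matrices around $\sqrt{n_s}$ (respectively $\sqrt{n}$), which supplies the $1/\sqrt{n_s}$ term and, via the sub-block inequality $\|\mathbf{\hat M}_s \mathbf{R}- \mathbf{M}_{0s}\| \le \|\mathbf{\hat M} \mathbf{R}- \mathbf{M}_{0}\|$, the $\sqrt{n/n_s}$ factor. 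You also correctly locate where the two additive terms come from. Where you diverge is in step (i), and that is where your sketch does not reach the stated rate.

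The paper never expands $\mathbf{\hat F}_s$ as $\sqrt{n_s}\,\mathbf{\hat Y}_s^\perp \mathbf{\hat V}_s^\perp (\mathbf{\hat D}_s^\perp)^{-1}$ and uses no leave-one-out decoupling. Instead, Propositions \ref{prop:U_s_outer} and \ref{prop:recovery_M} control the projection matrices directly through the inequality from Theorem 2 of \citet{luo_et_al}, of the form $s_{q_s}^2(\mathbf{F}_{0s})\, s_{q_s}^2\{(\mathbf{I}_p-\mathbf{V}_0\mathbf{V}_0^\top)\mathbf{\Gamma}_{0s}\}\,\|\mathbf{U}_s^\perp\mathbf{U}_s^{\perp\top}-\mathbf{U}_{0s}^\perp\mathbf{U}_{0s}^{\perp\top}\| \le 4\,\|\text{noise}\|^2$, which is \emph{quadratic} in the noise and yields $\|\mathbf{\hat P}-\mathbf{P}_0\|\lesssim \|\mathbf{E}_s\|^2/(n_s p)\asymp (C_{sr,n_s}C_{sc,p})^2(1/n_s+1/p)$; Davis--Kahan then transfers this rate to the basis alignment. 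Your route is linear in the noise: plain Wedin gives $C_r C_c(1/\sqrt{n_s}+1/\sqrt{p})$, and your leave-one-out refinement replaces $\|\mathbf{E}_s\|$ by $\|\mathbf{E}_s\bar{\mathbf{Q}}\mathbf{V}_{0s}^\perp\|\lesssim C_rC_c\sqrt{n_s}$, which after division by the signal strength $\sqrt{n_s p}$ gives $C_rC_c/\sqrt{p}$ --- not the $(C_rC_c)^2/p$ you assert at that step; the arithmetic there is off by a square root. Even granting the decoupling, your bound for the second display would therefore be $1/\sqrt{n_s}+C_rC_c/\sqrt{p}$, strictly weaker than the theorem whenever $\sqrt{p}\gg C_rC_c$, and the same shortfall propagates to the shared-factor bound. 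To recover the stated rate you must either import the quadratic projection inequality the paper relies on, or carry the leave-one-out expansion to second order and show the first-order cross term $\mathbf{E}_s\bar{\mathbf{Q}}\mathbf{V}_{0s}^\perp(\mathbf{\hat D}_s^\perp)^{-1}$ does not contribute at order $1/\sqrt{p}$ in spectral norm; your sketch does neither.
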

Theorem \ref{thm:factors_procrustes_error} supports the use of spectral decomposition-based estimates for latent factors. We consider the Procrustes error, since the latent factors and loadings in \eqref{eq:fact_model} are identifiable only up to orthogonal transformations.
\begin{remark}
{Theorem 1 establishes a similar result to Proposition 2 of \citet{Bai2003InferentialTheory} and Theorem 3.3 of \citet{Fan2013LargeCovariance}, which analyze spectral estimators for single-study high dimensional factor models but is weaker in not providing row-wise error control.}
\end{remark}

The next theorem characterizes the consistency of point estimators and posterior contraction around the true parameters.
\begin{theorem}[Consistency and posterior contraction]\label{thm:posterior_contraction_Lambda_outer}
    Suppose Assumptions \ref{assumption:model}--\ref{assumption:sigma} hold and $n_s = \mathcal O(n_{\min}^2)$, where $n_{\min} = \min_{s=1, \dots, S} n_s$, for all $s=1, \dots, S$, then, as $n_1, \dots, n_s, p \to \infty$,
    with probability at least $1-o(1)$,
    {\begin{equation}
    \begin{aligned}
          \frac{ \left|\left| \mathbf{\mu}_\Lambda \mathbf{\mu}_\Lambda^\top - {\mathbf{\Lambda}}_0 {\mathbf{\Lambda}}_0^\top  \right|\right|}{ \left|\left|  {\mathbf{\Lambda}}_0 {\mathbf{\Lambda}}_0^\top  \right|\right|} &\lesssim  (\sqrt{\log n} + C_{r, \{n_s\}_{s}^{S}} C_{c, p}) \frac{1}{\sqrt{n}}  + C_{r, \{n_s\}_{s}^{S}} C_{c, p}\frac{1}{\sqrt{p}},\\ \frac{ \left|\left| \mathbf{\mu}_{\Gamma_s} \mathbf{\mu}_{\Gamma_s}^\top - {\mathbf{\Gamma}}_{0s} {\mathbf{\Gamma}}_{0s}^\top  \right|\right|}{ \left|\left| {\mathbf{\Gamma}}_{0s} {\mathbf{\Gamma}}_{0s}^\top  \right|\right|} &\lesssim (\sqrt{\log n_s} + C_{sr, n_s} C_{sc, p}) \frac{1}{\sqrt{n_s}}  + C_{sr, n_s} C_{sc, p}\frac{1}{\sqrt{p}}, \quad (s=1, \dots, S).
    \end{aligned}
    \end{equation}}
    Moreover, there exist finite constants $D, D_1, \dots, D_S < \infty $ such that
    {\begin{equation}
    \begin{aligned}
        E\left(\tilde \Pi\left[\frac{ \left|\left| {\mathbf{\tilde \Lambda}} {\mathbf{\tilde \Lambda}}^\top - {\mathbf{\Lambda}}_0 {\mathbf{\Lambda}}_0^\top  \right|\right|}{ \left|\left|  {\mathbf{\Lambda}}_0 {\mathbf{\Lambda}}_0^\top  \right|\right|} > D \left\{(\sqrt{\log n} + C_{r, \{n_s\}_{s}^{S}} C_{c, p}) \frac{1}{\sqrt{n}}  + C_{r, \{n_s\}_{s}^{S}} C_{c, p}\frac{1}{\sqrt{p}} \right\}\right] \right) &\to 0,\\
        E\left(\tilde \Pi\left[\frac{ \left|\left| {\mathbf{\tilde \Gamma}}_s {\mathbf{\tilde \Gamma}}_s^\top -  {\mathbf{\Gamma}}_{0s} {\mathbf{\Gamma}}_{0s}^\top  \right|\right|}{ \left|\left|   {\mathbf{\Gamma}}_{0s} {\mathbf{\Gamma}}_{0s}^\top  \right|\right|} > D_s \left\{(\sqrt{\log n_s} + C_{sr, n_s} C_{sc, p}) \frac{1}{\sqrt{n_s}}  + C_{sr, n_s} C_{sc, p}\frac{1}{\sqrt{p}}\right\} \right] \right) &\to 0, \quad (s=1, \dots, S).
    \end{aligned}
    \end{equation}}
\end{theorem}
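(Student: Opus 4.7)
The plan is to establish the point-estimator rate first and then extend to posterior contraction by controlling the extra posterior variability. For both the shared and study-specific loadings, the proof reduces, via the algebraic identity $AA^\top - (BR)(BR)^\top = (A-BR)A^\top + BR(A-BR)^\top$ valid for any orthogonal $R$, to bounding the deviation of the relevant estimator (or posterior draw) from a rotated copy of the truth. This yields the spectral bound $\|AA^\top - BB^\top\| \le \|A-BR\|(\|A\|+\|B\|)$, which, combined with $\|\mathbf{\Lambda}_0\|\asymp \sqrt{p}$ and $\|\mathbf{\Lambda}_0\mathbf{\Lambda}_0^\top\|\asymp p$ from Assumption \ref{assumption:Lambda}, reduces the target to controlling $\|\mathbf{\mu}_\Lambda - \mathbf{\Lambda}_0 R\|$ at a rate of order $\sqrt p$ times the claimed relative bound, and analogously for $\mathbf{\Gamma}_s$ using Assumption \ref{assumption:Gammas}.

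For the shared point estimator $\mathbf{\mu}_\Lambda = \hat{Y}^{c\top}\hat{M}/(n+\tau_\Lambda^{-2})$, I substitute $\hat{Y}^c = \mathbf{M}_0\mathbf{\Lambda}_0^\top + \mathrm{res}$ following the construction in \eqref{eq:hatYsc} and write $\hat{M} = \mathbf{M}_0 R + \Xi$ with $\|\Xi\|$ controlled by Theorem \ref{thm:factors_procrustes_error}. Expanding yields
$$
\mathbf{\mu}_\Lambda - \mathbf{\Lambda}_0 R = -\tfrac{\tau_\Lambda^{-2}}{n+\tau_\Lambda^{-2}}\mathbf{\Lambda}_0 R + \frac{\mathbf{\Lambda}_0\mathbf{M}_0^\top\Xi + \mathrm{res}^\top\hat{M}}{n+\tau_\Lambda^{-2}},
$$
and the three rate components arise as: (i) a bilinear noise contribution $\|\mathbf{E}^\top \mathbf{M}_0\|/n$ handled by a matrix-Bernstein / $\varepsilon$-net argument under Assumption \ref{assumption:sigma}, producing the $\sqrt{\log n}/\sqrt n$ term with prefactor $C_{r,\{n_s\}}C_{c,p}$ inherited from the sub-Gaussian covariance bounds; (ii) the residual-through-$\Xi$ term, together with the rotation error from Theorem \ref{thm:factors_procrustes_error}, giving the $C_rC_c/\sqrt{n}$ piece; and (iii) the $(C_rC_c)^2/p$ factor-recovery error from Theorem \ref{thm:factors_procrustes_error}, which after multiplication by $\|\mathbf{\Lambda}_0\|\asymp\sqrt{p}$ becomes the $C_rC_c/\sqrt{p}$ piece.

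For posterior contraction, I use that the conjugate posterior in \eqref{eq:lambda_sigma_cc} factorizes over rows with row covariance $\rho_\Lambda^2\tilde\sigma_j^2 K \asymp n^{-1}\mathbf{I}_{k_0}$ and $\tilde\sigma_j^2$ concentrating around $\delta_j^2\lesssim 1$. Standard spectral concentration for random matrices with approximately independent sub-Gaussian rows then yields $\|\tilde{\mathbf{\Lambda}} - \mathbf{\mu}_\Lambda\|\lesssim \sqrt{p/n}$ with posterior probability $1-o(1)$, and the outer-product identity contributes an extra $p/\sqrt n$ after multiplication by $\|\mathbf{\mu}_\Lambda\|\asymp\sqrt{p}$, matching the $1/\sqrt{n}$ piece once normalized by $p$. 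The argument for $\mathbf{\Gamma}_s$ is analogous: from $\mathbf{\mu}_{\gamma_{sj}} = \hat F_s^\top(\mathbf{y}_{sj}-\hat M_s\mathbf{\mu}_{\lambda_j})/(n_s+\tau_{\Gamma_s}^{-2})$, I expand the plug-in residual as $\mathbf{F}_{0s}\mathbf{\gamma}_{0sj}+\mathbf{E}_s^{(j)}+\hat M_s(R\mathbf{\lambda}_{0j}-\mathbf{\mu}_{\lambda_j})+\Xi_s\mathbf{\lambda}_{0j}$, and exploit the approximate biorthogonality $\hat M_s^\top \hat F_s\approx 0$ noted in the text to decouple the shared and study-specific contributions; the plug-in error propagated from the shared stage is dominated by the study-$s$ terms under $n_s = \mathcal O(n_{\min}^2)$.

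The main technical obstacle is the control of the bilinear noise terms $\|\mathbf{E}^\top \mathbf{M}_0\|$, $\|\mathbf{E}^\top\hat{M}\|$ and $\|\hat F_s^\top \mathbf{E}_s\|$ under the non-diagonal row/column covariance structure permitted by Assumptions \ref{assumption:distributions}--\ref{assumption:sigma}: one cannot invoke iid matrix concentration directly and must apply matrix Bernstein or generic chaining, carefully tracking how the $C_{r,\{n_s\}}C_{c,p}$ factors enter and interact with the $\log n, \log p$ growth. A secondary difficulty is that the unknown Procrustes rotations $R, R_s$ at each stage must be tied together coherently when plugging $\mathbf{\mu}_\Lambda$ into $\mathbf{\mu}_{\Gamma_s}$, so that rotation mismatches between stages contribute only lower-order terms; this ultimately relies on the approximate biorthogonality of the estimated factors and on Assumption \ref{assumption:Gammas}, which guarantees that the study-specific signal is identifiable after projecting out $\mathbf{\Lambda}_0$.
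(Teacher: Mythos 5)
Your overall architecture differs from the paper's: the paper never passes through a bound on $\|\mathbf{\mu}_\Lambda - \mathbf{\Lambda}_0 \mathbf{R}\|$ at all, but instead expands the quadratic form $\mathbf{\mu}_\Lambda\mathbf{\mu}_\Lambda^\top = \tfrac{n}{(n+\tau_\Lambda^{-2})^2}\mathbf{\hat Y}^{c\top}\mathbf{U}^c\mathbf{U}^{c\top}\mathbf{\hat Y}^c$ directly around $\mathbf{U}_0^c\mathbf{U}_0^{c\top}$ and bounds each cross term; your reduction via $\mathbf{A}\mathbf{A}^\top-\mathbf{B}\mathbf{B}^\top=(\mathbf{A}-\mathbf{B}\mathbf{R})\mathbf{A}^\top+\mathbf{B}\mathbf{R}(\mathbf{A}-\mathbf{B}\mathbf{R})^\top$ is a legitimate alternative and your treatment of the posterior-contraction step (bounding $\|\mathbf{\tilde\Lambda}-\mathbf{\mu}_\Lambda\|\lesssim\sqrt{p/n}$ using $\tilde\sigma_j^2\lesssim 1$ and then the Lipschitz property of the outer product) coincides with what the paper does. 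However, there is a concrete gap in your rate accounting for the point estimator. Your displayed identity for $\mathbf{\mu}_\Lambda-\mathbf{\Lambda}_0\mathbf{R}$ silently replaces $\mathbf{M}_0^\top\mathbf{M}_0$ by $n\mathbf{I}_{k_0}$: expanding $\mathbf{\hat Y}^{c\top}\mathbf{\hat M}/(n+\tau_\Lambda^{-2})$ with $\mathbf{\hat M}=\mathbf{M}_0\mathbf{R}+\Xi$ produces the term $\mathbf{\Lambda}_0(\mathbf{M}_0^\top\mathbf{M}_0-n\mathbf{I}_{k_0})\mathbf{R}/(n+\tau_\Lambda^{-2})$, which is missing from your decomposition. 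This term has norm of order $\sqrt{p}\,\sqrt{\log n}/\sqrt{n}$ (via $\|\mathbf{M}_0^\top\mathbf{M}_0-n\mathbf{I}_{k_0}\|\lesssim\sqrt{n\log n}$, Lemma E.1 of the cited FABLE paper) and is in fact the sole source of the $\sqrt{\log n}/\sqrt{n}$ component of the stated rate.

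Because that term is dropped, you instead attribute the $\sqrt{\log n}/\sqrt{n}$ component to the bilinear noise term $\|\mathbf{E}^\top\mathbf{M}_0\|/n$ ``with prefactor $C_{r,\{n_s\}_s^S}C_{c,p}$.'' That accounting cannot be made to match the theorem: the claimed bound is the \emph{sum} $(\sqrt{\log n}+C_{r,\{n_s\}_s^S}C_{c,p})/\sqrt{n}$, not the product $C_{r,\{n_s\}_s^S}C_{c,p}\sqrt{\log n}/\sqrt{n}$, and the latter is strictly larger when both quantities grow. Moreover, no matrix-Bernstein or chaining argument is needed for the bilinear term: the crude bound $\|\mathbf{E}^\top\mathbf{M}_0\|/n\le\|\mathbf{E}\|\,\|\mathbf{M}_0\|/n\lesssim C_{r,\{n_s\}_s^S}C_{c,p}\bigl(1+\sqrt{p/n}\bigr)$ already fits inside the budget $C_{r,\{n_s\}_s^S}C_{c,p}\bigl(1+\sqrt{p}/\sqrt{n}\bigr)$, with no logarithmic factor. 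The fix is mechanical — reinstate the Gram-fluctuation term and use the operator-norm bound on $\mathbf{E}^\top\mathbf{M}_0$ — and the remainder of your argument (the factor-recovery error from Theorem \ref{thm:factors_procrustes_error} contributing $C_{r,\{n_s\}_s^S}C_{c,p}/\sqrt{p}$ after multiplication by $\|\mathbf{\Lambda}_0\|\asymp\sqrt{p}$, and the propagation into $\mathbf{\mu}_{\Gamma_s}$ using $\mathbf{\hat M}_s^\top\mathbf{\hat F}_s\approx 0$ and $n_s=\mathcal O(n_{\min}^2)$) is consistent with the paper's treatment.
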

The first part of Theorem \ref{thm:posterior_contraction_Lambda_outer} justifies the use of $\mathbf{\mu}_\Lambda \mathbf{\mu}_\Lambda^\top$ and $\mathbf{\mu}_{\Gamma_s} \mathbf{\mu}_{\Gamma_s}^\top$ as point estimates for ${\mathbf{\Lambda}} {\mathbf{\Lambda}}^\top$ and ${\mathbf{\Gamma}}_s {\mathbf{\Gamma}}_s^\top$ in the high-dimensional and high-sample size limit. The second part of the Theorem is a stronger statement and characterizes the concentration of the measure induced on ${\mathbf{\tilde \Lambda}} {\mathbf{\tilde \Lambda}}^\top$ and ${\mathbf{\tilde \Gamma}}_s {\mathbf{\tilde \Gamma}}_s^\top $ around the true parameter at rates $\frac{1}{n^{1/2}} + \frac{1}{p^{1/2}}$ and $\frac{1}{n_s^{1/2}} + \frac{1}{p^{1/2}}$ (up to logarithmic terms), respectively. In this sense, our method enjoys a blessing of dimensionality, recovering the true parameters if and only if both sample size and data dimension diverge.
We consider relative errors by dividing by the norm of ${\mathbf{\Lambda}}_0 {\mathbf{\Lambda}}_0^\top$ and ${\mathbf{\Gamma}}_{0s} {\mathbf{\Gamma}}_{0s}^\top$, respectively, to make the results comparable as $p$ increases. 
\begin{remark}
    {\citet{Bai2020SimplerProofs} upper bound the error of the factor loading matrix for single-study factor models with rate $1/\sqrt{n} + 1/\sqrt{p}$, which implies the same rate (up to logarithmic terms) as ours for the low-rank component. }
\end{remark}

\begin{remark}[Consistency holds under heteroscedasticity]
    The results in Theorem \ref{thm:posterior_contraction_Lambda_outer} hold even under a heteroscedastic design, that is, if the assumption $\mathbf{\Sigma}_{0s}=\mathbf{\Sigma}_0$ for $s=1, \dots, S$ does not hold; this is a misspecified case, as our method of inferring ${\mathbf{\Lambda}}$ assumes homoskedasticity. 
\end{remark}
\begin{remark}[Extension to Frobenius loss] Similar results to those in Theorems \ref{thm:factors_procrustes_error} and \ref{thm:posterior_contraction_Lambda_outer} can be derived for the Frobenius error due to the low-rank structure of the parameters.
\end{remark}

Next, we present results for each entry of the low-rank components. {These results require additional assumptions on the residual error variances. 
\begin{assumption}[No across-units dependence]\label{assumption:no_unit_dependence}
    We assume there is no dependence across-units in the idiosyncratic component, i.e $\mathbf \Sigma_{0sr} = \mathbf I_{n_s}$ for all $s=1, \dots, S$.
\end{assumption}
\begin{assumption}[Homoscedasticity]\label{assumption:homoscedasticity}
We assume the variances of the idiosyncratic components to be equal across studies, i.e. $\sigma_{0scj}^2 = \sigma_{0j}^2$ for some $\sigma_{0j}^2>0$ for all $j=1, \dots, p$ and $s=1, \dots, S$. 
\end{assumption}
\begin{assumption}\label{assumption:sigma_lb}
    We have $ \min_{s=1,\dots, S; j=1,\dots, p} \sigma_{0sj}^2 > c_{\sigma}^2 $ for some constant $c_{\sigma}>0$.
\end{assumption}
\begin{assumption}\label{assumption:Sigma_diagonal}
    We assume the cross-sectional covariances $\{\mathbf \Sigma_{0sc}\}_s$ to be diagonal.
\end{assumption}}
{Assumption \ref{assumption:Sigma_diagonal} assumes that, conditionally on latent factors, there is no residual correlation among variables. In the supplemental, we consider the case where Assumption \ref{assumption:Sigma_diagonal} does not hold.}

The first result is a central limit theorem for point estimators.

\begin{theorem}[Central limit theorem]\label{thm:clt_mu_Lambda_outer}
    Suppose Assumptions \ref{assumption:model}--\ref{assumption:Sigma_diagonal} hold, $n_s = \mathcal O(n_{\min}^2)$, where $n_{\min} = \min_{s=1, \dots, S} n_s$, for all $s=1, \dots, S$, $(C_{r, \{n_s\}_{s}^{S}} C_{c, p})^2\sqrt{n}/p = o(1)$ and $\log^2 p / \sqrt{n_{\min}}$. For $1 \leq j \leq j' \leq p$, let 
    \begin{equation}\label{eq:S_0_sq}
        \mathcal S_{0jj'}^2 = \begin{cases}
             \sigma_{0j}^2 ||\mathbf{\lambda}_{0j'}||^2 + \sigma_{0j'}^2 ||\mathbf{\lambda}_{0j}||^2 + ||\mathbf{\lambda}_{0j}||^2 ||\mathbf{\lambda}_{0j'}||^2 +  (\mathbf{\lambda}_{0j}^\top \mathbf{\lambda}_{0j'})^2 \quad &\text{if } j \neq j',\\
       2 ||\mathbf{\lambda}_{0j}||^4 +4\sigma_{0j}^2 ||\mathbf{\lambda}_{0j}||^2 
\quad &\text{otherwise,}
        \end{cases}
    \end{equation}
and
   \begin{equation}\label{eq:S_0_s_sq}
        \mathcal S_{0sjj'}^2 = \begin{cases}
        \begin{aligned}
               &\sigma_{0j}^2 ||\mathbf{\gamma}_{0sj'}||^2 + \sigma_{0j'}^2 ||\mathbf{\gamma}_{0sj}||^2 + ||\mathbf{\gamma}_{0sj}||^2 ||\mathbf{\gamma}_{0sj'}||^2 +  (\mathbf{\gamma}_{0sj}^\top \mathbf{\gamma}_{0sj'})^2  \quad  \\ & + ||\mathbf{\gamma}_{0sj}||^2 ||\mathbf{\lambda}_{0j}||^2 
             + ||\mathbf{\gamma}_{0sj'}||^2 ||\mathbf{\lambda}_{0j}||^2 +  2\mathbf{\gamma}_{0sj}^\top \mathbf{\gamma}_{0sj'}\mathbf{\lambda}_{0j}^\top \mathbf{\lambda}_{0j'}
        \end{aligned} \quad   &\text{if } j \neq j',
          \\
       2 ||\mathbf{\gamma}_{0sj}||^4 + 4||\mathbf{\gamma}_{0sj}||^2||\mathbf{\lambda}_{0j}||^2 +4 \sigma_{0j}^2||\mathbf{\gamma}_{0sj}||^2 
 &\text{otherwise.}
        \end{cases}
    \end{equation}


    Then, as $n_1, \dots, n_s, p \to \infty$, we have
    \begin{equation}\label{eq:mu_lambda_clt}
    \begin{aligned}
        &\frac{\sqrt{n}}{S_{0, jj'}} \left(\mathbf{\mu}_{\lambda_j}^\top \mathbf{\mu}_{\lambda_{j'}}-   \mathbf{\lambda}_{0j}^\top\mathbf{\lambda}_{0j'}\right) \Longrightarrow N(0, 1),\\
        & \frac{\sqrt{n_s}}{S_{0, sjj'}} \left(\mathbf{\mu}_{\gamma_{sj}}^\top \mathbf{\mu}_{\gamma_{sj'}} - \mathbf{\gamma}_{0sj}^\top \mathbf{\gamma}_{0sj'}\right) \Longrightarrow N(0, 1), \quad (s=1, \dots, S). 
    \end{aligned} 
    \end{equation}
\end{theorem}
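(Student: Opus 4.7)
The starting point is the closed-form identity $\mu_\Lambda\mu_\Lambda^\top = \bigl(n/(n+\tau_\Lambda^{-2})\bigr)^2 \cdot n^{-1}\, \mathbf{\hat Y}^{c\top} P_{\mathbf{\hat M}} \mathbf{\hat Y}^{c}$, where I use that $\mathbf{\hat M}^\top\mathbf{\hat M}=n\mathbf{I}_{k_0}$ by construction, so $n^{-1}\mathbf{\hat M}\mathbf{\hat M}^\top = P_{\mathbf{\hat M}}$. The scalar prefactor is $1+O(1/n)$, contributing only a lower-order bias after multiplication by $\sqrt{n}$ (since $|\mathbf{\lambda}_{0j}^\top\mathbf{\lambda}_{0j'}|$ is $O(1)$ under Assumption~\ref{assumption:Lambda}), so I may work with $n^{-1}\mathbf{\hat Y}^{c\top} P_{\mathbf{\hat M}}\mathbf{\hat Y}^{c}$. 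Using Step 4 of Algorithm~\ref{alg:factor_pretraining} together with Theorem~\ref{thm:factors_procrustes_error}, I write $\mathbf{\hat Y}^c = \mathbf{M}_0\mathbf{\Lambda}_0^\top + \mathbf{E}^c + R_1$ and $P_{\mathbf{\hat M}} = P_{\mathbf{M}_0}+R_2$ with $R_1,R_2$ controlled residuals of spectral order $o(n^{-1/2})$ under the rate conditions $(C_{r,\{n_s\}}C_{c,p})^2\sqrt{n}/p=o(1)$ and $\log^2 p/\sqrt{n_{\min}}=o(1)$.

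Substituting and using $P_{\mathbf{M}_0}\mathbf{M}_0=\mathbf{M}_0$ yields the expansion
\begin{equation*}
\mu_{\lambda_j}^\top\mu_{\lambda_{j'}} - \mathbf{\lambda}_{0j}^\top\mathbf{\lambda}_{0j'}
\;=\; T_{jj'}^W + T_{jj'}^{(1)} + T_{jj'}^{(2)} + T_{jj'}^{Q} + r_{jj'},
\end{equation*}
with $T_{jj'}^W = \mathbf{\lambda}_{0j}^\top(n^{-1}\mathbf{M}_0^\top\mathbf{M}_0-\mathbf{I}_{k_0})\mathbf{\lambda}_{0j'}$ the Wishart fluctuation, $T_{jj'}^{(1)} = n^{-1}\mathbf{\lambda}_{0j}^\top\mathbf{M}_0^\top e^{(j')}$ and $T_{jj'}^{(2)}=n^{-1}e^{(j)\top}\mathbf{M}_0\mathbf{\lambda}_{0j'}$ the signal--noise cross terms, $T_{jj'}^Q = n^{-1}e^{(j)\top}P_{\mathbf{M}_0}e^{(j')}$ of order $k_0/n=o(n^{-1/2})$, and $r_{jj'}$ collecting the residual interactions with $R_1,R_2$. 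The main obstacle is precisely controlling $r_{jj'}$: $\mathbf{\hat M}$ depends on the full data and hence on the columns $e^{(j)}$, so the independence-based concentration arguments must be replaced by spectral bounds and a leave-one-out-in-$p$ type argument that exploits the low-rank structure of $R_2$ and the fact that each column contributes at most $O(p^{-1})$ to $\mathbf{\hat M}$. Under Assumptions~\ref{assumption:no_unit_dependence}--\ref{assumption:sigma_lb}, these residuals are $o_P(n^{-1/2})$.

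For the limit law, the rescaled dominant term $\sqrt{n}(T_{jj'}^W+T_{jj'}^{(1)}+T_{jj'}^{(2)})$ is a sum of $n$ independent contributions indexed by the rows of $\mathbf{M}_0$ and the noise rows: $T_{jj'}^W$ is the off-diagonal Wishart-fluctuation $\sqrt{n}\,\mathbf{\lambda}_{0j}^\top(\bar S-\mathbf{I})\mathbf{\lambda}_{0j'}$ of a $k_0\times k_0$ sample covariance of iid standard Gaussians, whose asymptotic variance is $\|\mathbf{\lambda}_{0j}\|^2\|\mathbf{\lambda}_{0j'}\|^2+(\mathbf{\lambda}_{0j}^\top\mathbf{\lambda}_{0j'})^2$ for $j\neq j'$ and $2\|\mathbf{\lambda}_{0j}\|^4$ for $j=j'$; the signal--noise terms contribute $\sigma_{0j}^2\|\mathbf{\lambda}_{0j'}\|^2+\sigma_{0j'}^2\|\mathbf{\lambda}_{0j}\|^2$ (resp.\ $4\sigma_{0j}^2\|\mathbf{\lambda}_{0j}\|^2$ on the diagonal) by direct conditional variance calculation given $\mathbf{M}_0$; the cross-covariance between Wishart and linear terms vanishes since $\mathbf{E}^c$ and $\mathbf{M}_0$ are independent. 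Joint convergence to a single Gaussian with variance $\mathcal S_{0jj'}^2$ then follows from a Lindeberg argument using the sub-Gaussian tails in Assumption~\ref{assumption:distributions}, and Slutsky delivers \eqref{eq:mu_lambda_clt} after dividing by $\mathcal S_{0jj'}$.

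For the study-specific statement, the analogous identity is $\mu_{\gamma_{sj}}^\top\mu_{\gamma_{sj'}} = (n_s+\tau_{\Gamma_s}^{-2})^{-2}\mathbf{\tilde y}_{sj}^\top\mathbf{\hat F}_s\mathbf{\hat F}_s^\top\mathbf{\tilde y}_{sj'}$ with $\mathbf{\tilde y}_{sj} = \mathbf{y}_{sj}-\mathbf{\hat M}_s\mu_{\lambda_j}$. Writing $\mathbf{y}_{sj} = \mathbf{M}_{0s}\mathbf{\lambda}_{0j}+\mathbf{F}_{0s}\mathbf{\gamma}_{0sj}+e_{sj}$ and substituting the expansion $\mu_{\lambda_j}=R\mathbf{\lambda}_{0j}+n^{-1}\mathbf{\hat M}^\top e^{(j)}+o_P(n^{-1/2})$ from the first part, the same four-term decomposition (Wishart fluctuation in $\mathbf{F}_{0s}$, two signal--noise cross terms in $e_{sj}$, and negligible quadratic) re-emerges, augmented by the propagated-uncertainty cross terms $n_s^{-1}\mathbf{\gamma}_{0sj}^\top\mathbf{F}_{0s}^\top\mathbf{\hat M}_s(\mu_{\lambda_{j'}}-R\mathbf{\lambda}_{0j'})$ and its symmetric counterpart, plus a $\mathbf{\lambda}$-based Wishart-like term arising from plugging the estimator $\mu_{\lambda_j}$ back into the regression. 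Computing the conditional covariances of these pieces produces exactly the additional ingredients $\|\mathbf{\gamma}_{0sj}\|^2\|\mathbf{\lambda}_{0j'}\|^2+\|\mathbf{\gamma}_{0sj'}\|^2\|\mathbf{\lambda}_{0j}\|^2+2\mathbf{\gamma}_{0sj}^\top\mathbf{\gamma}_{0sj'}\mathbf{\lambda}_{0j}^\top\mathbf{\lambda}_{0j'}$ in \eqref{eq:S_0_s_sq}, and the CLT then follows by the same Lindeberg reasoning. The hardest part throughout is verifying that all interactions between $\mathbf{\hat M}, \mathbf{\hat F}_s, \mu_{\lambda_j}$ and the idiosyncratic noise contribute only $o_P(n_s^{-1/2})$ to the residual, which is where the rate assumption $(C_{r,\{n_s\}}C_{c,p})^2\sqrt{n}/p=o(1)$ is essential.
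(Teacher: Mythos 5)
Your proposal is correct and follows essentially the same route as the paper's proof: the same prefactor reduction, the same decomposition of $\sqrt{n}\,(\mathbf{\mu}_{\lambda_j}^\top \mathbf{\mu}_{\lambda_{j'}}-\mathbf{\lambda}_{0j}^\top\mathbf{\lambda}_{0j'})$ into a Wishart fluctuation, two signal--noise cross terms, a negligible quadratic, and residuals, with identical variance bookkeeping, and Slutsky/Lindeberg to conclude. Two remarks. First, the leave-one-out-in-$p$ machinery you anticipate for controlling $r_{jj'}$ is not needed: the paper bounds $\mathbf{\hat y}^{c(j)\top}(\mathbf{U}^c\mathbf{U}^{c\top}-\mathbf{U}_0^c\mathbf{U}_0^{c\top})\mathbf{\hat y}^{c(j')}$ by crude Cauchy--Schwarz, $||\mathbf{\hat y}^{c(j)}||\,||\mathbf{\hat y}^{c(j')}||\,||\mathbf{U}^c\mathbf{U}^{c\top}-\mathbf{U}_0^c\mathbf{U}_0^{c\top}||\lesssim n\,(1/n+1/p)$ up to the $C$ factors, and the hypothesis $(C_{r,\{n_s\}_{s}^{S}}C_{c,p})^2\sqrt{n}/p=o(1)$ is exactly what makes this $o(\sqrt{n})$; no decoupling of $\mathbf{\hat M}$ from $\mathbf{e}^{(j)}$ is required. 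Second, for the study-specific statement, be precise about where the $||\mathbf{\gamma}_{0sj}||^2||\mathbf{\lambda}_{0j'}||^2+\cdots$ ingredients come from: the dominant piece of $\mathbf{M}_{0s}\mathbf{\lambda}_{0j}-\mathbf{\hat M}_s\mathbf{\mu}_{\lambda_j}$ is not the $O_P(n^{-1/2})$ sampling error of $\mathbf{\mu}_{\lambda_j}$ but the $O_P(1)$-norm projection leakage $\mathbf{U}_{0s}^\perp\mathbf{U}_{0s}^{\perp\top}\mathbf{M}_{0s}\mathbf{\lambda}_{0j}$, i.e.\ the component of the shared signal lying in the study-specific factor directions, which is removed in forming $\mathbf{\hat Y}^c$ and hence cannot be reproduced by $\mathbf{\hat M}_s\mathbf{\mu}_{\lambda_j}$. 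Because this vector lies in the column space of $\mathbf{F}_{0s}$, its pairing with $\mathbf{F}_{0s}\mathbf{\gamma}_{0sj'}$ yields $n_s^{-1}\sum_{i}(\mathbf{\gamma}_{0sj'}^\top\mathbf{\phi}_{0si})(\mathbf{\lambda}_{0j}^\top\mathbf{\eta}_{0si})=O_P(n_s^{-1/2})$, which is exactly the source of the extra variance; your ``propagated-uncertainty'' cross term is compatible with this, but only if it is understood to contain that leakage component rather than merely the $\sqrt{n}$-rate fluctuation of $\mathbf{\mu}_{\lambda_{j'}}$.
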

While Theorem \ref{thm:posterior_contraction_Lambda_outer} justifies $\mathbf{\mu}_\Lambda\mathbf{\mu}_\Lambda^\top$ and $\mathbf{\mu}_{\Gamma_s} \mathbf{\mu}_{\Gamma_s}^\top$ as point estimators for ${\mathbf{\Lambda}} {\mathbf{\Lambda}}^\top$ and ${\mathbf{\Gamma}}_s {\mathbf{\Gamma}}_s^\top$, respectively, Theorem \ref{thm:clt_mu_Lambda_outer} provides entry-wise control of these estimators in large samples. In particular, for large values of $p$ and sample sizes, 
$\mathbf{\mu}_{\lambda_j}^\top \mathbf{\mu}_{\lambda_{j'}}$ and $\mathbf{\mu}_{\gamma_{sj}}^\top \mathbf{\mu}_{\gamma_{sj'}}$ are approximately normally distributed centered on the corresponding true values with the variance given by $\mathcal S_{0jj'}^2$ divided by the total sample size $n$ and $\mathcal S_{0sjj'}^2$ divided by the study-specific sample size, respectively.
\begin{remark}
    {If Assumption \ref{assumption:homoscedasticity} is not met, results in Theorem \ref{thm:clt_mu_Lambda_outer} still hold replacing $\sigma_{0j}^2$ with the weighted average of study-specific variances. }
\end{remark}

Next, we characterize the asymptotic behavior of the measure of ${\mathbf{\Lambda}}{\mathbf{\Lambda}}^\top$ and ${\mathbf{\Gamma}}_s {\mathbf{\Gamma}}_s^\top$ induced by $\tilde \Pi$.
\begin{theorem}[Bernstein–von Mises theorem]\label{thm:bvm_Lambda_outer}
    If Assumptions \ref{assumption:model}--\ref{assumption:Sigma_diagonal} hold, $n_s = \mathcal O(n_{\min}^2)$, where $n_{\min} = \min_{s=1, \dots, S} n_s$, for all $s=1, \dots, S$, and $\sqrt{n} / p = o(1)$, and for $1 \leq j , j' \leq p$, and $s=1, \dots, S$, let 
    \begin{equation}\label{eq:l_0_sq}
        l_{0, jj'}^2(\rho) = 
        \begin{cases}
           \rho^2 \left(\sigma_{0j}^2 ||\mathbf{\lambda}_{0j'}||^2 + \sigma_{0j'}^2 ||\mathbf{\lambda}_{0j}||^2 \right), \quad &\text{if } j \neq j',\\
4 \rho^2 \sigma_{0j}^2 ||\mathbf{\lambda}_{0j}||^2,  \quad &\text{otherwise,} 
        \end{cases}
    \end{equation}
    and
    \begin{equation}\label{eq:l_0_s_sq}
        l_{0, sjj'}^2(\rho) = 
        \begin{cases}
           \rho^2 \left(\sigma_{0j}^2 ||\mathbf{\gamma}_{0sj'}||^2 + \sigma_{0j'}^2 ||\mathbf{\gamma}_{0sj}||^2 \right), \quad &\text{if } j \neq j',\\
4 \rho^2 \sigma_{0j}^2 ||\mathbf{\gamma}_{0sj}||^2,  \quad &\text{otherwise.} 
        \end{cases}
    \end{equation}
    Then, as $n_1, \dots, n_s, p \to \infty$, with probability at least $1- o(1)$,  
    \begin{equation}\label{eq:bvm}
    \begin{aligned}
        \sup_{x \in \mathbb R} \left| \tilde \Pi\left\{\frac{\sqrt{n} \left(\mathbf{\tilde \lambda}_j^\top \mathbf{\tilde \lambda}_{j'} - \mathbf{\mu}_{\lambda_j}^\top \mathbf{\mu}_{\lambda_{j'}}\right)}{l_{0, jj'}^2(\rho_{\Lambda})} \leq x\right\} - \Phi(x)\right| &\to 0, \\
        \sup_{x \in \mathbb R}
        \left| \tilde \Pi\left\{ \frac{\sqrt{n_s} \left(\mathbf{\tilde \gamma}_{sj}^\top \mathbf{\tilde \gamma}_{sj'} - \mathbf{\mu}_{\gamma_{sj}}^\top \mathbf{\mu}_{\gamma_{sj'}}\right)}{ 
        l_{0, sjj'}^2(\rho_{\Gamma_s})} \leq x\right\} - \Phi(x)\right| &\to 0, \quad (s=1, \dots, S),
    \end{aligned}\quad (1 \leq j, j' \leq p),
    \end{equation}
    where $\Phi(\cdot)$ denotes the cumulative distribution function of a standard Gaussian random variable.
\end{theorem}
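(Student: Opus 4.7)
My plan is to work directly from the conditionally conjugate posteriors in \eqref{eq:lambda_sigma_cc}--\eqref{eq:gamma_cc}, to represent each posterior draw as an explicit Gaussian perturbation of the posterior mean, to expand the inner products, and finally to invoke the consistency results in Theorems~\ref{thm:factors_procrustes_error}--\ref{thm:clt_mu_Lambda_outer} to replace data-dependent scalings by their deterministic limits. Because $\mathbf K = (n+\tau_\Lambda^{-2})^{-1}\mathbf I_{k_0}$, any posterior draw from \eqref{eq:lambda_sigma_cc} can be written as
\[
\mathbf{\tilde\lambda}_j \;=\; \mathbf{\mu}_{\lambda_j} \;+\; \frac{\rho_\Lambda\,\tilde\sigma_j}{\sqrt{n+\tau_\Lambda^{-2}}}\,\mathbf z_j,\qquad \mathbf z_j\sim N(0,\mathbf I_{k_0}),
\]
with $\mathbf z_j\perp \mathbf z_{j'}$ and $\tilde\sigma_j^{2}\sim IG(\gamma_n/2,\gamma_n\delta_j^{2}/2)$, all conditional on the data. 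Expanding $\mathbf{\tilde\lambda}_j^\top\mathbf{\tilde\lambda}_{j'}-\mathbf{\mu}_{\lambda_j}^\top\mathbf{\mu}_{\lambda_{j'}}$ produces a linear-in-$\mathbf z$ part and a quadratic remainder proportional to $\mathbf z_j^\top\mathbf z_{j'}/(n+\tau_\Lambda^{-2})$; after multiplication by $\sqrt n$ the remainder is $O_P(1/\sqrt n)$ since $\mathbf z_j^\top\mathbf z_{j'}=O_P(1)$ and, under the posterior, $\tilde\sigma_j^{2}$ concentrates around the bounded quantity $\delta_j^{2}$.

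Conditionally on the data, what survives is therefore the Gaussian mixture
\[
\sqrt n\bigl(\mathbf{\tilde\lambda}_j^\top\mathbf{\tilde\lambda}_{j'}-\mathbf{\mu}_{\lambda_j}^\top\mathbf{\mu}_{\lambda_{j'}}\bigr)\;=\;\frac{\rho_\Lambda\sqrt n}{\sqrt{n+\tau_\Lambda^{-2}}}\bigl(\tilde\sigma_{j'}\mathbf{\mu}_{\lambda_j}^\top\mathbf z_{j'}+\tilde\sigma_j\mathbf{\mu}_{\lambda_{j'}}^\top\mathbf z_j\bigr)+o_P(1),
\]
which, given $\tilde\sigma_j,\tilde\sigma_{j'}$ and the data, is exactly Gaussian with mean zero and variance $\rho_\Lambda^{2}\bigl(\tilde\sigma_{j'}^{2}\|\mathbf{\mu}_{\lambda_j}\|^{2}+\tilde\sigma_j^{2}\|\mathbf{\mu}_{\lambda_{j'}}\|^{2}\bigr)$; the diagonal case $j=j'$ is identical except for a factor of two that produces the $4\rho^{2}\sigma_{0j}^{2}\|\mathbf{\lambda}_{0j}\|^{2}$ appearing in \eqref{eq:l_0_sq}. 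To replace this data-conditional variance by $l_{0,jj'}^{2}(\rho_\Lambda)$ I will combine (i) the diagonal case of Theorem~\ref{thm:clt_mu_Lambda_outer}, which gives $\|\mathbf{\mu}_{\lambda_j}\|^{2}\to\|\mathbf{\lambda}_{0j}\|^{2}$ with probability tending to one, and (ii) a direct computation that $\delta_j^{2}\to\sigma_{0j}^{2}$ under Assumptions~\ref{assumption:homoscedasticity}--\ref{assumption:Sigma_diagonal}, which together with the inverse-gamma shape growing like $n$ gives $\tilde\sigma_j^{2}\to\sigma_{0j}^{2}$ in posterior probability. A Slutsky-type argument on the event of probability $1-o(1)$ on which both scalings are close to their limits yields pointwise convergence of the posterior characteristic function to that of $N(0,l_{0,jj'}^{2}(\rho_\Lambda))$, and Pólya's uniform-limit lemma (applicable since the Gaussian CDF is continuous) upgrades this to the uniform-in-$x$ claim in the first line of \eqref{eq:bvm}.

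For the study-specific statement, the conditional posterior \eqref{eq:gamma_cc} analogously gives
\[
\mathbf{\tilde\gamma}_{sj}\;=\;\mathbf{\mu}_{\gamma_{sj}}\;-\;\frac{1}{n_s+\tau_{\Gamma_s}^{-2}}\mathbf{\hat F}_s^\top\mathbf{\hat M}_s\bigl(\mathbf{\tilde\lambda}_j-\mathbf{\mu}_{\lambda_j}\bigr)\;+\;\frac{\rho_{\Gamma_s}\tilde\sigma_j}{\sqrt{n_s+\tau_{\Gamma_s}^{-2}}}\mathbf w_{sj},
\]
and cross-study and across-$j$ independence of the $\mathbf w_{sj}$ and the $\mathbf z_j$ again reduces the leading stochastic contribution to the same Gaussian-mixture form. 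The main technical obstacle is bounding the middle, $\mathbf{\hat F}_s^\top\mathbf{\hat M}_s$-mediated term: I will combine the near-orthogonality $\mathbf{\hat F}_s^\top\mathbf{\hat M}_s\approx 0$ established in the supplement with the posterior concentration $\|\mathbf{\tilde\lambda}_j-\mathbf{\mu}_{\lambda_j}\|=O_P(1/\sqrt n)$ to show this term is $o_P(1/\sqrt{n_s})$ under the rate $\sqrt n/p=o(1)$; this is the one place where the Procrustes bound of Theorem~\ref{thm:factors_procrustes_error} is genuinely used. The remainder of the argument — identification of the limiting variance $l_{0,sjj'}^{2}(\rho_{\Gamma_s})$ via $\|\mathbf{\mu}_{\gamma_{sj}}\|^{2}\to\|\mathbf{\gamma}_{0sj}\|^{2}$ and $\tilde\sigma_j^{2}\to\sigma_{0j}^{2}$, followed by Slutsky and Pólya — is then identical to the shared-loadings case and produces the second line of \eqref{eq:bvm}.
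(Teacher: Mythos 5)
Your proposal is correct and follows essentially the same route as the paper's proof: representing each posterior draw as $\mathbf{\mu}_{\lambda_j}$ plus an explicit Gaussian perturbation scaled by $\rho_\Lambda\tilde\sigma_j/\sqrt{n+\tau_\Lambda^{-2}}$, isolating the conditionally Gaussian linear term, showing the quadratic remainder is $O_P(1/\sqrt n)$ after rescaling, replacing $\|\mathbf{\mu}_{\lambda_j}\|^2$ and $\tilde\sigma_j^2$ by their limits via the consistency lemmas, and, for the study-specific loadings, killing the $\mathbf{\hat F}_s^\top\mathbf{\hat M}_s$-mediated propagation term through the near-orthogonality bound $\|\mathbf U_s^{\perp\top}\mathbf U_s^{c}\|\lesssim n_s^{-1}+p^{-1}$. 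The only cosmetic differences are that the paper sources $\|\mathbf{\mu}_{\lambda_j}\|^2\to\|\mathbf{\lambda}_{0j}\|^2$ and $\tilde\sigma_j^2\to\sigma_{0j}^2$ from Lemmas \ref{lemma:convergence} and \ref{lemma:sigma_j_tilde}--\ref{lemma:convergence_tilde_sigma} rather than from the CLT, and leaves the P\'olya uniformization implicit.
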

Theorem \ref{thm:bvm_Lambda_outer} states that the induced distribution on the $i,j$-th elements of the matrices ${\mathbf{\tilde \Lambda}} {\mathbf{\tilde \Lambda}}$ and ${\mathbf{\tilde \Gamma}}_s {\mathbf{\tilde \Gamma}}_s^\top$, after appropriate centering by the point estimators, are asymptotically zero mean Gaussian distributions with variance $l_{0, jj'}^2(\rho_{\Lambda})$ and $l_{0, sjj'}^2(\rho_{\Gamma_s})$, where $l_{0, jj'}^2(\cdot)$ and $l_{0, sjj'}^2(\cdot)$ are defined in \eqref{eq:l_0_sq} and \eqref{eq:l_0_s_sq}. The next corollary provides an approximation of the credible intervals for elements of ${\mathbf{\Lambda}} {\mathbf{\Lambda}}^\top$ and ${\mathbf{\Gamma}}_s {\mathbf{\Gamma}}_s^\top$ when $p$ and the sample sizes are large. As a direct consequence of Theorem \ref{thm:bvm_Lambda_outer}, the asymptotic approximation to the $(1-\alpha)100\%$ equal-tail credible intervals from $\tilde \Pi$ for $\mathbf{\tilde \lambda}_j^\top \mathbf{\tilde \lambda}_{j'}$ and $\mathbf{\tilde \gamma}_{sj}^\top\mathbf{\tilde \gamma}_{sj'} $ are
    \begin{equation}
        \mathcal C_{jj'}(\rho_{\Lambda}) = \left[\mathbf{\mu}_{\lambda_j}^\top \mathbf{\mu}_{\lambda_{j'}}\pm z_{1- \alpha/2} \frac{l_{0jj'}(\rho_{\Lambda})}{\sqrt{n}}\right],
    \end{equation}
    and 
     \begin{equation}
        \mathcal C_{sjj'}(\rho_{\Gamma_s}) = \left[ \mathbf{\mu}_{\gamma_{sj}}^\top\mathbf{\mu}_{\gamma_{sj'}}\pm z_{1- \alpha/2} \frac{l_{0sjj'}(\rho_{\Gamma_s})}{\sqrt{n_s}}\right],
    \end{equation}
    respectively, where $z_{1-\alpha/2} = \Phi^{-1}(1- \alpha/2)$. 
Combining Theorems \ref{thm:clt_mu_Lambda_outer} and \ref{thm:bvm_Lambda_outer}, we can characterize the frequentist coverage of credible intervals from $\tilde \Pi$ as
\begin{equation}
\label{eq:coverage_ci}
    \begin{aligned}
         pr\left\{ \mathbf{\lambda}_{0j}^\top\mathbf{\lambda}_{0j'} \in \mathcal C_{jj'}(\rho_{\Lambda}) \right\} &= pr\left\{ \sqrt{n} \frac{ \left| \mathbf{\mu}_{\lambda_j}^\top \mathbf{\mu}_{\lambda_{j'} } -\mathbf{\lambda}_{0j}^\top \mathbf{\lambda}_{0j'} \right|
       }  {S_{0,jj}}  \leq z_{1-\alpha/2} \frac{l_{0jj'}(\rho_{\Lambda})}{S_{0,jj}} \right\}\\
       &\to q_{jj'}(\rho_{\Lambda}) = 2 \Phi\left\{z_{1-\alpha/2} \frac{l_{0jj'}(\rho_{\Lambda})}{S_{0,jj}}\right\}-1, \\
       pr\left\{ \mathbf{\gamma}_{0sj}^\top \mathbf{\gamma}_{0sj'} \in \mathcal C_{sjj'}(\rho_{\Gamma_s}) \right\} &= pr\left\{ \sqrt{n} \frac{ \left| \mathbf{\mu}_{\gamma_{sj}}^\top \mathbf{\mu}_{\gamma_{sj'} } \mathbf{\gamma}_{0sj}^\top \mathbf{\gamma}_{0sj'} \right|
       }  {S_{0,sjj}}  \leq z_{1-\alpha/2} \frac{l_{0sjj'}(\rho)}{S_{0,sjj}} \right\}\\
       &\to q_{sjj'}(\rho_{\Gamma_s}) = 2 \Phi\left\{z_{1-\alpha/2} \frac{l_{0sjj'}(\rho_{\Gamma_s})}{S_{0,sjj}}\right\}-1, 
    \end{aligned}     
    \end{equation}
    as $n_1, \dots, n_S, p \to \infty$.
Then, we can use \eqref{eq:coverage_ci} to tune the variance inflation terms. 
Let us define  \begin{equation*}\label{eq:b_0}
    b_{0jj'} = \begin{cases}
        \big( 1 +  \frac{||\mathbf{\lambda}_{0j}||^2 ||\mathbf{\lambda}_{0j'}||^2 + (\mathbf{\lambda}_{0j}^\top\mathbf{\lambda}_{0j'})^2 }{\sigma_{0j}^2 ||\mathbf{\lambda}_{0j'} ||^2 + \sigma_{0j'}^2||\mathbf{\lambda}_{0j} ||^2} \big)^{1/2}, \quad &\text{if } j \neq j'   \\   \big(1 + \frac{ |\mathbf{\lambda}_{0j} ||^2 }{ 2 \sigma_{0j}^2 }\big)^{1/2}, \quad &\text{otherwise,} 
              \end{cases}
\end{equation*}
and, for $s=1, \dots, S$, 
\begin{equation*}\label{eq:b_0s}
    b_{0sjj'} = \begin{cases}
    \begin{aligned}
        \big[ &1 +  (\sigma_{0j}^2 ||\mathbf{\gamma}_{0sj'}||^2 + \sigma_{0j'}^2||\mathbf{\gamma}_{0sj} ||^2)^{-1}\big(||\mathbf{\gamma}_{0sj}||^2 ||\mathbf{\gamma}_{0sj'}||^2 + (\mathbf{\gamma}_{0sj}^\top \mathbf{\gamma}_{0sj'})^2  \\ &+ ||\mathbf{\gamma}_{0sj} ||^2 ||\mathbf{\lambda}_{0j'}||^2 
      + ||\mathbf{\gamma}_{0sj} ||^2 ||\mathbf{\lambda}_{0j}||^2+  2\mathbf{\gamma}_{0sj}^\top\mathbf{\gamma}_{0sj'} \mathbf{\lambda}_{0j}^\top \mathbf{\lambda}_{0j'}\big) \big]^{1/2} 
    \end{aligned}
       , \quad &\text{if } j \neq j'   \\   \big(1 + \frac{ ||\mathbf{\gamma}_{0sj}||^2 +2||\mathbf{\lambda}_{0j}||^2 }{ 2\sigma_{0j}^2 }\big)^{1/2}, \quad &\text{otherwise,} 
              \end{cases}
\end{equation*}
and note that $\frac{l_{0jj'}(b_{0jj})}{S_{0,jj}}= 1$ and $\frac{l_{0sjj'}(b_{0sjj})}{S_{0,sjj}} =1$. Hence, setting
$\rho_{\Lambda} = b_{0jj'}$ and $\rho_{\Gamma_s} = b_{0sjj'}$, we have $ pr\left\{ \mathbf{\lambda}_{0j}^\top\mathbf{\lambda}_{0j'} \in \mathcal C_{jj'}(\rho_{\Lambda}) \right\} \to 1 - \alpha$ and $ pr\left\{ \mathbf{\gamma}_{0sj}^\top \mathbf{\gamma}_{0sj'} \in \mathcal C_{sjj'}(\rho_{\Gamma_s}) \right\} \to 1 - \alpha$. Clearly, this strategy is not feasible as values of the $b_{0jj'}$'s and $b_{0sjj'}$'s depend on the true parameters. However, we can replace them by consistent estimates. Indeed, if we set $\rho_{\Lambda} = b_{jj'}$ and $\rho_{\Gamma_s} = b_{sjj'}$, where $b_{jj'}$ and $b_{sjj'}$ are defined in \eqref{eq:b} and \eqref{eq:b_s}, we have 
$ pr\left\{ \mathbf{\lambda}_{0j}^\top\mathbf{\lambda}_{0j'} \in \mathcal C_{jj'}(\rho_{\Lambda}) \right\} \to 1 - \alpha$ and $ pr\left\{ \mathbf{\gamma}_{0sj}^\top \mathbf{\gamma}_{0sj'} \in \mathcal C_{sjj'}(\rho_{\Gamma_s}) \right\} \to 1 - \alpha$, since $\mathbf{\mu}_{\lambda_j}^\top \mathbf{\mu}_{\lambda_{j'}} \overset{pr}{\to} \mathbf{\lambda}_{0j}^\top \mathbf{\lambda}_{0j'}$, $\mathbf{\mu}_{\gamma_{sj}}^\top \mathbf{\mu}_{\gamma_{sj'}} \overset{pr}{\to} \mathbf{\gamma}_{0sj}^\top \mathbf{\gamma}_{0sj'}$ and $V_j \overset{pr}{\to} \sigma_{0j}^2$ for $j, j'=1, \dots, p$, by Lemma \ref{lemma:convergence} and \ref{lemma:convergence_Gamma_s} together with an application of Continuous Mapping Theorem \citep{billingsley}. Moreover, since $q_{jj'}(\cdot)$ and $q_{sjj'}(\cdot)$ are increasing functions, choosing $\rho_{\Lambda} = \max_{j,j'} b_{jj'}$ and $\rho_{\Gamma_s} = \max_{j,j'} b_{sjj'}$ guarantees entry-wise asymptotic valid coverage of credible intervals for ${\mathbf{\Lambda}} {\mathbf{\Lambda}}^\top$ and ${\mathbf{\Gamma}}_s {\mathbf{\Gamma}}_s^\top$ respectively. Alternatively, one could pick $\rho_{\Lambda}$ and $\rho_{\Gamma_s}$ by solving the non-linear equations 
\begin{equation*}
    \frac{1}{\binom{p}{2}} \sum_{1 \leq j \leq j' \leq p}q_{jj'}(\rho_{\Lambda}) = 1 - \alpha, \quad  \frac{1}{\binom{p}{2}} \sum_{1 \leq j \leq j' \leq p}q_{sjj'}(\rho_{\Gamma_s})= 1 - \alpha,
\end{equation*}
to obtain valid asymptotic coverage on average across elements of the low-rank components. In practice, we avoid solving those non-linear equations and approximate their solutions via the empirical mean of the $b_{jj'}$'s and $b_{sjj'}$'s as in \cite{fable}, and set $\rho_{\Lambda} = \binom{p}{2}^{-1} \sum_{1 \leq j \leq j' \leq p} b_{jj'}$ and $\rho_{\Gamma_s} =  \binom{p}{2}^{-1}\sum_{1 \leq j \leq j' \leq p}  b_{sjj'}$.

\section{Numerical experiments} \label{sec:numerical_experiments}
We illustrate the performance of our methodology in accuracy and quantification of uncertainty for ${\mathbf{\Lambda}}_0 {\mathbf{\Lambda}}_0^\top$ and the ${\mathbf{\Gamma}}_{0s}{\mathbf{\Gamma}}_{0s}^\top $'s, as well as computing time. 
We compare with two variational inference schemes, a stochastic variational inference algorithm (\texttt{SVI}) and a coordinate ascent algorithm (\texttt{CAVI}), of \citet{vi_bmsfa} {and the spectral estimator developed in \citet{Ando2017ClusteringHuge} (\texttt{SPECTRAL}), tailored to model \eqref{eq:fact_model}}. 
In the supplement, we consider a lower-dimensional example, where we also compare to the maximum likelihood estimate of \eqref{eq:fact_model} \citep{msfa} and a Bayesian estimate where posterior computation is carried out via a Gibbs sampler \citep{bmsfa}.
First, we consider an experiment where we simulate data from model \eqref{eq:fact_model}, generating the factor loadings as:
\begin{equation}\label{eq:generate_loadings} \big[  {\mathbf{\Lambda}_0} ~ {\mathbf{\Gamma}}_{01} ~  \cdots  ~ {\mathbf{\Gamma}}_{0S} \big] = L, \quad  [L']_{jl} \sim 0.5 \delta_0 + 0.5N(0, \sigma^2), \quad (j=1, \dots, p; l = 1, \dots, k_0 + {\textstyle \sum_s} q_s).  \end{equation}
We generate idiosyncratic variances from a uniform distribution supported on $[0.5, 5]$.  We consider the case where variances vary between outcomes but not between studies (homoscedastic), {while the heteroscedastic case where variances also vary across studies is considered in the supplement with minimal differences in conclusions.} 
{
Next, we consider an experiment where we generate loadings as follows
\begin{equation}
\begin{aligned}
&\mathbf{\Lambda}_0 = \Omega_0 + \mathbf{\bar \Lambda}_0\\
&\mathbf{ \Gamma}_{01} = \mathbf{\bar \Gamma}_{01}, \quad \mathbf{ \Gamma}_{0s} = \Omega_s + \mathbf{\bar \Gamma}_{0s}, \quad (s=2, \dots, S),\\
\end{aligned} \quad \text{where} \quad  \big[  {\mathbf{\bar \Lambda}_0} ~ {\mathbf{\bar \Gamma}}_{01} ~  \cdots  ~ {\mathbf{\bar \Gamma}}_{0S} \big] = L,
\end{equation}
with $L$ generated as in \eqref{eq:generate_loadings} and 
\begin{equation}
\begin{aligned}
 [ \Omega_0]_{jl} = [ \Omega_2]_{jl} = \dots = [ \Omega_S]_{jl}  &\sim N(0, 0.3^2), \quad (j=1, \dots, p; l=1,2), \quad\\
 [ \Omega_0]_{jl} &= 0, \quad (j=1, \dots, p; l=3, \dots,  k_0), \quad \\
 [ \Omega_s]_{jl} &= 0, \quad  (j=1, \dots, p; l=3, \dots,  q_s; s=2, \dots, S).
\end{aligned}
\end{equation}
}
{The first experiment targets a setting in which the $\mathbf \Gamma_{0s}$'s capture study-specific batch effects that are completely unrelated to a common underlying biological signal.
This is achieved by 
generating loading matrices independently of each other.
The second experiment captures a more realistic scenario 
with increased collinearity between $\mathbf{\Lambda_{0s}}$ and $\mathbf{\Gamma_{0s}}$ 
for some studies, 
making the separation of shared signal and study-specific variation more challenging.
This is achieved via a confounder $\Omega$ affecting a selected subset of the $\mathbf \Gamma_{0s}$'s, but not all. 
Incidentally, this allows the batch effects to bear some similarity to one another.
Both setups respect assumption \ref{assumption:li} almost surely.}

In all experiments, we take $S=5$ studies and choose study sample sizes and outcome dimensions as $(n_s, p) \in \mathbb (500, 1000) \times (500, 5000)$. For each configuration, we replicate the experiments 50 times. We evaluate estimation accuracy for ${\mathbf{\Lambda}}_0 {\mathbf{\Lambda}}_0^\top$ and ${\mathbf{\Gamma}}_{0s}{\mathbf{\Gamma}}_{0s}^\top $'s via the Frobenius norm of the difference of the estimate and true parameter rescaled by the norm of the true parameter and for $\mathbf{M}_s$ and $\mathbf{F}_s$ via the Procrustes or Frobenius error rescaled by the parameter size, i.e. for estimators $\widehat{\mathbf{\Lambda} \mathbf{\Lambda}^\top}$, $\widehat{{\mathbf{\Gamma}}_{s}{\mathbf{\Gamma}}_{s}^\top}$, $\mathbf{\hat M}_s$ and $\mathbf{\hat F}_s$, we compute
\begin{equation*}
\begin{aligned}\frac{||\widehat{\mathbf{\Lambda} \mathbf{\Lambda}^\top} -   {\mathbf{\Lambda}}_0 {\mathbf{\Lambda}}_0^\top||_F}{||{\mathbf{\Lambda}}_0 {\mathbf{\Lambda}}_0^\top||_F},& \quad \frac{||\widehat{{\mathbf{\Gamma}}_{s}{\mathbf{\Gamma}}_{s}^\top} - {\mathbf{\Gamma}}_{0s}{\mathbf{\Gamma}}_{0s}^\top||_F}{|| {\mathbf{\Gamma}}_{0s}{\mathbf{\Gamma}}_{0s}^\top ||_F},\\
    \min_{\mathbf{R} \in \mathbb R^{k_0 \times k_0} : \mathbf{R}^\top \mathbf{R} = \mathbf{I}_{k_0}} \frac{1}{\sqrt{n_s k_o}}||\mathbf{\hat M}_s \mathbf{R}_s- \mathbf{M}_{0s}||_F, &\quad   \min_{\mathbf{R} \in \mathbb R^{k_0 \times k_0} : \mathbf{R}^\top \mathbf{R} = \mathbf{I}_{k_0}} \frac{1}{\sqrt{n_s q_s}}||\mathbf{\hat F}_s \mathbf{R}_s- \mathbf{F}_{0s}||_F.
\end{aligned}
\end{equation*}

We evaluate uncertainty quantification via the average frequentist coverage of equal-tail $95\%$ credible intervals for randomly chosen $100 \times 100$ submatrices of covariance low-rank components. Additional details about the numerical studies are reported in the supplemental material. 

\begin{table}
\caption{Comparison of the methods in terms of estimation accuracy with independently generated loading matrices. Estimation errors have been multiplied by $10^2$. We report the mean and standard error over 50 replications.   \label{tab:accuracy_hom}}
\centering
{	\begin{tabular}{crrrr}
& \multicolumn{4}{c}{$p=500$, $n_s=500$}\\
Method & ${\mathbf{\Lambda}} {\mathbf{\Lambda}}^\top$ & $\mathbf{\Gamma}_s\mathbf{\Gamma}_s^\top$ & $\mathbf{M}_s$ & $\mathbf{F}_s$   \\
\hline
  \texttt{CAVI} & $60.74^{0.82}$ & $51.64^{0.70}$& $64.44^{0.91}$& $35.24^{0.83}$ \\ 
  \texttt{SVI} & $78.93^{0.15}$ & $65.09^{0.73}$& $80.35^{0.30}$& $62.88^{1.70}$ \\  
	 	 \texttt{SPECTRAL} & $17.68^{0.10}$ & $42.58^{0.31}$& $24.69^{0.13}$& $22.50^{0.11}$ \\ %
\texttt{BLAST} & $15.81^{0.07}$ & $37.12^{0.22}$& $23.54^{0.09}$& $22.49^{0.11}$ \\ 
   & \multicolumn{4}{c}{$p=500$, $n_s=1000$}\\
Method & ${\mathbf{\Lambda}} {\mathbf{\Lambda}}^\top$ & $\mathbf{\Gamma}_s\mathbf{\Gamma}_s^\top$ & $\mathbf{M}_s$ & $\mathbf{F}_s$   \\ 
\hline
	\texttt{CAVI} & $49.08^{0.40}$ & $39.39^{0.89}$& $53.27^{0.46}$& $30.62^{1.09}$ \\ 
    	\texttt{SVI} & $72.79^{0.17}$ & $53.98^{0.77}$& $75.88^{0.11}$& $48.54^{1.09}$ \\ 
	       \texttt{SPECTRAL} & $13.21^{0.07}$ & $34.90^{0.29}$& $24.47^{0.13}$& $21.82^{0.09}$ \\ %
\texttt{BLAST}& $11.75^{0.05}$ & $26.90^{0.17}$& $22.37^{0.08}$& $21.82^{0.09}$ \\ 
     & \multicolumn{4}{c}{$p=5000$, $n_s=500$}\\
Method & ${\mathbf{\Lambda}} {\mathbf{\Lambda}}^\top$ & $\mathbf{\Gamma}_s\mathbf{\Gamma}_s^\top$ & $\mathbf{M}_s$ & $\mathbf{F}_s$   \\
\hline
	\texttt{CAVI} & $73.61^{0.16}$ & $51.29^{0.42}$& $170.83^{1.81}$& $33.30^{0.41}$ \\ 
  		\texttt{SVI} &  $83.25^{0.05}$ & $58.65^{0.32}$& $118.31^{0.10}$& $43.30^{0.56}$ \\ 
	       \texttt{SPECTRAL} & $15.08^{0.05}$ & $33.44^{0.13}$& $8.74^{0.04}$& $8.67^{0.08}$ \\ %
\texttt{BLAST}&$14.75^{0.04}$ & $35.80^{0.17}$& $12.33^{0.11}$& $8.67^{0.08}$ \\ 
     & \multicolumn{4}{c}{$p=5000$, $n_s=1000$}\\
Method & ${\mathbf{\Lambda}} {\mathbf{\Lambda}}^\top$ & $\mathbf{\Gamma}_s\mathbf{\Gamma}_s^\top$ & $\mathbf{M}_s$ & $\mathbf{F}_s$   \\
\hline
	\texttt{CAVI} & $76.12^{0.10}$ & $39.19^{0.81}$& $152.43^{3.81}$& $24.60^{0.99}$ \\ 
  		\texttt{SVI} & $81.02^{0.25}$ & $45.34^{1.19}$& $157.85^{1.84}$& $31.05^{1.52}$ \\ 
	  \texttt{SPECTRAL} & $10.66^{0.03}$ & $24.15^{0.09}$& $8.24^{0.03}$& $7.74^{0.06}$ \\ %
      \texttt{BLAST}&$10.59^{0.08}$ & $26.16^{0.38}$& $14.15^{0.16}$& $7.81^{0.20}$\\ 
	\end{tabular}}
\end{table}

\begin{table}
\caption{Comparison of the methods in terms of estimation accuracy in the case with partially collinear loading matrices. Estimation errors have been multiplied by $10^2$. We report the mean and standard error over 50 replications.   \label{tab:accuracy_col}}
\centering
{	\begin{tabular}{crrrr}
& \multicolumn{4}{c}{$p=500$, $n_s=500$}\\
Method & ${\mathbf{\Lambda}} {\mathbf{\Lambda}}^\top$ & $\mathbf{\Gamma}_s\mathbf{\Gamma}_s^\top$ & $\mathbf{M}_s$ & $\mathbf{F}_s$   \\
\hline
  \texttt{CAVI} & $50.84^{0.87}$ & $57.46^{0.83}$& $62.00^{1.09}$ & $31.80^{0.98}$ \\ 
  \texttt{SVI} &  $70.39^{0.24}$ & $68.56^{0.59}$& $80.62^{0.34}$ & $60.51^{1.82}$ \\
	 	 \texttt{SPECTRAL} & $25.31^{0.25}$ & $63.74^{1.60}$& $31.01^{0.53}$ & $21.38^{0.12}$ \\
\texttt{BLAST} & $14.45^{0.06}$ & $33.76^{0.37}$& $23.00^{0.08}$ & $21.40^{0.12}$ \\
   & \multicolumn{4}{c}{$p=500$, $n_s=1000$}\\
Method & ${\mathbf{\Lambda}} {\mathbf{\Lambda}}^\top$ & $\mathbf{\Gamma}_s\mathbf{\Gamma}_s^\top$ & $\mathbf{M}_s$ & $\mathbf{F}_s$   \\ 
\hline
\texttt{CAVI} & $43.36^{0.62}$ & $48.34^{1.16}$& $54.02^{0.55}$ & $27.96^{1.08}$ \\ 
  \texttt{SVI} &  $64.78^{0.24}$ & $61.78^{0.71}$& $77.56^{0.17}$ & $46.71^{1.17}$ \\
	 	 \texttt{SPECTRAL} & $22.33^{0.22}$ & $59.86^{1.82}$& $30.85^{0.53}$ & $20.81^{0.11}$ \\
\texttt{BLAST} & $10.75^{0.05}$ & $24.38^{0.28}$& $21.75^{0.07}$ & $20.82^{0.11}$ \\
     & \multicolumn{4}{c}{$p=5000$, $n_s=500$}\\
Method & ${\mathbf{\Lambda}} {\mathbf{\Lambda}}^\top$ & $\mathbf{\Gamma}_s\mathbf{\Gamma}_s^\top$ & $\mathbf{M}_s$ & $\mathbf{F}_s$   \\
\hline
	\texttt{CAVI} & $63.15^{0.41}$ & $64.37^{0.94}$& $197.41^{3.31}$ & $31.12^{0.43}$ \\ 
  \texttt{SVI} &  $71.72^{0.09}$ & $67.90^{0.71}$& $121.09^{1.27}$ & $40.28^{0.71}$ \\
	 	 \texttt{SPECTRAL} & $22.36^{0.19}$ & $58.82^{1.80}$& $21.00^{0.81}$ & $8.46^{0.09}$ \\
\texttt{BLAST} & $13.59^{0.05}$ & $32.77^{0.33}$& $12.21^{0.11}$ & $8.48^{0.09}$ \\
     & \multicolumn{4}{c}{$p=5000$, $n_s=1000$}\\
Method & ${\mathbf{\Lambda}} {\mathbf{\Lambda}}^\top$ & $\mathbf{\Gamma}_s\mathbf{\Gamma}_s^\top$ & $\mathbf{M}_s$ & $\mathbf{F}_s$   \\
\hline
	\texttt{CAVI} & $64.80^{0.15}$ & $58.75^{1.44}$& $179.27^{3.34}$ & $22.31^{0.31}$ \\ 
  \texttt{SVI} & $69.41^{0.09}$ & $60.87^{1.18}$& $156.92^{0.59}$ & $27.89^{0.46}$ \\
	 	 \texttt{SPECTRAL} & $19.12^{0.16}$ & $54.71^{2.12}$& $20.89^{0.83}$ & $7.51^{0.06}$ \\
\texttt{BLAST} & $9.70^{0.04}$ & $23.64^{0.24}$& $9.79^{0.07}$ & $7.52^{0.06}$ \\
	\end{tabular}}
\end{table}

Tables \ref{tab:accuracy_hom} and \ref{tab:accuracy_col}  report a comparison in terms of estimation accuracy in the two experiments, respectively. 
{\texttt{BLAST} and \texttt{SPECTRAL} substantially outperform variational inference in estimating both low-rank variance components and latent factors. The comparison between \texttt{BLAST} and \texttt{SPECTRAL} is more nuanced. When loadings are generated independently (Table \ref{tab:accuracy_hom}), \texttt{BLAST} has a better estimation accuracy for covariance components, except for the study-specific ones when $p=5000$. Both methods have a comparable accuracy in estimating study-specific latent factors, while, for the shared ones, \texttt{BLAST} (\texttt{SPECTRAL}) has the best performance when $p=500$ ($p=5000$, respectively).
When loading matrices share more similarities (Table \ref{tab:accuracy_col}), \texttt{BLAST} has a much better performance than \texttt{SPECTRAL} in all the settings considered.  }

Tables \ref{tab:uq_hom} and \ref{tab:uq_col} provide strong support for our methodology in terms of providing well-calibrated credible intervals, while alternatives suffer from severe undercoverage.

Table \ref{tab:time_hom} reports a comparison in terms of running time in the first example, where \texttt{BLAST} is faster alternatives in almost all the scenarios considered.

\begin{table}
\caption{Comparison of the methods in terms of frequentist coverage of $95\%$ credible intervals with independently generated loading matrices. Coverage levels have been multiplied by $10^2$. We report the mean and standard error over 50 replications.   \label{tab:uq_hom}}
\centering
{	\begin{tabular}{crrrrr}
& \multicolumn{5}{c}{$p=500$}\\
& \multicolumn{2}{c}{$n_s=500$} & & \multicolumn{2}{c}{$n_s=1000$}\\
Method & ${\mathbf{\Lambda}} {\mathbf{\Lambda}}^\top$ & $\mathbf{\Gamma}_s\mathbf{\Gamma}_s^\top$ & & ${\mathbf{\Lambda}} {\mathbf{\Lambda}}^\top$ & $\mathbf{\Gamma}_s\mathbf{\Gamma}_s^\top$   \\
\hline
	\texttt{CAVI} & $27.76^{0.63}$ & $73.89^{0.60}$& &$26.75^{0.72}$ & $69.93^{0.91}$ \\ 
  		\texttt{SVI} &  $18.33^{0.28}$ & $62.92^{0.68}$& &$16.06^{0.25}$ & $56.88^{0.68}$ \\ 
	 \texttt{BLAST}& $92.52^{0.13}$ & $94.96^{0.10}$& &$92.06^{0.15}$ & $94.95^{0.08}$ \\ 

& \multicolumn{5}{c}{$p=5000$}\\
& \multicolumn{2}{c}{$n_s=500$} & & \multicolumn{2}{c}{$n_s=1000$}\\
Method & ${\mathbf{\Lambda}} {\mathbf{\Lambda}}^\top$ & $\mathbf{\Gamma}_s\mathbf{\Gamma}_s^\top$ & & ${\mathbf{\Lambda}} {\mathbf{\Lambda}}^\top$ & $\mathbf{\Gamma}_s\mathbf{\Gamma}_s^\top$   \\
\hline
	\texttt{CAVI} & $22.39^{0.30}$ & $73.52^{0.48}$& &$15.73^{0.83}$ & $70.65^{1.90}$ \\ 
  		\texttt{SVI} & $18.22^{0.26}$ & $70.97^{0.48}$& &$14.33^{0.76}$ & $66.36^{1.48}$ \\ 
	 \texttt{BLAST}&$94.09^{0.14}$ & $94.33^{0.09}$& &$93.71^{0.59}$ & $94.17^{0.33}$ \\ 
     
	\end{tabular}}
\end{table}

\begin{table}
\caption{Comparison of the methods in terms of frequentist coverage of $95\%$ credible intervals in the case with partially collinear loading matrices. Coverage level have been multiplied by $10^2$. We report the mean and standard error over 50 replications.   \label{tab:uq_col}}
\centering
{	\begin{tabular}{crrrrr}
& \multicolumn{5}{c}{$p=500$}\\
& \multicolumn{2}{c}{$n_s=500$} & & \multicolumn{2}{c}{$n_s=1000$}\\
Method & ${\mathbf{\Lambda}} {\mathbf{\Lambda}}^\top$ & $\mathbf{\Gamma}_s\mathbf{\Gamma}_s^\top$ & & ${\mathbf{\Lambda}} {\mathbf{\Lambda}}^\top$ & $\mathbf{\Gamma}_s\mathbf{\Gamma}_s^\top$   \\
\hline
	\texttt{CAVI} & $34.26^{0.70}$ & $59.50^{1.22}$ & & $31.05^{1.09}$ & $53.18^{1.60}$ \\ 
  		\texttt{SVI} & $24.94^{0.24}$ & $50.18^{0.98}$ & & $20.51^{0.21}$ & $42.66^{1.05}$  \\ 
	 \texttt{BLAST}& $92.95^{0.14}$ & $94.42^{0.11}$ & & $91.93^{0.17}$ & $94.50^{0.09}$ \\ 
& \multicolumn{5}{c}{$p=5000$}\\
& \multicolumn{2}{c}{$n_s=500$} & & \multicolumn{2}{c}{$n_s=1000$}\\
Method & ${\mathbf{\Lambda}} {\mathbf{\Lambda}}^\top$ & $\mathbf{\Gamma}_s\mathbf{\Gamma}_s^\top$ & & ${\mathbf{\Lambda}} {\mathbf{\Lambda}}^\top$ & $\mathbf{\Gamma}_s\mathbf{\Gamma}_s^\top$   \\
\hline
	\texttt{CAVI} & $28.59^{0.38}$ & $52.85^{1.51}$ & & $21.52^{0.34}$ & $45.81^{1.85}$ \\ 
  		\texttt{SVI} & $25.93^{0.32}$ & $52.32^{1.39}$ & & $20.76^{0.32}$ & $44.66^{1.65}$  \\ 
	 \texttt{BLAST}& $94.14^{0.13}$ & $94.03^{0.11}$ & & $94.17^{0.13}$ & $94.03^{0.09}$ \\
	\end{tabular}}
\end{table}

\begin{table}
\caption{Comparison of the methods in terms of running time in seconds. We report the mean and standard error over 50 replications.   \label{tab:time_hom}}
\centering
{	\begin{tabular}{crrrrr}
& \multicolumn{5}{c}{Time (s)}\\
& \multicolumn{2}{c}{$p=500$} & & \multicolumn{2}{c}{$p=5000$}\\
& $n_s = 500$ & $n_s = 1000$ & & $n_s = 500$ & $n_s = 1000$ \\
\hline
	\texttt{CAVI} & $923^{81}$ & $4293^{300}$ & &  $3768^{194}$ & $6053^{397}$ \\ 
  		\texttt{SVI} & $185^{28}$ & $358^{74}$ & &  $2491^{52}$ & $6591^{164}$ \\ 
              \texttt{SPECTRAL} & $49^{3}$ & $89^{6}$ & &  $2165^{32}$ & $3872^{151}$ \\ 
	 \texttt{BLAST} & $31^{1}$ & $92^{5}$ & &  $925^{28}$ & $2416^{37}$ \\ 
	\end{tabular}}
\end{table}

\section{Application}\label{sec:application}

We consider data sets from three studies that analyze gene expression among immune cells. Estimation of gene dependencies is a fundamental task in the development of cancer treatments \citep{tan}. Two studies are part of the
ImmGen project \citep{yoshida}. One is is the GSE109125 bulkRNAseq dataset, which contains data from 103 immunocyte populations \citep{yoshida}, while the other is the GSE37448  microarray dataset \citep{elpek}. 
Finally, the third study is the GSE15907 microarray dataset \citep{painter, desch}, which measures multiple ex vivo immune lineages, primarily from adult B6 male mice. 
The study sample sizes are 156, 628 and 146, respectively.

After preprocessing (described in Section \ref{sec:additional_details_application} in the supplementary material), we consider two experiments that retain 2846 and 7870 genes corresponding to the intersection of $25\%$ and $50\%$ of the genes with the largest variance in each study, respectively. Previous analyses of these data sets focused on a much smaller number of genes for computational feasibility \citep{sufa}. The computational efficiency of our procedure allows us to scale the analysis to thousands of gene while maintaining computational feasibility.

 To estimate latent dimensions, we apply the procedure described in Section \ref{subsec:estimation_k} to data sets with $p=2846$ genes, which estimates the shared latent dimension as $15$ and study-specific latent dimensions as $17$, $54$ and $21$, respectively. {In both experiments, \texttt{CAVI} runs into numerical issues and is omitted from the analysis.} 

For each experiment, we test the out-of-sample performance of competing methodologies. We randomly leave out $20\%$ of samples from each study and fit the model on the remaining $80\%$. We randomly divide the outcomes into two halves, and, for each left-out sample, we predict the first half of outcomes via their posterior conditional expected value given the second half. For each gene, we compute the mean squared error normalized by its empirical variance. We also compute the log-likelihood of the entire test-set. We report the out-of-sample accuracy in Table \ref{tab:oos_application}.
{For both scenarios, \texttt{SPECTRAL} has the lowest mean squared error with \texttt{BLAST} being a closed second. 
Table \ref{tab:coverage_application} reports the mean coverage of $95\%$ predictive intervals in the out-of-sample prediction task. Unsurprisingly, all three methods have the lowest average coverage for the GSE109125 study, where they also achieve the highest mean squared error. Overall, \texttt{BLAST} and \texttt{SVI} provide more accurate coverage than \texttt{SPECTRAL}. 
Finally, we also compute the log-likelihood in the test set using point estimates for the overall covariance matrix of each method and tested whether the out-of-sample log-likelihood of the \texttt{BLAST} estimate is larger than that of other methodologies by means of one-sided paired t-tests.  The results are reported in Table \ref{tab:log_lik_application}, with \texttt{BLAST} producing excellent out-of-sample results that outperform competitors in almost all cases. 
Additional plots and details on the analysis are reported in Section \ref{sec:additional_details_application} in the supplementary material.}

\begin{table}
\caption{Comparison of the methods in terms of out of sample accuracy. We report mean, $1^{st}$ and $3^{rd}$ quantiles for the mean squared error normalized by each gene empirical variance. \label{tab:oos_application}}
\centering
{	\begin{tabular}{crrrrrrrr}
& \multicolumn{8}{c}{$p=2846$}\\
& \multicolumn{2}{c}{ GSE15907}  & & \multicolumn{2}{c}{ GSE37448} & &  \multicolumn{2}{c}{GSE109125}\\
Method & Mean & ($Q_1$, $Q_3$) & & Mean & ($Q_1$, $Q_3$) & & Mean & ($Q_1$, $Q_3$)\\
\hline
  		\texttt{SVI} & 0.47& (0.31, 0.60)& &0.50 & (0.36, 0.62) & & 0.58& (0.41, 0.72)\\ 
      \texttt{SPECTRAL}& 0.16& (0.10, 0.20)& &0.16 & (0.10, 0.19) & & 0.27& (0.16. 0.32)\\ 
       \texttt{BLAST} & 0.18& (0.12, 0.23)& &0.17 & (0.12, 0.21) & & 0.28& (0.18, 0.33)\\ 
& \multicolumn{8}{c}{$p=7870$}\\
& \multicolumn{2}{c}{ GSE15907}  & & \multicolumn{2}{c}{ GSE37448} & &  \multicolumn{2}{c}{GSE109125}\\
Method & Mean & ($Q_1$, $Q_3$) & & Mean & ($Q_1$, $Q_3$) & & Mean & ($Q_1$, $Q_3$)\\
\hline
  		\texttt{SVI} & 0.51 & (0.34, 0.65)& &0.51 & (0.35, 0.64) & & 0.70& (0.52. 0.85)\\ 
      \texttt{SPECTRAL} & 0.18 & (0.11, 0.23)& & 0.17 & (0.10, 0.22) & & 0.35 & (0.21, 0.44)\\ 
       \texttt{BLAST} & 0.20 & (0.13, 0.26)& &0.18 & (0.11, 0.23) & & 0.36& (0.23, 0.45)\\ 
	\end{tabular}}
\end{table}

\begin{table}
\caption{Comparison of the methods in terms of mean coverage of $95\%$ predictive intervals.   \label{tab:coverage_application}}
\centering
{	\begin{tabular}{crrrrr}
& \multicolumn{5}{c}{$p=2846$}\\
Method & { GSE15907}  & &{ GSE37448} & &  {GSE109125}\\
\hline
  		\texttt{SVI}& 91.40\% & & 93.70\% & & 87.58\% \\ 
      \texttt{SPECTRAL}& 88.03\% & & 89.27\% & & 78.80\% \\  
       \texttt{BLAST} & 92.88\% & & 93.70\% & & 86.00\% \\  
& \multicolumn{5}{c}{$p=7870$}\\
Method & { GSE15907}  & &{ GSE37448} & &  {GSE109125}\\
\hline
  		\texttt{SVI}& 89.95\% & & 91.30\% & & 88.96\% \\ 
      \texttt{SPECTRAL}& 88.45\% & & 90.37\% & & 75.52\% \\  
       \texttt{BLAST} & 91.87\% & & 92.92\% & & 80.59\% \\  
	\end{tabular}}
\end{table}

\begin{table}
\caption{Comparison of the methods in terms of out-of-sample log-likelihood. We report the out-of-sample log-likelihood of different methods and p-values of the paired one-sided t-tests for a greater log-likelihood for \texttt{BLAST} estimate. \label{tab:log_lik_application}}
\centering
{	\begin{tabular}{crrrrrrrr}
& \multicolumn{8}{c}{$p=2846$}\\
& \multicolumn{2}{c}{ GSE15907}  & & \multicolumn{2}{c}{ GSE37448} & &  \multicolumn{2}{c}{GSE109125}\\
Method & log-lik & p-value & & log-lik & p-value & & log-lik & p-value\\
\hline
  		\texttt{SVI} & -71719.4 & $3.2 \times 10^{-15}$ &  & -330156.6& $1.2 \times 10^{-33}$  & &-90466.8 & $2.8 \times 10^{-4} $\\ 
      \texttt{SPECTRAL} & -38099.9 & 0.001&  &-165255.8 & 0.001& &-76092.9 & $1.6\times 10^{-7} $\\\ 
       \texttt{BLAST} & -35891.8 &  &  & -158336.1 & & &-71577.9  \\
& \multicolumn{8}{c}{$p=7870$}\\
& \multicolumn{2}{c}{ GSE15907}  & & \multicolumn{2}{c}{ GSE37448} & &  \multicolumn{2}{c}{GSE109125}\\
Method & log-lik & p-value & & log-lik & p-value & & log-lik & p-value\\
\hline
  		\texttt{SVI} & -240736.6 & $1.17 \times 10^{-14} $ &  &-1033932.0 &  $1.35 \times 10^{-40} $ & &-304317.1 & $0.24$\\ 
         \texttt{SPECTRAL} & -129383.6   & 0.020 &  & -501053.1& 0.011& & -308244.7& 0.001 \\ 
       \texttt{BLAST} & -124192.4 &  &  & -493709.2 &  & & -292967.4&  \\ 
	\end{tabular}}
\end{table}

\section{Discussion}\label{sec:conclusion}

We presented a method for computationally efficient inference with state-of-the-art accuracy in estimation and uncertainty quantification for high-dimensional multi-study factor analysis. {Considering the promising results in this paper, using \texttt{BLAST} to integrate high-dimensional data across studies or groups may yield improved results in various applied contexts. Some possibilities include obtaining more robust estimates of gene associations through integrating multiple genomic datasets or improving inferences on within- and between-study covariance in clinical data on patients.}  Additionally, several important directions for future work emerge from this article. Firstly, generalizing the method from the case of linear, additive, and Gaussian factor models to accommodate non-linear and non-Gaussian structure would be of substantial interest \citep{gplvm, zito2024compressive}. {Moreover, while we have been motivated by applications in which study subjects are reasonably assumed to be independent, generalizations to accommodate a variety of dependence structures including time series \citep{bai15, Ando2016PanelData, Ando2017ClusteringHuge}, spatially indexed data, and networks \citep{xie_eigen} would be useful.}

Other interesting directions include developing a supervised variant of this framework, targeting latent factors that are predictive of an outcome variable of interest \citep{hahn, ferrari}, or extensions to more complex and hierarchical scenarios, where the factor analytical component is included in a larger model. Deriving the methodologies along with their theoretical guarantees would be an important contribution to many applied tasks.

\bigskip

\section*{Acknowledgement}
This research was partially supported by the National Institutes of Health (grant ID R01ES035625), by the Office of Naval Research (Grant N00014-24-1-2626), and by the European Research Council under the European Union’s Horizon 2020 research and innovation programme (grant agreement No 856506).

\clearpage

\bibliographystyle{agsm}
\bibliography{references}

\clearpage

\newpage
\appendix

\appendix
\newpage
\begin{center}
{\large\bf SUPPLEMENTARY MATERIAL}
\end{center}

\section{Proofs of the theoretical results}
\label{sec:theory_proof}

\subsection{Preliminary results}

\begin{proposition}\label{prop:V_s_outer}
Let $\mathbf{V}_{0s} \in \mathbb R^{p \times k_s}$ and $\mathbf{V}_s \in \mathbb R^{p \times k_s} $ be the matrix of right singular vectors of $\mathbf{M}_{0s} {\mathbf{\Lambda}}_0^\top + \mathbf{F}_s {\mathbf{\Gamma}}_{0s}^\top $ and $\mathbf{Y}_s$ associated to their leading $k_s$ singular values. Then, under Assumption \ref{assumption:model}--\ref{assumption:sigma}, we have
\begin{equation*}
    pr\left\{\left|\left|\mathbf{V}_s \mathbf{V}_s^\top -\mathbf{V}_{0s}\mathbf{V}_{0s}^\top \right|\right| > G_1 (C_{sr, n_s} C_{sc, p})^2 \left(\frac{1}{n_s} + \frac{1}{p}\right)\right\} \to 0, \quad (n_s, p \uparrow \infty),
\end{equation*} 
for some finite constant $G_1 < \infty$. Replacing Assumptions \ref{assumption:distributions} and \ref{assumption:sigma} with Assumptions \ref{ass:bai-mean}--\ref{ass:sv_lf},
 with probability at least $1-o(1)$, we have 
\begin{equation*}
    pr\left\{\left|\left|\mathbf{V}_s \mathbf{V}_s^\top -\mathbf{V}_{0s}\mathbf{V}_{0s}^\top \right|\right| > G_1 \left(\frac{1}{\sqrt{n_s}} + \frac{1}{\sqrt{p}}\right)\right\} \to 0, \quad (n_s, p \uparrow \infty),
\end{equation*} 
 
\end{proposition}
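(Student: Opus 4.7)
The plan is to view $\mathbf{V}_s \mathbf{V}_s^\top$ and $\mathbf{V}_{0s}\mathbf{V}_{0s}^\top$ as the top-$k_s$ right singular projectors of $\mathbf{Y}_s = \mathbf{X}_s + \mathbf{E}_s$ and of the signal $\mathbf{X}_s := \mathbf{M}_{0s}\mathbf{\Lambda}_0^\top + \mathbf{F}_{0s}\mathbf{\Gamma}_{0s}^\top$, respectively, and control their difference through a Wedin--Davis-Kahan perturbation argument. Passing to the symmetric problem for $\mathbf{Y}_s^\top \mathbf{Y}_s$ versus $\mathbf{X}_s^\top \mathbf{X}_s$, one expands
\begin{equation*}
\mathbf{Y}_s^\top \mathbf{Y}_s - \mathbf{X}_s^\top \mathbf{X}_s = \mathbf{X}_s^\top \mathbf{E}_s + \mathbf{E}_s^\top \mathbf{X}_s + \mathbf{E}_s^\top \mathbf{E}_s
\end{equation*}
and applies the eigen-projector bound $\|\mathbf{V}_s\mathbf{V}_s^\top - \mathbf{V}_{0s}\mathbf{V}_{0s}^\top\| \lesssim \|\mathbf{Y}_s^\top\mathbf{Y}_s - \mathbf{X}_s^\top\mathbf{X}_s\|/\sigma_{k_s}^2(\mathbf{X}_s)$, valid once $\sigma_{k_s}(\mathbf{X}_s) \gg \|\mathbf{E}_s\|$ so that the spectral gap of $\mathbf{X}_s^\top \mathbf{X}_s$ has order $\sigma_{k_s}^2(\mathbf{X}_s)$.

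Two probabilistic ingredients then drive the rate. First, $\sigma_{k_s}(\mathbf{X}_s) \gtrsim \sqrt{n_s p}$ with probability $1 - o(1)$: factoring $\mathbf{X}_s = [\mathbf{M}_{0s}\;\mathbf{F}_{0s}][\mathbf{\Lambda}_0\;\mathbf{\Gamma}_{0s}]^\top$, Assumptions \ref{assumption:li}--\ref{assumption:Gammas} ensure that $[\mathbf{\Lambda}_0\;\mathbf{\Gamma}_{0s}]\in\mathbb{R}^{p\times k_s}$ has smallest singular value of order $\sqrt{p}$, and Gordon's theorem applied to the standard Gaussian matrix $[\mathbf{M}_{0s}\;\mathbf{F}_{0s}]$ (Assumption \ref{assumption:distributions}) gives a smallest singular value of order $\sqrt{n_s}$. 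Second, under Assumption \ref{assumption:sigma} the decomposition $\mathbf{E}_s \stackrel{d}{=} \mathbf{\Sigma}_{0sr}^{1/2}\mathbf{W}_s\mathbf{\Sigma}_{0sc}^{1/2}$ with $\mathbf{W}_s$ having independent sub-Gaussian entries gives the standard estimate $\|\mathbf{E}_s\| \lesssim C_{sr,n_s}C_{sc,p}(\sqrt{n_s}+\sqrt{p})$, so $\|\mathbf{E}_s^\top\mathbf{E}_s\| \lesssim (C_{sr,n_s}C_{sc,p})^2(n_s+p)$. For the cross term $\mathbf{X}_s^\top \mathbf{E}_s$ I would use the SVD $\mathbf{X}_s = \mathbf{U}_{0s}\mathbf{D}_{0s}\mathbf{V}_{0s}^\top$ to write $\mathbf{X}_s^\top\mathbf{E}_s = \mathbf{V}_{0s}\mathbf{D}_{0s}\mathbf{U}_{0s}^\top\mathbf{E}_s$, and bound $\|\mathbf{U}_{0s}^\top \mathbf{E}_s\|$ by sub-Gaussian concentration, noting that conditionally on the Gaussian factors $\mathbf{U}_{0s}$ is a fixed $n_s\times k_s$ orthonormal matrix independent of $\mathbf{E}_s$, so $\mathbf{U}_{0s}^\top\mathbf{E}_s$ has sub-Gaussian entries with bounded variance and spectral norm $\lesssim C_{sr,n_s}C_{sc,p}\sqrt{p}$. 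Combining all three pieces inside the Davis--Kahan estimate and dividing by $\sigma_{k_s}^2(\mathbf{X}_s) \asymp n_s p$ produces the claimed rate $(C_{sr,n_s} C_{sc,p})^2(1/n_s + 1/p)$. Under the weaker moment conditions that replace Assumptions \ref{assumption:distributions} and \ref{assumption:sigma} in the second statement, the sub-Gaussian tails are lost and only $\|\mathbf{E}_s\| \lesssim \sqrt{n_s}+\sqrt{p}$ survives (constants absorbed), giving the slower first-order rate $1/\sqrt{n_s} + 1/\sqrt{p}$.

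The main obstacle will be extracting the squared rate $(C_{sr,n_s} C_{sc,p})^2(1/n_s+1/p)$ rather than the naive first-order Wedin rate $C_{sr,n_s} C_{sc,p}(1/\sqrt{n_s}+1/\sqrt{p})$ that a direct perturbation of $\mathbf{Y}_s$ would yield. The saving comes from operating on the symmetric side $\mathbf{Y}_s^\top \mathbf{Y}_s$ and exploiting orthogonality identities such as $(\mathbf{I}_p - \mathbf{V}_{0s}\mathbf{V}_{0s}^\top)\mathbf{X}_s^\top = 0$, which annihilate parts of $\mathbf{X}_s^\top\mathbf{E}_s$ when sandwiched between the eigenspace projectors $\mathbf{P}_0 = \mathbf{V}_{0s}\mathbf{V}_{0s}^\top$ and $\mathbf{P}_\perp = \mathbf{I}_p - \mathbf{P}_0$, leaving only bilinear-in-noise residuals whose contribution matches the $\|\mathbf{E}_s^\top\mathbf{E}_s\|$ scale once divided by $\sigma_{k_s}^2(\mathbf{X}_s)$. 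Careful bookkeeping of these cancellations together with the conditional sub-Gaussian concentration of $\mathbf{U}_{0s}^\top\mathbf{E}_s$ is the technical heart of the argument.
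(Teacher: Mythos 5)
Your supporting estimates are the right ones ($s_{k_s}(\mathbf{X}_{0s}) \gtrsim \sqrt{n_s p}$ from Assumptions \ref{assumption:Lambda}--\ref{assumption:Gammas} plus a smallest-singular-value bound for the Gaussian block $[\mathbf{M}_{0s}\;\mathbf{F}_{0s}]$, and $\|\mathbf{E}_s\| \lesssim C_{sr,n_s}C_{sc,p}(\sqrt{n_s}+\sqrt{p})$, exactly as in the paper's Lemma \ref{lemma:E}), but the mechanism you propose for the critical step does not close, and it is not the paper's mechanism. Write $\mathbf{P}_0 = \mathbf{V}_{0s}\mathbf{V}_{0s}^\top$, $\mathbf{P}_\perp = \mathbf{I}_p - \mathbf{P}_0$ and $\mathbf{X}_{0s} = \mathbf{U}_{0s}\mathbf{D}_{0s}\mathbf{V}_{0s}^\top$. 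The identity $\mathbf{P}_\perp\mathbf{X}_{0s}^\top = 0$ indeed annihilates the block $\mathbf{P}_\perp\mathbf{X}_{0s}^\top\mathbf{E}_s$, but the other cross block $\mathbf{P}_\perp\mathbf{E}_s^\top\mathbf{X}_{0s}\mathbf{P}_0 = \mathbf{P}_\perp\mathbf{E}_s^\top\mathbf{U}_{0s}\mathbf{D}_{0s}\mathbf{V}_{0s}^\top$ survives with no further cancellation available: even with your (correct) bound $\|\mathbf{U}_{0s}^\top\mathbf{E}_s\| \lesssim C_{sr,n_s}C_{sc,p}\sqrt{p}$, dividing by the gap gives $\|\mathbf{E}_s^\top\mathbf{U}_{0s}\mathbf{D}_{0s}\|/s_{k_s}^2(\mathbf{X}_{0s}) \lesssim C_{sr,n_s}C_{sc,p}\sqrt{p}\cdot\sqrt{n_s p}/(n_s p) = C_{sr,n_s}C_{sc,p}/\sqrt{n_s}$, a first-order term. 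This block is exactly the leading term of the first-order perturbation expansion of the right singular subspace, so no bookkeeping inside the Davis--Kahan decomposition of $\mathbf{Y}_s^\top\mathbf{Y}_s - \mathbf{X}_{0s}^\top\mathbf{X}_{0s}$ will remove it. The step you defer to ``careful bookkeeping of these cancellations'' is precisely where the argument fails.

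The paper never expands $\mathbf{Y}_s^\top\mathbf{Y}_s - \mathbf{X}_{0s}^\top\mathbf{X}_{0s}$ and so never meets the cross term. It sandwiches the projector difference between copies of the signal, using $s_{k_s}^2([\mathbf{M}_{0s}\;\mathbf{F}_{0s}])\,s_{k_s}^2([\mathbf{\Lambda}_0\;\mathbf{\Gamma}_{0s}])\,\|\hat{\mathbf{P}} - \mathbf{P}_0\| \le \|\mathbf{X}_{0s}^\top(\hat{\mathbf{P}} - \mathbf{P}_0)\mathbf{X}_{0s}\| = \|(\mathbf{I}-\hat{\mathbf{P}})\mathbf{X}_{0s}\|^2$, and then invokes Theorem 2 of Luo et al., $\|(\mathbf{I}-\hat{\mathbf{P}})\mathbf{X}_{0s}\| \le 2\|\mathbf{E}_s\|$ (in essence a Weyl argument: $\|(\mathbf{I}-\hat{\mathbf{P}})\mathbf{X}_{0s}\| \le \sigma_{k_s+1}(\mathbf{Y}_s) + \|\mathbf{E}_s\| \le 2\|\mathbf{E}_s\|$ because $\mathbf{X}_{0s}$ has rank $k_s$). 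The only stochastic object to control is $\|\mathbf{E}_s\|$, and the squared rate appears because the upper bound is $\|\mathbf{E}_s\|^2$ rather than $\|\mathbf{E}_s\|\cdot\|\mathbf{X}_{0s}\|$; an ingredient of this type is what your route is missing. (For what it is worth, the delicate point of the paper's own chain is the first inequality: compressing $\hat{\mathbf{P}}-\mathbf{P}_0$ to the range of $\mathbf{P}_0$ naturally produces $\|\hat{\mathbf{P}}-\mathbf{P}_0\|^2$ rather than $\|\hat{\mathbf{P}}-\mathbf{P}_0\|$ on the left-hand side, which is consistent with your difficulty in beating the first-order rate.) Finally, your account of the second statement is off: under the moment assumptions the slower rate comes from the weaker spectral-norm control $\|\mathbf{E}_s\| \le \|\mathbf{E}_s\|_F \lesssim \sqrt{n_s}\,p^{1/4} + n_s^{1/4}\sqrt{p}$ fed into the same squared bound, not from retaining $\|\mathbf{E}_s\| \lesssim \sqrt{n_s}+\sqrt{p}$ with constants absorbed, which would reproduce the faster rate.
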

\begin{proof}[Proof of Proposition \ref{prop:V_s_outer}]
The proof is related to the one for Proposition 3.5 in \citet{fable}.
Let $ \mathbf{\tilde M}_{0s} = [\mathbf{M}_{0s}~ \mathbf{F}_{0s}]$ and $\Tilde {\mathbf{\Lambda}}_{0s} = [{\mathbf{\Lambda}}_0~ {\mathbf{\Gamma}}_{0s}]$ and denote by $\mathbf{X}_{0s} = \mathbf{M}_{0s} {\mathbf{\Lambda}}_0^\top + \mathbf{F}_{0s} {\mathbf{\Gamma}}_{0s}^\top  =  \mathbf{\tilde M}_{0s} {\mathbf{\tilde \Lambda}}_{0s}^\top$ the true signal for the $s$-th study. In particular,
\begin{equation*}
\begin{aligned}
    s_{k_s}^2( \mathbf{\tilde M}_{0s}) s_{k_s}^2({\mathbf{\tilde \Lambda}}_s) ||\mathbf{V}_s^\perp \mathbf{V}_s^{\perp\top} -\mathbf{V}_{0s}^\perp \mathbf{V}_{0s}^{\perp\top}|| & \leq ||\mathbf{X}_{0s}^\top(\mathbf{V}_s^\perp \mathbf{V}_s^{\perp\top} -\mathbf{V}_{0s}^\perp \mathbf{V}_{0s}^{\perp\top}) \mathbf{X}_{0s}|| \\
    & \leq ||\mathbf{X}_{0s}^\top \mathbf{V}_s^\perp \mathbf{V}_s^{\perp\top}\mathbf{X}_{0s}- \mathbf{X}_{0s}^\top \mathbf{X}_{0s} ||\\
    &=  || (\mathbf{I}_n - \mathbf{V}_s^\perp \mathbf{V}_s^{\perp\top}) \mathbf{X}_{0s}||^2 \leq 4 ||\mathbf{E}_s||^2, 
\end{aligned} 
\end{equation*}
where the last inequality follows from Theorem 2 in \citet{luo_et_al}. Using Corollary 5.35 of \citet{vershynin_12}, we have $s_{k_s} ( \mathbf{\tilde M}_{0s}) \asymp \sqrt{n_s}$with probability at least $1-o(1)$, and Assumptions \ref{assumption:Lambda} and \ref{assumption:Gammas} imply $s_{k_s}({\mathbf{\tilde \Lambda}}_{0s}) \asymp \sqrt{p}$. The probabilistic bound on $||\mathbf{E}_s|| $ completes the proof. 
\end{proof}

\begin{proposition}\label{prop:recovery_P_0}
Let $\mathbf{\bar P}$ be the orthogonal projection onto the subspace spanned by the leading $k_0$ singular vectors of $\mathbf{\tilde P}$, where $\mathbf{\tilde P}$ is defined as in \eqref{eq:P_tilde}, and $\mathbf{V}_0 \in \mathbb R^{p \times k_0}$ be the matrix of left singular vectors of ${\mathbf{\Lambda}}$. Then, under Assumption \ref{assumption:model}--\ref{assumption:sigma}, we have
    \begin{equation*}
         pr\left\{\left|\left|\mathbf{\bar P} - \mathbf{V}_0\mathbf{V}_0^\top \right|\right| > G_2(C_{r, \{n_s\}_{s}^{S}} C_{c, p})^2 \left(\frac{1}{n_{\min}} + \frac{1}{p}\right)\right\} \to 0 \quad (n_1, \dots, n_S, p \uparrow \infty),
    \end{equation*}
where $G_2 < \infty$, and $n_{\min} = \min_{s=1, \dots, S}n_s$. Replacing Assumptions \ref{assumption:distributions} and \ref{assumption:sigma} with Assumptions \ref{ass:bai-mean}--\ref{ass:sv_lf},
 with probability at least $1-o(1)$, we have 
 \begin{equation*}
         pr\left\{\left|\left|\mathbf{\bar P} - \mathbf{V}_0\mathbf{V}_0^\top \right|\right| > G_2 \left(\frac{1}{\sqrt{n_{\min}}} + \frac{1}{\sqrt p}\right)\right\} \to 0 \quad (n_1, \dots, n_S, p \uparrow \infty),
    \end{equation*}
\end{proposition}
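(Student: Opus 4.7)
The plan is to write $\mathbf{\tilde P}$ as a perturbation of an idealized matrix whose leading eigenspace is exactly $\mathcal C(\mathbf{V}_0)$, and then apply a Davis--Kahan type inequality. For each $s$, Assumption \ref{assumption:li} ensures $\mathcal C(\mathbf{V}_0) \subseteq \mathcal C([\mathbf{\Lambda}_0~\mathbf{\Gamma}_{0s}])$, so the orthogonal projection $\mathbf{V}_{0s}\mathbf{V}_{0s}^\top$ onto $\mathcal C([\mathbf{\Lambda}_0~\mathbf{\Gamma}_{0s}])$ decomposes cleanly as $\mathbf{V}_0\mathbf{V}_0^\top + \mathbf{V}_{0s}^\perp\mathbf{V}_{0s}^{\perp\top}$, where $\mathbf{V}_{0s}^\perp$ is exactly the orthonormal basis of $\mathcal C((\mathbf{I}_p - \mathbf{V}_0\mathbf{V}_0^\top)\mathbf{\Gamma}_{0s})$ appearing in Assumption \ref{assumption:sv_A} (well-defined with $q_s$ columns under Assumption \ref{assumption:Gammas}). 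Consequently, if I define
\begin{equation*}
    \mathbf{\tilde P}_0 \;=\; \mathbf{V}_0\mathbf{V}_0^\top + \tfrac{1}{S}\sum_{s=1}^S \mathbf{V}_{0s}^\perp\mathbf{V}_{0s}^{\perp\top},
\end{equation*}
then $\mathbf{\tilde P} - \mathbf{\tilde P}_0 = \tfrac{1}{S}\sum_s (\mathbf{P}_s - \mathbf{V}_{0s}\mathbf{V}_{0s}^\top)$, whose spectral norm is controlled by Proposition \ref{prop:V_s_outer} via the triangle inequality.

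The next step is the spectral analysis of $\mathbf{\tilde P}_0$. By construction $\mathbf{V}_0^\top \mathbf{V}_{0s}^\perp = 0$ for every $s$, so $\mathcal C(\mathbf{V}_0)$ is an invariant subspace on which $\mathbf{\tilde P}_0$ acts as the identity, while on the orthogonal complement $\mathbf{\tilde P}_0$ coincides with $\tfrac{1}{S}\sum_s \mathbf{V}_{0s}^\perp\mathbf{V}_{0s}^{\perp\top}$. Assumption \ref{assumption:sv_A} ensures the latter has spectral norm at most $1-\delta$, so $\mathbf{\tilde P}_0$ has exactly $k_0$ eigenvalues equal to $1$ with eigenspace $\mathcal C(\mathbf{V}_0)$, separated by a gap of at least $\delta$ from all remaining eigenvalues. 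This yields a spectral gap independent of $n$ and $p$, which is the key structural input for the perturbation bound.

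Combining these ingredients, a Davis--Kahan (or Wedin $\sin\Theta$) bound applied to $\mathbf{\tilde P}$ and $\mathbf{\tilde P}_0$ gives
\begin{equation*}
    \left\| \mathbf{\bar P} - \mathbf{V}_0\mathbf{V}_0^\top \right\| \;\lesssim\; \frac{\|\mathbf{\tilde P} - \mathbf{\tilde P}_0\|}{\delta} \;\leq\; \frac{1}{\delta S}\sum_{s=1}^S \left\|\mathbf{P}_s - \mathbf{V}_{0s}\mathbf{V}_{0s}^\top\right\|.
\end{equation*}
Under the sub-Gaussian setup (Assumptions \ref{assumption:distributions} and \ref{assumption:sigma}), Proposition \ref{prop:V_s_outer} bounds each summand by $G_1(C_{sr,n_s}C_{sc,p})^2(1/n_s + 1/p)$ with probability $1-o(1)$, and taking a union bound over the finite number of studies $S$ yields the stated rate with $C_{r,\{n_s\}_s^S}$ and $C_{c,p}$ in place of the study-specific constants; under Assumptions \ref{ass:bai-mean}--\ref{ass:sv_lf} the same argument with the alternative bound from Proposition \ref{prop:V_s_outer} gives the $1/\sqrt{n_{\min}} + 1/\sqrt{p}$ rate. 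The main obstacle is verifying that the constant spectral gap $\delta$ supplied by Assumption \ref{assumption:sv_A} dominates the perturbation uniformly for large $n_{\min}, p$, so that Davis--Kahan applies with a gap $\gtrsim \delta/2$; this reduces to showing $\|\mathbf{\tilde P} - \mathbf{\tilde P}_0\| = o(1)$, which follows from the bound above since $(C_{r,\{n_s\}_s^S}C_{c,p})^2/n_{\min} \to 0$ and $(C_{r,\{n_s\}_s^S}C_{c,p})^2/p \to 0$ by Assumption \ref{assumption:sigma}.
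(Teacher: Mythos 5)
Your proof is correct and follows essentially the same route as the paper: you decompose $\mathbf{\tilde P}$ around the population quantity $\mathbf{V}_0\mathbf{V}_0^\top + \tfrac{1}{S}\sum_s \mathbf{V}_{0s}^\perp\mathbf{V}_{0s}^{\perp\top}$, control the perturbation via Proposition \ref{prop:V_s_outer}, and use Assumption \ref{assumption:sv_A} to obtain the constant spectral gap $\delta$ separating the $k_0$ unit eigenvalues from the rest. The only difference is the final perturbation step, where you invoke the standard Davis--Kahan $\sin\Theta$ theorem while the paper uses the explicit singular-vector representation of \citet{vu_et_al_21} followed by the Woodbury identity; your version also spells out the orthogonal decomposition $\mathbf{V}_{0s}\mathbf{V}_{0s}^\top = \mathbf{V}_0\mathbf{V}_0^\top + \mathbf{V}_{0s}^\perp\mathbf{V}_{0s}^{\perp\top}$ (via Assumption \ref{assumption:li}) that the paper's bound $\|\mathbf{\tilde E}\|\lesssim \epsilon$ implicitly relies on.
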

\begin{proof}[Proof of Proposition \ref{prop:recovery_P_0}]

Let $\mathbf{P}_0 = \mathbf{V}_0 \mathbf{V}_0^\top$, $\mathbf{A} = \frac{1}{S} \sum_{s=1}^S \mathbf{V}_s^\perp \mathbf{V}_s^{\perp \top}$, 
$\mathbf{\tilde E} = \mathbf{\tilde P}  - (\mathbf{P}_0 + \mathbf A)$, and $\epsilon$ be the maximum over $s \in \{1, \dots, S\}$ of the right hand side of the inequality in Proposition \ref{prop:V_s_outer}, and consider the large probability set where $\left|\left|\mathbf{V}_s \mathbf{V}_s^\top -\mathbf{V}_{0s}\mathbf{V}_{0s}^\top \right|\right| < C_1 {\epsilon}$ holds for any $s=1, \dots, S$ for some constant $C_1$, which has probability at least $1 - o(1)$ by Proposition \ref{prop:V_s_outer}. In particular, $\mathbf{\tilde P} = \mathbf{P}_0 + \mathbf{A} + \mathbf{\tilde E}$, where $\mathbf{P}_0$ is a $k_0$ rank matrix with unit singular values, $\mathbf{A}$ is a $\sum_s q_s$ rank matrix with singular values upper bounded by $1 - \delta$ by Assumption \ref{assumption:sv_A}, and  $|| \mathbf{\tilde E}|| \leq C_2 {\epsilon}$. 
As $n_{\min}$ and $p$ diverge, ${\epsilon} < \delta/(2C_2)$ eventually. Thus, for $n_{\min}$ and $p$ sufficiently large, we can apply Proposition 2 from \citet{vu_et_al_21}, which characterizes the matrix left singular vectors associated to the $k_0$ leading singular values of $\mathbf{\tilde P}$ (up to an orthogonal transformation) as 
\begin{equation}\label{eq:left_sv_tilde_P}
    \mathbf{\bar V} = (\mathbf{V}_0 -  \mathbf{V}_\perp  \mathbf{R})(\mathbf{I}_{k_0} +  \mathbf{R}^\top  \mathbf{R})^{-1},
\end{equation}
where $ \mathbf{V}_\perp$ is the matrix of singular vectors of $\mathbf{A}$ and $ \mathbf{R} \in \mathbb R^{\sum_{s} q_s \times k_0}$ such that $|| \mathbf{R}|| \asymp || \mathbf{\tilde E}||\lesssim {\epsilon}$. From \eqref{eq:left_sv_tilde_P}, we can get an expression for $\mathbf{\bar P} = \mathbf{\bar V} \mathbf{\bar V}^\top$ and applying the Woodbury identity, we obtain
\begin{equation*}
    ||\mathbf{\bar P} - \mathbf{P}_0|| \lesssim {\epsilon}.
\end{equation*}
\end{proof}

\begin{proposition}\label{prop:U_s_outer}
Denote by $\mathbf{U}_s^\perp \in \mathbb R^{n \times q_s}$ and $\mathbf{U}_{0s}^\perp \in \mathbb R^{n \times q_s}$ the matrices of left singular vectors of $\mathbf{Y}_s \mathbf{\bar Q} $ and $\mathbf{F}_s {\mathbf{\Gamma}}_s$, respectively, associated to their leading $q_s$ singular values.
If Assumption \ref{assumption:model}--\ref{assumption:sigma} hold and $n_{\min} \gtrsim \sqrt{n_s}$ for each $s=1, \dots, S$, where $n_{\min} = \min_{s=1, \dots, S}n_s$, then
\begin{equation*}
   pr\left\{\left|\left| \mathbf{U}_s^\perp \mathbf{U}_s^{\perp\top} -\mathbf{U}_{0s}^\perp \mathbf{U}_{0s}^{\perp\top} \right|\right| > G_3 (C_{sr, n_s} C_{sc, p})^2 \left(\frac{1}{n_{s}} + \frac{1}{p}\right)\right\} = 0\quad (n_1, \dots n_S, p \uparrow \infty),
\end{equation*}
where $G_3 < \infty$ is a finite constant. 
Replacing Assumptions \ref{assumption:distributions} and \ref{assumption:sigma} with Assumptions \ref{ass:bai-mean}--\ref{ass:sv_lf},
 with probability at least $1-o(1)$, we have
\begin{equation*}
   pr\left\{\left|\left| \mathbf{U}_s^\perp \mathbf{U}_s^{\perp\top} -\mathbf{U}_{0s}^\perp \mathbf{U}_{0s}^{\perp\top} \right|\right| > G_3 \left(\frac{1}{\sqrt{n_{s}}} + \frac{1}{\sqrt{p}}\right)\right\} = 0\quad (n_1, \dots n_S, p \uparrow \infty),
\end{equation*}
 
\end{proposition}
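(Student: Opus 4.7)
The plan is to mirror the proof of Proposition~\ref{prop:V_s_outer}: select a noise-free reference matrix whose leading $q_s$ left singular subspace coincides with $\mathrm{Col}(\mathbf{U}_{0s}^\perp)$, express $\mathbf{\hat Y}_s^\perp$ as a perturbation of it, and apply a Davis--Kahan / Wedin sine-theta inequality in the form used by \citet{luo_et_al}. Writing $\mathbf{Q}_0 = \mathbf{I}_p - \mathbf{V}_0\mathbf{V}_0^\top$ and $\Delta = \mathbf{\bar P} - \mathbf{V}_0\mathbf{V}_0^\top$, the natural choice is
\[
\mathbf{S} := \mathbf{F}_{0s}\mathbf{\Gamma}_{0s}^\top \mathbf{Q}_0.
\]
By Assumption~\ref{assumption:Gammas}, $\mathbf{\Gamma}_{0s}^\top \mathbf{Q}_0$ has rank $q_s$, so $\mathbf{S}$ has rank $q_s$ with column range $\mathrm{Col}(\mathbf{F}_{0s}) = \mathrm{Col}(\mathbf{F}_{0s}\mathbf{\Gamma}_{0s}^\top)$; consequently the projector onto the leading $q_s$ left singular subspace of $\mathbf{S}$ equals $\mathbf{U}_{0s}^\perp \mathbf{U}_{0s}^{\perp\top}$, justifying the use of $\mathbf{S}$ as the reference. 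Using $\mathbf{\Lambda}_0^\top = \mathbf{\Lambda}_0^\top \mathbf{V}_0\mathbf{V}_0^\top$, direct algebra gives the clean decomposition
\[
\mathbf{\hat Y}_s^\perp - \mathbf{S} = -\mathbf{M}_{0s}\mathbf{\Lambda}_0^\top \Delta - \mathbf{F}_{0s}\mathbf{\Gamma}_{0s}^\top \Delta + \mathbf{E}_s \mathbf{\bar Q}.
\]

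Each summand is bounded using the same ingredients as in the proof of Proposition~\ref{prop:V_s_outer}: $\|\mathbf{M}_{0s}\|, \|\mathbf{F}_{0s}\| \lesssim \sqrt{n_s}$ by Corollary~5.35 of \citet{vershynin_12}; $\|\mathbf{\Lambda}_0\|, \|\mathbf{\Gamma}_{0s}\| \asymp \sqrt{p}$ from Assumptions~\ref{assumption:Lambda}--\ref{assumption:Gammas}; $\|\Delta\| \lesssim \varepsilon := (C_{r,\{n_s\}_{s}^{S}}C_{c,p})^2(1/n_{\min} + 1/p)$ by Proposition~\ref{prop:recovery_P_0}; and $\|\mathbf{E}_s\| \lesssim C_{sr,n_s}C_{sc,p}(\sqrt{n_s} + \sqrt{p})$ by standard sub-Gaussian matrix concentration. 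Together these give $\|\mathbf{\hat Y}_s^\perp - \mathbf{S}\|^2 \lesssim n_s p\, \varepsilon^2 + (C_{sr,n_s}C_{sc,p})^2(n_s + p)$. A matching singular-value lower bound follows by factoring $\mathbf{S} = \mathbf{F}_{0s}(\mathbf{Q}_0\mathbf{\Gamma}_{0s})^\top$ and applying multiplicativity: $s_{q_s}(\mathbf{S}) \geq s_{q_s}(\mathbf{F}_{0s})\, s_{q_s}(\mathbf{Q}_0\mathbf{\Gamma}_{0s}) \asymp \sqrt{n_s p}$. The sine-theta step identical to that in Proposition~\ref{prop:V_s_outer} then yields
\[
\|\mathbf{U}_s^\perp \mathbf{U}_s^{\perp\top} - \mathbf{U}_{0s}^\perp \mathbf{U}_{0s}^{\perp\top}\|
\;\lesssim\; \frac{\|\mathbf{\hat Y}_s^\perp - \mathbf{S}\|^2}{s_{q_s}^2(\mathbf{S})}
\;\lesssim\; \varepsilon^2 + (C_{sr,n_s}C_{sc,p})^2 \Big(\tfrac{1}{n_s} + \tfrac{1}{p}\Big).
\]
Under the hypothesis $n_{\min} \gtrsim \sqrt{n_s}$ one has $1/n_{\min}^2 \leq 1/n_s$, and since the $C$-factors are polylogarithmic by Assumption~\ref{assumption:sigma}, the term $\varepsilon^2$ is absorbed (up to constants) into the dominant second summand, yielding the claim. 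The second assertion follows in the same way, substituting the $1/\sqrt{n_s}+1/\sqrt{p}$ rates of Assumptions~\ref{ass:bai-mean}--\ref{ass:sv_lf} for the corresponding bounds on $\|\Delta\|$ and $\|\mathbf{E}_s\|$.

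The main obstacle is the middle perturbation term $\mathbf{F}_{0s}\mathbf{\Gamma}_{0s}^\top \Delta$: it inherits the full signal magnitude and has spectral norm of order $\sqrt{n_s p}\,\|\Delta\|$, which at face value dominates $\|\mathbf{E}_s\|$. It is only after the sine-theta division by $s_{q_s}^2(\mathbf{S}) \asymp n_s p$ that this contribution reduces to $\|\Delta\|^2$, and the squaring, combined with the small-perturbation bound of Proposition~\ref{prop:recovery_P_0} and the scaling hypothesis $n_{\min} \gtrsim \sqrt{n_s}$, is what renders it subdominant to the genuine idiosyncratic noise.
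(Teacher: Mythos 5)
Your proposal is correct and follows essentially the same route as the paper's proof: the same reference matrix $\mathbf{F}_{0s}\mathbf{\Gamma}_{0s}^\top(\mathbf{I}_p-\mathbf{V}_0\mathbf{V}_0^\top)$, the same sine-theta inequality from \citet{luo_et_al} with denominator $s_{q_s}^2(\mathbf{F}_{0s})\,s_{q_s}^2\{(\mathbf{I}_p-\mathbf{V}_0\mathbf{V}_0^\top)\mathbf{\Gamma}_{0s}\}\asymp n_s p$, and the same inputs (Proposition \ref{prop:recovery_P_0} for $\|\mathbf{\bar P}-\mathbf{V}_0\mathbf{V}_0^\top\|$, Lemmas \ref{lemma:E} and \ref{lemma:Y_s} for the noise and data norms). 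The only cosmetic difference is that you split $\mathbf{Y}_s\Delta$ into its signal components while the paper bounds it by $\|\mathbf{Y}_s\|\,\|\Delta\|$ directly; both give the same rate, and your explicit identification of the left singular subspace of the reference matrix with $\mathrm{Col}(\mathbf{U}_{0s}^\perp)$ is a detail the paper leaves implicit.
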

\begin{proof}[Proof of Proposition \ref{prop:U_s_outer}]
Recall  $ \mathbf{Y}_s \mathbf{\bar Q} = \mathbf{Y}_s (\mathbf{I}_p- \mathbf{V}_0 \mathbf{V}_0^\top) + \mathbf{Y}_s(\mathbf{V}_0\mathbf{V}_0^\top - \mathbf{\bar P})$. Next, we apply the same steps as in the proof for Proposition \ref{prop:V_s_outer}. 
In particular, 
\begin{equation*}
   s_{q_s}^2(\mathbf{F}_{0s}) s_{q_s}^2((\mathbf{I}_p - P_{{\mathbf{\Lambda}}}){\mathbf{\Gamma}}_s) ||\mathbf{U}_s^\perp \mathbf{U}_s^{\perp\top} -\mathbf{U}_{0s}^\perp \mathbf{U}_{0s}^{\perp\top}|| \leq 
   4 ||\mathbf{E}_s(\mathbf{I}_p- \mathbf{V}_0 \mathbf{V}_0^\top) + \mathbf{Y}_s(\mathbf{V}_0\mathbf{V}_0^\top - \mathbf{\bar P})||^2.
\end{equation*}
    Note that 
    \begin{equation*}
        \begin{aligned}
             || \mathbf{E}_s(\mathbf{I}_p- \mathbf{V}_0 \mathbf{V}_0^\top)|| &\leq|| \mathbf{E}_s||,\\
             ||\mathbf{Y}_s(\mathbf{V}_0\mathbf{V}_0^\top - \mathbf{\bar P})|| &\leq ||\mathbf{Y}_s|| ||\mathbf{V}_0\mathbf{V}_0^\top - \mathbf{\bar P}||,
        \end{aligned}
    \end{equation*}
    $s_{q_s}\left((\mathbf{I}_p - P_{{\mathbf{\Lambda}}}){\mathbf{\Gamma}}_s\right) \asymp \sqrt{p}$ by Assumption \ref{assumption:Gammas}, and $n_s = \mathcal O(n_{\min}^2)$, along with $s_{q_s}(\mathbf{F}_{0s}) \asymp \sqrt{n_s}$ with probability at least $1-o(1)$ by Corollary 5.35 of \citet{vershynin_12} under Assumption \ref{assumption:model} or by Assumption \ref{ass:lf_projection}. The probabilistic bounds on $|| \mathbf{E}_s||$ and $||\mathbf{Y}_s||$ in Lemmas \ref{lemma:E} and \ref{lemma:Y_s} prove the result. 
\end{proof}

\begin{proposition}\label{prop:recovery_M}
Denote by $\mathbf{U}^c \in \mathbb R^{n \times k_0}$ and $\mathbf{U}_0^c \in \mathbb R^{n \times k_0}$ the matrices of left singular vectors of $\mathbf{\hat Y}^c$ and $\mathbf{M}_0 {\mathbf{\Lambda}}_0^\top$, respectively, associated to the leading $k_0$ singular values.
 Then, if Assumptions \ref{assumption:model}--\ref{assumption:sigma} hold and  $n_{\min} \gtrsim \sqrt{n_s}$ for each $s=1, \dots, S$, where $n_{\min} = \min_{s=1, \dots, S}n_s$, we have
         \begin{equation*}
    pr\left\{\left|\left|\mathbf{U}^c \mathbf{U}^{c \top} -  \mathbf{U}_0^c \mathbf{U}_0^{c \top} \right|\right| > G_4 (C_{r, \{n_s\}_{s}^{S}} C_{c, p})^2 \left(\frac{1}{n} + \frac{1}{p}\right)\right\} \to 0 \quad (n_1, \dots, n_S, p \uparrow \infty),
\end{equation*}
where $G_4 < \infty$ is a finite constant. Moreover, Replacing Assumptions \ref{assumption:distributions} and \ref{assumption:sigma} with Assumptions \ref{ass:bai-mean}--\ref{ass:sv_lf},
 with probability at least $1-o(1)$, we have 
\begin{equation*}
    pr\left\{\left|\left|\mathbf{U}^c \mathbf{U}^{c \top} -  \mathbf{U}_0^c \mathbf{U}_0^{c \top} \right|\right| > G_4  \left(\frac{1}{\sqrt{n}} + \frac{1}{\sqrt{p}}\right)\right\} \to 0 \quad (n_1, \dots, n_S, p \uparrow \infty).
\end{equation*}
 
\end{proposition}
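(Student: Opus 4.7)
The plan is to adapt the strategy used in the proofs of Propositions \ref{prop:V_s_outer} and \ref{prop:U_s_outer}, viewing $\mathbf{\hat Y}^c$ as a noisy perturbation of the rank-$k_0$ matrix $\mathbf{M}_0 \mathbf{\Lambda}_0^\top$, whose leading $k_0$ left singular vectors coincide with $\mathbf{U}_0^c$ up to an orthogonal rotation. Setting $\mathbf{E}^c := \mathbf{\hat Y}^c - \mathbf{M}_0 \mathbf{\Lambda}_0^\top$ and invoking the sin-$\Theta$ bound from Theorem 2 of \citet{luo_et_al} as in Proposition \ref{prop:V_s_outer} gives
\begin{equation*}
s_{k_0}^2(\mathbf{M}_0)\, s_{k_0}^2(\mathbf{\Lambda}_0)\, \left\| \mathbf{U}^c \mathbf{U}^{c\top} - \mathbf{U}_0^c \mathbf{U}_0^{c\top} \right\| \;\leq\; 4\, \|\mathbf{E}^c\|^2.
\end{equation*}
By Corollary 5.35 of \citet{vershynin_12}, $s_{k_0}(\mathbf{M}_0) \asymp \sqrt{n}$ with probability at least $1-o(1)$, while Assumption \ref{assumption:Lambda} gives $s_{k_0}(\mathbf{\Lambda}_0) \asymp \sqrt{p}$. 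It therefore suffices to establish $\|\mathbf{E}^c\|^2 \lesssim (C_{r,\{n_s\}_s^{S}} C_{c,p})^2 (n+p)$ with high probability.

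The next step is to decompose $\mathbf{E}^c$ block by block. Substituting the model for $\mathbf{Y}_s$ into $\mathbf{\hat Y}_s^c = (\mathbf{I}_{n_s} - \mathbf{U}_s^\perp \mathbf{U}_s^{\perp\top}) \mathbf{Y}_s$ and exploiting that $\mathbf{U}_{0s}^\perp$ spans the column space of $\mathbf{F}_{0s}$, so that $(\mathbf{I}_{n_s} - \mathbf{U}_{0s}^\perp \mathbf{U}_{0s}^{\perp\top}) \mathbf{F}_{0s} \mathbf{\Gamma}_{0s}^\top = 0$, the $s$-th block of $\mathbf{E}^c$ admits the decomposition
\begin{equation*}
(\mathbf{E}^c)_s = -\,\mathbf{P}_{\mathbf{F}_{0s}} \mathbf{M}_{0s} \mathbf{\Lambda}_0^\top \;+\; \bigl( \mathbf{U}_{0s}^\perp \mathbf{U}_{0s}^{\perp\top} - \mathbf{U}_s^\perp \mathbf{U}_s^{\perp\top} \bigr) \bigl( \mathbf{M}_{0s} \mathbf{\Lambda}_0^\top + \mathbf{F}_{0s} \mathbf{\Gamma}_{0s}^\top \bigr) \;+\; \bigl( \mathbf{I}_{n_s} - \mathbf{U}_s^\perp \mathbf{U}_s^{\perp\top} \bigr) \mathbf{E}_s,
\end{equation*}
where $\mathbf{P}_{\mathbf{F}_{0s}} := \mathbf{F}_{0s}(\mathbf{F}_{0s}^\top \mathbf{F}_{0s})^{-1} \mathbf{F}_{0s}^\top$. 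The middle summand is handled by combining Proposition \ref{prop:U_s_outer} with $\|\mathbf{M}_{0s} \mathbf{\Lambda}_0^\top + \mathbf{F}_{0s} \mathbf{\Gamma}_{0s}^\top\| \lesssim \sqrt{n_s p}$, producing a contribution of order $(C_{sr,n_s} C_{sc,p})^2 (\sqrt{p/n_s} + \sqrt{n_s/p})$. The residual term is bounded by $\|\mathbf{E}_s\| \lesssim C_{sr,n_s} C_{sc,p}(\sqrt{n_s}+\sqrt{p})$ via the same probabilistic bound on $\|\mathbf{E}_s\|$ used in earlier proofs. The first (cross-orthogonality) summand is controlled via independence of the Gaussian factors: $\|\mathbf{F}_{0s}^\top \mathbf{M}_{0s}\| \lesssim \sqrt{n_s}$ by a standard sub-Gaussian matrix bound, together with $\|\mathbf{F}_{0s}(\mathbf{F}_{0s}^\top \mathbf{F}_{0s})^{-1}\| \lesssim 1/\sqrt{n_s}$, gives $\|\mathbf{P}_{\mathbf{F}_{0s}} \mathbf{M}_{0s}\| \lesssim 1$ and hence $\|\mathbf{P}_{\mathbf{F}_{0s}} \mathbf{M}_{0s} \mathbf{\Lambda}_0^\top\| \lesssim \sqrt{p}$.

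Finally, I would apply the block-stacking inequality $\|\mathbf{E}^c\|^2 \leq \sum_{s=1}^S \|(\mathbf{E}^c)_s\|^2$, combined with $n_s = \mathcal{O}(n_{\min}^2)$ and the finiteness of $S$, to deduce $\|\mathbf{E}^c\|^2 \lesssim (C_{r,\{n_s\}_s^{S}} C_{c,p})^2 (n+p)$, which substituted back into the sin-$\Theta$ bound yields the claimed rate $(C_{r,\{n_s\}_s^{S}} C_{c,p})^2 (1/n + 1/p)$. The analogous statement under the alternative moment-based assumptions in the supplement follows identically, replacing the spectral-norm control in Proposition \ref{prop:U_s_outer} with its sharper $1/\sqrt{n_s} + 1/\sqrt{p}$ version. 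The hardest step will be the cross-orthogonality term $\mathbf{P}_{\mathbf{F}_{0s}} \mathbf{M}_{0s} \mathbf{\Lambda}_0^\top$: the bound must be $\sqrt{p}$ rather than the naive $\sqrt{n_s p}$ one would obtain from $\|\mathbf{\Lambda}_0\|$ alone, and securing the crucial improvement requires exploiting the cancellation $\|\mathbf{F}_{0s}^\top \mathbf{M}_{0s}\| = O_p(\sqrt{n_s})$ rather than $O_p(n_s)$ afforded by independence of the shared and study-specific Gaussian factors.
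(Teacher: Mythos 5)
Your proposal is correct and follows essentially the same route as the paper's proof: the same sin-$\Theta$ inequality from Theorem 2 of \citet{luo_et_al}, the same decomposition of $\mathbf{\hat Y}^c - \mathbf{M}_0\mathbf{\Lambda}_0^\top$ into the cross-projection term $\mathbf{P}_{\mathbf{F}_{0s}}\mathbf{M}_{0s}\mathbf{\Lambda}_0^\top$ (the paper writes it as $\mathbf{P}_0^\perp\mathbf{M}_0\mathbf{\Lambda}_0^\top$ with a block-diagonal projector), the error $\mathbf{\Delta}$ in the estimated study-specific projections, and the residual noise, with identical bounds on each, and you correctly identify $\|\mathbf{P}_{\mathbf{F}_{0s}}\mathbf{M}_{0s}\|\lesssim 1$ as the crucial cancellation (the paper's Lemma \ref{lemma:Mt_P_M}; under the moment-based assumptions this must instead be taken directly from Assumption \ref{ass:lf_projection}).
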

\begin{proof}[Proof of Proposition \ref{prop:recovery_M}]
We let 
\begin{equation}\label{eq:N_0_perp}
    \mathbf{N}_0^\perp= \begin{bmatrix}
        \mathbf{U}_{01}^\perp  & 0 & &  &  0 \\
        0 &  \mathbf{U}_{02}^\perp &   0  & \cdots&  0 \\
        \vdots & & & & \vdots\\
         0  & \cdots &&  0  &\mathbf{U}_{0S}^\perp 
    \end{bmatrix}
\end{equation}
and
\begin{equation}\label{eq:P_0_perp}
    \mathbf{P}_0^\perp = \mathbf{N}_0^\perp \mathbf{N}_0^{\perp \top};
\end{equation}
that is $ \mathbf{P}_0^\perp$ is a block diagonal matrix and the $s$-th block is given by $ \mathbf{U}_{0s}^\perp \mathbf{U}_{0s}^{\perp \top}$.
Note that $\mathbf{P}_0^{\perp 2} = \mathbf{P}_0^\perp$ and $\mathbf{P}_0^{\perp\top} = \mathbf{P}_0^\perp$, since all the blocks are orthogonal projection matrices, making $\mathbf{P}_0$ an orthogonal projection matrix
Similarly, we define $ \mathbf{Q}_0^{\perp}= \mathbf{I}_n - \mathbf{P}_0^\perp$, and $\Delta$ to be a block diagonal matrix and the $s$-th block is given $\mathbf{U}_{0s}^\perp \mathbf{U}_{0s}^{\perp \top}- \mathbf{U}_s^\perp \mathbf{U}_s^{\perp \top}$ for $s=1, \dots, S$,
\begin{equation}\label{eq:Delta}
    \Delta =  \begin{bmatrix}
        \mathbf{U}_{01}^\perp \mathbf{U}_{01}^{\perp \top}- \mathbf{U}_1^\perp \mathbf{U}_1^{\perp \top} & 0 & &  &  0 \\
        0 &  \mathbf{U}_{02}^\perp \mathbf{U}_{02}^{\perp \top}- \mathbf{U}_2^\perp \mathbf{U}_2^{\perp \top} &   0  & \cdots&  0 \\
        \vdots & & & & \vdots\\
         0  & \cdots &&  0  &\mathbf{U}_{0S}^\perp \mathbf{U}_{0S}^{\perp \top}- \mathbf{U}_S^\perp \mathbf{U}_S^{\perp \top}
    \end{bmatrix}.
\end{equation}

Since $\mathbf{\hat Y}^c =  \mathbf{M}_0{\mathbf{\Lambda}}_0^\top -\mathbf{P}_0^\perp \mathbf{M}_0{\mathbf{\Lambda}}_0^\top  +\mathbf{Q}_0^{\perp}   \mathbf{E} + \mathbf{\Delta  Y} $, we have
    \begin{equation*}
        \left|\left|\left( \mathbf{I}_n - \mathbf{U}^c \mathbf{U}^{c \top} \right)   \mathbf{M}_0{\mathbf{\Lambda}}_0^\top \right|\right|^2 \leq 4  \left|\left|-\mathbf{P}_0^\perp \mathbf{M}_0{\mathbf{\Lambda}}_0^\top +\mathbf{Q}_0^{\perp}  \mathbf E + \mathbf{\Delta  Y}  \right|\right|^2   
    \end{equation*}
by Theorem 2 in \citet{luo_et_al}. Moreover, we have 
$||\mathbf{P}_0^\perp \mathbf{M}_0{\mathbf{\Lambda}}_0^\top || \leq   ||\mathbf{P}_0^\perp \mathbf{M}_0|| ||{\mathbf{\Lambda}}_0|| \asymp \sqrt{p}$, with probability at least $1-o(1)$, since $||\mathbf{P}_0^\perp \mathbf{M}_0||\asymp 1$,  with probability at least $1-o(1)$, by Lemma \ref{lemma:Mt_P_M} under Assumptions \ref{assumption:model}--\ref{assumption:sigma}, or by Assumption \ref{ass:lf_projection}, and $||{\mathbf{\Lambda}}_0|| \asymp \sqrt{p}$ by Assumption \ref{assumption:Lambda}. The probabilistic bounds in Lemmas \ref{lemma:E} and \ref{lemma:Delta_y_j} with the fact that $s_{k_0}( \mathbf{M}_0) \asymp \sqrt{n}$  with probability at least $1 - o(1)$, by Corollary 5.35 of \citet{vershynin_12} under Assumption \ref{assumption:distributions}, or by Assumption \ref{ass:lf_projection}, and $s_{k_0}({\mathbf{\Lambda}}_0)  \asymp \sqrt{p}$ by Assumption \ref{assumption:Lambda} complete the proof. 
\end{proof}

\subsection{Proofs of the main results}

\begin{proof}[Proof of Theorem \ref{thm:factors_procrustes_error}]
    Let $\mathbf{U}_0 \mathbf{D}_0 \mathbf{V}_0^\top $ be the singular value decomposition of $\mathbf{M}_0$. Note that $\mathbf{U}_0 \mathbf{U}_0^\top = \mathbf{U}_0^c \mathbf{U}_0^{c\top} $ where $\mathbf{U}_0^c \in \mathbb R^{n \times k_0}$ is the matrix of left singular vectors of $\mathbf{M}_0 {\mathbf{\Lambda}}_0^\top $. Recall that, under Assumptions \ref{assumption:model}--\ref{assumption:sigma}, by Proposition \ref{prop:recovery_M}, we have $||\mathbf{U}^c \mathbf{U}^c - \mathbf{U}_0 \mathbf{U}_0^\top ||  \lesssim (C_{r, \{n_s\}_{s}^{S}} C_{c, p})^2 \big(\frac{1}{n} + \frac{1}{p}\big)$, with probability at least $1 - o(1)$.
    Then, by Davis-Kahan theorem \citep{davis_kahan} we have
    $\min_{\mathbf{R}: \mathbf{R}^\top \mathbf{R} = \mathbf{I}_{k_0}}||\mathbf{U}^c  - \mathbf{U}_0 \mathbf{R} || = ||\mathbf{U}^c  - \mathbf{U}_0\mathbf{\hat R} || \lesssim(C_{r, \{n_s\}_{s}^{S}} C_{c, p})^2 \big(\frac{1}{n} + \frac{1}{p}\big)$, with probability at least $1 - o(1)$, where $\mathbf{\hat R}$ is the orthogonal matrix achieving the minimum of the quantity on the left hand side. Recalling that $\mathbf{\hat M} = \sqrt{n} \mathbf{U}^c$ and letting $\mathbf{\tilde R} = \mathbf{V}_0\mathbf{ \hat R}$, we have \begin{equation*}
        \begin{aligned}
            ||\mathbf{\hat M} - \mathbf{M}_0 \mathbf{\tilde R}  || &= ||\sqrt{n} \mathbf{U}^c - \mathbf{U}_0 \mathbf{D}_0 \mathbf{\hat R} || \leq  ||\sqrt{n} (\mathbf{U}^c - \mathbf{U}_0\mathbf{\hat R} ) || + ||\sqrt{n} \mathbf{U}_0\mathbf{\hat R} - \mathbf{U}_0 \mathbf{D}_0\mathbf{ \hat R}  ||\\
            &\leq \sqrt{n}(C_{r, \{n_s\}_{s}^{S}} C_{c, p})^2 \big(\frac{1}{n} + \frac{1}{p}\big) + \max_{1 \leq l \leq k_0} |\sqrt{n} - d_{0l}|
        \end{aligned}
    \end{equation*}where $d_{0l}$ is the $l$-th largest singular value of $\mathbf{D}_0$. Moreover, by corollary 5.35 of \citet{vershynin_12}, we have $|d_{0l} - \sqrt{n}| \lesssim \sqrt{k_0}$ with probability at least $1-o(1)$. We conclude the proof of the first result by noting that $||\mathbf{\hat M}_s - \mathbf{M}_{0s} \mathbf{\tilde R}  || \leq ||\mathbf{\hat M} - \mathbf{M}_0 \mathbf{\tilde R}  ||$ and $\mathbf{\tilde R}^\top \mathbf{\tilde R} = \mathbf{I}_{k_0}$. The proof for the second result proceeds by applying Proposition \ref{prop:U_s_outer} followed by similar steps.  %
    \end{proof}

\begin{proof}[Proof of Theorem \ref{thm:posterior_contraction_Lambda_outer}]
First, we show $\mathbf{\mu}_\Lambda \mathbf{\mu}_\Lambda^\top$ is consistent for ${\mathbf{\Lambda}}_0 {\mathbf{\Lambda}}_0^\top$.
We have 
\begin{equation*}
    \begin{aligned}
          \mathbf{\mu}_\Lambda \mathbf{\mu}_\Lambda^\top &= \frac{n}{(n + \tau_{\Lambda}^{-2})^2} \left\{\mathbf{\hat Y}^{c\top} \mathbf{U}_0^c  \mathbf{U}_0^{c\top}\mathbf{\hat Y}^c +\mathbf{\hat Y}^{c\top} \left(\mathbf{U}^c  \mathbf{U}^{c\top} -  \mathbf{U}_0^c  \mathbf{U}_0^{c\top} \right)\mathbf{\hat Y}^c\right\}.
    \end{aligned}
\end{equation*}

Recall from the proof of Proposition \ref{prop:recovery_M} that $\mathbf{\hat Y}^c =  \mathbf{M}_0{\mathbf{\Lambda}}_0^\top -\mathbf{P}_0^\perp \mathbf{M}_0{\mathbf{\Lambda}}_0^\top  +\mathbf{Q}_0^{\perp} \mathbf E + \mathbf{\Delta  Y} $, where $\mathbf{P}_0^\perp$ and $\Delta$ are defined in \eqref{eq:P_0_perp} and \eqref{eq:Delta}, respectively, and $\mathbf{Q}_0^\perp = \mathbf{I}_p - \mathbf{P
}_0^\perp$. Then
\begin{equation*}
    \begin{aligned}
       \mathbf{\hat Y}^{c \top}\mathbf{U}_0^c  \mathbf{U}_0^{c\top}\mathbf{\hat Y}^c =&    n{\mathbf{\Lambda}}_0 {\mathbf{\Lambda}}_0^\top + {\mathbf{\Lambda}}_0 \left( \mathbf{M}_0^\top  \mathbf{M}_0 - n \mathbf{I}_{k_0} \right) {\mathbf{\Lambda}}_0^\top    \\   
        & +3 {\mathbf{\Lambda}}_0 \mathbf{M}_0^\top  \mathbf{P}_0^\perp  \mathbf{M}_0 {\mathbf{\Lambda}}_0^\top \\   
        & + 
        {\mathbf{\Lambda}}_0 \mathbf{M}_0^\top \mathbf{Q}_0^{\perp}   \mathbf{E} +   \mathbf{E}^\top \mathbf{Q}_0^{\perp}\mathbf{M}_0  {\mathbf{\Lambda}}_0^\top\\   
        & + 
        {\mathbf{\Lambda}}_0 \mathbf{M}_0^\top  \mathbf{\Delta  Y}  +  \mathbf{Y}^\top \mathbf{\Delta}^\top  \mathbf{M}_0{\mathbf{\Lambda}}_0^\top 
        \\   
        & +  {\mathbf{\Lambda}}_0 \mathbf{M}_0^\top  \mathbf{P}_0^\perp \mathbf{U}_0^c  \mathbf{U}_0^{c\top}\mathbf{Q}_0^{\perp}   \mathbf{E} +   \mathbf{E}^\top \mathbf{Q}_0^{\perp} \mathbf{U}_0^c  \mathbf{U}_0^{c\top}  \mathbf{P}_0^\perp \mathbf{M}_0  {\mathbf{\Lambda}}_0^\top  \\ 
        &+ {\mathbf{\Lambda}}_0 \mathbf{M}_0^\top  \mathbf{P}_0^\perp \mathbf{U}_0^c  \mathbf{U}_0^{c\top} \mathbf{\Delta  Y}  +  \mathbf{Y}^\top \mathbf{\Delta}^\top \mathbf{U}_0^c  \mathbf{U}_0^{c\top}  \mathbf{P}_0^\perp  \mathbf{M}_0{\mathbf{\Lambda}}_0^\top 
          \\     & +  \mathbf{E}^\top \mathbf{Q}_0^{\perp} \mathbf{U}_0^c  \mathbf{U}_0^{c\top}\mathbf{Q}_0^{\perp}   \mathbf{E} +   \mathbf{Y}^\top \mathbf{\Delta}^\top \mathbf{U}_0^c  \mathbf{U}_0^{c\top}   \mathbf{\Delta  Y}  
          \\    &
+  \mathbf{E}^\top \mathbf{Q}_0^{\perp} \mathbf{U}_0^c  \mathbf{U}_0^{c\top}   \mathbf{\Delta  Y}   +   \mathbf{Y}^\top \mathbf{\Delta}^\top \mathbf{U}_0^c  \mathbf{U}_0^{c\top} \mathbf{Q}_0^{\perp}   \mathbf{E}  . 
    \end{aligned}
\end{equation*}

We can bound each term, using the fact that, with probability at least $1- o(1)$, 
\begin{enumerate}
   \item $||\mathbf{M}_0|| \lesssim \sqrt{n}$, by Theorem 4.6.1 of \citet{vershynin2018hdp},
   \item $||   \mathbf{M}_0^\top  \mathbf{M}_0 - n \mathbf{I}_{k_0}  || \lesssim (\sqrt{n} + \sqrt{k}) \sqrt{\log n}$ by Lemma E.1 in \citet{fable},
   \item $|| \mathbf{U}_0^{\perp \top}  \mathbf{M}_0 || \lesssim \sqrt{k_0} + \sqrt{\sum_s q_s}$ by  Theorem 4.6.1 of \citet{vershynin2018hdp}, since $ \mathbf{U}_0^{\perp \top}  \mathbf{M}_0  \sim MN_{\sum_s q_s, k_0} (0,  \mathbf{I}_{\sum_s q_s}, \mathbf{I}_{k_0})$, where $ MN_{d_1, d_2}(\mu,  \mathbf{\Sigma}_r, \mathbf{\Sigma}_c)$ denotes a $d_1 \times d_2$ matrix normal distribution with mean $\mu$, within-column covariance $\mathbf{\Sigma}_c$ and within-row covariance $\mathbf{\Sigma}_r$,
    \item $||\mathbf{E}||\lesssim   C_{r, \{n_s\}_{s}^{S}} C_{c, p}  (\sqrt{n} + \sqrt{p})$, by Lemma \ref{lemma:E},
    \item $||\mathbf{\Delta Y} || \lesssim (C_{r, \{n_s\}_{s}^{S}} C_{c, p}   )^2 \big(\frac{\sqrt{p}}{\sqrt{n_{\min}}} + \frac{\sqrt{n_{\max}}}{\sqrt{p}}\big)$, by Lemma \ref{lemma:Delta_y_j},
    \end{enumerate}
    and $||{\mathbf{\Lambda}}_0|| \asymp \sqrt{p}$ by Assumption \ref{assumption:Lambda}, as well as, $||\mathbf{P}_0^\perp|| = ||\mathbf{Q}_0^\perp|| = ||\mathbf{U}_0^c|| = ||\mathbf{N}_0^\perp|| =1$,  where  $\mathbf{N}_0^\perp$ is defined in  \eqref{eq:N_0_perp}.
Note $\frac{n}{(n + \tau_{\Lambda}^{-2})} = \frac{1}{n} + \nu_n$ where $\nu_n \asymp \frac{1}{n^2}$ and $\nu_n ||{\mathbf{\Lambda}}_0|| \asymp \frac{p}{n^2}$.
Combining all of the above, we obtain
\begin{equation*}
    \begin{aligned}
        ||\mathbf{\mu}_\Lambda \mathbf{\mu}_\Lambda^\top  - {\mathbf{\Lambda}}_0 {\mathbf{\Lambda}}_0^\top || \lesssim (\sqrt{\log n} + C_{r, \{n_s\}_{s}^{S}} C_{c, p}) \frac{p}{\sqrt{n}}  + C_{r, \{n_s\}_{s}^{S}} C_{c, p}\sqrt{p}
    \end{aligned}
\end{equation*}
with probability at least $1 - o(1)$.
Finally, since $||{\mathbf{\Lambda}}_0|| \asymp \sqrt{p}$, with probability at least $1 - o(1)$,  
\begin{equation*}
\label{eq:consistency_mu_Lambda_outer}
    \begin{aligned}
                \frac{||\mathbf{\mu}_\Lambda \mathbf{\mu}_\Lambda^\top  - {\mathbf{\Lambda}}_0 {\mathbf{\Lambda}}_0^\top ||}{||{\mathbf{\Lambda}}_0 {\mathbf{\Lambda}}_0^\top ||} &\lesssim  (\sqrt{\log n} + C_{r, \{n_s\}_{s}^{S}} C_{c, p}) \frac{1}{\sqrt{n}}  + C_{r, \{n_s\}_{s}^{S}} C_{c, p}\frac{1}{\sqrt{p}}.
    \end{aligned}
\end{equation*}
{By bounding the size of the difference of each sample from the mean, we should also obtain posterior concentration. 
A sample ${\mathbf{\tilde \Lambda}}$ from $\tilde \Pi$ is given by
\begin{equation*}
   {\mathbf{\tilde \Lambda}} = \mathbf{\mu}_\Lambda + \mathbf{\tilde E}_{\Lambda}, \quad  \mathbf{\tilde E}_{\Lambda} = [ \mathbf{e}_{\lambda_1} ~ \cdots ~\mathbf{e}_{\lambda_p} ]^\top, \quad 
     \mathbf{e}_{\lambda_{j}} \sim N_{k_0}\left(0, \frac{\rho_{\Lambda}^2 \tilde \sigma_j^2}{n + \tau_{\Lambda}^{-2}} \mathbf{I}_{k_0}\right).
\end{equation*}
Thus, 
\begin{equation*}
\begin{aligned}
     \left|\left| {\mathbf{\tilde \Lambda}}  {\mathbf{\tilde \Lambda}} ^\top - \mathbf{\mu}_\Lambda \mathbf{\mu}_\Lambda^\top \right|\right| &\lesssim \left|\left| \mathbf{\tilde E}_{\Lambda} \mathbf{\tilde E}_{\Lambda}^\top \right|\right| + \left|\left| \mathbf{\mu}_\Lambda \mathbf{\tilde E}_{\Lambda}^\top \right|\right| \\
     &\lesssim  \frac{p}{n} \rho_{\Lambda}^2 \max_{j} \tilde \sigma_j^2 + \left( \frac{(np)^{1/2} + n^{1/2} + p^{1/2}}{n^{1/2}}\right) \frac{p^{1/2}}{n^{1/2}} \rho_{\Lambda}^2 \max_{j} \tilde \sigma_j^2 \\
  & \lesssim \frac{p}{n^{1/2}},
\end{aligned}
\end{equation*}
with probability at least $1 - o(1)$, since $\tilde \sigma_j^2  \lesssim 1$ with probability at least $1-o(1)$, by Lemma \ref{lemma:sigma_j_tilde}, determining the desired result.}

Next, we proceed by establishing consistency for ${\mathbf{\Gamma}}_s {\mathbf{\Gamma}}_s^\top$. 
As a first step, we bound  $||\mathbf{\hat M}_s \mathbf{\mu}_\Lambda^\top - \mathbf{M}_{0s} {\mathbf{\Lambda}}_{0}^\top||$. 
Recall that $\mathbf{\hat M} \mathbf{\mu}_\Lambda^\top = \frac{n}{n + \tau_{\Lambda}^{-2}} \mathbf{U}^c  \mathbf{U}^{c \top}\mathbf{\hat Y}^c$ and $\mathbf{ \hat M_s} \mathbf{\mu}_\Lambda^\top = \frac{n}{n + \tau_{\Lambda}^{-2}} \mathbf{U}_s^{c}  \mathbf{U}^{c \top}\mathbf{\hat Y}^c$, where $\mathbf{U}_s^{c} \in \mathbb R^{n_s \times k_0}$ is the block of $\mathbf{U}^c$ corresponding to the $s$-th study, that is $\mathbf{U}^c = \big[ \mathbf{U}_1^{c\top} ~ \dots ~ \mathbf{U}_S^{c\top}\big]^\top$.
We have the following decomposition
\begin{equation*}
    \begin{aligned}
       \mathbf{\hat M}_s \mathbf{\mu}_\Lambda^\top =  \frac{n}{n + \tau_{\Lambda}^{-2}}\left\{\mathbf{M}_{0s} {\mathbf{\Lambda}}_0^\top - \mathbf{U}_{0s}^c  \mathbf{U}_0^{c \top}\left( \mathbf{P}_0^\perp \mathbf{M}_0{\mathbf{\Lambda}}_0^\top +  \mathbf{Q}_0^{\perp} \mathbf E + \mathbf{\Delta  Y} \right) + \left(\mathbf{U}_{0s}^c \mathbf{U}_0^{c \top} - \mathbf{U}_s^{c}  \mathbf{U}^{c \top}\right)\mathbf{\hat Y}^c  \right\}. 
    \end{aligned}
\end{equation*}
Using the fact that, with probability at least $1- o(1)$, 
\begin{enumerate}
   \item $||\mathbf{M}_0|| \lesssim \sqrt{n}$, $||\mathbf{M}_{0s}|| \asymp \sqrt{n_s}$, and $||\mathbf{F}_{0s}|| \lesssim \sqrt{n_s}$ by Theorem 4.6.1 of \citet{vershynin2018hdp},
   \item $|| \mathbf P_0^{\perp }  \mathbf{M}_0 || \lesssim \sqrt{k_0} + \sqrt{\sum_s q_s}$ by Lemma 5,
    \item $||\mathbf{E}||\lesssim   C_{r, \{n_s\}_{s}^{S}} C_{c, p}  (\sqrt{n} + \sqrt{p})$ by Lemma \ref{lemma:E},
   \item $|| \mathbf{U}_{0s}^c  \mathbf{U}_0^{c \top} - \mathbf{U}^c  \mathbf{U}^{c \top} || \leq  ||  \mathbf{U}_0^c  \mathbf{U}_0^{c \top} - \mathbf{U}^c  \mathbf{U}^{c \top}|| \lesssim (C_{r, \{n_s\}_{s}^{S}} C_{c, p}   )^2 \big(\frac{1}{n} + \frac{1}{p}\big)$, with probability at least $1- o(1)$ by Proposition \ref{prop:recovery_M},
    \end{enumerate}
we obtain $  ||\mathbf{ \hat M_s} \mathbf{\mu}_\Lambda^\top  - \mathbf{M}_{0s} {\mathbf{\Lambda}}_0^\top|| \leq ||\mathbf{\hat M} \mathbf{\mu}_\Lambda^\top  -  \mathbf{M}_0{\mathbf{\Lambda}}_0^\top||\lesssim  (C_{r, \{n_s\}_{s}^{S}} C_{c, p}   )^2 (\sqrt{n} + \sqrt{p})$, with probability at least $1-o(1)$.

Hence, with probability at least $1-o(1)$, for a sample ${\mathbf{\tilde \Lambda}}$ from $\tilde \Pi$,  we have
\begin{equation*}
\begin{aligned}
   ||\mathbf{\hat M}_s\tilde{{\mathbf{\Lambda}}}^\top  -  \mathbf{M}_{0s}{\mathbf{\Lambda}}_0^\top|| \leq  ||\mathbf{\hat M} \mathbf{\mu}_\Lambda^\top  -  \mathbf{M}_0{\mathbf{\Lambda}}_0^\top|| + ||\mathbf{\hat M}\left( \mathbf{\mu}_\Lambda - {\mathbf{\tilde \Lambda}}\right)^\top || \lesssim (C_{r, \{n_s\}_{s}^{S}} C_{c, p}   )^2 (\sqrt{n} + \sqrt{p}),
\end{aligned}  
\end{equation*}
since
\begin{equation*}
    ||\mathbf{\hat M}\left( \mathbf{\mu}_\Lambda - {\mathbf{\tilde \Lambda}}\right)^\top || \leq ||\mathbf{\hat M} || ||  \mathbf{\mu}_\Lambda - {\mathbf{\tilde \Lambda}}|| \lesssim \sqrt{n} \frac{\sqrt{p}}{\sqrt{n}} \rho_{\Lambda} \max_{1 \leq j \leq p} \tilde \sigma_j \lesssim \sqrt{p},
\end{equation*}
and $ \max_{1 \leq j \leq p} \tilde \sigma_j  \lesssim 1$ by Lemma \ref{lemma:sigma_j_tilde} with probability $1-o(1)$, 
which implies $ ||\mathbf{\hat M}_s\tilde{{\mathbf{\Lambda}}}^\top  - \mathbf{M}_{0s} {\mathbf{\Lambda}}_0^\top|| \lesssim (C_{r, \{n_s\}_{s}^{S}} C_{c, p}   )^2 (\sqrt{n} + \sqrt{p})$, with probability at least $1-o(1)$, 
Hence, consider the outer product of the conditional mean of $\mathbf{\Gamma}_s$, $ \mathbf{\bar \Gamma}_s = E[   {\mathbf{\Gamma}}_s \mid \mathbf{Y}, \mathbf{\hat  F}_s,\mathbf{\hat M}_s, \mathbf{\tilde \Lambda}, \mathbf{\tilde \Sigma}, \rho_{\Lambda} ] $, given $\mathbf{ \hat M}_s$, $\mathbf{\tilde \Lambda}$ and $\mathbf{\tilde \Sigma}$, where $(\mathbf{\tilde \Lambda}, \mathbf{\tilde \Sigma})$ is a sample from $\tilde \Pi$,
\begin{equation*}
\begin{aligned}
        \bar {\mathbf{\Gamma}}_s \bar {\mathbf{\Gamma}}_s^\top &= \frac{n_s}{(n_s + \tau_{\Gamma_s}^{-2})^2} (\mathbf{Y}_s -\mathbf{\hat M}_s {\mathbf{\tilde \Lambda}}^\top)^\top \mathbf{U}_s^{\perp} \mathbf{U}_s^{\perp \top}(\mathbf{Y}_s -\mathbf{\hat M}_s {\mathbf{\tilde \Lambda}}^\top)\\
        &= \frac{n_s}{(n_s + \tau_{\Gamma_s}^{-2})^2} (\mathbf{Y}_s -\mathbf{\hat M}_s {\mathbf{\tilde \Lambda}}^\top)^\top \mathbf{U}_{0s}^\perp \mathbf{U}_{0s}^{\perp \top}(\mathbf{Y}_s -\mathbf{\hat M}_s {\mathbf{\tilde \Lambda}}^\top) \\
        &\quad + \frac{n_s}{(n_s + \tau_{\Gamma_s}^{-2})^2} (\mathbf{Y}_s -\mathbf{\hat M}_s {\mathbf{\tilde \Lambda}}^\top)^\top ( \mathbf{U}_s^{\perp} \mathbf{U}_s^{\perp \top} - \mathbf{U}_{0s}^\perp \mathbf{U}_{0s}^{\perp \top})(\mathbf{Y}_s -\mathbf{\hat M}_s {\mathbf{\tilde \Lambda}}^\top).
\end{aligned}
\end{equation*}
We analyze each term separately. 
Since $ \mathbf{Y}_s -\mathbf{\hat M}_s {\mathbf{\tilde \Lambda}}^\top = \mathbf{Y}_s - \mathbf{M}_{0s}  {\mathbf{\Lambda}}_0^\top + ( \mathbf{M}_{0s}  {\mathbf{\Lambda}}_0^\top -\mathbf{\hat M}_s \mathbf{\mu}_\Lambda^\top  +\mathbf{\hat M}_s \mathbf{\mu}_\Lambda^\top -\mathbf{\hat M}_s {\mathbf{\tilde \Lambda}}^\top  ) = \mathbf{F}_{0s} {\mathbf{\Gamma}}_{0s}^\top + \mathbf{E}_s +  ( \mathbf{M}_{0s}  {\mathbf{\Lambda}}_0^\top -\mathbf{\hat M}_s \mathbf{\mu}_\Lambda^\top ) + (\mathbf{ \hat M_s} \mathbf{\mu}_\Lambda^\top -\mathbf{\hat M}_s {\mathbf{\tilde \Lambda}}^\top  ) $. 
The first term can be decomposed as follows
\begin{equation*}
    \begin{aligned}
        \frac{n_s}{(n_s + \tau_{\Gamma_s}^{-2})^2} &(\mathbf{Y}_s -\mathbf{\hat M}_s {\mathbf{\tilde \Lambda}}^\top)^\top \mathbf{U}_{0s}^\perp \mathbf{U}_{0s}^{\perp \top}(\mathbf{Y}_s -\mathbf{\hat M}_s {\mathbf{\tilde \Lambda}}^\top) = \\
        &=  \frac{n_s}{(n_s + \tau_{\Gamma_s}^{-2})^2} {\mathbf{\Gamma}}_{0s} \mathbf{F}_{0s}^\top  \mathbf{F}_{0s}{\mathbf{\Gamma}}_{0s}^\top  +  \frac{n_s}{(n_s + \tau_{\Gamma_s}^{-2})^2} \left({\mathbf{\Gamma}}_{0s} \mathbf{F}_{0s}^\top \mathbf{E}_s + \mathbf{E}_s^\top \mathbf{F}_{0s}{\mathbf{\Gamma}}_{0s}^\top \right)\\
        & \quad + \frac{n_s}{(n_s + \tau_{\Gamma_s}^{-2})^2} \left\{{\mathbf{\Gamma}}_{0s} \mathbf{F}_{0s}^\top\left( \mathbf{M}_{0s}  {\mathbf{\Lambda}}_0^\top -\mathbf{\hat M}_s {\mathbf{\tilde \Lambda}}^\top\right)  + \left( \mathbf{M}_{0s}  {\mathbf{\Lambda}}_0^\top -\mathbf{\hat M}_s {\mathbf{\tilde \Lambda}}^\top\right)^\top \mathbf{F}_{0s}{\mathbf{\Gamma}}_{0s}^\top \right\}\\
         & \quad +\frac{n_s}{(n_s + \tau_{\Gamma_s}^{-2})^2} \left[ \left\{\mathbf{E}_s + \left( \mathbf{M}_{0s}  {\mathbf{\Lambda}}_0^\top -\mathbf{\hat M}_s {\mathbf{\tilde \Lambda}}^\top\right)\right\}^\top \mathbf{U}_{0s}^\perp \mathbf{U}_{0s}^{\perp \top} \left\{\mathbf{E}_s + \left( \mathbf{M}_{0s}  {\mathbf{\Lambda}}_0^\top -\mathbf{\hat M}_s {\mathbf{\tilde \Lambda}}^\top\right)\right\} \right].
    \end{aligned}
\end{equation*}
\begin{enumerate}
\item First, we decompose $\frac{n_s}{(n_s + \tau_{\Gamma_s}^{-2})^2} {\mathbf{\Gamma}}_{0s} \mathbf{F}_{0s}^\top  \mathbf{F}_{0s}{\mathbf{\Gamma}}_{0s}^\top = \frac{1}{n_s}{\mathbf{\Gamma}}_{0s} \mathbf{F}_{0s}^\top  \mathbf{F}_{0s}{\mathbf{\Gamma}}_{0s}^\top + \big(\frac{n_s}{(n_s + \tau_{\Gamma_s}^{-2})^2} -\frac{1}{n_s}\big){\mathbf{\Gamma}}_{0s} \mathbf{F}_{0s}^\top  \mathbf{F}_{0s}{\mathbf{\Gamma}}_{0s}^\top$, where $\big(\frac{n_s}{(n_s + \tau_{\Gamma_s}^{-2})^2} -\frac{1}{n_s}\big) \asymp \frac{1}{n_s^2}$. Then, note
\begin{equation*}
    || \big(\frac{n_s}{(n_s + \tau_{\Gamma_s}^{-2})^2} -\frac{1}{n_s}\big){\mathbf{\Gamma}}_{0s} \mathbf{F}_{0s}^\top  \mathbf{F}_{0s}{\mathbf{\Gamma}}_{0s}^\top  || \lesssim \frac{p}{n_s},
\end{equation*}
and
\begin{equation*}
    \frac{1}{n_s}{\mathbf{\Gamma}}_{0s} \mathbf{F}_{0s}^\top  \mathbf{F}_{0s}{\mathbf{\Gamma}}_{0s}^\top = {\mathbf{\Gamma}}_{0s} {\mathbf{\Gamma}}_{0s}^\top + {\mathbf{\Gamma}}_{0s} \big(\frac{1}{n_s}  \mathbf{F}_{0s}^\top  \mathbf{F}_{0s} - \mathbf{I}_{q_s}\big){\mathbf{\Gamma}}_{0s}^\top,
\end{equation*}
    with $|| {\mathbf{\Gamma}}_{0s} \big(\frac{1}{n_s}  \mathbf{F}_{0s}^\top  \mathbf{F}_{0s} - \mathbf{I}_{q_s}\big){\mathbf{\Gamma}}_{0s}^\top|| \leq ||{\mathbf{\Gamma}}_{0s}||^2 ||\frac{1}{n_s}  \mathbf{F}_{0s}^\top  \mathbf{F}_{0s} - \mathbf{I}_{q_s}|| \lesssim \frac{p}{\sqrt{n_s}}\sqrt{\log{n_s}}$,
    with probability at least $1-o(1)$, since 
    $||\frac{1}{n_s}  \mathbf{F}_{0s}^\top  \mathbf{F}_{0s} - \mathbf{I}_{q_s}|| \lesssim \frac{\sqrt{\log{n_s}}}{n_s}$ with probability at least $1-o(1)$ by Corollary E.1 of \citet{fable}.
    \item With probability at least $1-o(1)$, we have 
    \begin{equation*}
        \begin{aligned}
             \frac{n_s}{(n_s + \tau_{\Gamma_s}^{-2})^2} ||{\mathbf{\Gamma}}_{0s} \mathbf{F}_{0s}^\top \mathbf{E}_s + \mathbf{E}_s^\top \mathbf{F}_{0s}{\mathbf{\Gamma}}_{0s}^\top|| &\lesssim 
             \frac{1}{n_s} ||{\mathbf{\Gamma}}_{0s}|| ||\mathbf{F}_{0s} ||  || \mathbf{E}_s ||  
             \lesssim 
             \frac{1}{n_s} \sqrt{n_s p}C_{sr, n_s} C_{sc, p}   (\sqrt{n_s} + \sqrt{p})   \\
             &= C_{sr, n_s} C_{sc, p}  (\sqrt{p} + \frac{p}{\sqrt{n_s}}).
        \end{aligned}
    \end{equation*}
    \item With probability at least $1-o(1)$, we have  \begin{equation*}
        \begin{aligned}
             \frac{n_s}{(n_s + \tau_{\Gamma_s}^{-2})^2} &|| \left\{\mathbf{E}_s + \left( \mathbf{M}_{0s}  {\mathbf{\Lambda}}_0^\top -\mathbf{\hat M}_s {\mathbf{\tilde \Lambda}}^\top\right)\right\}^\top \mathbf{U}_{0s}^\perp \mathbf{U}_{0s}^{\perp \top} \left\{\mathbf{E}_s + \left( \mathbf{M}_{0s}  {\mathbf{\Lambda}}_0^\top -\mathbf{\hat M}_s {\mathbf{\tilde \Lambda}}^\top\right)\right\}||  \\
             & 
             \lesssim \frac{1}{n_s}(C_{r, \{n_s\}_{s}^{S}} C_{c, p}   )^4\left(\sqrt{n} + \sqrt{p}\right)^2 \asymp(C_{r, \{n_s\}_{s}^{S}} C_{c, p}   )^4 \big( \frac{n}{n_s} + \frac{p}{n_s} + \frac{\sqrt{np}}{n_s}\big),
        \end{aligned}
    \end{equation*}
\end{enumerate}
since, with probability at least $1-o(1)$, $||\mathbf{M}_{0s}  {\mathbf{\Lambda}}_0^\top -\mathbf{\hat M}_s {\mathbf{\tilde \Lambda}}^\top || \lesssim (C_{r, \{n_s\}_{s}^{S}} C_{c, p}   )^2(\sqrt{n} + \sqrt{p})$ and $||\mathbf{E}_s||\lesssim C_{sr, n_s} C_{sc, p} (\sqrt{n} + \sqrt{p})  $ by Lemma \ref{lemma:E}.
For the second term, since $||\mathbf{Y}_s -\mathbf{\hat M}_s {\mathbf{\tilde \Lambda}}^\top|| \leq ||\mathbf{F}_s {\mathbf{\Gamma}}_s^\top|| + ||\mathbf{E}_s|| +  ||\mathbf{\hat M}\tilde{{\mathbf{\Lambda}}}^\top  -  \mathbf{M}_0{\mathbf{\Lambda}}_0^\top|| \lesssim  \sqrt{n_s p} + (C_{r, \{n_s\}_{s}^{S}} C_{c, p}   )^2(\sqrt{n} + \sqrt{p})  \asymp \sqrt{n_s p}$
with probability at least $1-o(1)$, we have
\begin{equation*}
    \begin{aligned}
         \frac{n_s}{(n_s + \tau_{\Gamma_s}^{-2})^2} & ||(\mathbf{Y}_s -\mathbf{\hat M}_s {\mathbf{\tilde \Lambda}}^\top)^\top ( \mathbf{U}_s^{\perp} \mathbf{U}_s^{\perp \top} - \mathbf{U}_{0s}^\perp \mathbf{U}_{0s}^{\perp \top})(\mathbf{Y}_s -\mathbf{\hat M}_s {\mathbf{\tilde \Lambda}}^\top)|| \\
         & \lesssim \frac{1}{n_s}  ||\mathbf{Y}_s -\mathbf{\hat M}_s {\mathbf{\tilde \Lambda}}^\top||^2 || \mathbf{U}_s^{\perp} \mathbf{U}_s^{\perp \top} - \mathbf{U}_{0s}^\perp \mathbf{U}_{0s}^{\perp \top} || \\
         & \lesssim \frac{1}{n_s} n_s p (C_{r, \{n_s\}_{s}^{S}} C_{c, p}   )^2\left(\frac{1}{n_s} + \frac{1}{p}\right)    \\
         & \asymp (C_{r, \{n_s\}_{s}^{S}} C_{c, p}   )^2 \big(1 +\frac{p}{n_s} \big).
         \end{aligned}
\end{equation*}
Combining all of the above, with probability at least $1-o(1)$, we have
\begin{equation*}
    \begin{aligned}
        ||   \bar {\mathbf{\Gamma}}_s \bar {\mathbf{\Gamma}}_s^\top - {\mathbf{\Gamma}}_{0s}{\mathbf{\Gamma}}_{0s}^\top|| \lesssim (C_{sr, n_s} C_{sc, p}  )\big(\sqrt{p} + \frac{p}{\sqrt{n_s}}\big) + \frac{\sqrt{\log n_s} p}{ \sqrt{n_s}}.
    \end{aligned}
\end{equation*}
The assumption $||{\mathbf{\Gamma}}_{0s}|| \asymp \sqrt{p}$, implies
\begin{equation*}
    \begin{aligned}
        \frac{ ||   \bar {\mathbf{\Gamma}}_s \bar {\mathbf{\Gamma}}_s^\top - {\mathbf{\Gamma}}_{0s}{\mathbf{\Gamma}}_{0s}^\top||}{ ||  {\mathbf{\Gamma}}_{0s}{\mathbf{\Gamma}}_{0s}^\top||} \lesssim 
       (\sqrt{\log n_s} + C_{sr, n_s} C_{sc, p}) \frac{p}{\sqrt{n}}  + C_{sr, n_s} C_{sc, p}\sqrt{p}
    \end{aligned}
\end{equation*}
with probability at least $1-o(1)$.
Given the corresponding sample for ${\mathbf{\Lambda}} $, a sample for ${\mathbf{\tilde \Gamma}}_s$ from $\tilde \Pi$ is given by
\begin{equation*}
   {\mathbf{\tilde \Gamma}}_s = \bar {\mathbf{\Gamma}}_s +\mathbf{ \mathbf{\tilde E}}_{\Gamma_s}, \quad \mathbf{ \mathbf{\tilde E}}_{\Gamma_s} =[\mathbf{e}_{\gamma_{s1}} ~ \cdots ~ \mathbf{e}_{\gamma_{sp}} ]^\top, \quad 
      \mathbf{e}_{\gamma_{sj}} \sim N_{q_s}\left(0, \frac{\rho_{\Gamma_s}^2 \tilde \sigma_j^2}{n_s + \tau_{\Gamma_s}^{-2}} \mathbf{I}_{q_s}\right),
\end{equation*}
where ${\mathbf{\Gamma}}_s =  \frac{\sqrt{n_s}}{n_s + \tau_{\Gamma_s}^{-2}} (\mathbf{Y}_s -\mathbf{\hat M}_s {\mathbf{\tilde \Lambda}}^\top)^\top \mathbf{U}_s^\perp$ and 
\begin{equation*}
    \begin{aligned}
        ||\bar {\mathbf{\Gamma}}_s  || & \asymp \frac{1}{\sqrt{n_s}} ||\mathbf{Y}_s -\mathbf{\hat M}_s {\mathbf{\tilde \Lambda}}^\top || \lesssim \frac{1}{\sqrt{n_s}} (||\mathbf{Y}_s - \mathbf{M}_{0s} {\mathbf{\Lambda}}_{0}^\top || + ||  \mathbf{M}_{0s} {\mathbf{\Lambda}}_{0}^\top  -\mathbf{\hat M}_s {\mathbf{\tilde \Lambda}}^\top ||)\\
        & \lesssim \frac{1}{\sqrt{n_s}} \sqrt{n_s p}  \asymp \sqrt{p}.
    \end{aligned}
\end{equation*}
Thus, 
\begin{equation*}
\begin{aligned}
     \left|\left|{\mathbf{\tilde \Gamma}}_s {\mathbf{\tilde \Gamma}}_s ^\top - \bar {\mathbf{\Gamma}}_s \bar {\mathbf{\Gamma}}_s^\top \right|\right| &\lesssim \left|\left| \mathbf{ \mathbf{\tilde E}}_{\Gamma_s}\mathbf{ \mathbf{\tilde E}}_{\Gamma_s}^\top \right|\right| + \left|\left|  \bar {\mathbf{\Gamma}}_s\mathbf{ \mathbf{\tilde E}}_{\Gamma_s}^\top \right|\right| \\
     &\lesssim  \frac{p}{n_s} \rho_{\Gamma_s}^2 \max_{j} \tilde \sigma_j^2 + \frac{p}{\sqrt{n_s}} \rho_{\Gamma_s}^2 \max_{j} \tilde \sigma_j^2  \\
  & \lesssim \frac{p}{n_s^{1/2}},
\end{aligned}
\end{equation*}
since $\max_{j} \tilde \sigma_j^2 \asymp 1$, with $\tilde \Pi$ probability at least $1 - o(1)$, by Lemma \ref{lemma:sigma_j_tilde}, determining the desired result.
\end{proof}

\begin{proof}[Proof of Theorem \ref{thm:clt_mu_Lambda_outer}]
 Follows as a special case of Theorem \ref{thm:clt_mu_Lambda_outer_general}
\end{proof}
{
\begin{theorem}\label{thm:clt_mu_Lambda_outer_general}
    Suppose Assumptions \ref{assumption:model}--\ref{assumption:sigma_lb} hold, the residual error variance matrices $\{\mathbf \Sigma_{0sc}\}_s$ are equal across studies, $n_s = \mathcal O(n_{\min}^2)$, where $n_{\min} = \min_{s=1, \dots, S} n_s$, for all $s=1, \dots, S$, $(C_{r, \{n_s\}_{s}^{S}} C_{c, p})^2\sqrt{n}/p = o(1)$ and $\log^2 p / \sqrt{n_{\min}}$. For $1 \leq j \leq j' \leq p$, let 
    \begin{equation}\label{eq:S_0_sq}
        \mathcal S_{0jj'}^2 = \begin{cases}
             \sigma_{0j}^2 ||\mathbf{\lambda}_{0j'}||^2 + \sigma_{0j'}^2 ||\mathbf{\lambda}_{0j}||^2 +  \sigma_{0jj'}\mathbf{\lambda}_{0j}^\top \mathbf{\lambda}_{0j'} + ||\mathbf{\lambda}_{0j}||^2 ||\mathbf{\lambda}_{0j'}||^2 +  (\mathbf{\lambda}_{0j}^\top \mathbf{\lambda}_{0j'})^2 \quad &\text{if } j \neq j',\\
       2 ||\mathbf{\lambda}_{0j}||^4 +4 \sigma_{0j}^2||\mathbf{\lambda}_{0j}||^2 
\quad &\text{otherwise,}
        \end{cases}
    \end{equation}
and
   \begin{equation}\label{eq:S_0_s_sq}
        \mathcal S_{0sjj'}^2 = \begin{cases}
        \begin{aligned}
               &\sigma_{0j}^2 ||\mathbf{\gamma}_{0sj'}||^2 + \sigma_{0j'}^2 ||\mathbf{\gamma}_{0sj}||^2 + \sigma_{0jj'} \mathbf{\gamma}_{0sj}^\top \mathbf{\gamma}_{0sj'} +||\mathbf{\gamma}_{0sj}||^2 ||\mathbf{\gamma}_{0sj'}||^2 \quad  \\ & +  (\mathbf{\gamma}_{0sj}^\top \mathbf{\gamma}_{0sj'})^2  + ||\mathbf{\gamma}_{0sj}||^2 ||\mathbf{\lambda}_{0j}||^2 
             + ||\mathbf{\gamma}_{0sj'}||^2 ||\mathbf{\lambda}_{0j}||^2 +  2\mathbf{\gamma}_{0sj}^\top \mathbf{\gamma}_{0sj'}\mathbf{\lambda}_{0j}^\top \mathbf{\lambda}_{0j'}
        \end{aligned} \quad   &\text{if } j \neq j',
          \\
       2 ||\mathbf{\gamma}_{0sj}||^4 + 4||\mathbf{\gamma}_{0sj}||^2||\mathbf{\lambda}_{0j}||^2 +4 \sigma_{0j}^2 ||\mathbf{\gamma}_{0sj}||^2 
 &\text{otherwise,}
        \end{cases}
    \end{equation}
    where $\sigma_{0jj'}$ denotes the covariance between the $j$-th and $j'$-th variables. 
    Then, as $n_1, \dots, n_s, p \to \infty$, we have
    \begin{equation}\label{eq:mu_lambda_clt}
    \begin{aligned}
        &\frac{\sqrt{n}}{S_{0, jj'}} \left(\mathbf{\mu}_{\lambda_j}^\top \mathbf{\mu}_{\lambda_{j'}}-   \mathbf{\lambda}_{0j}^\top\mathbf{\lambda}_{0j'}\right) \Longrightarrow N(0, 1),\\
        & \frac{\sqrt{n_s}}{S_{0, sjj'}} \left(\mathbf{\mu}_{\gamma_{sj}}^\top \mathbf{\mu}_{\gamma_{sj'}} - \mathbf{\gamma}_{0sj}^\top \mathbf{\gamma}_{0sj'}\right) \Longrightarrow N(0, 1), \quad (s=1, \dots, S). 
    \end{aligned} 
    \end{equation}
\end{theorem}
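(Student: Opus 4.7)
The plan is to follow the template of the proof of Theorem \ref{thm:posterior_contraction_Lambda_outer}, but to push the decomposition one step further so that the leading stochastic contribution takes the form of an average of iid mean-zero random variables to which a standard Lindeberg--L\'evy central limit theorem applies. The starting point is the identity
$\mathbf{\mu}_{\lambda_j}^\top \mathbf{\mu}_{\lambda_{j'}} = n(n+\tau_\Lambda^{-2})^{-2}\,\mathbf{\hat y}^{c(j)\top}\mathbf{U}^c\mathbf{U}^{c\top}\mathbf{\hat y}^{c(j')}$
combined with the representation $\mathbf{\hat Y}^c = \mathbf{M}_0 \mathbf{\Lambda}_0^\top - \mathbf{P}_0^\perp \mathbf{M}_0 \mathbf{\Lambda}_0^\top + \mathbf{Q}_0^\perp \mathbf E + \mathbf{\Delta Y}$ already used in the proofs of Proposition \ref{prop:recovery_M} and Theorem \ref{thm:posterior_contraction_Lambda_outer}. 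I would first replace $\mathbf{U}^c \mathbf{U}^{c\top}$ by $\mathbf{U}_0^c \mathbf{U}_0^{c\top}$ via Proposition \ref{prop:recovery_M}, and discard the contributions involving $\mathbf{P}_0^\perp \mathbf{M}_0 \mathbf{\Lambda}_0^\top$ and $\mathbf{\Delta Y}$: by Lemmas \ref{lemma:E}, \ref{lemma:Delta_y_j} and \ref{lemma:Mt_P_M}, combined with $\|\mathbf{\Lambda}_0\|_\infty<\infty$ from Assumption \ref{assumption:Lambda}, these produce entry-wise errors of order $(C_{r,\{n_s\}_s^S} C_{c,p})^2(1/n + 1/p)$, which is $o_P(1/\sqrt n)$ under the rate assumption $(C_{r,\{n_s\}_s^S} C_{c,p})^2\sqrt{n}/p=o(1)$.

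Using $\mathbf{U}_0^c \mathbf{U}_0^{c\top} \mathbf{M}_0 = \mathbf{M}_0$ and denoting by $\mathbf{e}^{(j)}$ the $j$-th column of $\mathbf{E}$, the dominant portion of $\mathbf{\mu}_{\lambda_j}^\top \mathbf{\mu}_{\lambda_{j'}} - \mathbf{\lambda}_{0j}^\top\mathbf{\lambda}_{0j'}$ reduces to
\begin{equation*}
\frac{1}{n}\left[\mathbf{\lambda}_{0j}^\top(\mathbf{M}_0^\top \mathbf{M}_0 - n\mathbf{I}_{k_0})\mathbf{\lambda}_{0j'} + \mathbf{\lambda}_{0j}^\top\mathbf{M}_0^\top \mathbf{e}^{(j')} + \mathbf{e}^{(j)\top}\mathbf{M}_0\mathbf{\lambda}_{0j'} + \mathbf{e}^{(j)\top}\mathbf{U}_0^c\mathbf{U}_0^{c\top}\mathbf{e}^{(j')}\right].
\end{equation*}
The last, quadratic-in-$\mathbf E$, summand is $O_P(1)$ by a Hanson--Wright bound, since $\operatorname{tr}(\mathbf{U}_0^c\mathbf{U}_0^{c\top}) = k_0$ is fixed, and therefore contributes $o_P(1/\sqrt n)$ after the $\sqrt n$ rescaling. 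The remaining three terms rearrange as $n^{-1}\sum_{i=1}^n W_i$ with
$W_i = (\mathbf{\lambda}_{0j}^\top \mathbf{m}_{0i})(\mathbf{\lambda}_{0j'}^\top \mathbf{m}_{0i}) - \mathbf{\lambda}_{0j}^\top\mathbf{\lambda}_{0j'} + (\mathbf{\lambda}_{0j}^\top\mathbf{m}_{0i})e_i^{(j')} + e_i^{(j)}(\mathbf{\lambda}_{0j'}^\top\mathbf{m}_{0i})$,
where $\mathbf{m}_{0i}$ is the $i$-th row of $\mathbf{M}_0$ and $e_i^{(j)}$ the $(i,j)$ entry of $\mathbf{E}$. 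Isserlis' theorem applied to the Gaussian pieces, combined with independence of $\mathbf{M}_0$ and $\mathbf{E}$ and the across-column covariance $\sigma_{0jj'}$, gives $\operatorname{Var}(W_i)=\mathcal S_{0jj'}^2$, and the Lindeberg--L\'evy CLT yields the first display in \eqref{eq:mu_lambda_clt}.

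For the second display the same strategy applies with an additional layer, since $\mathbf{\mu}_{\gamma_{sj}} = (n_s+\tau_{\Gamma_s}^{-2})^{-1}\mathbf{\hat F}_s^\top(\mathbf{y}_{sj}-\mathbf{\hat M}_s\mathbf{\mu}_{\lambda_j})$ depends on the already random $\mathbf{\mu}_{\lambda_j}$. Substituting the model $\mathbf{y}_{sj}=\mathbf{M}_{0s}\mathbf{\lambda}_{0j}+\mathbf{F}_{0s}\mathbf{\gamma}_{0sj}+\mathbf{e}_s^{(j)}$ together with the Procrustes approximations for $\mathbf{\hat M}_s$ and $\mathbf{\hat F}_s$ from Theorem \ref{thm:factors_procrustes_error} (the rotations cancel in the inner product $\mathbf{\mu}_{\gamma_{sj}}^\top\mathbf{\mu}_{\gamma_{sj'}}$), the leading stochastic part splits into three independent sources: (i) the fluctuations of $\mathbf{\mu}_{\lambda_j}$ analyzed above, propagated through $\mathbf{\hat M}_s\mathbf{\mu}_{\lambda_j}$, which produce the terms involving $\|\mathbf{\lambda}_{0j}\|^2$ in $\mathcal S_{0sjj'}^2$; (ii) the fluctuations of $\mathbf{F}_{0s}^\top\mathbf{F}_{0s}/n_s$ about the identity, giving $\|\mathbf{\gamma}_{0sj}\|^2\|\mathbf{\gamma}_{0sj'}\|^2+(\mathbf{\gamma}_{0sj}^\top\mathbf{\gamma}_{0sj'})^2$; and (iii) the cross terms $\mathbf{F}_{0s}^\top\mathbf{e}_s^{(j)}$, yielding $\sigma_{0j}^2\|\mathbf{\gamma}_{0sj'}\|^2+\sigma_{0j'}^2\|\mathbf{\gamma}_{0sj}\|^2+\sigma_{0jj'}\mathbf{\gamma}_{0sj}^\top\mathbf{\gamma}_{0sj'}$. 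Each piece is again of the form $n_s^{-1}\sum_i W_{si}$ with $W_{si}$ iid and mean-zero, so Lindeberg--L\'evy concludes once the total variance is matched to $\mathcal S_{0sjj'}^2$.

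The main technical obstacle will be to strengthen the spectral-norm control of the proofs of Theorems \ref{thm:factors_procrustes_error} and \ref{thm:posterior_contraction_Lambda_outer} to entry-wise control of order $o(1/\sqrt n)$. Every approximation error from $\mathbf{U}^c\neq\mathbf{U}_0^c$, $\mathbf{\hat F}_s\neq\mathbf{F}_{0s}$, or $\mathbf{\bar P}\neq\mathbf{V}_0\mathbf{V}_0^\top$ must be shown negligible at the scale of a single $(j,j')$ entry, which requires combining the existing operator-norm bounds with the boundedness assumptions $\|\mathbf{\Lambda}_0\|_\infty<\infty$ and $\|\mathbf{\Gamma}_{0s}\|_\infty<\infty$ and with the rate conditions $(C_{r,\{n_s\}_s^S} C_{c,p})^2\sqrt n/p = o(1)$ and $\log^2 p/\sqrt{n_{\min}}=o(1)$. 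A further delicate issue for the second display is to preserve the asymptotic independence between the noise inherited from $\mathbf{\mu}_{\lambda_j}$ and the fresh randomness contributed by the $s$-th study, ensuring that the three variance components add without spurious cross-covariances beyond those explicit in $\mathcal S_{0sjj'}^2$.
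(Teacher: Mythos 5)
Your treatment of the first display is essentially the paper's own argument: replace $\mathbf{U}^c\mathbf{U}^{c\top}$ by $\mathbf{U}_0^c\mathbf{U}_0^{c\top}$ via Proposition \ref{prop:recovery_M}, discard the $\mathbf{P}_0^\perp\mathbf{M}_0\mathbf{\Lambda}_0^\top$ and $\mathbf{\Delta}\mathbf{Y}$ contributions under the stated rate conditions, and apply a Lindeberg--L\'evy argument to $\sum_i(\mathbf{\lambda}_{0j}^\top\mathbf{\eta}_i)(\mathbf{\lambda}_{0j'}^\top\mathbf{\eta}_i)$ together with the conditionally Gaussian cross terms $\mathbf{\lambda}_{0j}^\top\mathbf{M}_0^\top\mathbf{e}^{(j')}$. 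Your Hanson--Wright bound for the quadratic term $\mathbf{e}^{(j)\top}\mathbf{U}_0^c\mathbf{U}_0^{c\top}\mathbf{e}^{(j')}$ is an acceptable substitute for the $\log p$ bound of Lemma \ref{lemma:U_c_e_j}.

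The second display, however, contains a genuine gap. You attribute the variance terms $\|\mathbf{\gamma}_{0sj}\|^2\|\mathbf{\lambda}_{0j'}\|^2 + \|\mathbf{\gamma}_{0sj'}\|^2\|\mathbf{\lambda}_{0j}\|^2 + 2\mathbf{\gamma}_{0sj}^\top\mathbf{\gamma}_{0sj'}\mathbf{\lambda}_{0j}^\top\mathbf{\lambda}_{0j'}$ in $\mathcal S_{0sjj'}^2$ to the sampling fluctuation of $\mathbf{\mu}_{\lambda_j}$ propagated through $\mathbf{\hat M}_s\mathbf{\mu}_{\lambda_j}$. That channel is asymptotically dead: $\mathbf{\hat F}_s^\top\mathbf{\hat M}_s = \sqrt{n_s n}\,\mathbf{U}_s^{\perp\top}\mathbf{U}_s^{c}$ with $\|\mathbf{U}_s^{\perp\top}\mathbf{U}_s^{c}\|\lesssim (C_{r, \{n_s\}_{s}^{S}} C_{c, p})^2(1/n_s + 1/p)$, as established in the proof of Theorem \ref{thm:bvm_Lambda_outer}, so the $O_P(1/\sqrt n)$ fluctuation of $\mathbf{\mu}_{\lambda_j}$ contributes nothing at the $1/\sqrt{n_s}$ scale; and even if it survived, it would generate contributions proportional to $\sigma_{0j}^2\|\mathbf{\lambda}_{0j'}\|^2$ and $\|\mathbf{\lambda}_{0j}\|^2\|\mathbf{\lambda}_{0j'}\|^2$, not mixed products of $\|\mathbf{\gamma}_{0s\cdot}\|$ and $\|\mathbf{\lambda}_{0\cdot}\|$. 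The true source is structural: since $\mathbf{\hat M}$ is built from $\mathbf{\hat Y}_s^c = (\mathbf{I}_{n_s}-\mathbf{U}_s^\perp\mathbf{U}_s^{\perp\top})\mathbf{Y}_s$, the fitted value $\mathbf{\hat M}_s\mathbf{\mu}_{\lambda_j}$ approximates $(\mathbf{I}_{n_s}-\mathbf{U}_{0s}^\perp\mathbf{U}_{0s}^{\perp\top})\mathbf{M}_{0s}\mathbf{\lambda}_{0j}$ rather than $\mathbf{M}_{0s}\mathbf{\lambda}_{0j}$, leaving the residual $\mathbf{U}_{0s}^\perp\mathbf{U}_{0s}^{\perp\top}\mathbf{M}_{0s}\mathbf{\lambda}_{0j}$ inside $\mathbf{y}_{sj}-\mathbf{\hat M}_s\mathbf{\mu}_{\lambda_j}$. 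Contracting this residual against $\mathbf{\hat F}_s^\top$ produces $\mathbf{\gamma}_{0sj'}^\top\mathbf{F}_{0s}^\top\mathbf{M}_{0s}\mathbf{\lambda}_{0j} = \sum_i(\mathbf{\gamma}_{0sj'}^\top\mathbf{\phi}_{0si})(\mathbf{\lambda}_{0j}^\top\mathbf{\eta}_{0si})$, an iid mean-zero sum at the correct $\sqrt{n_s}$ scale whose variance supplies exactly the terms you misattributed. As written, your decomposition would either close with the wrong asymptotic variance or force you to rediscover this residual when the bookkeeping fails. Your sources (ii) and (iii), and the observation that the relevant cross-covariances vanish so the variances add, are correct and match the paper.
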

}
\begin{proof}[Proof of Theorem \ref{thm:clt_mu_Lambda_outer_general}]
We start by proving the result for $\mathbf{\mu}_{\lambda_j}^\top \mathbf{\mu}_{\lambda_{j'}}$. Consider 
\begin{equation*}
\begin{aligned}
   \sqrt{n} \mathbf{\mu}_{\lambda_j}^\top \mathbf{\mu}_{\lambda_{j'}}  &= \sqrt{n} \frac{n}{(n + \tau_{\Lambda}^{-1})^2}\mathbf{\hat y}^{c (j)}  \mathbf{U}^c \mathbf{U}^{c \top}\mathbf{\hat y}^{c (j')} \\
    &=  \sqrt{n} \frac{n}{(n + \tau_{\Lambda}^{-1})^2} \left\{\mathbf{\hat y}^{c (j)}  \mathbf{U}_0^c \mathbf{U}_0^{c \top}\mathbf{\hat y}^{c (j')} +\mathbf{\hat y}^{c (j)}  \left( \mathbf{U}^c \mathbf{U}^{c \top} - \mathbf{U}_0^c \mathbf{U}_0^{c \top}\right)\mathbf{\hat y}^{c (j')} \right\}.
\end{aligned}     
\end{equation*}
 Let $ \sqrt{n} \frac{n}{(n + \tau_{\Lambda}^{-1})^2} = \frac{1}{\sqrt{n}} + \nu_n$, where $\nu_n \asymp 1/n^{3/2}$,
 $\mathbf{\hat y}^{c (j)} =  \mathbf{M}_{0} \mathbf{\lambda}_{0j} +  \mathbf{e}^{(j)} + \mathbf{r}^{(j)}$, $ \mathbf{r}^{(j)} =\mathbf{\Delta} y^{(j)}   - \mathbf{P}_0^\perp ( \mathbf{M}_{0} \mathbf{\lambda}_{0j} + \mathbf{e}^{(j)} )$, and $\Delta$ and $\mathbf{P}_0^\perp$ are defined in \eqref{eq:Delta} and \eqref{eq:P_0_perp}, respectively.  First, we decompose $ \mathbf{\hat y}^{c (j) \top} \mathbf{U}_0^c \mathbf{U}_0^{c \top} \mathbf{\hat y}^{c (j')}$ as 
\begin{equation*} 
\begin{aligned}
   \mathbf{\hat y}^{c (j) \top} \mathbf{U}_0^c \mathbf{U}_0^{c \top} \mathbf{\hat y}^{c (j')} =& \mathbf{\lambda}_j^\top  \mathbf{M}_0^\top   \mathbf{M}_0 \mathbf{\lambda}_{j'} +  \mathbf{\lambda}_j^\top  \mathbf{M}_0^\top   \mathbf{e}^{(j')} + \mathbf{\lambda}_{j'}^\top  \mathbf{M}_0^\top   \mathbf{e}^{(j)}\\\
    & + \mathbf{\lambda}_j^\top  \mathbf{M}_0^{c\top}  \mathbf{U}_0^c  \mathbf{U}_0^{c \top}\mathbf{r}^{(j')} + \mathbf{\lambda}_{j'}^\top  \mathbf{M}_0^{c\top}  \mathbf{U}_0^c  \mathbf{U}_0^{c \top}\mathbf{r}^{(j)} \\
    & + \mathbf{e}^{(j)\top}  \mathbf{U}_0^c  \mathbf{U}_0^{c \top}\mathbf{r}^{(j')} +  \mathbf{e}^{(j')\top}  \mathbf{U}_0^c  \mathbf{U}_0^{c \top}\mathbf{r}^{(j)} \\
    & + \mathbf{r}^{(j)\top} \mathbf{U}_0^c \mathbf{U}_0^{c \top}  \mathbf{r}^{(j')} + \mathbf{e}^{(j)\top} \mathbf{U}_0^c \mathbf{U}_0^{c \top}  \mathbf{e}^{(j')}. 
\end{aligned}\end{equation*}

Note that $\mathbf{\lambda}_j^\top  \mathbf{M}_0^\top   \mathbf{M}_0 \mathbf{\lambda}_{j'} = \sum_{i=1}^n (\mathbf{\lambda}_{0j}^\top \mathbf{\eta}_i)(\mathbf{\lambda}_{0j'}^\top \mathbf{\eta}_i)$, with $ E[(\mathbf{\lambda}_{0j}^\top \mathbf{\eta}_i)(\mathbf{\lambda}_{0j'}^\top \mathbf{\eta}_i)] = \mathbf{\lambda}_{0j}^\top \mathbf{\lambda}_{0j'}$, and\\ $\mathbb V[(\mathbf{\lambda}_{0j}^\top \mathbf{\eta}_i)(\mathbf{\lambda}_{0j'}^\top \mathbf{\eta}_i)] = \xi_{jj'}^2$, where $$\xi_{jj'}^2 = \begin{cases}
    (\mathbf{\lambda}_{0j}^\top \mathbf{\lambda}_{0j'})^2 + ||\mathbf{\lambda}_{0j}||^2|| \mathbf{\lambda}_{0j'}||^2, \quad &\text{if } j \neq j',\\
    2||\mathbf{\lambda}_{0j}||^4 &\text{otherwise.} 
\end{cases}$$\\
In addition, the $(\mathbf{\lambda}_{0j}^\top \mathbf{\eta}_i)(\mathbf{\lambda}_{0j'}^\top \mathbf{\eta}_i)$'s are independent of each other.
Therefore, an application of the central limit theorem gives
\begin{equation*}
   \frac{1}{\sqrt{n}} \mathbf{\lambda}_j^\top  \mathbf{M}_0^\top   \mathbf{M}_0 \mathbf{\lambda}_{j'} \Longrightarrow N(\mathbf{\lambda}_{0j}^\top \mathbf{\lambda}_{0j'}, \xi_{jj'}^2).
\end{equation*}
Next, for $j \neq j'$, let
\begin{equation*}
\begin{aligned}
     l_{jj'}^2( \mathbf{M}_0 ) &= \sigma_{0j}^2 \mathbf{\lambda}_{0j'}^\top \frac{ \mathbf{M}_0^\top  \mathbf{M}_0}{n} \mathbf{\lambda}_{0j'} + \sigma_{0j'}^2 \mathbf{\lambda}_{0j}^\top \frac{ \mathbf{M}_0^\top  \mathbf{M}_0}{n}\mathbf{\lambda}_{0j}  + \sigma_{0jj'} \mathbf{\lambda}_{0j}^\top \frac{ \mathbf{M}_0^\top  \mathbf{M}_0}{n}\mathbf{\lambda}_{0j'}, \\
    l_{0jj'}^2 &= \sigma_{0j}^2||\mathbf{\lambda}_{0j'}||^2  + \sigma_{0j'}^2||\mathbf{\lambda}_{0j}||^2 +\sigma_{0jj'}\mathbf{\lambda}_{0j}^\top \mathbf{\lambda}_{0j'}, 
\end{aligned}
\end{equation*}
and note
\begin{equation*}
\begin{aligned}
     \frac{1}{\sqrt{n}}\left(\mathbf{\lambda}_{0j}  \mathbf{M}_0^\top   \mathbf{e}^{(j')} + \mathbf{\lambda}_{0j'}  \mathbf{M}_0^\top   \mathbf{e}^{(j)}\right) \mid  \mathbf{M}_0 &\sim N(0, l_{jj'}^2( \mathbf{M}_0))  \overset{d}{=}l_{jj'}( \mathbf{M}_0) z_{jj'}, \\ z_{jj'}&\sim N(0, 1), \quad z_{jj'} \perp  \mathbf{M}_0,
\end{aligned}
\end{equation*}
where $ \overset{d}{=}$ denotes equality in distribution.  Hence, 
\begin{equation*}
    \frac{1}{\sqrt{n}}\left(\mathbf{\lambda}_{0j}  \mathbf{M}_0^\top   \mathbf{e}^{(j')} + \mathbf{\lambda}_{0j'}  \mathbf{M}_0^\top   \mathbf{e}^{(j)}\right) =  l_{0jj'} z_{jj'} + \left(l_{jj'}( \mathbf{M}_0) -l_{0jj'} \right)z_{jj'}.
\end{equation*}
Since $|| \mathbf{M}_0|| \asymp \sqrt{n}$, we have $|l_{jj'}( \mathbf{M}_0 )  + l_{0jj'}| \asymp 1$ with probability at least $1 - o(1)$, and $l_{jj'}( \mathbf{M}_0 )  + l_{0jj'} > 0$.
\begin{equation*}
    \begin{aligned}
         l_{0jj'} z_{jj'} + \left(l_{jj'}( \mathbf{M}_0) -l_{0jj'} \right)z_{jj'} & =  l_{0jj'} z_{jj'} + \frac{\left(l_{jj'}^2( \mathbf{M}_0) -l_{0jj'}^2 \right)}{l_{jj'}^2( \mathbf{M}_0 )  + l_{0jj'}^2}z_{jj'}\\
         &\Longrightarrow N(0, l_{0jj'}^2),    
         \end{aligned}
\end{equation*}
since $\left|\left|\frac{ \mathbf{M}_0^\top  \mathbf{M}_0}{n} - \mathbf{I}_{k_0} \right| \right| \lesssim \frac{\sqrt{\log n}}{\sqrt{n}}$ with probability $1 - o(1)$ by Lemma E.1 of \citet{fable}, making $\frac{\left(l_{jj'}^2( \mathbf{M}_0) -l_{0jj'}^2 \right)}{l_{jj'}^2( \mathbf{M}_0 )  + l_{0jj'}^2}z_{jj'} \lesssim \frac{\sqrt{\log n}}{\sqrt{n}}$ with probability at least $1 - o(1)$. 
For $j=j'$, we can show with similar steps
\begin{equation*}
     \frac{1}{\sqrt{n}}2\mathbf{\lambda}_{0j}^\top  \mathbf{M}_0^\top   \mathbf{e}^{(j)}\Longrightarrow N(0, 4 \sigma_{0j}^2 ||\mathbf{\lambda}_{0j}||^2).
\end{equation*}
Combining the results above and using the fact that elements of $\mathbf{M}_0$ are uncorrelated with $\mathbf{e}^{(j)}$ and $\mathbf{e}^{(j')}$, we have
\begin{equation*}
     \frac{1}{\sqrt{n}} \left(\mathbf{\lambda}_{0j}^\top  \mathbf{M}_0^\top   \mathbf{M}_0 \mathbf{\lambda}_{0j'} + \mathbf{\lambda}_{0j}^\top  \mathbf{M}_0^\top   \mathbf{e}^{(j')} + \mathbf{\lambda}_{0j'}  \mathbf{M}_0^\top   \mathbf{e}^{(j)}\right) \Longrightarrow N(\mathbf{\lambda}_{0j}^\top \mathbf{\lambda}_{0j'}, \xi_{jj'}^2 + l_{0jj'}^2).
\end{equation*}

In the rest of the proof, we show the remaining terms can be suitably bounded by sequences decreasing to 0.
In the following, we use the fact that, with probability at least $1-o(1)$,
\begin{enumerate}
    \item  $  \max_{j=1, \dots, p} ||\mathbf{U}_0^{c \top} \mathbf e^{(j)} || \lesssim  \log p$ by Lemma \ref{lemma:U_c_e_j},
    \item $ \max_{j=1, \dots, p}|| \mathbf{M}_0^{\top}  \mathbf{P}_0^\perp  \mathbf{e}^{(j)}|| \lesssim \log p$ by Lemma \ref{lemma:Mt_P_M},
    \item $||\mathbf M_0|| \asymp \sqrt{n}$ by Theorem 4.6.1 of \citet{vershynin2018hdp},
    \item $|| \mathbf{M}_0^{\top}  \mathbf{P}_0^\perp  \mathbf{M}_0^{\top}||\asymp 1$ by Lemma \ref{lemma:Mt_P_M},
    \item $ \max_{j=1, \dots, p}||\mathbf{\Delta y}^{ (j)}|| \asymp  C_{\sigma} C_{r, \{n_s\}_{s}^{S}}
\big(\frac{1}{\sqrt{n_{\min}}} + \frac{\sqrt{n_{\max}}}{p}\big) $  by Lemma \ref{lemma:Delta_y_j},
\item $|| \mathbf{r}^{(j')}|| \asymp  C_{r, \{n_s\}_{s}^{S}}
(\frac{1}{\sqrt{n_{\min}}} + \frac{\sqrt{n_{\max}}}{p}) + \log p$ with probability at least $1 - o(1)$ by Lemma \ref{lemma:r_j},
\item $ \max_{j=1, \dots, p}||\mathbf{e}^{(j)}||  \lesssim  \sqrt{\log p}$ by Corollary 5.35 of \citet{vershynin_12},
\item $ ||  \mathbf{U}^c \mathbf{U}^{c \top} - \mathbf{U}_0^c \mathbf{U}_0^{c \top}|| \lesssim (C_{r, \{n_s\}_{s}^{S}} C_{c, p}   )^2 \big(\frac{1}{n} + \frac{1}{p}\big)$ by Proposition \ref{prop:recovery_M}.
\end{enumerate}
along with $\max_{j, \dots, p}||\mathbf{\lambda}_{0j}|\lesssim  ||{\mathbf{\Lambda}}||_{\infty}\sqrt{k_0} \asymp 1$ by Assumption \ref{assumption:Lambda}. 
\begin{enumerate}
    \item With probability at least $1 - o(1)$, we have
    \begin{equation*}
        \begin{aligned}
            \sqrt{n} \frac{n}{(n + \tau_{\Lambda}^{-1})^2} \left| \mathbf{e}^{(j)\top}  \mathbf{U}_0^c  \mathbf{U}_0^{c \top}  \mathbf{e}^{(j')}\right| \asymp \frac{1}{\sqrt{n}} ||\mathbf{U}_0^{c \top}  \mathbf{e}^{(j)}|| ||\mathbf{U}_0^{c \top}  \mathbf{e}^{(j')}|| \lesssim \frac{\log p}{\sqrt{n}}.
        \end{aligned}
    \end{equation*}
   \item With probability at least $1 - o(1)$, we have \begin{equation*}
        \begin{aligned}
  \sqrt{n} \frac{n}{(n + \tau_{\Lambda}^{-1})^2} &\left|\mathbf{\lambda}_{0j}^\top  \mathbf{M}_0^{c\top}  \mathbf{U}_0^c  \mathbf{U}_0^{c \top}\mathbf{r}^{(j')}\right|  =  \sqrt{n} \frac{n}{(n + \tau_{\Lambda}^{-1})^2} \left|\mathbf{\lambda}_{0j}^\top  \mathbf{M}_0^{c\top} \mathbf{r}^{(j')} \right|\\
  & \lesssim \frac{1}{\sqrt{n}} \left[|| \mathbf{\lambda}_j|| \left\{|| \mathbf{\lambda}_{0j'}|| || \mathbf{M}_0^{\top}  \mathbf{P}_0^\perp  \mathbf{M}_0^{\top}|| + || \mathbf{M}_0^{\top}  \mathbf{P}_0^\perp  \mathbf{e}^{(j')}|| + || \mathbf{M}_0|| ||\mathbf{ \Delta} y^{c (j')}|| \right\}   \right] \\
  & \lesssim \frac{ \log p }{\sqrt{n}}+  C_{\sigma} C_{r, \{n_s\}_{s}^{S}}
\big(\frac{1}{\sqrt{n_{\min}}} + \frac{\sqrt{n_{\max}}}{p}\big).
        \end{aligned}
    \end{equation*}
    Similarly, $ \sqrt{n} \frac{n}{(n + \tau_{\Lambda}^{-1})^2} \big|\mathbf{\lambda}_{0j'}  \mathbf{M}_0^{c\top}  \mathbf{U}_0^c  \mathbf{U}_0^{c \top}\mathbf{r}^{(j)} \big| \lesssim  \frac{ \log p }{\sqrt{n}}+  C_{\sigma} C_{r, \{n_s\}_{s}^{S}}
\big(\frac{1}{\sqrt{n_{\min}}} + \frac{\sqrt{n_{\max}}}{p}\big)$ with probability at least $1 - o(1)$.
    \item With probability at least $1 - o(1)$, we have
    \begin{equation*}
        \begin{aligned}
             \sqrt{n} \frac{n}{(n + \tau_{\Lambda}^{-1})^2} \left| \mathbf{e}^{(j)\top}  \mathbf{U}_0^c  \mathbf{U}_0^{c \top}\mathbf{r}^{(j')} \right| & \leq \sqrt{n} \frac{n}{(n + \tau_{\Lambda}^{-1})^2} || \mathbf{e}^{(j)\top}  \mathbf{U}_0^c|||| \mathbf{U}_0^{c \top}\mathbf{r}^{(j')} || \\
  & \lesssim \frac{1}{\sqrt{n}}|| \mathbf{e}^{(j)\top}  \mathbf{U}_0^c||  || \mathbf{U}_0^c|| ||\mathbf{r}^{(j')}|| \\
  & \lesssim \frac{1}{\sqrt{n}} C_{\sigma} C_{r, \{n_s\}_{s}^{S}}
\big(\frac{1}{\sqrt{n_{\min}}}  + \frac{\sqrt{n_{\max}}}{p}\big)\log p + \frac{\log p^2}{\sqrt{n}}.
        \end{aligned}
    \end{equation*}
    Similarly, $ \sqrt{n} \frac{n}{(n + \tau_{\Lambda}^{-1})^2} \left| \mathbf{e}^{(j')\top}  \mathbf{U}_0^c  \mathbf{U}_0^{c \top}\mathbf{r}^{(j)} \right| \lesssim \frac{1}{\sqrt{n}} C_{\sigma} C_{r, \{n_s\}_{s}^{S}}
\big(\frac{1}{\sqrt{n_{\min}}}  + \frac{\sqrt{n_{\max}}}{p}\big)\log p + \frac{\log p^2}{\sqrt{n}}$ with probability at least $1 - o(1)$.

     \item With probability at least $1 - o(1)$, we have
     \begin{equation*}
         \begin{aligned}
             \sqrt{n} \frac{n} {(n + \tau_{\Lambda}^{-1})^2} \left| \mathbf{r}^{(j)\top}  \mathbf{U}_0^c  \mathbf{U}_0^{c \top}\mathbf{r}^{(j')} \right| & \leq \sqrt{n} \frac{n}{(n + \tau_{\Lambda}^{-1})^2} || \mathbf{r}^{(j)\top} || || \mathbf{U}_0^c \mathbf{U}_0^{c \top} || || \mathbf{r}^{(j')} || \\
  & \lesssim \frac{1}{\sqrt{n} }\big\{C_{r, \{n_s\}_{s}^{S}}
(\frac{1}{\sqrt{n_{\min}}} + \frac{\sqrt{n_{\max}}}{p}) + \log p\big\}.
         \end{aligned}
     \end{equation*}

    \item With probability at least $1 - o(1)$, we have 
    \begin{equation*}
        \begin{aligned}
            \nu_n \big(\mathbf{\lambda}_{0j}^\top  \mathbf{M}_0^\top   \mathbf{M}_0 \mathbf{\lambda}_{0j'} +& \mathbf{\lambda}_{0j}^\top  \mathbf{M}_0^\top   \mathbf{e}^{(j')} + \mathbf{\lambda}_{0j'}^\top  \mathbf{M}_0^\top   \mathbf{e}^{(j)}\big)  \\
            & \lesssim \nu_n \big(||\mathbf{\lambda}_{0j}|| ||\mathbf{\lambda}_{0j'}|| || \mathbf{M}_0||^2 + ||\mathbf{\lambda}_{0j}|| || \mathbf{M}_0|| || \mathbf{e}^{(j')} || \\
            & \quad \quad \quad +  ||\mathbf{\lambda}_{0j'}|| || \mathbf{M}_0|| || \mathbf{e}^{(j)}||\big) \\
            & \lesssim \frac{1}{n^{3/2}}( n + \log p) \asymp \frac{1}{\sqrt{n}} + \frac{\log p}{n}.
        \end{aligned}
    \end{equation*}
    \item With probability at least $1 - o(1)$, we have
    \begin{equation*}
        \begin{aligned}
            \sqrt{n} \frac{n}{(n + \tau_{\Lambda}^{-1})^2} \big|\mathbf{\hat y}^{c (j)}  \big( \mathbf{U}^c \mathbf{U}^{c \top} -& \mathbf{U}_0^c \mathbf{U}_0^{c \top}\big) \mathbf{\hat y}^{c (j')} \big| \\ 
            & \lesssim   \sqrt{n} \frac{n}{(n + \tau_{\Lambda}^{-1})^2} ||\mathbf{\hat y}^{c (j)} || ||\mathbf{\hat y}^{c (j')} || ||  \mathbf{U}^c \mathbf{U}^{c \top} - \mathbf{U}_0^c \mathbf{U}_0^{c \top}||\\
           & \lesssim \frac{1}{\sqrt{n}} n (C_{r, \{n_s\}_{s}^{S}} C_{c, p}   )^2 \big(\frac{1}{n} + \frac{1}{p}\big) \lesssim  (C_{r, \{n_s\}_{s}^{S}} C_{c, p}   )^2 \big(\frac{1}{\sqrt{n}} + \frac{\sqrt{n}}{p}\big).
        \end{aligned}
    \end{equation*}

\end{enumerate}
To complete the proof, we apply Lemma F.2 from \citet{fable}.


Next, we move to $\mathbf{\mu}_{\Gamma_s} \mathbf{\mu}_{\Gamma_s}^\top$. Recall the posterior mean for $\mathbf{\gamma}_{sj}$
\begin{equation*}
    \mathbf{\mu}_{\gamma_{sj}} =  \frac{1}{n_s + \tau_{\Gamma_s}^{-2}} \mathbf{\hat  F_s}^\top (\mathbf{y}_{sj} -\mathbf{\hat M}_s \mathbf{\mu}_{\lambda_j}).
\end{equation*}
 Consider the singular value decomposition of $(\mathbf{I}_n - \mathbf{P}_0^\perp) \mathbf{M}_0 {\mathbf{\Lambda}}^\top = \mathbf{U}' \mathbf{D}' \mathbf{V}'^\top $. Following similar steps to those in the proof of Proposition \ref{prop:recovery_M}, we can show that $||\mathbf{U}^c \mathbf{U}^{c \top} - \mathbf{U}'\mathbf{U}'^\top|| \lesssim (C_{r, \{n_s\}_{s}^{S}} C_{c, p}   )^2 \big(\frac{1}{n} + \frac{1}{p} \big)$ with probability at least $1-o(1)$. Moreover, define $\mathbf{U}_s' \in \mathbb R^{n_s \times k_0}$ to be the block of $\mathbf{U}' \in R^{n \times k_0}$ corresponding to the $s$-th study, that is $\mathbf{U}' = \big[ \mathbf{U}_1^{'\top} ~ \cdots ~ \mathbf{U}_S^{'\top}\big]$. 
Therefore, 
 \begin{equation*}
    \begin{aligned}
       \mathbf{\hat M}_s \mathbf{\mu}_{\lambda_j} =  \frac{n}{n + \tau_{\Lambda}^{-2}} \mathbf{U}_s^{c} \mathbf{U}^{c\top}\mathbf{\hat y}^{c (j)}  = \frac{n}{n + \tau_{\Lambda}^{-2}} \left\{\mathbf{U}_s' \mathbf{U}^{'\top}\mathbf{\hat y}^{c (j)}   + ( \mathbf{U}_s^{c} \mathbf{U}^{c\top} - \mathbf{U}_s' \mathbf{U}^{'\top})\mathbf{\hat y}^{c (j)} \right\}  . 
    \end{aligned}
\end{equation*}
where 
\begin{equation*}
    \begin{aligned}
        \mathbf{U}_s' \mathbf{U}^{'\top}\mathbf{\hat y}^{c (j)}  &= \mathbf{U}_s' \mathbf{U}^{'\top}\left\{(\mathbf{I}_n - \mathbf{P}_0^\perp)( \mathbf{M}_{0} \mathbf{\lambda}_{0j} +  \mathbf{e}^{(j)})  + \mathbf{\Delta} y^{(j)}  \right\}\\
        &= (\mathbf{I}_{n_s} - \mathbf{U}_{0s}^\perp \mathbf{U}_{0s}^{\perp \top}) \mathbf{M}_{0s} \mathbf{\lambda}_{0j} +  \mathbf{U}_s' \mathbf{U}^{'\top} \left\{(\mathbf{I}_n - \mathbf{P}_0^\perp) \mathbf{e}^{(j)}  - \mathbf{\Delta} y^{(j)}\right\}.
    \end{aligned}
\end{equation*}
Hence, 
\begin{equation*}
    \begin{aligned}
        \mathbf{M}_{0s} \mathbf{\lambda}_{0j} -\mathbf{\hat M}_s \mathbf{\mu}_{\lambda_j} =\frac{n}{n + \tau_{\Lambda}^{-2}} \big[ & \mathbf{U}_{0s}^\perp \mathbf{U}_{0s}^{\perp \top} \mathbf{M}_{0s} \mathbf{\lambda}_{0j} - \mathbf{U}_s' \mathbf{U}^{'\top} \left\{(\mathbf{I}_n - \mathbf{P}_0^\perp) \mathbf{e}^{(j)}  - \mathbf{\Delta} y^{(j)}\right\} \\ &  +( \mathbf{U}_s^{c} \mathbf{U}^{c\top} - \mathbf{U}_s' \mathbf{U}^{'\top})\mathbf{\hat y}^{c (j)} \big]+ \frac{\tau_{\Lambda}^{-2}}{n_s + \tau_{\Lambda}^{-2}} \mathbf{M}_{0s} \mathbf{\lambda}_{0j} 
    \end{aligned}
\end{equation*}
and
\begin{equation*}
\begin{aligned}
    \mathbf{y}_{sj} -\mathbf{\hat M}_s \mathbf{\mu}_{\lambda_j} = \mathbf{y}_{sj} -  \mathbf{M}_{0s} &\mathbf{\lambda}_{0j} + (\mathbf{M}_{0s} \mathbf{\lambda}_{0j} -\mathbf{\hat M}_s \mathbf{\mu}_{\lambda_j}) \\
    \quad =  -  \frac{n}{n + \tau_{\Lambda}^{-2}} \big[& \mathbf{U}_{0s}^\perp \mathbf{U}_{0s}^{\perp \top} \mathbf{M}_{0s} \mathbf{\lambda}_{0j} - \mathbf{U}_s' \mathbf{U}^{'\top} \left\{(\mathbf{I}_n - \mathbf{P}_0^\perp) \mathbf{e}^{(j)}  - \mathbf{\Delta} y^{(j)}\right\} \\
    &+ ( \mathbf{U}_s^{c} \mathbf{U}^{c\top} - \mathbf{U}_s' \mathbf{U}^{'\top})\mathbf{\hat y}^{c (j)} \big]
    + \mathbf{F}_{0s}\mathbf{\gamma}_{0sj} + \mathbf{e}_s^{(j)} + \frac{\tau_{\Lambda}^{-2}}{n_s + \tau_{\Lambda}^{-2}} \mathbf{M}_{0s} \mathbf{\lambda}_{0j},
\end{aligned} 
\end{equation*}

\begin{equation*}
    \begin{aligned}
       \mathbf{\mu}_{\gamma_{sj}} &=  \frac{1}{n_s + \tau_{\Gamma_s}^{-2}} \mathbf{\hat  F_s}^\top \big[\mathbf{F}_{0s}\mathbf{\gamma}_{0sj} + \mathbf{e}_s^{(j)}  + \frac{1}{n + \tau_{\Lambda}^{-2}}\left(n\mathbf{U}_{0s}^\perp \mathbf{U}_{0s}^{\perp \top} - \tau_{\Lambda}^{-2} \mathbf{I}_{n_s} \right)\mathbf{M}_{0s} \mathbf{\lambda}_{0j} \\ & 
     \quad \quad \quad \quad \quad \quad \quad  +  \mathbf{U}_s' \mathbf{U}^{'\top} \left\{(\mathbf{I}_n - \mathbf{P}_0^\perp)  \mathbf{e}^{(j)} + \mathbf{\Delta} y^{(j)}\right\} + ( \mathbf{U}_s^{c} \mathbf{U}^{c\top} - \mathbf{U}_s' \mathbf{U}^{'\top})\mathbf{\hat y}^{c (j)}\big]\\
       &=  \frac{\sqrt{n_s}}{n_s + \tau_{\Gamma_s}^{-2}}  \mathbf{U}_s^{\perp \top} \big[\mathbf{F}_{0s}\mathbf{\gamma}_{0sj} + \mathbf{e}_s^{(j)}  + \frac{1}{n + \tau_{\Lambda}^{-2}}\left(n\mathbf{U}_{0s}^\perp \mathbf{U}_{0s}^{\perp \top} - \tau_{\Lambda}^{-2} \mathbf{I}_{n_s} \right)\mathbf{M}_{0s}  \mathbf{\lambda}_{0j}\\ & 
     \quad \quad \quad \quad \quad \quad \quad   + \mathbf{U}_s' \mathbf{U}^{'\top} \left\{(\mathbf{I}_n - \mathbf{P}_0^\perp)  \mathbf{e}^{(j)} + \mathbf{\Delta} y^{(j)}\right\}
       + ( \mathbf{U}_s^{c} \mathbf{U}^{c\top} - \mathbf{U}_s' \mathbf{U}^{'\top})\mathbf{\hat y}^{c (j)}\big]
    \end{aligned}
\end{equation*}
Therefore, we have 
\begin{equation*}
    \begin{aligned}
       \sqrt{n_s}  \mathbf{\mu}_{\gamma_{sj}}^\top \mathbf{\mu}_{\gamma_{sj'}}& = \frac{n_s^{3/2}}{(n_s + \tau_{\Gamma_s}^{-2})^2} \left[\mathbf{F}_{0s}\mathbf{\gamma}_{0sj'} + \mathbf{e}_s^{(j')}  + \frac{1}{n + \tau_{\Lambda}^{-2}}\left(n\mathbf{U}_{0s}^\perp \mathbf{U}_{0s}^{\perp \top} - \tau_{\Lambda}^{-2} \mathbf{I}_{n_s} \right)\mathbf{M}_{0s}  \mathbf{\lambda}_{0j'}\right.\\  & \left. \quad \quad \quad \quad \quad \quad \quad + 
       \mathbf{U}_s' \mathbf{U}^{'\top} \left\{(\mathbf{I}_n - \mathbf{P}_0^\perp)  \mathbf{e}^{(j')} +\mathbf{\Delta} y^{(j')}\right\}
       + ( \mathbf{U}_s^{c} \mathbf{U}^{c\top} - \mathbf{U}_s' \mathbf{U}^{'\top})\mathbf{\hat y}^{c (j')}\right] \\  &  \quad \quad \quad \quad \quad \quad 
 \mathbf{U}_{0s}^{\perp } \mathbf{U}_{0s}^{\perp \top} \left[\mathbf{F}_{0s}\mathbf{\gamma}_{0sj} + \mathbf{e}_s^{(j)}  + \frac{1}{n + \tau_{\Lambda}^{-2}}\left(n\mathbf{U}_{0s}^\perp \mathbf{U}_{0s}^{\perp \top} - \tau_{\Lambda}^{-2} \mathbf{I}_{n_s} \right)\mathbf{M}_{0s}  \mathbf{\lambda}_{0j}\right.\\
       & \left. \quad \quad \quad \quad \quad \quad \quad \quad  + \mathbf{U}_s' \mathbf{U}^{'\top} \left\{(\mathbf{I}_n - \mathbf{P}_0^\perp)  \mathbf{e}^{(j)} + \mathbf{\Delta} y^{(j)}\right\} 
       + ( \mathbf{U}_s^{c} \mathbf{U}^{c\top} - \mathbf{U}_s' \mathbf{U}^{'\top})\mathbf{\hat y}^{c (j)}\right] \\
         &  + \frac{n_s^{3/2}}{(n_s + \tau_{\Gamma_s}^{-2})^2}\left[\mathbf{F}_{0s}\mathbf{\gamma}_{0sj'} + \mathbf{e}_s^{(j')}  + \frac{1}{n + \tau_{\Lambda}^{-2}}\left(n\mathbf{U}_{0s}^\perp \mathbf{U}_{0s}^{\perp \top} - \tau_{\Lambda}^{-2} \mathbf{I}_{n_s} \right)\mathbf{M}_{0s}  \mathbf{\lambda}_{0j'}  \right.\\  & \left. \quad \quad \quad \quad \quad \quad \quad + \mathbf{U}_s' \mathbf{U}^{'\top} \left\{(\mathbf{I}_n - \mathbf{P}_0^\perp)  \mathbf{e}^{(j')} +\mathbf{\Delta} y^{(j')}\right\} 
       + ( \mathbf{U}_s^{c} \mathbf{U}^{c\top} - \mathbf{U}_s' \mathbf{U}^{'\top})\mathbf{\hat y}^{c (j')}\right] \\
       & \quad \quad \quad \quad \quad \quad \quad \quad  \left( \mathbf{U}_s^{\perp} \mathbf{U}_s^{\perp \top}- \mathbf{U}_{0s}^{\perp } \mathbf{U}_{0s}^{\perp \top}\right) \\
       & \quad \quad \quad \quad \quad \quad \left[\mathbf{F}_{0s}\mathbf{\gamma}_{0sj} + \mathbf{e}_s^{(j)}   + \frac{1}{n + \tau_{\Lambda}^{-2}}\left(n\mathbf{U}_{0s}^\perp \mathbf{U}_{0s}^{\perp \top} - \tau_{\Lambda}^{-2} \mathbf{I}_{n_s} \right)\mathbf{M}_{0s}  \mathbf{\lambda}_{0j}\right.\\
       & \left. \quad \quad \quad \quad \quad \quad \quad \quad + \mathbf{U}_s' \mathbf{U}^{'\top} \left\{(\mathbf{I}_n - \mathbf{P}_0^\perp)  \mathbf{e}^{(j)} + \mathbf{\Delta} y^{(j)}\right\} \right]
    \end{aligned}
\end{equation*}
First, note the following
\begin{equation*}
\begin{aligned}
    &\mathbf{\gamma}_{0sj'}^\top \mathbf{F}_{0s}^\top \mathbf{U}_{0s}^{\perp } \mathbf{U}_{0s}^{\perp \top}  \mathbf{F}_{0s}\mathbf{\gamma}_{0sj} =  \mathbf{\gamma}_{0sj'}^\top \mathbf{F}_{0s}^\top \mathbf{F}_{0s}\mathbf{\gamma}_{0sj} = \sum_{i=1}^{n_s} (\mathbf{\gamma}_{0sj'}^\top \mathbf{\phi}_{0si})(\mathbf{\gamma}_{0sj}^\top \mathbf{\phi}_{0si}),\\
     &\mathbf{\gamma}_{0sj'}^\top \mathbf{F}_{0s}^\top \mathbf{U}_{0s}^{\perp } \mathbf{U}_{0s}^{\perp \top}  \mathbf{e}_s^{(j)} =  \mathbf{\gamma}_{0sj'}^\top \mathbf{F}_{0s}^\top  \mathbf{e}_s^{(j)} = \sum_{i=1}^{n_s} (\mathbf{\gamma}_{0sj'}^\top \mathbf{\phi}_{0si})e_{sij},\\
      &\mathbf{\gamma}_{0sj'}^\top \mathbf{F}_{0s}^\top \mathbf{U}_{0s}^{\perp } \mathbf{U}_{0s}^{\perp \top} \mathbf{U}_{0s}^\perp \mathbf{U}_{0s}^{\perp \top}\mathbf{M}_{0s} \mathbf{\lambda}_{0j} =  \mathbf{\gamma}_{0sj'}^\top \mathbf{F}_{0s}^\top  \mathbf{M}_{0s} \mathbf{\lambda}_{0j}= \sum_{i=1}^{n_s} (\mathbf{\gamma}_{0sj'}^\top \mathbf{\phi}_{0si})(\mathbf{\lambda}_{0j}^\top \mathbf{\eta}_{0si}).\\
\end{aligned}    
\end{equation*}
Moreover, we have
\begin{equation*}
    \begin{aligned}
        E[(\mathbf{\gamma}_{0sj'}^\top \mathbf{\phi}_{0si})(\mathbf{\gamma}_{0sj}^\top \mathbf{\phi}_{0si})] & = \mathbf{\gamma}_{0sj'}^\top\mathbf{\gamma}_{0sj},\\
        V[(\mathbf{\gamma}_{0sj'}^\top \mathbf{\phi}_{0si})(\mathbf{\gamma}_{0sj}^\top \mathbf{\phi}_{0si})] &= \begin{cases}
    (\mathbf{\gamma}_{0sj}^\top \mathbf{\gamma}_{0sj'})^2 + ||\mathbf{\gamma}_{0sj}||^2|| \mathbf{\gamma}_{0sj'}||^2, \quad &\text{if } j \neq j',\\
        2||\mathbf{\gamma}_{0sj}||^4 &\text{otherwise,} 
    \end{cases}\\
    E[(\mathbf{\gamma}_{0sj'}^\top \mathbf{\phi}_{0si})e_{sij}]& = 0\\
      V[(\mathbf{\gamma}_{0sj'}^\top \mathbf{\phi}_{0si})e_{sij}]& = \begin{cases}
\sigma_{0j}^2 ||\mathbf{\gamma}_{0sj'}||^2 + \sigma_{0j'}^2 ||\mathbf{\gamma}_{0sj}||^2 + \sigma_{0jj'} \mathbf{\gamma}_{0sj}^\top \mathbf{\gamma}_{0sj'} , \quad &\text{if } j \neq j',\\ 
4 \sigma_{0j}^2 ||\mathbf{\gamma}_{0sj}||^2  
 &\text{otherwise,} \\
\end{cases}\\
E[(\mathbf{\gamma}_{0sj'}^\top \mathbf{\phi}_{0si})(\mathbf{\lambda}_{0j}^\top \mathbf{\eta}_{0si})]& = 0\\
      V[(\mathbf{\gamma}_{0sj'}^\top \mathbf{\phi}_{0si})(\mathbf{\lambda}_{0j}^\top \mathbf{\eta}_{0si})]& = \begin{cases}
||\mathbf{\gamma}_{0sj}||^2 ||\mathbf{\lambda}_{0j}||^2 
             + ||\mathbf{\gamma}_{0sj'}||^2 ||\mathbf{\lambda}_{0j}||^2 +  2\mathbf{\gamma}_{0sj}^\top \mathbf{\gamma}_{0sj'}\mathbf{\lambda}_{0j}^\top \mathbf{\lambda}_{0j'}, \quad &\text{if } j \neq j',\\ 
4||\mathbf{\gamma}_{0sj}||^2||\mathbf{\lambda}_{0j}||^2 
 &\text{otherwise,} \\
\end{cases}
    \end{aligned}
\end{equation*}
and \begin{equation*}
    \begin{aligned}
        cov((\mathbf{\gamma}_{0sj'}^\top \mathbf{\phi}_{0si}), (\mathbf{\gamma}_{0sj'}^\top \mathbf{\phi}_{0si})e_{sij}) &= cov((\mathbf{\gamma}_{0sj'}^\top \mathbf{\phi}_{0si}), (\mathbf{\gamma}_{0sj'}^\top \mathbf{\phi}_{0si})(\mathbf{\lambda}_{0j}^\top \mathbf{\eta}_{0si}) ) \\
        &= cov( (\mathbf{\gamma}_{0sj'}^\top \mathbf{\phi}_{0si})e_{sij}, (\mathbf{\gamma}_{0sj'}^\top \mathbf{\phi}_{0si})(\mathbf{\lambda}_{0j}^\top \mathbf{\eta}_{0si})) = 0,
    \end{aligned}
\end{equation*} for all $i=1, \dots, n_s$ and $j,j' = 1, \dots, p$. 
Next, we bound the remaining terms using the fact that, with probability at least $1-o(1)$,
\begin{enumerate}
    \item $||\mathbf{U}_{0s}^{\perp \top} \mathbf{M}_{0s}|| \leq 
    ||\mathbf P_0^{\perp } \mathbf M_0|| \lesssim 1$ by Lemma \ref{lemma:Mt_P_M},
    \item $ \max_{j=1, \dots, p}||  \mathbf{U}_0^{\perp  \top}\mathbf{e}_s^{(j)}|| \lesssim  \max_{j=1, \dots, p}||   \mathbf{P}_0^\perp  \mathbf{e}^{(j)}|| \lesssim \log p$ by Lemma \ref{lemma:Mt_P_M},
    \item $ \max_{j=1, \dots p}|| \mathbf{U}_0^{c \top}\mathbf{e}^{(j)}|| \lesssim \log $ by Lemma \ref{lemma:U_c_e_j},
    \item $|| \mathbf{U}_{0s}^{c\top }\mathbf{F}_{0s} || \lesssim 1$ by Lemma \ref{lemma:U_c_F},
\end{enumerate}
along with $\max_{j, \dots, p}||\mathbf{\lambda}_{0j}|\lesssim  ||{\mathbf{\Lambda}}||_{\infty}\sqrt{k_0} \asymp 1$ by Assumption \ref{assumption:Lambda}. 
With probability at least $1-o(1)$, we have
\begin{equation*}
    \begin{aligned}
       \frac{n_s^{3/2}n^2}{(n_s + \tau_{\Gamma_s}^{-2})^2 (n + \tau_{\Lambda}^{-2})^2} &\big| \mathbf{\lambda}_{0j'}^\top \mathbf{M}_{0s}^\top \mathbf{U}_{0s}^\perp \mathbf{U}_{0s}^{\perp \top} \mathbf{U}_{0s}^\perp \mathbf{U}_{0s}^{\perp \top} \mathbf{U}_{0s}^\perp \mathbf{U}_{0s}^{\perp \top}\mathbf{M}_{0s} \mathbf{\lambda}_{0j}\big|  \\
       & = \frac{n_s^{3/2}n^2}{(n_s + \tau_{\Gamma_s}^{-2})^2 (n + \tau_{\Lambda}^{-2})^2}   \big| \mathbf{\lambda}_{0j'}^\top \mathbf{M}_{0s}^\top \mathbf{U}_{0s}^\perp \mathbf{U}_{0s}^{\perp \top}\mathbf{M}_{0s} \mathbf{\lambda}_{0j}\big|  \\
        & \asymp \frac{1}{\sqrt{n_s}}\big| \mathbf{\lambda}_{0j'}^\top \mathbf{M}_{0s}^\top \mathbf{U}_{0s}^\perp \mathbf{U}_{0s}^{\perp \top}\mathbf{M}_{0s} \mathbf{\lambda}_{0j} \big|  \\
        & \asymp \frac{1}{\sqrt{n_s}} ||\mathbf{\lambda}_{0j}|| ||\mathbf{\lambda}_{0j'}|| ||\mathbf{U}_{0s}^{\perp \top} \mathbf{M}_{0s}||^2  \\ & \asymp \frac{1}{\sqrt{n_s}},\\
         \end{aligned}
\end{equation*}
\begin{equation*}
    \begin{aligned}
      \frac{n_s^{3/2}n}{(n_s + \tau_{\Gamma_s}^{-2})^2 (n + \tau_{\Lambda}^{-2})}& \big| \mathbf{\lambda}_{0j'}^\top \mathbf{M}_{0s}^\top \mathbf{U}_{0s}^\perp \mathbf{U}_{0s}^{\perp \top} \mathbf{U}_{0s}^\perp \mathbf{U}_{0s}^{\perp \top}\mathbf{e}_s^{(j)}\big| \\
      & = \frac{n_s^{3/2}n}{(n_s + \tau_{\Gamma_s}^{-2})^2 (n + \tau_{\Lambda}^{-2})}\big| \mathbf{\lambda}_{0j'}^\top \mathbf{M}_{0s}^\top \mathbf{U}_{0s}^\perp \mathbf{U}_{0s}^{\perp \top}\mathbf{e}_s^{(j)}\big|  \\
        & \asymp \frac{1}{\sqrt{n_s}}\big| \mathbf{\lambda}_{0j'}^\top \mathbf{M}_{0s}^\top \mathbf{U}_{0s}^\perp \mathbf{U}_{0s}^{\perp \top}\mathbf{e}_s^{(j)} \big|  \\
        & \asymp \frac{1}{\sqrt{n_s}}  ||\mathbf{\lambda}_{0j'}|| ||\mathbf{U}_{0s}^{\perp \top} \mathbf{M}_{0s}||  ||\mathbf{U}_{0s}^{\perp \top}\mathbf{e}_s^{(j)}||  
         \\
        & \lesssim \frac{\log p}{\sqrt{n_s}},
    \end{aligned}
\end{equation*}
similarly, $ \frac{n_s^{3/2}}{(n_s + \tau_{\Gamma_s}^{-2})^2}   \frac{n^2}{(n + \tau_{\Lambda}^{-2})^2} \big| \mathbf{\lambda}_{0j'}^\top \mathbf{M}_{0s}^\top \mathbf{U}_{0s}^\perp \mathbf{U}_{0s}^{\perp \top} \mathbf{U}_{0s}^\perp \mathbf{U}_{0s}^{\perp \top}\mathbf{e}_s^{(j')}\big| \lesssim \frac{1}{\sqrt{n_s}}$,
and
\begin{equation*}
    \begin{aligned}
        \frac{n_s^{3/2}}{(n_s + \tau_{\Gamma_s}^{-2})^2}   \big|\mathbf{e}_s^{(j')} \mathbf{U}_{0s}^\perp \mathbf{U}_{0s}^{\perp \top}\mathbf{e}_s^{(j)}\big|  &  \asymp \frac{1}{\sqrt{n_s}}\big| \mathbf{e}_s^{(j')} \mathbf{U}_{0s}^\perp \mathbf{U}_{0s}^{\perp \top}\mathbf{e}_s^{(j)}  \big|  \\
        & \asymp \frac{1}{\sqrt{n_s}}  ||\mathbf{U}_{0s}^{\perp \top}\mathbf{e}_s^{(j)}||  ||\mathbf{U}_{0s}^{\perp \top}\mathbf{e}_s^{(j')}||  
         \\
        & \lesssim \frac{\log^2 p}{\sqrt{n_s}}.
    \end{aligned}
\end{equation*}
Combining all of the above, we get
\begin{equation*}
    \begin{aligned}
        \frac{n_s^{3/2}}{(n_s + \tau_{\Gamma_s}^{-2})^2} &\left(\mathbf{F}_{0s}\mathbf{\gamma}_{0sj} + \mathbf{e}_s^{(j)}  +  \frac{n}{n + \tau_{\Lambda}^{-2}} \mathbf{U}_{0s}^\perp \mathbf{U}_{0s}^{\perp \top}\mathbf{M}_{0s} \mathbf{\lambda}_{0j} \right)^\top \mathbf{U}_{0s}^\perp \mathbf{U}_{0s}^{\perp \top} \\
        &\left(\mathbf{F}_{0s}\mathbf{\gamma}_{0sj'} + \mathbf{e}_s^{(j')}  + \frac{n}{n + \tau_{\Lambda}^{-2}}\mathbf{U}_{0s}^\perp \mathbf{U}_{0s}^{\perp \top}\mathbf{M}_{0s} \mathbf{\lambda}_{0j'} \right)  \Longrightarrow N(\mathbf{\lambda}_{0j}^\top \mathbf{\lambda}_{0j'}, S_{0, sjj'}^2)
    \end{aligned}
\end{equation*}
Finally, we show that all the remaining terms decrease to $0$ at the appropriate rate, using the fact that, with probability at least $1-o(1)$, 
\begin{enumerate}
\item $||\mathbf{M}_0|| \lesssim \sqrt{n}$, $||\mathbf{M}_{0s}|| \asymp \sqrt{n_s}$, and $||\mathbf{F}_{0s}|| \lesssim \sqrt{n_s}$ by Theorem 4.6.1 of \citet{vershynin2018hdp},
 \item $||\mathbf{U}_{0s}^{\perp \top} \mathbf{M}_{0s}|| \leq 
    ||\mathbf P_0^{\perp } \mathbf M_0|| \lesssim 1$, by Lemma \ref{lemma:Mt_P_M},
    \item $ \max_{j=1, \dots, p}||  \mathbf{P}_0^{\perp  }\mathbf{e}_s^{(j)}||  =  \max_{j=1, \dots, p}||  \mathbf{U}_0^{\perp  \top}\mathbf{e}_s^{(j)}|| \lesssim  \max_{j=1, \dots, p}||   \mathbf{P}_0^\perp  \mathbf{e}^{(j)}|| \lesssim \log p$ by Lemma \ref{lemma:Mt_P_M},
    \item $ \max_{j=1, \dots p}|| \mathbf{U}_0^{c \top}\mathbf{e}^{(j)}|| \lesssim \log $ by Lemma \ref{lemma:U_c_e_j},
    \item $|| \mathbf{U}_{0s}^{c\top }\mathbf{F}_{0s} || \lesssim 1$ by Lemma \ref{lemma:U_c_F}.
    \item  $ \max_{j=1, \dots p}|| \mathbf{U}_0^{c \top}\mathbf{e}^{(j)}|| \lesssim \log $ by Lemma \ref{lemma:U_c_e_j},
    \item $ \max_{j=1, \dots, p}||\mathbf{\Delta y}^{ (j)}|| \asymp  C_{\sigma} C_{r, \{n_s\}_{s}^{S}}
\big(\frac{1}{\sqrt{n_{\min}}} + \frac{\sqrt{n_{\max}}}{p}\big) $  by Lemma \ref{lemma:Delta_y_j},
\item $||    \mathbf{U}_s^{\perp} \mathbf{U}_s^{\perp \top}- \mathbf{U}_{0s}^{\perp } \mathbf{U}_{0s}^{\perp \top}   || \lesssim (C_{sr, n_s} C_{sc, p}  
)^2 \big(\frac{1}{p} + \frac{1}{n}$\big) by Proposition \ref{prop:U_s_outer}. 
\end{enumerate}
along with $\max_{j, \dots, p}||\mathbf{\lambda}_{0j}|\lesssim  ||{\mathbf{\Lambda}}||_{\infty}\sqrt{k_0} \asymp 1$ and $\max_{j, \dots, p}||\mathbf{\gamma}_{0sj}|\lesssim  ||{\mathbf{\Gamma}_s}||_{\infty}\sqrt{q_s} \asymp 1$ by Assumptions \ref{assumption:Lambda} and \ref{assumption:Gammas}, respectively. \begin{enumerate}
\item With probability at least $1-o(1)$, we have 
\begin{equation*}
    \begin{aligned}
 & \left| \sqrt{n_s} \frac{n_s}{(n_s + \tau_{\Gamma_s}^{-2})^2} \mathbf{\gamma}_{0sj'}^\top \mathbf{F}_{0s}^\top \mathbf{U}_{0s}^{\perp } \mathbf{U}_{0s}^{\perp \top} \left[-  \frac{\tau_{\Gamma_s}^{-2}}{n_s + \tau_{\Gamma_s}^{-2}} \mathbf{M}_{0s} \mathbf{\lambda}_{0j} + \mathbf{U}_{0s}^c \mathbf{U}_0^{c\top}\left\{(\mathbf{I}_n - \mathbf{P}_0^\perp)  \mathbf{e}^{(j)} + \mathbf{\Delta} y^{(j)}\right\} \right. \right. \\
 &\left. \left.  + \frac{n}{n + \tau_{\Lambda}^{-2}} ( \mathbf{U}_s^{c} \mathbf{U}^{c\top} -  \mathbf{U}_s' \mathbf{U}^{'\top} )\mathbf{\hat y}^{c (j)} \right]  \right| \\
  & \quad \quad  =\left| \sqrt{n_s} \frac{n_s}{(n_s + \tau_{\Gamma_s}^{-2})^2}\mathbf{\gamma}_{0sj'}^\top \mathbf{F}_{0s}^\top \left[-  \frac{\tau_{\Gamma_s}^{-2}}{n_s + \tau_{\Gamma_s}^{-2}} \mathbf{M}_{0s} \mathbf{\lambda}_{0j} + + \mathbf{U}_{0s}^c \mathbf{U}_0^{c\top}\left\{(\mathbf{I}_n - \mathbf{P}_0^\perp)  \mathbf{e}^{(j)} + \mathbf{\Delta} y^{(j)}\right\} \right. \right. \\
 &\left. \left. \quad \quad \quad  + \frac{n}{n + \tau_{\Lambda}^{-2}} ( \mathbf{U}_s^{c} \mathbf{U}^{c\top} -  \mathbf{U}_s' \mathbf{U}^{'\top} )\mathbf{\hat y}^{c (j)} \right]  \right| \\
  &\quad \quad   \lesssim \frac{1}{n_s^{3/2}} ||\mathbf{\gamma}_{0sj'}^\top \mathbf{F}_{0s}^\top \mathbf{M}_{0s} \mathbf{\lambda}_{0j} || + \frac{1}{\sqrt{n_s}} ||\mathbf{\gamma}_{0sj'}^\top \mathbf{F}_{0s}^\top \mathbf{U}_{0s}^c \mathbf{U}_0^{c\top} \mathbf{e}^{(j)}  ||  \\
  &\quad \quad  \quad  +\frac{1}{\sqrt{n_s}} ||\mathbf{\gamma}_{0sj'}^\top \mathbf{F}_{0s}^\top \mathbf{U}_{0s}^c \mathbf{U}_0^{c\top} \mathbf{\Delta} y^{(j)} || + \frac{1}{\sqrt{n_s}} ||\mathbf{\gamma}_{0sj'}^\top \mathbf{F}_{0s}^\top \mathbf{U}_{0s}^c \mathbf{U}_0^{c\top} \mathbf{\Delta} y^{(j)} ||  \\
  &\quad \quad  \quad  +   \frac{1}{\sqrt{n_s}} ||  \mathbf{\gamma}_{0sj'}^\top \mathbf{F}_{0s}^\top  ( \mathbf{U}_s^{c} \mathbf{U}^{c\top} -  \mathbf{U}_s' \mathbf{U}^{'\top} )\mathbf{\hat y}^{c (j)} ||\\
  & \quad \quad   \lesssim \frac{1}{n_s^{3/2}} ||\mathbf{\gamma}_{0sj'}|| || \mathbf{F}_{0s}|| || \mathbf{M}_{0s}|| || \mathbf{\lambda}_{0j} || + \frac{1}{\sqrt{n_s}} ||\mathbf{\gamma}_{0sj'}|| || \mathbf{F}_{0s} \mathbf{U}_{0s}^c|| || \mathbf{U}_0^{c\top} \mathbf{e}^{(j)}  ||  \\
 &\quad \quad  \quad  +\frac{1}{\sqrt{n_s}} ||\mathbf{\gamma}_{0sj'}|| || \mathbf{F}_{0s}  \mathbf{U}_{0s}^c \mathbf{U}_0^{c\top}|| ||\mathbf{\Delta} y^{(j)} || + \frac{1}{\sqrt{n_s}} ||\mathbf{\gamma}_{0sj'}|| || \mathbf{F}_{0s}^\top \mathbf{U}_{0s}^c \mathbf{U}_0^{c\top}|| || \mathbf{\Delta} y^{(j)} ||  \\
  &\quad \quad  \quad  +   \frac{1}{\sqrt{n_s}} ||  \mathbf{\gamma}_{0sj'}^\top|| || \mathbf{F}_{0s}|| || \mathbf{U}_s^{c} \mathbf{U}^{c\top} -  \mathbf{U}_s' \mathbf{U}^{'\top} || || \mathbf{\hat y}^{c (j)} ||\\ 
  & \quad \quad \lesssim    \frac{1}{\sqrt{n_s}} n_s (C_{r, \{n_s\}_{s}^{S}} C_{c, p}   )^2 \big( \frac{1}{n_s} + \frac{1}{p}\big) \asymp (C_{r, \{n_s\}_{s}^{S}} \big(\frac{1}{\sqrt{n_s}} + \frac{\sqrt{n_s}}{p} \big).
    \end{aligned}
\end{equation*}

Similarly, with probability at least $1- o(1)$,
\begin{equation*}
\begin{aligned}
      & \left| \sqrt{n_s} \frac{n_s}{(n_s + \tau_{\Gamma_s}^{-2})^2} \mathbf{\gamma}_{0sj}^\top \mathbf{F}_{0s}^\top \mathbf{U}_{0s}^{\perp } \mathbf{U}_{0s}^{\perp \top}\left[-  \frac{\tau_{\Gamma_s}^{-2}}{n_s + \tau_{\Gamma_s}^{-2}} \mathbf{M}_{0s} \mathbf{\lambda}_{0j'} + \mathbf{U}_{0s}^c \mathbf{U}_0^{c\top}\left\{(\mathbf{I}_n - \mathbf{P}_0^\perp)  \mathbf{e}^{(j')}+\mathbf{\Delta} y^{(j')}\right\} \right. \right. \\
 &\left. \left. \quad \quad + \frac{n}{n + \tau_{\Lambda}^{-2}} ( \mathbf{U}_s^{c} \mathbf{U}^{c\top} -  \mathbf{U}_s' \mathbf{U}^{'\top} )\mathbf{\hat y}^{c (j')}\right]  \right| \lesssim 
    (C_{r, \{n_s\}_{s}^{S}} \big(\frac{1}{\sqrt{n_s}} + \frac{\sqrt{n_s}}{p} \big).
\end{aligned}
\end{equation*}

\item  With probability at least $1-o(1)$, we have 
\begin{equation*}
    \begin{aligned}
         \left| \sqrt{n_s} \frac{n_s}{(n_s + \tau_{\Gamma_s}^{-2})^2}\mathbf{e}_s^{(j) \top} \mathbf{U}_{0s}^{\perp } \mathbf{U}_{0s}^{\perp \top}  \mathbf{e}_s^{(j)} \right| &\asymp 
         \frac{1}{\sqrt{n_s}} ||\mathbf{U}_{0s}^{\perp \top}  \mathbf{e}_s^{(j)}||||\mathbf{U}_{0s}^{\perp \top}  \mathbf{e}_s^{(j')}||\\
         &\lesssim \frac{\log^2 p}{\sqrt{n_s}}.
    \end{aligned}
\end{equation*}

\item  With probability at least $1-o(1)$, we have 
\begin{equation*}
    \begin{aligned}
 & \left|  \frac{n_s^{3/2}}{(n_s + \tau_{\Gamma_s}^{-2})^2}\mathbf{e}_s^{(j) \top} \mathbf{U}_{0s}^{\perp } \mathbf{U}_{0s}^{\perp \top} \left[-  \frac{\tau_{\Gamma_s}^{-2}}{n_s + \tau_{\Gamma_s}^{-2}} \mathbf{M}_{0s} \mathbf{\lambda}_{0j} +  \mathbf{U}_{0s}^c \mathbf{U}_0^{c\top}\left\{(\mathbf{I}_n - \mathbf{P}_0^\perp)  \mathbf{e}^{(j)} + \mathbf{\Delta} y^{(j)}\right\} \right. \right. \\
 &\left. \left.  \quad  \quad  \quad  \quad  \quad  \quad  \quad  \quad  \quad  \quad  \quad+ \frac{n}{n + \tau_{\Lambda}^{-2}} ( \mathbf{U}_s^{c} \mathbf{U}^{c\top} -  \mathbf{U}_s' \mathbf{U}^{'\top} )\mathbf{\hat y}^{c (j)} \right]  \right| \\
  &\quad \quad  \quad \lesssim \frac{1}{n_s^{3/2}} ||\mathbf{e}_s^{(j')\top} \mathbf{U}_{0s}^{\perp } \mathbf{U}_{0s}^{\perp \top} \mathbf{M}_{0s} \mathbf{\lambda}_{0j} || + \frac{1}{\sqrt{n_s}} ||\mathbf{e}_s^{(j')\top} \mathbf{U}_{0s}^{\perp } \mathbf{U}_{0s}^{\perp \top} \mathbf{U}_{0s}^c \mathbf{U}_0^{c\top} \mathbf{e}^{(j)}  || \\
  &\quad \quad  \quad \quad + \frac{1}{\sqrt{n_s}} ||\mathbf{e}_s^{(j')\top} \mathbf{U}_{0s}^{\perp } \mathbf{U}_{0s}^{\perp \top}\mathbf{U}_{0s}^c \mathbf{U}_0^{c\top}  \mathbf{P}_0^\perp  \mathbf{e}^{(j)}|| \\
  &\quad \quad  \quad \quad + \frac{1}{\sqrt{n_s}} ||\mathbf{e}_s^{(j')\top} \mathbf{U}_{0s}^{\perp } \mathbf{U}_{0s}^{\perp \top} \mathbf{U}_{0s}^c \mathbf{U}_0^{c\top} \mathbf{\Delta} y^{(j)} || \\
  &\quad \quad  \quad \quad +  \frac{1}{\sqrt{n_s}}\frac{n}{n + \tau_{\Lambda}^{-2}} ||\mathbf{e}_s^{(j')\top} \mathbf{U}_{0s}^{\perp } \mathbf{U}_{0s}^{\perp \top} ( \mathbf{U}_s^{c} \mathbf{U}^{c\top} -  \mathbf{U}_s' \mathbf{U}^{'\top} )\mathbf{\hat y}^{c (j)}  ||\\
  &\quad \quad  \quad \lesssim \frac{1}{n_s^{3/2}} ||\mathbf{e}_s^{(j')\top} \mathbf{U}_{0s}^{\perp }|| || \mathbf{U}_{0s}^{\perp \top} \mathbf{M}_{0s}|| ||\mathbf{\lambda}_{0j} || \\
  &\quad \quad  \quad \quad  + \frac{1}{\sqrt{n_s}} ||\mathbf{e}_s^{(j')\top} \mathbf{U}_{0s}^{\perp }|| || \mathbf{U}_{0s}^{\perp \top}|| || \mathbf{U}_{0s}^c|| || \mathbf{U}_0^{c\top} \mathbf{e}^{(j)}  ||\\
  &\quad \quad  \quad \quad+  \frac{1}{\sqrt{n_s}} ||\mathbf{e}_s^{(j')\top} \mathbf{U}_{0s}^{\perp }|| || \mathbf{U}_{0s}^{\perp \top}|| ||\mathbf{U}_{0s}^c \mathbf{U}_0^{c\top}  || ||\mathbf{P}_0^\perp (\mathbf{M}_{0} \mathbf{\lambda}_{0j} + \mathbf{e}^{(j)})|| \\
  &\quad \quad  \quad \quad + \frac{1}{\sqrt{n_s}} ||\mathbf{e}_s^{(j')\top} \mathbf{U}_{0s}^{\perp }|| || \mathbf{U}_{0s}^{\perp \top}|| || \mathbf{U}_{0s}^c \mathbf{U}_0^{c\top}|| || \mathbf{\Delta} y^{(j)} ||  \\
  & \quad \quad  \quad \quad+   \frac{1}{\sqrt{n_s}} ||\mathbf{e}_s^{(j')\top} \mathbf{U}_{0s}^{\perp } ||  ||\mathbf{U}_{0s}^{\perp} || || \mathbf{U}_s^{c} \mathbf{U}^{c\top} -  \mathbf{U}_s' \mathbf{U}^{'\top} || ||\mathbf{\hat y}^{c (j)}  ||\\
  & \quad \quad  \quad \lesssim \frac{\log^2 p}{\sqrt{n_s}}.
    \end{aligned}
\end{equation*}

Similarly, with probability at least $1- o(1)$, \begin{equation*}
    \begin{aligned}
 & \left|  \frac{n_s^{3/2}}{(n_s + \tau_{\Gamma_s}^{-2})^2} \mathbf{e}_s^{(j) \top} \mathbf{U}_{0s}^{\perp } \mathbf{U}_{0s}^{\perp \top} \left[-  \frac{\tau_{\Gamma_s}^{-2}}{n_s + \tau_{\Gamma_s}^{-2}} \mathbf{M}_{0s} \mathbf{\lambda}_{0j'} +  \mathbf{U}_{0s}^c \mathbf{U}_0^{c\top}\left\{(\mathbf{I}_n - \mathbf{P}_0^\perp)  \mathbf{e}^{(j')} +\mathbf{\Delta} y^{(j')}\right\} \right. \right. \\
 &\left. \left. \quad  \quad  \quad  \quad  \quad  \quad  \quad  \quad  \quad  \quad \quad+ \frac{n}{n + \tau_{\Lambda}^{-2}} ( \mathbf{U}_s^{c} \mathbf{U}^{c\top} -  \mathbf{U}_s' \mathbf{U}^{'\top} )\mathbf{\hat y}^{c (j')}\right]  \right|  \lesssim \frac{\log^2 p}{\sqrt{n_s}}.
    \end{aligned}
\end{equation*}
.

\item  With probability at least $1-o(1)$, we have 
 \begin{equation*}
    \begin{aligned}
 & \frac{n_s^{3/2}}{(n_s + \tau_{\Gamma_s}^{-2})^2} \bigg| \big[-  \frac{\tau_{\Gamma_s}^{-2}}{n_s + \tau_{\Gamma_s}^{-2}} \mathbf{M}_{0s} \mathbf{\lambda}_{0j'} +  \mathbf{U}_{0s}^c \mathbf{U}_0^{c\top}\left\{(\mathbf{I}_n - \mathbf{P}_0^\perp)  \mathbf{e}^{(j')} +\mathbf{\Delta} y^{(j')}\right\} \\
 & \quad \quad \quad \quad \quad \quad  + \frac{n}{n + \tau_{\Lambda}^{-2}} ( \mathbf{U}_s^{c} \mathbf{U}^{c\top} -  \mathbf{U}_s' \mathbf{U}^{'\top} )\mathbf{\hat y}^{c (j')}\big]^\top \mathbf{U}_{0s}^{\perp } \mathbf{U}_{0s}^{\perp \top} \\ &\quad \quad \quad \quad \quad  \big[-  \frac{\tau_{\Gamma_s}^{-2}}{n_s + \tau_{\Gamma_s}^{-2}} \mathbf{M}_{0s} \mathbf{\lambda}_{0j} +  \mathbf{U}_{0s}^c \mathbf{U}_0^{c\top}\left\{(\mathbf{I}_n - \mathbf{P}_0^\perp)  \mathbf{e}^{(j)}+ \mathbf{\Delta} y^{(j)}\right\} \\
 & \quad \quad \quad \quad \quad \quad+ \frac{n}{n + \tau_{\Lambda}^{-2}} ( \mathbf{U}_s^{c} \mathbf{U}^{c\top} -  \mathbf{U}_s' \mathbf{U}^{'\top} )\mathbf{\hat y}^{c (j)} \big] \bigg| \\ 
  &\quad    \lesssim \frac{1}{\sqrt{n_s}} \big\{\frac{1}{n_s}|| \mathbf{M}_{0s}|| || \mathbf{\lambda}_{0j'}|| +  ||\mathbf{U}_{0s}^c|| || \mathbf{U}_0^{c\top} \mathbf{e}^{(j')}|| + ||\mathbf{U}_{0s}^c \mathbf{U}_0^{c\top}|| ( || \mathbf{P}_0^\perp  \mathbf{e}^{(j')}||  + ||\Delta y^{(j')}||)\\
  & \quad \quad  \quad \quad\quad + ||\mathbf{U}_s^{c} \mathbf{U}^{c\top} -  \mathbf{U}_s' \mathbf{U}^{'\top}|| ||\mathbf{\hat y}^{c (j')}||  \big\}  ||  \mathbf{U}_{0s}^{\perp } \mathbf{U}_{0s}^{\perp \top} ||\big\{ \frac{1}{n_s}|| \mathbf{M}_{0s}|| || \mathbf{\lambda}_{0j}||\\
  & \quad \quad  \quad \quad \quad  +  ||\mathbf{U}_{0s}^c|| || \mathbf{U}_0^{c\top} \mathbf{e}^{(j')}||+ ||\mathbf{U}_{0s}^c \mathbf{U}_0^{c\top}|| ( || \mathbf{P}_0^\perp  \mathbf{e}^{(j')}||+ ||\mathbf{\Delta} y^{(j)}||) \\
  & \quad \quad  \quad \quad \quad   + ||\mathbf{U}_s^{c} \mathbf{U}^{c\top} -  \mathbf{U}_s' \mathbf{U}^{'\top}|| ||\mathbf{\hat y}^{c (j)}||\big\} \\
  & \quad \quad  \lesssim  \frac{\log^2 p}{\sqrt{n_s}} + (C_{r, \{n_s\}_{s}^{S}} C_{c, p}   )^2 \big( \frac{1}{\sqrt{n_s}} + \frac{\sqrt{n_s}}{p} \big).
  \\
    \end{aligned}
\end{equation*}

\item  With probability at least $1-o(1)$, we have 
\begin{equation*}
    \begin{aligned}
   & \frac{n_s^{3/2}}{(n_s + \tau_{\Gamma_s}^{-2})^2} \bigg| \big[\mathbf{F}_{0s}\mathbf{\gamma}_{0sj'} + \mathbf{e}_s^{(j')}  + \frac{1}{n + \tau_{\Lambda}^{-2}}\left(n\mathbf{U}_{0s}^\perp \mathbf{U}_{0s}^{\perp \top} - \tau_{\Lambda}^{-2} I \right)\mathbf{M}_{0} \mathbf{\lambda}_{0j'} \\
    &\quad \quad \quad \quad \quad+ \mathbf{U}_{0s}^c \mathbf{U}_0^{c\top}\left\{(\mathbf{I}_n - \mathbf{P}_0^\perp)  \mathbf{e}^{(j')} +\mathbf{\Delta} y^{(j')}\right\}\big]^\top\left( \mathbf{U}_s^{\perp} \mathbf{U}_s^{\perp \top}- \mathbf{U}_{0s}^{\perp } \mathbf{U}_{0s}^{\perp \top}\right) \\
           & \quad \quad \quad\quad \Big[\mathbf{F}_{0s}\mathbf{\gamma}_{0sj} + \mathbf{e}_s^{(j)}  + \frac{1}{n + \tau_{\Lambda}^{-2}}\left(n\mathbf{U}_{0s}^\perp \mathbf{U}_{0s}^{\perp \top} - \tau_{\Lambda}^{-2} I \right)\mathbf{M}_{0} \mathbf{\lambda}_{0j}  \\ 
            & \quad \quad \quad \quad \quad  + \mathbf{U}_{0s}^c \mathbf{U}_0^{c\top}\left\{(\mathbf{I}_n - \mathbf{P}_0^\perp)  \mathbf{e}^{(j)} + \mathbf{\Delta} y^{(j)}\right\} \Big]  \bigg|\\
           & \quad \lesssim \frac{1}{\sqrt{n_s}} \big(|| \mathbf{F}_{0s}|| ||\mathbf{\gamma}_{0sj'}|| + ||\mathbf{e}_s^{(j')}||  + ||\mathbf{U}_{0s}|| || \mathbf{U}_{0s}^{\perp \top}\mathbf{M}_{0} || + \frac{1}{n}|| \mathbf{M}_0|| ||\mathbf{\lambda}_{0j'}|| \\
           &\quad \quad \quad \quad \quad + ||\mathbf{U}_{0s}^c|| ||\mathbf{U}_0^{c\top}\mathbf{e}^{(j')} ||   +  ||\mathbf{U}_{0s}^c \mathbf{U}_0^{c\top}|| ||\mathbf{P}_0^\perp  \mathbf{e}^{(j')}||  + ||\Delta y^{(j')} || \big)\\
           &\quad \quad \quad \quad ||    \mathbf{U}_s^{\perp} \mathbf{U}_s^{\perp \top}- \mathbf{U}_{0s}^{\perp } \mathbf{U}_{0s}^{\perp \top}   || \\ & \quad \quad \quad \quad \big(|| \mathbf{F}_{0s}|| ||\mathbf{\gamma}_{0sj}|| + ||\mathbf{e}_s^{(j)}||   + ||\mathbf{U}_{0s}|| || \mathbf{U}_{0s}^{\perp \top}\mathbf{M}_{0} || + \frac{1}{n}|| \mathbf{M}_0|| ||\mathbf{\lambda}_{0j}|| \\
           & \quad \quad \quad \quad \quad   + ||\mathbf{U}_{0s}^c|| ||\mathbf{U}_0^{c\top}\mathbf{e}^{(j)} || +  ||\mathbf{U}_{0s}^c \mathbf{U}_0^{c\top}|| ||\mathbf{P}_0^\perp  \mathbf{e}^{(j)}||  + ||\mathbf{\Delta} y^{(j)} || \big)\\
  &  \quad\lesssim (C_{sr, n_s} C_{sc, p}  
)^2 \big(\frac{1}{\sqrt{n_s}} + \frac{\sqrt{n_s}}{p}\big). 
    \end{aligned}
\end{equation*}
\end{enumerate}
 To complete the proof, we apply Lemma F.2 from \citet{fable}.
\end{proof}

\begin{proof}[Proof of Theorem \ref{thm:bvm_Lambda_outer}]
First, we show the result for $ \mathbf{\tilde \lambda}_j^\top  \mathbf{\tilde \lambda}_{j'}$.  A sample $\mathbf{\tilde \lambda}_j$ from $\tilde \Pi$ for $ \mathbf{\lambda}_j$ is given by
\begin{equation*}
   \mathbf{\tilde \lambda}_j= \mathbf{\mu}_{\lambda_j} +  \frac{\rho_{\Lambda} \tilde \sigma_j}{\sqrt{n + \tau_{\Lambda}^{-2}}} \varepsilon_{\mathbf{\lambda}_{j}} , \quad        \varepsilon_{\mathbf{\lambda}_{j}} \sim N_{k_0}\left(0, \mathbf{I}_{k_0}\right), \quad (j= 1, \dots, p).
\end{equation*}
Thus, for $j \neq j'$,
\begin{equation*}
     \mathbf{\tilde \lambda}_j^\top  \mathbf{\tilde \lambda}_{j'} = \mathbf{\mu}_{\lambda_j}^\top \mathbf{\mu}_{\lambda_{j'}}  + \frac{\rho_{\Lambda} \tilde \sigma_{j'}}{\sqrt{n + \tau_{\Lambda}^{-2}}} \mathbf{\mu}_{\lambda_j}^\top \varepsilon_{\mathbf{\lambda}_{j'}}  + \frac{\rho_{\Lambda} \tilde \sigma_{j}}{\sqrt{n + \tau_{\Lambda}^{-2}} }\mathbf{\mu}_{\lambda_{j'}}^\top \varepsilon_{\mathbf{\lambda}_{j}} + \rho_{\Lambda}^2 \frac{\tilde \sigma_{j}\tilde \sigma_{j'}}{n + \tau_{\Lambda}^{-2}} \varepsilon_{\mathbf{\lambda}_{j}}^\top \varepsilon_{\mathbf{\lambda}_{j'}}.
\end{equation*}
We examine each term separately.
Let $$l^2(\rho_{\Lambda}, \tilde \sigma_j, \mathbf{\mu}_{\lambda_j}, \tilde \sigma_{j'}, \mathbf{\mu}_{\lambda_{j'}} ) = \frac{\rho_{\Lambda}^2n}{n + \tau_{\Lambda}^{-2}}\left(\tilde \sigma_{j'}^2 || \mathbf{\mu}_{\lambda_j}||^2  +\tilde \sigma_{j}^2 || \mathbf{\mu}_{\lambda_{j'}}||^2 \right),$$
and note 
$$
\mathbf{\tilde \lambda}_j^\top  \mathbf{\tilde \lambda}_{j'} = \mathbf{\mu}_{\lambda_j}^\top \mathbf{\mu}_{\lambda_{j'}}  +  l_{0, jj'}(\rho_{\Lambda}) z_{jj'} + \{l(\rho_{\Lambda}, \tilde \sigma_j, \mathbf{\mu}_{\lambda_j}, \tilde \sigma_{j'}, \mathbf{\mu}_{\lambda_{j'}} ) - l_{0, jj'}(\rho_{\Lambda})\}z_{jj'} +    \rho_{\Lambda}^2 \frac{\tilde \sigma_{j}\tilde \sigma_{j'}}{n + \tau_{\Lambda}^{-2}} \varepsilon_{\mathbf{\lambda}_{j}}^\top \varepsilon_{\mathbf{\lambda}_{j'}},
$$
where $z_{jj'} \sim N(0,1)$. 
\begin{enumerate}
    \item With probability at least $1 - o(1)$, we have 
    \begin{equation*}
        \sqrt{n} \rho_{\Lambda}^2 \frac{\tilde \sigma_{j}\tilde \sigma_{j'}}{n + \tau_{\Lambda}^{-2}}| \varepsilon_{\mathbf{\lambda}_{j}}^\top \varepsilon_{\mathbf{\lambda}_{j'}} |\lesssim  \frac{1}{\sqrt{n}},
    \end{equation*}
    since $\tilde \sigma_u = \mathcal O_{pr} (1)$ for $u=j,j'$ by Lemma \ref{lemma:sigma_j_tilde} and $||\varepsilon_{\mathbf{\lambda}_{j}}|| \asymp || \varepsilon_{\mathbf{\lambda}_{j'}}|| \asymp 1$ with $\tilde \Pi$ probability at least $1 - o(1)$ by Corollary 5.35 of \citet{vershynin_12}.
    \item Since $||\mathbf{\mu}_{\lambda_u}||^2 \to ||\mathbf{\lambda}_{0u}||$ and $\tilde \sigma_u^2 \to \sigma_{0u}^2$ in probability for $u=j,j'$ by Lemma \ref{lemma:convergence}, and $\frac{n}{n + \tau_{\Lambda}^{-2}} \asymp 1$ as $n\to\infty$, we have $l(\rho_{\Lambda}, \tilde \sigma_j, \mathbf{\mu}_{\lambda_j}, \tilde \sigma_{j'}, \mathbf{\mu}_{\lambda_{j'}} ) - l_{0, jj'}(\rho_{\Lambda}) \to 0$ in probability for any finite $\rho_{\Lambda}$.
\end{enumerate}
For $j=j'$, we have 
\begin{equation*}
         ||\mathbf{\tilde \lambda}_j||^2  = ||\mathbf{\mu}_{\lambda_j}||^2 + 2 \frac{\rho_{\Lambda} \tilde \sigma_{j'}}{\sqrt{n + \tau_{\Lambda}^{-2}}} \mathbf{\mu}_{\lambda_j}^\top  \varepsilon_{\mathbf{\lambda}_{j}}+  \frac{ \rho_{\Lambda}^2 \tilde \sigma_{j}^2}{n + \tau_{\Lambda}^{-2}} || \varepsilon_{\mathbf{\lambda}_{j'}}||^2.  
\end{equation*}
Similarly as above, we have $2 \frac{\rho_{\Lambda} \tilde \sigma_{j'} \sqrt{n}}{\sqrt{n + \tau_{\Lambda}^{-2}}} \mathbf{\mu}_{\lambda_j}^\top  \varepsilon_{\mathbf{\lambda}_{j}}\Longrightarrow N(0, 4  \rho_{\Lambda}^2 \sigma_{0j}^2 ||\mathbf{\lambda}_{0j}||^2)$ and $\rho_{\Lambda}^2 \frac{\tilde \sigma_{j}^2 \sqrt{n}}{n + \tau_{\Lambda}^{-2}} || \varepsilon_{\mathbf{\lambda}_{j'}}||^2 \lesssim \frac{1}{\sqrt{n}}$ with probability at least $1 - o(1)$ for all finite $\rho_{\Lambda}$.

Consider the following representation of a sample for $\mathbf{\gamma}_{sj}$,
\begin{equation*}
    \begin{aligned}
        \mathbf{\tilde \gamma}_{sj} &= \mathbf{\mu}_{\gamma_{sj}} + \frac{\rho_{\Lambda} \tilde \sigma_j}{\sqrt{n + \tau_{\Lambda}^{-2}} (n_s + \tau_{\Gamma_s}^{-2})} \mathbf{\hat  F_s}^\top\mathbf{\hat M}_s\varepsilon_{\mathbf{\lambda}_j}  +   \frac{\rho_{\Gamma_s} \tilde \sigma_j}{ \sqrt{ n_s + \tau_{\Gamma_s}^{-2}}}  \varepsilon_{\mathbf{\gamma}_{sj}}, \\
        \varepsilon_{\mathbf{\lambda}_j} &\sim N_{k_0}(0, \mathbf{I}_{k_0}), \quad \varepsilon_{\mathbf{\gamma}_{sj}} \sim N_{q_s}(0, \mathbf{I}_{q_s}).
    \end{aligned}
\end{equation*}
Following the same steps as in the Proof for Theorem \ref{thm:bvm_Lambda_outer} and using Lemma \ref{lemma:convergence_Gamma_s}, we can show 
\begin{equation*}
    \sqrt{n} \bigg(\frac{\rho_{\Gamma_s} \tilde \sigma_j}{ \sqrt{ n_s + \tau_{\Gamma_s}^{-2}}}  \varepsilon_{\mathbf{\gamma}_{sj}}^\top \mathbf{\mu}_{\gamma_{sj'}} + \frac{\rho_{\Gamma_s} \tilde \sigma_{j'}}{ \sqrt{ n_s + \tau_{\Gamma_s}^{-2}}}  \varepsilon_{\mathbf{\gamma}_{sj'}}^\top \mathbf{\mu}_{\gamma_{sj}}  \bigg) \Longrightarrow N(0, l_{0, sjj'}^2(\rho_{\Gamma_s}) ).
\end{equation*}
We complete the proof by showing the remaining terms in $\sqrt{n} \mathbf{\tilde \gamma}_{sj}^\top \mathbf{\tilde \gamma}_{sj'} $ converge to $0$ at the appropriate rate. 
\begin{enumerate}
    
    \item With probability at least $1 - o(1)$, we have
    \begin{equation*}
        \begin{aligned}
         \big| \frac{ \sqrt{n_s} \rho_{\Lambda} \tilde \sigma_j }{\sqrt{n + \tau_{\Lambda}^{-2}}  (n_s + \tau_{\Gamma_s}^{-2})} \mathbf{\mu}_{\gamma_{sj}}^\top   \mathbf{\hat  F_s}^\top\mathbf{\hat M}_s\varepsilon_{\mathbf{\lambda}_{j'}} \big| \asymp \big| \mathbf{\mu}_{\gamma_{sj}}^\top    \mathbf{U}_s^{\perp \top} \mathbf{U}_s^{c}\varepsilon_{\mathbf{\lambda}_{j'}} \big| \lesssim (C_{r, \{n_s\}_{s}^{S}} C_{c, p})^2 \big(\frac{1}{n_s} + \frac{1}{p}\big),
        \end{aligned}
    \end{equation*}
    since $||\mathbf{U}_s^{\perp \top} \mathbf{U}_s^{c} || \lesssim (C_{r, \{n_s\}_{s}^{S}} C_{c, p})^2 \big(\frac{1}{n_s} + \frac{1}{p}\big)$ with probability at least $1 - o(1)$ as we show in the following. In particular, from the proof of Theorem \ref{thm:clt_mu_Lambda_outer}, we have $||\mathbf{U}^c \mathbf{U}^{c \top} - \mathbf{U}' \mathbf{U}^{'\top} || \lesssim (C_{r, \{n_s\}_{s}^{S}} C_{c, p})^2 \big(\frac{1}{n} + \frac{1}{p}\big)$ with probability at least $1 - o(1)$, where $\mathbf{U}'$ is the matrix of left singular vectors of $(\mathbf{I}_n - \mathbf{P}_0^\perp) \mathbf{M}_0 {\mathbf{\Lambda}}_0^\top$. Then, by Davis-Kahan theorem \citep{davis_kahan} we have
    $\min_{\mathbf{R}: \mathbf{R}^\top \mathbf{R} = \mathbf{I}_{k_0}}||\mathbf{U}^c  - \mathbf{U}' \mathbf{R} || = ||\mathbf{U}^c  - \mathbf{U}' \mathbf{\hat R} || \lesssim (C_{r, \{n_s\}_{s}^{S}} C_{c, p})^2 \big(\frac{1}{n} + \frac{1}{p}\big)$, with probability at least $1 - o(1)$, where $\mathbf{\hat R}$ is the orthogonal matrix achieving the minimum of the quantity on the left hand side. Similarly, we have  $\min_{\mathbf{R}_s: \mathbf{R}_s^\top \mathbf{R}_s = \mathbf{I}_{q_s}}||\mathbf{U}_s^\perp  - \mathbf{U}_{0s}^\perp R || = ||\mathbf{U}_s^\perp  - \mathbf{U}_{0s}^\perp  \mathbf{\hat R}_s || \lesssim (C_{r, \{n_s\}_{s}^{S}} C_{c, p})^2 \big(\frac{1}{n_s} + \frac{1}{p}\big)$, with probability at least $1 - o(1)$. Therefore, with probability at least $1 - o(1)$, we have
    $||\mathbf{U}_s^{\perp \top} \mathbf{U}_s^{c}|| = ||(\mathbf{U}_{0s}^\perp  \mathbf{\hat R}_s  + \mathbf{U}_s^{\perp} - \mathbf{U}_{0s}^\perp  \mathbf{\hat R}_s)^\top (\mathbf{U}' \mathbf{\hat R} +\mathbf{U}^c - \mathbf{U}' \mathbf{\hat R})|| \lesssim(C_{r, \{n_s\}_{s}^{S}} C_{c, p})^2 \big(\frac{1}{n_s} + \frac{1}{p}\big)$, since $\mathbf{U}_s^{\perp \top} \mathbf{U}' = 0$. Analogously, with probability at least $1 - o(1)$, we have
\begin{equation*}
        \begin{aligned}
         \left| \frac{ \sqrt{n_s} \rho_{\Lambda} \tilde \sigma_{j'} }{\sqrt{n + \tau_{\Lambda}^{-2}}  (n_s + \tau_{\Gamma_s}^{-2})} \mathbf{\mu}_{\gamma_{sj}}^\top   \mathbf{\hat  F_s}^\top\mathbf{\hat M}_s \mathbf{\eta}_{j} \right| \asymp \big| \mathbf{\mu}_{\gamma_{sj}}^\top    \mathbf{U}_s^{\perp \top} \mathbf{U}_s^{c}\varepsilon_{\mathbf{\lambda}_{j'}} \big| \lesssim (C_{r, \{n_s\}_{s}^{S}} C_{c, p})^2 \big(\frac{1}{n_s} + \frac{1}{p}\big).
        \end{aligned}
    \end{equation*}
    
    \item  With probability at least $1 - o(1)$, we have
    \begin{equation*}
        \begin{aligned}
         \left| \frac{ \sqrt{n_s} \rho_{\Lambda}^2 \tilde \sigma_j \tilde \sigma_{j'}}{(n + \tau_{\Lambda}^{-2})  (n_s + \tau_{\Gamma_s}^{-2})^2} \varepsilon_{\mathbf{\lambda}_j}^\top\mathbf{\hat M}_s^\top \mathbf{\hat  F_s}   \mathbf{\hat  F_s}^\top\mathbf{\hat M}_s\varepsilon_{\mathbf{\lambda}_{j'}} \right| \asymp \frac{1}{\sqrt{n_s}}\big|   \varepsilon_{\mathbf{\lambda}_j}^\top \mathbf{U}_s^{c\top} \mathbf U_c^\perp \mathbf U_c^{\perp \top} \mathbf{U}_s^{c}\varepsilon_{\mathbf{\lambda}_{j'}}\big|  \lesssim \frac{1}{\sqrt{n_s}}.
        \end{aligned}
    \end{equation*}
    \item With probability at least $1 - o(1)$, we have
    \begin{equation*}
        \begin{aligned}
           \left|  \frac{\rho_{\Lambda}\rho_{\Gamma_s} \tilde \sigma_j \tilde \sigma_{j'} \sqrt{n_s}}{\sqrt{n + \tau_{\Lambda}^{-2}} (n_s + \tau_{\Gamma_s}^{-2})^{3/2}}\varepsilon_j^\top\mathbf{\hat M}_s^\top \mathbf{\hat  F_s} \varepsilon_{\mathbf{\gamma}_{sj'}}  \right| & \lesssim  \frac{1}{\sqrt{n_s}}  \big|   \varepsilon_{\mathbf{\lambda}_j}^\top \mathbf{U}_s^{c\top} \mathbf U_c^{\perp}  \varepsilon_{\mathbf{\gamma}_{sj'}}\big|  \lesssim\frac{1}{\sqrt{n_s}},
        \end{aligned}
    \end{equation*}
and
\begin{equation*}
   \left|  \frac{\rho_{\Lambda} \rho_{\Gamma_s} \tilde \sigma_j \tilde \sigma_{j'} \sqrt{n_s}}{\sqrt{n + \tau_{\Lambda}^{-2}} (n_s + \tau_{\Gamma_s}^{-2})^{3/2}}\varepsilon_{j'}^\top\mathbf{\hat M}_s^\top \mathbf{\hat  F_s} \varepsilon_{\mathbf{\gamma}_{sj}}  \right|\lesssim \frac{1}{\sqrt{n_s}}.
\end{equation*}
    \item With probability at least $1 - o(1)$, we have
    \begin{equation*}
        \begin{aligned}
           \left| \frac{\rho_{\Gamma_s}^2 \tilde \sigma_j \tilde \sigma_{j'}  \sqrt{n_s}}{n_s + \tau_{\Gamma_s}^{-2}}  \varepsilon_{\mathbf{\gamma}_{sj}}^\top  \varepsilon_{\mathbf{\gamma}_{sj'}}    \right| & \lesssim    \frac{1}{\sqrt{n_s}}.
        \end{aligned}
    \end{equation*}
\end{enumerate}
\end{proof}

\subsection{Consistency results under moment assumptions}

{In this section, we extend the results reported in Theorems \ref{thm:factors_procrustes_error} and \ref{thm:posterior_contraction_Lambda_outer}, under more general conditions. In particular, instead of the distributional assumptions on the latent factors and residual noises, we make the following moment based assumptions. \begin{assumption}[Mean-zero noise]\label{ass:bai-mean}
For the idiosyncratic noise, we have 
\begin{equation*}
E(\mathbf{e}_{si})=0
\quad (i=1,\dots,n_s; s=1, \dots, S).
\end{equation*}
\end{assumption}
\begin{assumption}[Weak cross-sectional dependence]\label{ass:bai-cross}
The total covariance across the $p$ columns of the noise is uniformly bounded:
\begin{equation*}
\frac{1}{p}
\sum_{j=1}^p \sum_{j'=1}^p
\big|\operatorname{Cov}({e}_{sij}, \mathbf{e}_{sij'})\big|
\;\le\; C_1, \quad (i=1,\dots,n_s; s=1, \dots, S),
\end{equation*}
for some universal constant $C_1>0$.
\end{assumption}
\begin{assumption}[Weak across-units dependence]\label{ass:bai-time}
The total covariance of the noise across the samples in each study is uniformly bounded:
\begin{equation*}
\frac{1}{n_s}
\sum_{i=1}^{n_s} \sum_{i'=1}^{n_s}
\big|\operatorname{Cov}(e_{sij}, e_{si'j})\big|
\;\le\; C_2, \quad (j=1,\dots,p; s=1, \dots, S),
\end{equation*}
for some universal constant $C_2>0$.
\end{assumption}}

{
\begin{assumption}[Norm of residual errors]\label{ass:bai-norm_error}
    We assume that, with probability at least $1-o(1)$, 
    \begin{equation*}
        \begin{aligned}
            &\frac{1}{\sqrt{n_s} p}||\mathbf E_{s} \mathbf{e}_{si} ||_F \lesssim \frac{1}{\sqrt{n_s}} + \frac{1}{\sqrt{p}}, \quad (i=1, \dots, n_s)\\
             &\frac{1}{n_s \sqrt{p}}||\mathbf{e}_{s}^{(j)\top}\mathbf E_{s}||_F \lesssim \frac{1}{\sqrt{n_s}} + \frac{1}{\sqrt{p}}, \quad (j=1, \dots, p),
        \end{aligned} \quad (s=1, \dots, S).
    \end{equation*}
\end{assumption}
\begin{assumption}\label{ass:bai-error_cross}
 We assume 
\begin{equation*}
\begin{aligned}
    \frac{1}{\sqrt{p}} E \bigg( ||\sum_{j=1}^p \lambda_{0j} \mathbf e_{si}^{(j)}||^2 + ||\sum_{j=1}^p \gamma_{0sj} \mathbf e_{si}^{(j)}||^2
 \bigg) &\le M\\
|| \mathbf e_{si}^\top \mathbf E_s^\top \mathbf M_{0s}|| + || \mathbf e_{si}^\top \mathbf E_s^\top \mathbf F_{0s}||&\lesssim n_s + p
\end{aligned}
\qquad  (i=1, \dots, n_s; s=1, \dots, S),
\end{equation*}
and
\begin{equation*}
    \begin{aligned}
    \frac{1}{\sqrt{p}} E \bigg( ||\sum_{i=1}^{n_s} \eta_{si} \mathbf e_{si}^{(j)}||^2 + ||\sum_{i=1}^{n_s} \phi_{si} \mathbf e_{si}^{(j)}||^2
 \bigg) &\le M\\
|| \mathbf e_{s}^{(j)\top} \mathbf E_s \mathbf \Lambda_0|| + || \mathbf e_{s}^{(j)\top} \mathbf E_s \mathbf \Gamma_{0s}||&\lesssim n_s + p
\end{aligned}
\qquad  (j=1, \dots, p; s=1, \dots, S).
\end{equation*}
\end{assumption}
}

{
\begin{assumption}[Near orthogonality between latent factors]\label{ass:lf_projection}
We assume the latent factors $\mathbf{F}_{0s}$ and $\mathbf{M}_{0s}$ to be nearly orthogonal, that is, with probability at least $1-o(1)$, we have
    \begin{equation*}
       || \mathbf{P}_{\mathbf{F}_{0s}}\mathbf{M}_{0s} || \lesssim 1, \ \quad (s=1, \dots, S),
    \end{equation*}
    where $\mathbf{P}_{\mathbf{F}_{0s}} = \mathbf{F}_{0s} \big(\mathbf{F}_{0s}^{\top}\mathbf{F}_{0s}\big)^{-1}\mathbf{F}_{0s}^\top$ is the orthogonal projection matrix onto the column space of $\mathbf{F}_{0s}$.
\end{assumption}
\begin{assumption}[Singular values of latent factor matrices]\label{ass:sv_lf}
We assume the following conditions for the singular values of the latent factor matrices. With probability at least $1-o(1)$, we have
\begin{equation*}
    \big| s_l(\mathbf M_0) - \sqrt{n}\big| \lesssim 1, \quad(l=1, \dots, k_0), \quad   \big| s_l(\mathbf F_{0s}) - \sqrt{n_s}\big| \lesssim 1, \quad(l=1, \dots, q_s; s=1, \dots, S).
\end{equation*}
Moreover, we assume that with probability at least $1-o(1)$, 
\begin{equation*}
    \frac{1}{n}\big| \big| \mathbf M_0^\top \mathbf M_0 - n \mathbf I_{k_0} \big| \big|  \lesssim  \frac{\sqrt{\log n}}{\sqrt{n}}, \quad    \frac{1}{n_s}\big| \big| \mathbf F_{0s}^\top \mathbf F_{0s} - n_s \mathbf I_{q_s} \big| \big| \lesssim \frac{\sqrt{\log n_s}}{\sqrt{n_s}}, \quad(s=1, \dots, S).
\end{equation*}
\end{assumption}
}
{
\begin{theorem}[Recovery of latent factors under moment-based assumptions]\label{thm:factors_procrustes_error_additional}
   Suppose Assumptions \ref{assumption:model}, \ref{assumption:li}--\ref{assumption:sv_A} and \ref{ass:bai-mean}--\ref{ass:sv_lf} hold and $n_s = \mathcal O(n_{\min}^2)$, where $n_{\min} = \min_{s=1, \dots, S} n_s$, for all $s=1, \dots, S$, then, as $n_1, \dots, n_s, p \to \infty$,
    with probability at least $1-o(1)$,
    \begin{equation*}
        \begin{aligned}
           \min_{\mathbf{R}_s \in \mathbb R^{k_0 \times k_0} : \mathbf{R}_s^\top \mathbf{R}_s = \mathbf{I}_{k_0}} \frac{1}{\sqrt{n_s}}||\mathbf{\hat M}_s \mathbf{R}_s- \mathbf{M}_{0s}||& \lesssim \frac{1}{\sqrt{n_s}} + \sqrt{\frac{n}{p n_s}},\\
             \min_{\mathbf{R}_s \in \mathbb R^{q_s \times q_s} : \mathbf{R}_s^\top \mathbf{R}_s = \mathbf{I}_{q_s}} \frac{1}{\sqrt{n_s}}|| \mathbf{\hat  F_s} \mathbf{R}_s- \mathbf{F}_{0s}||& \lesssim \frac{1}{\sqrt{n_{\min}}} + \frac{1}{\sqrt{p}}.
        \end{aligned}
    \end{equation*}
\end{theorem}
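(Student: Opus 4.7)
The plan is to mirror the argument used for Theorem \ref{thm:factors_procrustes_error}, but invoking the moment-based versions of Propositions \ref{prop:recovery_M} and \ref{prop:U_s_outer} and replacing the concentration bounds on singular values of Gaussian matrices with the deterministic-looking bounds supplied by Assumption \ref{ass:sv_lf}.

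For the shared factors, I would start by letting $\mathbf U_0 \mathbf D_0 \mathbf V_0^\top$ be the singular value decomposition of $\mathbf M_0$ and observe, as in the proof of Theorem \ref{thm:factors_procrustes_error}, that $\mathbf U_0 \mathbf U_0^\top = \mathbf U_0^c \mathbf U_0^{c\top}$. Under the present assumptions, the second statement of Proposition \ref{prop:recovery_M} gives $\lVert \mathbf U^c \mathbf U^{c\top} - \mathbf U_0^c \mathbf U_0^{c\top}\rVert \lesssim 1/\sqrt{n}+1/\sqrt{p}$ with probability $1-o(1)$, so by the Davis--Kahan theorem there exists an orthogonal $\mathbf{\hat R}\in\mathbb R^{k_0\times k_0}$ with
\begin{equation*}
\lVert \mathbf U^c - \mathbf U_0 \mathbf{\hat R}\rVert \;\lesssim\; \frac{1}{\sqrt{n}} + \frac{1}{\sqrt{p}}.
\end{equation*}
Setting $\mathbf{\tilde R}=\mathbf V_0 \mathbf{\hat R}$ and using $\mathbf{\hat M}=\sqrt{n}\,\mathbf U^c$, the triangle inequality yields
\begin{equation*}
\lVert \mathbf{\hat M}-\mathbf M_0\mathbf{\tilde R}\rVert
\;\leq\; \sqrt{n}\,\lVert \mathbf U^c-\mathbf U_0\mathbf{\hat R}\rVert + \max_{1\leq l\leq k_0}\bigl|\sqrt{n}-d_{0l}\bigr|,
\end{equation*}
where the first term is at most $\mathcal O(1+\sqrt{n/p})$ by the above, and the second term is $\mathcal O(1)$ by Assumption \ref{ass:sv_lf}. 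Dividing by $\sqrt{n_s}$ and using $\lVert\mathbf{\hat M}_s-\mathbf M_{0s}\mathbf{\tilde R}\rVert\leq \lVert\mathbf{\hat M}-\mathbf M_0\mathbf{\tilde R}\rVert$ gives the claimed bound for $\mathbf M_{0s}$.

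The argument for $\mathbf F_{0s}$ is entirely analogous: I would apply Proposition \ref{prop:U_s_outer} in its moment-based form to obtain $\lVert \mathbf U_s^\perp \mathbf U_s^{\perp\top} - \mathbf U_{0s}^\perp \mathbf U_{0s}^{\perp\top}\rVert \lesssim 1/\sqrt{n_s}+1/\sqrt{p}$, then invoke Davis--Kahan, combine with the singular value control on $\mathbf F_{0s}$ provided by Assumption \ref{ass:sv_lf}, and finally divide by $\sqrt{n_s}$. The condition $n_s=\mathcal O(n_{\min}^2)$ is needed only to apply Proposition \ref{prop:U_s_outer}, and the appearance of $n_{\min}$ in the stated bound reflects the weakest study rather than an actual worsening over the $n_s$ factor.

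The only substantive point to verify is that the singular-value approximation $|\sqrt{n}-d_{0l}|\lesssim 1$ (and its counterpart for $\mathbf F_{0s}$) is strong enough to absorb the $\sqrt{n}$ factor produced by rescaling $\mathbf U^c$; this is exactly what Assumption \ref{ass:sv_lf} is designed to do. Thus no step should be genuinely delicate, and the main task is to verify that each constant and probabilistic event used in the original proof of Theorem \ref{thm:factors_procrustes_error} survives the replacement of the Gaussian-based concentration bounds by their moment-based analogs supplied by Propositions \ref{prop:recovery_M} and \ref{prop:U_s_outer} and by Assumption \ref{ass:sv_lf}.
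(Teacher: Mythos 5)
Your proposal is correct and follows essentially the same route as the paper's own (much terser) proof: invoke the moment-based versions of Propositions \ref{prop:recovery_M} and \ref{prop:U_s_outer}, apply Davis--Kahan, and use Assumption \ref{ass:sv_lf} to control $\max_l|\sqrt{n}-d_{0l}|$ (and its $\mathbf F_{0s}$ analogue) before rescaling by $\sqrt{n_s}$. The expanded bookkeeping you supply matches the argument given for Theorem \ref{thm:factors_procrustes_error} and correctly reproduces the stated rates.
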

\begin{proof}[Proof of Theorem \ref{thm:factors_procrustes_error_additional}]
We apply the same steps as in the proof for Theorem \ref{thm:factors_procrustes_error}.  Under Assumptions \ref{assumption:li}--\ref{assumption:sv_A} and \ref{ass:bai-mean}--\ref{ass:lf_projection}, with probability at least $1-o(1)$, we have  $ ||\mathbf{U}^c \mathbf{U}^c - \mathbf{U}_0 \mathbf{U}_0^\top ||  \lesssim  \frac{1}{\sqrt{n}} + \frac{1}{\sqrt{p}}$, by Proposition \ref{prop:recovery_M}, where $\mathbf{U}_0$ denotes the matrix of left singular vectors of $\mathbf M_0$. Then, applying Davis-Kahan theorem \citep{davis_kahan} and using Assumption \ref{ass:sv_lf}, we obtain the first result. The proof for the second result proceeds by applying Proposition \ref{prop:U_s_outer} followed by similar steps. 
\end{proof}}
{
\begin{theorem}[Consistency under moment-based assumptions]\label{thm:consistency_supp}
    Suppose Assumptions \ref{assumption:model}, \ref{assumption:li}--\ref{assumption:sv_A} and \ref{ass:bai-mean}--\ref{ass:sv_lf} hold and $n_s = \mathcal O(n_{\min}^2)$, where $n_{\min} = \min_{s=1, \dots, S} n_s$, for all $s=1, \dots, S$, then, as $n_1, \dots, n_s, p \to \infty$,
    with probability at least $1-o(1)$,
    \begin{equation}
    \begin{aligned}
          \frac{ \left|\left| \mathbf{\mu}_\Lambda \mathbf{\mu}_\Lambda^\top - {\mathbf{\Lambda}}_0 {\mathbf{\Lambda}}_0^\top  \right|\right|}{ \left|\left|  {\mathbf{\Lambda}}_0 {\mathbf{\Lambda}}_0^\top  \right|\right|} &\lesssim  \frac{\sqrt{\log n}}{\sqrt{n}}  + \frac{1}{\sqrt{p}},\\
          \frac{ \left|\left| \mathbf{\mu}_{\Gamma_s} \mathbf{\mu}_{\Gamma_s}^\top - {\mathbf{\Gamma}}_{0s} {\mathbf{\Gamma}}_{0s}^\top  \right|\right|}{ \left|\left| {\mathbf{\Gamma}}_{0s} {\mathbf{\Gamma}}_{0s}^\top  \right|\right|} &\lesssim \frac{\sqrt{\log n_s}}{\sqrt{n_s}}  + \frac{1}{\sqrt{p}}, \quad (s=1, \dots, S).
    \end{aligned}
    \end{equation}
\end{theorem}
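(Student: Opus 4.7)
The plan is to mimic the structure of the proof of Theorem \ref{thm:posterior_contraction_Lambda_outer}, replacing each use of a Gaussian/sub-Gaussian concentration bound by its moment-based counterpart drawn from Assumptions \ref{ass:bai-mean}--\ref{ass:sv_lf}. Specifically, I would first invoke the second halves of Propositions \ref{prop:V_s_outer}, \ref{prop:recovery_P_0}, \ref{prop:U_s_outer}, and \ref{prop:recovery_M}, which already provide the relevant spectral perturbation bounds of order $1/\sqrt{n_{\min}} + 1/\sqrt{p}$ (rather than $(C_{r} C_{c})^2(1/n + 1/p)$) under the new assumptions. These serve as the analogues of the spectral perturbation bounds that powered the proof of Theorem \ref{thm:posterior_contraction_Lambda_outer}.

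Starting from the decomposition
\[
\mathbf{\mu}_\Lambda \mathbf{\mu}_\Lambda^\top = \frac{n}{(n+\tau_\Lambda^{-2})^2}\bigl\{\mathbf{\hat Y}^{c\top}\mathbf{U}_0^c \mathbf{U}_0^{c\top}\mathbf{\hat Y}^c + \mathbf{\hat Y}^{c\top}(\mathbf{U}^c\mathbf{U}^{c\top}-\mathbf{U}_0^c\mathbf{U}_0^{c\top})\mathbf{\hat Y}^c\bigr\},
\]
I would expand $\mathbf{\hat Y}^c$ via $\mathbf{M}_0{\mathbf{\Lambda}}_0^\top - \mathbf{P}_0^\perp\mathbf{M}_0{\mathbf{\Lambda}}_0^\top + \mathbf{Q}_0^\perp \mathbf{E} + \mathbf{\Delta Y}$ exactly as in the original proof, and then bound each of the resulting blocks term by term. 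The Gaussian-specific steps used in Theorem \ref{thm:posterior_contraction_Lambda_outer} (e.g., Corollary 5.35 of \citet{vershynin_12} for $\|\mathbf{M}_0\|$ and Lemma E.1 of \citet{fable} for $\|\mathbf{M}_0^\top\mathbf{M}_0 - n\mathbf{I}_{k_0}\|$) are now given directly by Assumption \ref{ass:sv_lf}. The Gaussian moment computation for cross products $\|{\mathbf{\Lambda}}_0\mathbf{M}_0^\top \mathbf{Q}_0^\perp \mathbf{E}\|$ is replaced by Assumption \ref{ass:bai-error_cross}, which supplies the necessary $\mathcal{O}(n+p)$ control of $\|\mathbf{e}^{(j)\top}\mathbf{E}\mathbf{\Lambda}_0\|$. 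Norms of the residual matrix $\|\mathbf{E}\|$ are handled using Assumption \ref{ass:bai-norm_error}, and the factor-normalization condition in Assumption \ref{ass:lf_projection} supplies the bound $\|\mathbf{P}_0^\perp \mathbf{M}_0\|\lesssim 1$ previously obtained from Lemma \ref{lemma:Mt_P_M}.

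For the study-specific part, ${\mathbf{\Gamma}}_s{\mathbf{\Gamma}}_s^\top$, the analogous expansion $\mathbf{Y}_s - \mathbf{\hat M}_s\mathbf{\tilde\Lambda}^\top = \mathbf{F}_{0s}{\mathbf{\Gamma}}_{0s}^\top + \mathbf{E}_s + (\mathbf{M}_{0s}{\mathbf{\Lambda}}_0^\top - \mathbf{\hat M}_s\mathbf{\tilde\Lambda}^\top)$ applies verbatim. The dominant $\frac{1}{n_s}{\mathbf{\Gamma}}_{0s}\mathbf{F}_{0s}^\top\mathbf{F}_{0s}{\mathbf{\Gamma}}_{0s}^\top$ term is controlled using $\|\frac{1}{n_s}\mathbf{F}_{0s}^\top\mathbf{F}_{0s} - \mathbf{I}_{q_s}\| \lesssim \sqrt{\log n_s}/\sqrt{n_s}$ from Assumption \ref{ass:sv_lf}, and the cross terms involving $\mathbf{F}_{0s}^\top\mathbf{E}_s$ and $\mathbf{E}_s^\top\mathbf{U}_{0s}^\perp\mathbf{U}_{0s}^{\perp\top}\mathbf{E}_s$ are bounded using the study-specific versions of Assumptions \ref{ass:bai-norm_error} and \ref{ass:bai-error_cross}. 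The posterior-sample-to-mean control, $\|\mathbf{\tilde\Lambda}\mathbf{\tilde\Lambda}^\top - \mathbf{\mu}_\Lambda\mathbf{\mu}_\Lambda^\top\|$ and $\|\mathbf{\tilde\Gamma}_s\mathbf{\tilde\Gamma}_s^\top - \bar{\mathbf{\Gamma}}_s\bar{\mathbf{\Gamma}}_s^\top\|$, is handled identically to the proof of Theorem \ref{thm:posterior_contraction_Lambda_outer}, noting that the Gaussian structure of $\mathbf{\tilde\Lambda} - \mathbf{\mu}_\Lambda$ comes from the posterior (conjugate prior), not from the data-generating model, and is therefore unaffected by the weakening of Assumption \ref{assumption:distributions}.

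The main obstacle I anticipate is verifying that the cross-term bounds from Assumption \ref{ass:bai-error_cross}, which are formulated row-wise and column-wise for $\mathbf{E}_s$, are strong enough to replace the sub-Gaussian Azuma/Bernstein-type bounds implicit in the proof of Theorem \ref{thm:posterior_contraction_Lambda_outer}. In particular, controlling $\|\mathbf{E}^\top\mathbf{Q}_0^\perp\mathbf{U}_0^c\mathbf{U}_0^{c\top}\mathbf{Q}_0^\perp\mathbf{E}\|$ and $\|{\mathbf{\Lambda}}_0\mathbf{M}_0^\top\mathbf{\Delta Y}\|$ is delicate because $\mathbf{\Delta Y}$ mixes across studies in a data-dependent way; these will likely require inserting and removing $\mathbf{P}_0^\perp$ projections to decouple study indices, followed by application of Assumption \ref{ass:bai-error_cross} study-by-study. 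Once those cross-terms are controlled at rate $\sqrt{n} + \sqrt{p}$ (up to $\sqrt{\log n}$ factors), combining them with the leading $\|\mathbf{\Lambda}_0\|^2 \asymp p$ and dividing through yields the stated relative error bounds $\sqrt{\log n}/\sqrt{n} + 1/\sqrt{p}$ and $\sqrt{\log n_s}/\sqrt{n_s} + 1/\sqrt{p}$.
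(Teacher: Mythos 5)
Your proposal takes essentially the same route as the paper: the paper's proof is precisely a re-run of the argument for Theorem \ref{thm:posterior_contraction_Lambda_outer}, swapping each Gaussian concentration bound for its moment-based counterpart (Assumption \ref{ass:sv_lf} for $\|\mathbf{M}_0\|$, $\|\mathbf{M}_0^\top\mathbf{M}_0 - n\mathbf{I}_{k_0}\|$ and the $\mathbf{F}_{0s}$ analogues, Assumption \ref{ass:lf_projection} for $\|\mathbf{P}_0^\perp\mathbf{M}_0\|\lesssim 1$, the moment-based halves of Lemmas \ref{lemma:E} and \ref{lemma:Delta_y_j} and of Propositions \ref{prop:U_s_outer} and \ref{prop:recovery_M}, and $\|\mathbf{E}^\top\mathbf{M}_0\|\lesssim\sqrt{np}$ from the cross-moment assumptions), exactly as you describe. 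The only superfluous element is your discussion of the posterior-sample-to-mean control, which is not needed here since the theorem asserts consistency of the point estimators only, not posterior contraction.
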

\begin{proof}[Proof of Theorem \ref{thm:consistency_supp}]
The proof mimics the one of Theorem \ref{thm:posterior_contraction_Lambda_outer} under the novel assumptions. In particular, we follow the same steps as in Theorem \ref{thm:posterior_contraction_Lambda_outer}, and we prove the first results, using the fact that, with probability at least $1- o(1)$, 
\begin{enumerate}
   \item $||\mathbf{M}_0|| \asymp \sqrt{n}$ by Assumption \ref{ass:sv_lf},
   \item $||   \mathbf{M}_0^\top  \mathbf{M}_0 - n \mathbf{I}_{k_0}  || \lesssim \sqrt{n \log n}$ by Assumption \ref{ass:sv_lf},
   \item $||\mathbf P_0^{\perp}  \mathbf{M}_0 || \lesssim 1$, which is implied by Assumption \ref{ass:lf_projection},
    \item $||\mathbf{E}||\lesssim   \sqrt{n} p^{1/4}+ n^{1/4}\sqrt{p}$ by Lemma \ref{lemma:E},
    \item $||\mathbf{\Delta Y} || \lesssim \sqrt{n} + \sqrt{p}$ by Lemma \ref{lemma:Delta_y_j},
    \item $|| \mathbf E^\top \mathbf M_0|| \lesssim \sqrt{n p}$ by Equation (4) of \citet{Bai2020SimplerProofs},
    \end{enumerate}
    and $||{\mathbf{\Lambda}}_0|| \asymp \sqrt{p}$ by Assumption \ref{assumption:Lambda}, as well as, $||\mathbf{P}_0^\perp|| = ||\mathbf{Q}_0^\perp|| = ||\mathbf{U}_0^c|| = ||\mathbf{N}_0^\perp|| =1$,  where  $\mathbf{N}_0^\perp$ is defined in  \eqref{eq:N_0_perp}.
    Next, use the same bounds, along with $|| \mathbf{U}_{0s}^c  \mathbf{U}_0^{c \top} - \mathbf{U}^c  \mathbf{U}^{c \top} || \leq  ||  \mathbf{U}_0^c  \mathbf{U}_0^{c \top} - \mathbf{U}^c  \mathbf{U}^{c \top}|| \lesssim \frac{1}{\sqrt{n}} + \frac{1}{\sqrt{p}}$, with probability at least $1-o(1)$, by Proposition \ref{prop:recovery_M}, to claim, with probability at least $1-o(1)$,
$$
||\mathbf{\hat M} \mathbf{\mu}_\Lambda^\top  -  \mathbf{M}_0{\mathbf{\Lambda}}_0^\top|| \lesssim \sqrt{n} p^{1/4}+ n^{1/4}\sqrt{p}. 
$$
Finally, we use the fact that, with probability at least $1-o(1)$,
\begin{enumerate}
    \item  $||\mathbf{F}_{0s}|| \asymp \sqrt{n_s}$ by Assumption \ref{ass:sv_lf},
    \item $||   \mathbf{F}_{0s}^\top  \mathbf{F}_{0s} - n_s \mathbf{I}_{q_s}  || \lesssim \sqrt{n_s \log n_s}$ by Assumption \ref{ass:sv_lf},
    \item  $||\mathbf{E}_s||\lesssim   \sqrt{n_s} p^{1/4}+ n_s^{1/4}\sqrt{p}$ by Lemma \ref{lemma:E},
    \item $||\mathbf Y_s|| \lesssim \sqrt{n_s p}$ by Lemma \ref{lemma:Y_s},
    \item $|| \mathbf{U}_s^{\perp} \mathbf{U}_s^{\perp \top} - \mathbf{U}_{0s}^\perp \mathbf{U}_{0s}^{\perp \top} || \lesssim \frac{1}{\sqrt{n_s}} + \frac{1}{\sqrt{p}}$ by Proposition \ref{prop:U_s_outer},
\end{enumerate}
together with $ \frac{n_s}{(n_s + \tau_{\Gamma_s}^{-2})^2} \asymp \frac{1}{n_s}$, to prove the second result. 
\end{proof}
}

\subsection{Additional lemmas}\label{subsec:additional_lemmas}

\begin{lemma}\label{lemma:E}
Under Assumptions \ref{assumption:model} and \ref{assumption:sigma}, with probability at least $1-o(1)$, we have 
\begin{equation*}
    ||\mathbf{E}_s|| \lesssim C_{sr, n_s} C_{sc, p}  (\sqrt{n_s} + \sqrt{p}), \quad (s=1, \dots, S), \quad  ||\mathbf{E}|| \lesssim C_{r, \{n_s\}_{s}^{S}} C_{c, p}  (\sqrt{n} + \sqrt{p}).
\end{equation*}
Replacing Assumptions \ref{assumption:distributions} and \ref{assumption:sigma} with Assumptions \ref{ass:bai-mean}--\ref{ass:sv_lf},
 with probability at least $1-o(1)$, we have 
\begin{equation*}
    ||\mathbf{E}_s|| \lesssim   \sqrt{n_s} p^{1/4} + n_s^{1/4}\sqrt{p}, \quad (s=1, \dots, S), \quad  ||\mathbf{E}|| \lesssim  \sqrt{n} p^{1/4} + n^{1/4}\sqrt{p}.
\end{equation*}
\end{lemma}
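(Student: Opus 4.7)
The plan is to handle the two assumption regimes separately, derive the per-study bound on $\|\mathbf{E}_s\|$ in each case, and then obtain the bound on the stacked matrix $\mathbf{E}\in\mathbb{R}^{n\times p}$ via the elementary inequality
\begin{equation*}
\|\mathbf{E}\|^2 \;=\; \sup_{\|v\|=1}\sum_{s=1}^S\|\mathbf{E}_s v\|^2 \;\le\;\sum_{s=1}^S\|\mathbf{E}_s\|^2.
\end{equation*}

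For the sub-Gaussian regime under Assumptions \ref{assumption:distributions} and \ref{assumption:sigma}, I would first invoke the separable-covariance structure to write $\mathbf{E}_s = \mathbf{\Sigma}_{0sr}^{1/2}\,\mathbf{Z}_s\,\mathbf{\Sigma}_{0sc}^{1/2}$, where $\mathbf{Z}_s$ has independent zero-mean sub-Gaussian entries with uniformly bounded proxy variance. Submultiplicativity of the spectral norm yields $\|\mathbf{E}_s\|\le s_1(\mathbf{\Sigma}_{0sr})^{1/2}\,s_1(\mathbf{\Sigma}_{0sc})^{1/2}\|\mathbf{Z}_s\|\le C_{sr,n_s}C_{sc,p}\|\mathbf{Z}_s\|$, and Corollary 5.35 of \citet{vershynin_12} gives $\|\mathbf{Z}_s\|\lesssim\sqrt{n_s}+\sqrt{p}$ with probability at least $1-o(1)$. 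A union bound over the (fixed) number of studies produces the first bound. Stacking via the display above and using $C_{sc,p}\le C_{c,p}$ and $\sum_s C_{sr,n_s}^2\le C_{r,\{n_s\}_s^S}^2$ together with $\sum_s n_s = n$ yields $\|\mathbf{E}\|\lesssim C_{r,\{n_s\}_s^S}C_{c,p}(\sqrt n+\sqrt p)$.

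For the moment-based regime, I would rely on the spectral-vs-Frobenius sandwich
\begin{equation*}
\|\mathbf{E}_s\|^2 \;=\; \|\mathbf{E}_s\mathbf{E}_s^\top\| \;\le\; \|\mathbf{E}_s\mathbf{E}_s^\top\|_F,
\end{equation*}
and expand the right-hand side using $(\mathbf{E}_s\mathbf{E}_s^\top)_{ii'}=\mathbf{e}_{si}^\top\mathbf{e}_{si'}$ to obtain $\|\mathbf{E}_s\mathbf{E}_s^\top\|_F^2=\sum_{i=1}^{n_s}\|\mathbf{E}_s\mathbf{e}_{si}\|^2$. Applying Assumption \ref{ass:bai-norm_error} to each summand gives $\|\mathbf{E}_s\mathbf{e}_{si}\|\lesssim p+\sqrt{n_sp}$, hence $\|\mathbf{E}_s\mathbf{E}_s^\top\|_F^2\lesssim n_sp^2+n_s^2p$, so $\|\mathbf{E}_s\|^2\lesssim \sqrt{n_s}\,p+n_s\sqrt{p}$, and the elementary $\sqrt{a+b}\le\sqrt{a}+\sqrt{b}$ delivers the claimed $\|\mathbf{E}_s\|\lesssim\sqrt{n_s}p^{1/4}+n_s^{1/4}\sqrt{p}$. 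Stacking as above and using $\sum_s\sqrt{n_s}\le\sqrt{Sn}$ (with $S$ fixed) then gives the global bound.

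The only subtlety I anticipate is justifying the separable-covariance decomposition $\mathbf{E}_s=\mathbf{\Sigma}_{0sr}^{1/2}\mathbf{Z}_s\mathbf{\Sigma}_{0sc}^{1/2}$ in the sub-Gaussian case, since the paper only asserts sub-Gaussianity together with prescribed row and column covariances rather than a latent i.i.d.\ representation; this can be resolved either by reading Assumption \ref{assumption:distributions} as positing such a representation or, more generally, by invoking a variant of the spectral-norm concentration inequality for sub-Gaussian matrices with Kronecker-structured covariance (via $\varepsilon$-net plus Hanson--Wright on quadratic forms). Everything else is a routine chain of submultiplicative inequalities and elementary algebra.
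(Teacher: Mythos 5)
Your proof is correct and takes essentially the same route as the paper: in the sub-Gaussian regime it whitens $\mathbf{E}_s$ by $\mathbf{\Sigma}_{0sr}^{-1/2}$ and $\mathbf{\Sigma}_{0sc}^{-1/2}$ and applies the standard $\sqrt{n_s}+\sqrt{p}$ spectral bound for independent sub-Gaussian entries, and in the moment-based regime it uses the Gram-matrix argument $\|\mathbf{E}_s\|^2\le\|\mathbf{E}_s\mathbf{E}_s^\top\|_F$ together with Assumption \ref{ass:bai-norm_error}, which is precisely the content of the Bai--Ng lemma the paper cites. If anything your version is the more careful one: the paper's written chain $\|\mathbf{E}_s\|\le\|\mathbf{E}_s\|_F\lesssim \sqrt{n_s}p^{1/4}+n_s^{1/4}\sqrt{p}$ cannot be taken literally (the Frobenius norm of $\mathbf{E}_s$ is of order $\sqrt{n_sp}$), whereas your squared-norm chain delivers the stated rate, and your explicit stacking inequality for $\|\mathbf{E}\|$ and your caveat about reading Assumption \ref{assumption:distributions} as positing a separable representation $\mathbf{E}_s=\mathbf{\Sigma}_{0sr}^{1/2}\mathbf{Z}_s\mathbf{\Sigma}_{0sc}^{1/2}$ are both apt.
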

\begin{proof}[Proof of Lemma \ref{lemma:E}]
   The first result follows from $ ||\mathbf{E}_s|| = ||\mathbf{\Sigma}_{0sr}^{1/2} ||  ||\mathbf{\Sigma}_{0sr}^{-1/2}\mathbf{E}_s \mathbf{\Sigma}_{0sc}^{-1/2}|| ||\mathbf{\Sigma}_{0sc}^{1/2}   ||$ and \\$||\mathbf{\Sigma}_{0sr}^{-1/2}\mathbf{E}_s \mathbf{\Sigma}_{0sc}^{-1/2}|| \lesssim \sqrt{n_s} + \sqrt{p}$ by Theorem 4.6.1 of \citet{vershynin2018hdp}. The second results is a consequence of $||\mathbf{E}_s|| \leq ||\mathbf{E}_s||_F \lesssim \sqrt{n_s} p^{1/4} + n_s^{1/4}\sqrt{p}$, where the second inequality follows from Lemma 1 of \citet{Bai2020SimplerProofs}.
\end{proof}

\begin{lemma}\label{lemma:y_j}
    Suppose Assumptions \ref{assumption:model}--\ref{assumption:sigma} hold. Then, with probability at least $1-o(1)$,
    \begin{equation*}
    \begin{aligned}
        \max_{j=1, \dots, p} ||\mathbf{y}_s^{(j)}|| \lesssim C_{\sigma} C_{sr, n_s} \sqrt{n_s}, \quad (s=1, \dots, S), \quad   \max_{j=1, \dots, p} ||\mathbf y^{(j)}|| \lesssim C_{\sigma} C_{r, \{n_s\}_{s}^{S}} \sqrt{n}.\\
        \max_{j=1, \dots, p} ||\mathbf{e}_s^{(j)}|| \lesssim C_{\sigma} C_{sr, n_s}\sqrt{n_s}, \quad (s=1, \dots, S), \quad  
        \max_{j=1, \dots, p} ||\mathbf{e}^{(j)}|| \lesssim C_{\sigma} C_{r, \{n_s\}_{s}^{S}}\sqrt{n}.
    \end{aligned}   
    \end{equation*}
\end{lemma}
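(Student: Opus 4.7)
I would reduce both statements to a sub-Gaussian concentration bound on a single column and then pass to the maximum via a union bound over $j=1,\dots,p$; the data-column bounds would then follow from the signal-plus-noise decomposition.

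The main step concerns the error columns. For fixed $s$ and $j$, Assumption \ref{assumption:distributions} makes $\mathbf{e}_s^{(j)}\in\mathbb{R}^{n_s}$ a mean-zero sub-Gaussian vector whose covariance $\sigma_{0scj}^{2}\mathbf{\Sigma}_{0sr}$ has spectral norm bounded by $K^{2}:=C_\sigma^{2}C_{sr,n_s}^{2}$ under Assumption \ref{assumption:sigma}. Hanson--Wright would then yield
\begin{equation*}
P\!\left(\bigl|\|\mathbf{e}_s^{(j)}\|^{2}-E\|\mathbf{e}_s^{(j)}\|^{2}\bigr|>t\right)\le 2\exp\{-c\min(t^{2}/(n_s K^{4}),\,t/K^{2})\},
\end{equation*}
together with $E\|\mathbf{e}_s^{(j)}\|^{2}\le n_sK^{2}$. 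Choosing $t\asymp n_sK^{2}$ produces an exponential-in-$n_s$ tail, and a union bound over the $p$ columns should then give $\max_j\|\mathbf{e}_s^{(j)}\|\lesssim C_\sigma C_{sr,n_s}\sqrt{n_s}$ with probability $1-o(1)$, provided $\log p=o(n_s)$ up to the polylogarithmic slack of $C_{sr,n_s}$ allowed by Assumption \ref{assumption:sigma}.

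For the data columns I would write $\mathbf{y}_s^{(j)}=\mathbf{M}_{0s}\mathbf{\lambda}_{0j}+\mathbf{F}_{0s}\mathbf{\gamma}_{0sj}+\mathbf{e}_s^{(j)}$ and apply the triangle inequality. Under Assumption \ref{assumption:distributions}, the two signal terms are Gaussian vectors in $\mathbb{R}^{n_s}$ with covariances $\|\mathbf{\lambda}_{0j}\|^{2}\mathbf{I}_{n_s}$ and $\|\mathbf{\gamma}_{0sj}\|^{2}\mathbf{I}_{n_s}$; Assumptions \ref{assumption:Lambda}--\ref{assumption:Gammas} yield $\max_j\|\mathbf{\lambda}_{0j}\|\lesssim\|\mathbf{\Lambda}_0\|_\infty\sqrt{k_0}\asymp 1$ and similarly $\max_j\|\mathbf{\gamma}_{0sj}\|\lesssim 1$. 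A standard $\chi^{2}$ tail combined with a union bound over $j$ then controls $\max_j\|\mathbf{M}_{0s}\mathbf{\lambda}_{0j}\|,\max_j\|\mathbf{F}_{0s}\mathbf{\gamma}_{0sj}\|\lesssim\sqrt{n_s}$, which is absorbed into the error bound because $C_\sigma C_{sr,n_s}\gtrsim 1$. The stacked statements follow from $\|\mathbf{e}^{(j)}\|^{2}=\sum_{s=1}^{S}\|\mathbf{e}_s^{(j)}\|^{2}\lesssim\sum_{s=1}^{S}C_\sigma^{2}C_{sr,n_s}^{2}n_s\le C_\sigma^{2}C_{r,\{n_s\}_s^S}^{2}n$ (using $S=\mathcal O(1)$ and $\max_s C_{sr,n_s}\le C_{r,\{n_s\}_s^S}$), with the analogous identity for $\mathbf{y}^{(j)}$.

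The only delicate point is the balance between the Hanson--Wright tail and the union bound over the $p$ columns: to keep the stated bound free of an explicit $\sqrt{\log p}$ factor, the joint-asymptotic scaling $\log p=o(n_s)$ (up to polylogarithmic constants) is required. This scaling is implicit throughout the paper's double-asymptotic setup, so the statement as written should hold with probability $1-o(1)$.
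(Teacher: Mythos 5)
Your proposal is correct and follows essentially the same route as the paper: decompose each data column into signal plus noise, control the noise columns by a quadratic-form tail bound (the paper cites Laurent--Massart where you invoke Hanson--Wright, which is equivalent here) with a union bound over $j$, and control the signal terms via $\max_j\|\mathbf{\lambda}_{0j}\|\lesssim\|\mathbf{\Lambda}_0\|_\infty\sqrt{k_0}\asymp 1$. The only cosmetic difference is that the paper bounds the signal terms deterministically by $\|\mathbf{M}_{0s}\|\,\|\mathbf{\lambda}_{0j}\|$ using the operator-norm bound $\|\mathbf{M}_{0s}\|\asymp\sqrt{n_s}$ (a single high-probability event, no union bound over $j$ needed there), whereas you use a per-column $\chi^2$ tail plus union bound; your caveat that $\log p$ must be dominated by $n_s$ is accurate and is implied by the paper's standing asymptotic conditions.
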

\begin{proof}[Proof of Lemma \ref{lemma:y_j}]
Consider the following 
\begin{equation*}
    \begin{aligned}
        ||\mathbf{y}_s^{(j)}|| \leq ||\mathbf{M}_{0s}|| ||\mathbf{\lambda}_{0j}|| + ||\mathbf{F}_{0s}|| ||\mathbf{\gamma}_{0sj}|| + ||\mathbf{e}_s^{(j)}||.
    \end{aligned}
\end{equation*}
The first result follows from $|| \mathbf{M}_{0s} || \asymp ||\mathbf{F}_{0s}|| \asymp \sqrt{n_s}$ and $\max_{j=1, \dots, p} || \mathbf{e}_s^{(j)} || \lesssim C_{\sigma} C_{sr, n_s}  \sqrt{n_s}$ with probability at least $1-o(1)$, by Corollary 5.35 of \citet{vershynin_12} and Lemma 1 of \citet{laurent_massart}, and $\max_{j=1, \dots, p}  ||\mathbf{\lambda}_{0j}|| \leq ||{\mathbf{\Lambda}}_0 ||_\infty \sqrt{k_0} \asymp 1$, $\max_{j=1, \dots, p}  ||\mathbf{\gamma}_{0sj}|| \leq ||{\mathbf{\Gamma}}_s ||_\infty \sqrt{q_s} \asymp 1$ by Assumptions \ref{assumption:Lambda} and \ref{assumption:Gammas}.
We derive the analogous result for $y^{(j)}$ by noting that $|| \mathbf{M}_{0} || \asymp ||F_{0}|| \asymp \sqrt{n}$ and $\max_{j=1, \dots, p} || \mathbf{e}^{(j)} || \lesssim C_{\sigma} C_{r, \{n_s\}_{s}^{S}} \sqrt{n}$ with probability at least $1-o(1)$, by Corollary 5.35 of \citet{vershynin_12} and Lemma 1 of \citet{laurent_massart} respectively.
\end{proof}

\begin{lemma}\label{lemma:y_hat}
    Suppose Assumptions \ref{assumption:model}--\ref{assumption:sigma} hold. Then, with probability at least $1-o(1)$,
    \begin{equation*}
        \max_{j=1, \dots, p} ||\mathbf{\hat y}_s^{c(j)}|| \lesssim C_{\sigma} C_{sr, n_s}\sqrt{n_s}, \quad (s=1, \dots, S), \quad  
        \max_{j=1, \dots, p} ||\mathbf{\hat y}^{c(j)}|| \lesssim C_{\sigma} C_{r, \{n_s\}_{s}^{S}}\sqrt{n}.
    \end{equation*}
\end{lemma}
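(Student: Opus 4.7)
The plan is to observe that $\mathbf{\hat Y}_s^c$ is obtained from $\mathbf{Y}_s$ by pre-multiplying it by $\mathbf{I}_{n_s}-\mathbf{U}_s^\perp\mathbf{U}_s^{\perp\top}$, which is an orthogonal projection matrix with spectral norm at most $1$. Consequently, for every column index $j$,
\begin{equation*}
    \|\mathbf{\hat y}_s^{c(j)}\|
    \;=\; \|(\mathbf{I}_{n_s}-\mathbf{U}_s^\perp\mathbf{U}_s^{\perp\top})\mathbf{y}_s^{(j)}\|
    \;\leq\; \|\mathbf{y}_s^{(j)}\|.
\end{equation*}
Taking the maximum over $j$ and invoking the bound on $\max_j\|\mathbf{y}_s^{(j)}\|$ supplied by Lemma \ref{lemma:y_j} yields the first claim with probability at least $1-o(1)$. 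No extra concentration work is required here because the probabilistic content is entirely absorbed into Lemma \ref{lemma:y_j}.

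For the second claim, I would use the block structure $\mathbf{\hat y}^{c(j)}=[\mathbf{\hat y}_1^{c(j)\top}\ \cdots\ \mathbf{\hat y}_S^{c(j)\top}]^\top$, so that $\|\mathbf{\hat y}^{c(j)}\|^2=\sum_{s=1}^S\|\mathbf{\hat y}_s^{c(j)}\|^2\leq\sum_{s=1}^S\|\mathbf{y}_s^{(j)}\|^2=\|\mathbf{y}^{(j)}\|^2$. The pooled bound $\max_j\|\mathbf{y}^{(j)}\|\lesssim C_\sigma C_{r,\{n_s\}_s^S}\sqrt{n}$ from Lemma \ref{lemma:y_j} then closes the argument.

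There is no real obstacle: both claims reduce, via the non-expansiveness of the orthogonal projector $\mathbf{I}_{n_s}-\mathbf{U}_s^\perp\mathbf{U}_s^{\perp\top}$, to the already established bounds on $\|\mathbf{y}_s^{(j)}\|$ and $\|\mathbf{y}^{(j)}\|$ in Lemma \ref{lemma:y_j}. The proof is essentially a one-line deterministic inequality composed with a previously derived probabilistic bound.
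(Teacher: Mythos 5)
Your proof is correct and follows essentially the same route as the paper: both rely on the non-expansiveness of the orthogonal projector $\mathbf{I}_{n_s}-\mathbf{U}_s^\perp\mathbf{U}_s^{\perp\top}$ together with the bounds on $\max_j\|\mathbf{y}_s^{(j)}\|$ and $\max_j\|\mathbf{y}^{(j)}\|$ from Lemma \ref{lemma:y_j}. Your explicit block-decomposition argument for the concatenated vector $\mathbf{\hat y}^{c(j)}$ is in fact slightly more complete than the paper's terse treatment of the second claim.
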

\begin{proof}[Proof of Lemma \ref{lemma:y_hat}]
    Recall that $\mathbf{\hat y}_s^{c (j)} = (\mathbf{I}_{n_s} - \mathbf{U}_s^{\perp}\mathbf{U}_s^{\perp \top})\mathbf{y}_s^{(j)}$. The first result follows from
    \begin{equation*}
        ||\mathbf{\hat y}_s^{c (j)} || = ||(\mathbf{I}_{n_s} - \mathbf{U}_s^{\perp}\mathbf{U}_s^{\perp \top})\mathbf{y}_s^{(j)} || \leq ||\mathbf{y}_s^{(j)}||
    \end{equation*}
    and Lemma \ref{lemma:y_j}. 
    For the second result follows.
\end{proof}

\begin{lemma}\label{lemma:Y_s}
Suppose Assumptions \ref{assumption:model}--\ref{assumption:sigma} hold. Then, with probability at least $1-o(1)$,
    \begin{equation*}
    \begin{aligned}
        ||\mathbf{Y}_s|| \lesssim \sqrt{n_s p}, 
        \quad (s=1, \dots, S), \quad 
        ||\mathbf{Y}|| \lesssim \sqrt{n p}. 
    \end{aligned}
    \end{equation*}
    The same result holds Replacing Assumptions \ref{assumption:distributions} and \ref{assumption:sigma} with Assumptions \ref{ass:bai-mean}--\ref{ass:sv_lf}.
\end{lemma}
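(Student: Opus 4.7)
The plan is to apply the triangle inequality to the model decomposition $\mathbf{Y}_s = \mathbf{M}_{0s}\mathbf{\Lambda}_0^\top + \mathbf{F}_{0s}\mathbf{\Gamma}_{0s}^\top + \mathbf{E}_s$, and control each of the three terms separately using the assumptions on loading matrices, known concentration bounds on the latent factor matrices, and the noise bound already established in Lemma \ref{lemma:E}. Since the three terms are dominated by the signal contribution of order $\sqrt{n_s p}$, the noise term is of lower order, and the sum remains of the target order.

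First I would write $\|\mathbf{Y}_s\| \leq \|\mathbf{M}_{0s}\|\,\|\mathbf{\Lambda}_0\| + \|\mathbf{F}_{0s}\|\,\|\mathbf{\Gamma}_{0s}\| + \|\mathbf{E}_s\|$. Assumption \ref{assumption:Lambda} gives $\|\mathbf{\Lambda}_0\|\asymp\sqrt{p}$, and Assumption \ref{assumption:Gammas} implies $\|\mathbf{\Gamma}_{0s}\|\lesssim\sqrt{p}$. Under Assumption \ref{assumption:distributions}, Corollary 5.35 of \citet{vershynin_12} yields $\|\mathbf{M}_{0s}\|\lesssim\sqrt{n_s}$ and $\|\mathbf{F}_{0s}\|\lesssim\sqrt{n_s}$ with probability at least $1-o(1)$; under the moment-based assumptions, the same rate is provided directly by Assumption \ref{ass:sv_lf}. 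Lemma \ref{lemma:E} gives $\|\mathbf{E}_s\|\lesssim C_{sr,n_s}C_{sc,p}(\sqrt{n_s}+\sqrt{p})$ in the Gaussian case and $\|\mathbf{E}_s\|\lesssim \sqrt{n_s}p^{1/4} + n_s^{1/4}\sqrt{p}$ under the moment-based assumptions. In either regime, the noise bound is dominated by $\sqrt{n_s p}$, so combining yields $\|\mathbf{Y}_s\|\lesssim \sqrt{n_s p}$ with probability at least $1-o(1)$.

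For the global matrix $\mathbf{Y}=[\mathbf{Y}_1^\top \cdots \mathbf{Y}_S^\top]^\top$, I would use the identity $\|\mathbf{Y}\|^2 = \|\mathbf{Y}^\top\mathbf{Y}\| = \|\sum_{s=1}^S \mathbf{Y}_s^\top\mathbf{Y}_s\| \leq \sum_{s=1}^S \|\mathbf{Y}_s\|^2 \lesssim \sum_{s=1}^S n_s p = np$, which immediately gives $\|\mathbf{Y}\|\lesssim\sqrt{np}$ on the intersection of the $S$ (fixed-in-number) high-probability events. Alternatively, one could apply the same triangle-inequality argument directly to the stacked representation $\mathbf{Y} = \mathbf{M}_0\mathbf{\Lambda}_0^\top + \mathrm{blkdiag}(\mathbf{F}_{0s})\,\mathrm{blkdiag}(\mathbf{\Gamma}_{0s})^\top + \mathbf{E}$, with the same component bounds.

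There is no real obstacle here: every constituent bound is already available in the preceding lemmas or cited concentration results, and the proof amounts to checking that the signal contribution $\sqrt{n_s p}$ dominates the noise contribution in both the distributional and moment-based regimes. The only minor bookkeeping is ensuring that the $S$ high-probability events are intersected, which is trivial since $S$ is fixed by Assumption \ref{assumption:model}.
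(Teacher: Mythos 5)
Your proposal is correct and follows essentially the same route as the paper: triangle inequality on the model decomposition, spectral-norm concentration for the Gaussian factor matrices (the paper cites Theorem 4.6.1 of \citet{vershynin2018hdp}, an equivalent result to the one you invoke), the norm bounds on the loadings from Assumptions \ref{assumption:Lambda}--\ref{assumption:Gammas}, and Lemma \ref{lemma:E} for the noise. Your explicit treatment of the stacked matrix via $\|\mathbf{Y}\|^2 \le \sum_s \|\mathbf{Y}_s\|^2$ is a small addition the paper leaves implicit, but it changes nothing substantive.
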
 
\begin{proof}[Proof of Lemma \ref{lemma:Y_s}]
First, note $||\mathbf{Y}_s|| \leq ||\mathbf{M}_{0s} {\mathbf{\Lambda}}_0^\top || + ||\mathbf{F}_{0s} {\mathbf{\Gamma}}_s{0s}^\top || + ||\mathbf{E}_s||$. 
 By Theorem 4.6.1 of \citet{vershynin2018hdp}, we have $|| \mathbf{M}_{0s}|| \asymp ||\mathbf{F}_{0s}|| \asymp \sqrt{n_s}$ with probability at least $1-o(1)$, and Assumptions \ref{assumption:Lambda} and \ref{assumption:Gammas} imply $||{\mathbf{\Lambda}}_{0}|| \asymp||{\mathbf{\Gamma}}_{0s}|| \asymp \sqrt{p}$. Therefore, we have $||\mathbf{M}_{0s}{\mathbf{\Lambda}}_0^\top + \mathbf{F}_{0s} {\mathbf{\Gamma}}_{0s}^\top|| \lesssim \sqrt{n_s p}$, with probability at least $1-o(1)$. Lemma \ref{lemma:E} provides a probabilistic bound on $||\mathbf{E}_s||$. 
\end{proof}
\begin{lemma}\label{lemma:Mt_P_M}
Let $\mathbf{P}_0^\perp$ be the matrix defined in \eqref{eq:P_0_perp}, $M \in \mathbb R^{n \times k_0}$ be a matrix of independent standard Gaussian random variables and $e_j$ be the $j$-th column of an $n \times p$ matrix of zero-mean sub-Gaussian random variables with bounded variances and independent rows. 
Then, with probability at least $1 - o(1)$, we have 
$$||\mathbf{P}_0^\perp  M||\asymp || M^\top  \mathbf{P}_0^\perp  M||\asymp 1, \text{ and } \max_{j=1, \dots, p}  ||\mathbf{P}_0^\perp  e_j|| \lesssim  \max_{j=1, \dots, p}||e_j^\top \mathbf{P}_0^\perp e_j|| \asymp \max_{j=1, \dots, p}|| M^\top  \mathbf{P}_0^\perp  e_j|| \lesssim \log p.$$
\end{lemma}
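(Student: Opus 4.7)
The plan rests on the key structural observation that $\mathbf{P}_0^\perp = \mathbf{N}_0^\perp \mathbf{N}_0^{\perp\top}$ is an orthogonal projection whose rank $r = \sum_{s=1}^S q_s$ is \emph{fixed} (independent of $n$ and $p$), since $S$ is fixed by Assumption \ref{assumption:model} and the $q_s$'s are constants. Writing $\mathbf{N}_0^\perp$ for the $n \times r$ matrix of orthonormal columns given in \eqref{eq:N_0_perp}, we have the identities
\begin{equation*}
\|\mathbf{P}_0^\perp M\| = \|\mathbf{N}_0^{\perp\top} M\|, \quad \|M^\top \mathbf{P}_0^\perp M\| = \|\mathbf{N}_0^{\perp\top} M\|^2, \quad \|\mathbf{P}_0^\perp e_j\|^2 = e_j^\top \mathbf{P}_0^\perp e_j = \|\mathbf{N}_0^{\perp\top} e_j\|^2,
\end{equation*}
reducing each bound to a calculation involving an $r$-dimensional projection of the Gaussian or sub-Gaussian input. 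Throughout the proof, I would condition on $\mathbf{N}_0^\perp$ using the independence of $M$ and the $e_j$'s from the factor matrices $\{\mathbf{F}_{0s}\}_s$ that define $\mathbf{N}_0^\perp$.

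For the Gaussian part, conditional on $\mathbf{N}_0^\perp$, rotational invariance of $M$ yields $\mathbf{N}_0^{\perp\top} M \in \mathbb{R}^{r \times k_0}$ with i.i.d.\ standard Gaussian entries. Since both $r$ and $k_0$ are constants, Theorem 4.6.1 of \citet{vershynin2018hdp} gives $\|\mathbf{N}_0^{\perp\top} M\| \asymp 1$ with probability at least $1 - o(1)$, settling $\|\mathbf{P}_0^\perp M\| \asymp \|M^\top \mathbf{P}_0^\perp M\| \asymp 1$. For the sub-Gaussian part, $\|\mathbf{N}_0^{\perp\top} e_j\|^2$ is a quadratic form in a sub-Gaussian vector with kernel of Frobenius norm $\sqrt{r}$ and spectral norm $1$; Hanson–Wright (e.g.\ Theorem 6.2.1 in \citet{vershynin2018hdp}) and its mean of order $r \lesssim 1$ imply $\mathbb{P}(\|\mathbf{N}_0^{\perp\top} e_j\|^2 > t) \lesssim e^{-c t}$ for $t \gtrsim 1$. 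A union bound over $j = 1, \dots, p$ with $t \asymp \log p$ then yields $\max_j \|\mathbf{N}_0^{\perp\top} e_j\|^2 \lesssim \log p$, which gives simultaneously $\max_j e_j^\top \mathbf{P}_0^\perp e_j \lesssim \log p$ and $\max_j \|\mathbf{P}_0^\perp e_j\| \lesssim \sqrt{\log p} \leq \log p$.

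The mixed term follows from the Cauchy–Schwarz-style bound
\begin{equation*}
\|M^\top \mathbf{P}_0^\perp e_j\| = \|(\mathbf{N}_0^{\perp\top} M)^\top (\mathbf{N}_0^{\perp\top} e_j)\| \leq \|\mathbf{N}_0^{\perp\top} M\| \cdot \|\mathbf{N}_0^{\perp\top} e_j\| \lesssim 1 \cdot \sqrt{\log p},
\end{equation*}
on the intersection of the high-probability events above, hence $\max_j \|M^\top \mathbf{P}_0^\perp e_j\| \lesssim \log p$. For the matching lower bound needed for $\asymp$, note that $\|M^\top \mathbf{P}_0^\perp e_j\|$ is, conditional on $\mathbf{N}_0^{\perp\top} e_j$, a Gaussian vector with covariance $\|\mathbf{N}_0^{\perp\top} e_j\|^2 \mathbf{I}_{k_0}$, so its norm concentrates around $\sqrt{k_0} \|\mathbf{N}_0^{\perp\top} e_j\| \asymp \|\mathbf{N}_0^{\perp\top} e_j\|$, which for the maximizing $j$ matches $\sqrt{\max_j e_j^\top \mathbf{P}_0^\perp e_j}$ up to constants.

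The main obstacle is the correct handling of the Hanson–Wright tail and the union bound: the statement needs both the $\asymp$ relations and the $\lesssim \log p$ bound to hold simultaneously across all $p$ columns, which forces us to track explicit tail probabilities rather than relying on pointwise bounds. Everything else is standard once the fixed rank of $\mathbf{P}_0^\perp$ is exploited.
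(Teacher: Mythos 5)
Your proposal is correct and follows essentially the same route as the paper, whose entire proof is a citation of Theorem 4.6.1 of Vershynin (2018) for the Gaussian terms and the Hanson--Wright inequality (Rudelson--Vershynin, Theorem 2.1) for the quadratic forms in $e_j$ --- exactly the two tools you invoke after reducing everything to the fixed-rank projection $\mathbf{N}_0^{\perp\top}$. Your write-up simply makes explicit the conditioning on $\mathbf{N}_0^\perp$, the union bound over $j$, and the tail regime of Hanson--Wright, which the paper leaves implicit.
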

\begin{proof}[Proof of Lemma \ref{lemma:Mt_P_M}]
   The results follow from Theorem 4.6.1 of \citet{vershynin2018hdp} and Theorem 2.1 of \citet{rudelson_13}.
\end{proof}

\begin{lemma}\label{lemma:U_c_e_j}
    Let $\mathbf{U}_0^c \in \mathbb R^{n \times k_0}$ be the matrix of left singular vectors associated to the $k_0$ leading singular values of $\mathbf{M}_0$. Then, we have 
    \begin{equation*}
       \max_{j=1, \dots, p} ||\mathbf{U}_0^{c \top} \mathbf e^{(j)} || \lesssim \log p
    \end{equation*}
    with probability at least $1-o(1)$.
\end{lemma}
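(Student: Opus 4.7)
The plan is to exploit the independence between $\mathbf{M}_0$ and $\mathbf{E}$ guaranteed by Assumption \ref{assumption:distributions}, condition on $\mathbf{M}_0$, and then bound each of the $p$ quantities $\|\mathbf{U}_0^{c\top}\mathbf{e}^{(j)}\|$ via a sub-Gaussian concentration argument, closing with a union bound.

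First I would observe that, conditionally on $\mathbf{M}_0$, the matrix $\mathbf{U}_0^c \in \mathbb{R}^{n\times k_0}$ is deterministic with orthonormal columns, so $\|\mathbf{U}_0^c\|=1$ and $\|\mathbf{U}_0^c\|_F^2=k_0$. Meanwhile, for each $j$, the column $\mathbf{e}^{(j)} \in \mathbb{R}^{n}$ is sub-Gaussian and, owing to the block structure across studies together with Assumption \ref{assumption:sigma}, has covariance $\Sigma_j$ that is block diagonal with $s$-th block $\sigma_{0scjj}\mathbf{\Sigma}_{0sr}$, hence $\|\Sigma_j\|\lesssim C_\sigma^2 (\log n)^c$.

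Next, for every fixed unit vector $\mathbf{v}\in\mathbb{R}^{k_0}$, the scalar $\mathbf{v}^\top \mathbf{U}_0^{c\top}\mathbf{e}^{(j)} = (\mathbf{U}_0^c\mathbf{v})^\top\mathbf{e}^{(j)}$ is sub-Gaussian with parameter bounded by a constant multiple of $\|\Sigma_j^{1/2}\mathbf{U}_0^c\mathbf{v}\|\le \|\Sigma_j\|^{1/2}\|\mathbf{U}_0^c\mathbf{v}\| = \|\Sigma_j\|^{1/2}$. A standard $1/2$-net argument over the unit sphere of $\mathbb{R}^{k_0}$ (whose cardinality is at most $5^{k_0}$) then yields
\begin{equation*}
\|\mathbf{U}_0^{c\top}\mathbf{e}^{(j)}\| \;\lesssim\; \|\Sigma_j\|^{1/2}\bigl(\sqrt{k_0} + t\bigr)
\end{equation*}
with conditional probability at least $1 - 2\cdot 5^{k_0}e^{-ct^2}$ for any $t>0$. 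Taking $t \asymp \sqrt{\log p}$ and applying a union bound over $j=1,\dots,p$ (which costs only another $\log p$ factor in the exponent) gives
\begin{equation*}
\max_{1\le j\le p}\|\mathbf{U}_0^{c\top}\mathbf{e}^{(j)}\| \;\lesssim\; (\log n)^{c/2}\bigl(\sqrt{k_0} + \sqrt{\log p}\bigr) \;\lesssim\; \log p
\end{equation*}
with probability $1-o(1)$, since $k_0$ is fixed and the polylog in $n$ is absorbed into the $\log p$ bound stated in the lemma. Finally, the conditional bound passes to an unconditional one by integrating over $\mathbf{M}_0$ (the right-hand side is non-random).

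The main conceptual step is the use of independence of $\mathbf{M}_0$ and $\mathbf{E}$, which converts the problem into a concentration bound for a linear functional of a sub-Gaussian vector along a deterministic low-dimensional subspace. The only real subtlety is that $\mathbf{e}^{(j)}$ is not isotropic because of the row covariances $\mathbf{\Sigma}_{0sr}$, but Assumption \ref{assumption:sigma} controls their operator norms at polylogarithmic scale, which is harmless at the target rate $\log p$.
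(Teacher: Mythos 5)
Your argument is correct, and it reaches the stated bound (in fact the sharper $\sqrt{\log p}$ up to polylogs) by a route that differs in packaging from the paper's. The paper's proof is a one-line appeal to the Hanson--Wright inequality (Theorem 2.1 of \citet{rudelson_13}): it treats $\|\mathbf{U}_0^{c\top}\mathbf{e}^{(j)}\|^2$ as the quadratic form $\mathbf{e}^{(j)\top}\mathbf{U}_0^c\mathbf{U}_0^{c\top}\mathbf{e}^{(j)}$, whose matrix has unit operator norm and Frobenius norm squared $k_0$, and then union-bounds over $j$. You instead condition on $\mathbf{M}_0$ (using its independence from $\mathbf{E}$ under Assumption \ref{assumption:distributions}), bound the norm directly via a $1/2$-net over the $k_0$-sphere and one-dimensional sub-Gaussian tails, and union-bound over $j$; the two are essentially equivalent standard concentration arguments for the projection of a sub-Gaussian vector onto a fixed $k_0$-dimensional subspace. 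What your version buys is an explicit treatment of the non-isotropic row covariance via the whitening $\mathbf{e}^{(j)}=\Sigma_j^{1/2}g$, which the paper's citation glosses over (vanilla Hanson--Wright wants independent entries). One caveat on your last step: the absorption of the $(\log n)^{c/2}$ factor into $\log p$ is not automatic for arbitrary fixed $c$ unless $\log n \lesssim \log p$ or, more simply, $\mathbf{\Sigma}_{0sr}=\mathbf{I}_{n_s}$; the latter is exactly Assumption \ref{assumption:no_unit_dependence}, which is in force wherever this lemma is invoked (e.g.\ in the proof of Theorem \ref{thm:clt_mu_Lambda_outer}), so your bound holds as needed, but it would be cleaner to state that you are using that assumption rather than claiming the polylog is harmless in general.
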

\begin{proof}[Proof of Lemma \ref{lemma:U_c_e_j}]
The result follows from Theorem 2.1 of \citet{rudelson_13}. 
\end{proof}

\begin{lemma}\label{lemma:U_c_F}
     Let $\mathbf{U}_0^c= \big[\mathbf{U}_{01}^{c\top} ~ \cdots ~   \mathbf{U}_{0S}^{c\top}\big]^\top  \in \mathbb R^{n \times k_0}$, where $\mathbf{U}_{0s}^{c} \in \mathbb R^{n_s \times k_0}$ is the matrix of  left singular vectors associated to the $k_0$ leading singular values of $\mathbf{M}_0$ and $\mathbf{F} \in \mathbb R^{n_s \times q_s}$ is a matrix with independent standard Gaussian entries. Then, with probability at least $1-o(1)$, we have
\begin{equation*}
    ||\mathbf{F}^\top \mathbf{U}_{0s}^{c} || \lesssim 1.
\end{equation*}
\end{lemma}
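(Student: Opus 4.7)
The plan is to exploit the independence of $\mathbf{F}$ from $\mathbf{U}_0^c$ together with the fact that $\mathbf{U}_{0s}^c$ is a submatrix of an orthonormal matrix, so that $\mathbf{F}^\top \mathbf{U}_{0s}^c$ reduces to a Gaussian matrix of fixed dimension $q_s \times k_0$ with well-controlled covariance.

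First, I would observe that since $\mathbf{U}_0^c$ collects left singular vectors of $\mathbf{M}_0$, its columns are orthonormal, so $\mathbf{U}_0^{c\top}\mathbf{U}_0^c = \sum_{s=1}^S \mathbf{U}_{0s}^{c\top}\mathbf{U}_{0s}^c = \mathbf{I}_{k_0}$. This immediately gives $\|\mathbf{U}_{0s}^{c\top}\mathbf{U}_{0s}^c\| \leq 1$, and in particular every column of $\mathbf{U}_{0s}^c$ has Euclidean norm at most $1$.

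Next, I would invoke Assumption \ref{assumption:distributions}, which guarantees that the Gaussian matrix $\mathbf{F}$ (in practice $\mathbf{F}_{0s}$) is independent of $\mathbf{M}_0$ and hence of $\mathbf{U}_{0s}^c$. Conditioning on $\mathbf{U}_{0s}^c$, the rows of $\mathbf{F}^\top \mathbf{U}_{0s}^c \in \mathbb{R}^{q_s \times k_0}$ are independent draws from $N_{k_0}(0, \mathbf{U}_{0s}^{c\top}\mathbf{U}_{0s}^c)$. Writing $\mathbf{F}^\top \mathbf{U}_{0s}^c = \mathbf{G}\,(\mathbf{U}_{0s}^{c\top}\mathbf{U}_{0s}^c)^{1/2}$ for a standard Gaussian matrix $\mathbf{G} \in \mathbb{R}^{q_s \times k_0}$, we obtain $\|\mathbf{F}^\top \mathbf{U}_{0s}^c\| \leq \|\mathbf{U}_{0s}^{c\top}\mathbf{U}_{0s}^c\|^{1/2}\,\|\mathbf{G}\|$.

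Finally, since $q_s$ and $k_0$ are fixed finite constants independent of $n$ and $p$, Theorem 4.6.1 of \citet{vershynin2018hdp} (or simply a bound on the Frobenius norm of $\mathbf{G}$, whose $q_s k_0$ entries are standard Gaussians) yields $\|\mathbf{G}\| \lesssim 1$ with probability at least $1 - o(1)$. Combining with $\|\mathbf{U}_{0s}^{c\top}\mathbf{U}_{0s}^c\|^{1/2} \leq 1$ gives the stated bound. No real obstacle is anticipated; the lemma is essentially a conditioning argument, and the only mild care required is to note that independence of $\mathbf{F}_{0s}$ and $\mathbf{M}_0$ (hence $\mathbf{U}_{0s}^c$) is what allows us to treat $\mathbf{U}_{0s}^c$ as deterministic when applying Gaussian concentration.
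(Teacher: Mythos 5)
Your proposal is correct and follows essentially the same route as the paper's proof: both identify $\mathbf{F}^\top \mathbf{U}_{0s}^c$ (conditionally on $\mathbf{M}_0$) as a matrix normal with row covariance $\mathbf{I}_{q_s}$ and column covariance $\mathbf{U}_{0s}^{c\top}\mathbf{U}_{0s}^c \preceq \mathbf{I}_{k_0}$, and then apply Theorem 4.6.1 of \citet{vershynin2018hdp} with the fixed dimensions $q_s, k_0$ to conclude. Your explicit factorization $\mathbf{G}(\mathbf{U}_{0s}^{c\top}\mathbf{U}_{0s}^c)^{1/2}$ and the remark on independence of $\mathbf{F}_{0s}$ from $\mathbf{M}_0$ simply make transparent what the paper leaves implicit.
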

\begin{proof}[Proof of Lemma \ref{lemma:U_c_F}]
    Note $  \mathbf{F}^\top \mathbf{U}_{0s}^{c} \sim MN_{ q_s, k_0}(0, \mathbf{I}_{q_s}, \mathbf{U}_{0s}^{c \top }\mathbf{U}_{0s}^c)$. Recall that $\mathbf{U}_{0}^{c \top }\mathbf{U}_0^c = \mathbf{I}_{ k_0}$ and $\mathbf{U}_{0s}^{c \top }\mathbf{U}_{0s}^c = \mathbf{U}_{0}^{c \top }\mathbf{U}_0^c - \mathbf{U}_{0-s}^{c \top }\mathbf{U}_{0-s}^c$, where $\mathbf{U}_{0-s}$ denotes the matrix obtained by removing the block $\mathbf{U}_{0s}$ from $\mathbf{U}_{0}$, which implies $ \mathbf{U}_{0s}^{c \top }\mathbf{U}_{0s}^c \preceq \mathbf{I}_{ k_0}$ and, consequently, by Theorem 4.6.1 of \citet{vershynin2018hdp}, $|| \mathbf{F}^\top \mathbf{U}_{0s}^{c} || \lesssim \sqrt{k_0} + \sqrt{q_s} $ with probability $1- o(1)$.
\end{proof}

\begin{lemma}\label{lemma:U_perp_M}
     Let $U \in \mathbb R^{n \times q}$ be such that $U^\top U = \mathbf{I}_{q}$, and $M \in\mathbb R^{n \times k}$ be a matrix with independent standard Gaussian entries. Then, with probability at least $1-o(1)$, we have
\begin{equation*}
    ||M^\top U || \lesssim \sqrt{k} + \sqrt{p}.
\end{equation*}
\end{lemma}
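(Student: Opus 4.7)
\textbf{Proof proposal for Lemma \ref{lemma:U_perp_M}.} The plan is to exploit the rotational invariance of the standard Gaussian to reduce the statement to a standard concentration bound for the operator norm of a Gaussian matrix, noting that the bound on the right-hand side should read $\sqrt{k}+\sqrt{q}$ (a likely typo, since $p$ does not appear on the left).

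First, I would observe that $M^\top U \in \mathbb R^{k \times q}$ is Gaussian with a simple covariance structure: viewing each row of $M^\top U$ as $U^\top M_{\cdot i}$, where $M_{\cdot i} \sim N_n(0, \mathbf I_n)$ is the $i$-th column of $M$, the assumption $U^\top U = \mathbf I_q$ gives $U^\top M_{\cdot i} \sim N_q(0, \mathbf I_q)$. Independence across $i=1,\dots,k$ is preserved because the columns of $M$ are independent. Hence the rows of $M^\top U$ are i.i.d.\ $N_q(0,\mathbf I_q)$, so $M^\top U$ is equal in distribution to a $k\times q$ matrix with i.i.d.\ standard Gaussian entries.

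Given this distributional equality, I would conclude by invoking Theorem 4.6.1 of \citet{vershynin2018hdp} (the same reference used repeatedly elsewhere in the paper), which states that the operator norm of a $k\times q$ matrix of i.i.d.\ standard Gaussians is $\lesssim \sqrt{k}+\sqrt{q}$ with probability $1-o(1)$.

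No step is a real obstacle here: the entire argument is one line of rotational invariance followed by a textbook citation. The only point worth being careful about is that the equality-in-distribution is in law of the whole matrix, not merely column-by-column, which is why I want to argue via the rows $U^\top M_{\cdot i}$ and their independence across $i$ rather than try to reason column-by-column on $M^\top U$.
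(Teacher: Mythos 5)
Your proposal is correct and matches the paper's proof, which likewise notes that $M^\top U \sim MN_{k,q}(0,\mathbf{I}_k,\mathbf{I}_q)$ (i.e.\ has i.i.d.\ standard Gaussian entries) and then cites Theorem 4.6.1 of \citet{vershynin2018hdp}; you simply spell out the rotational-invariance step more explicitly. Your observation that the bound should read $\sqrt{k}+\sqrt{q}$ rather than $\sqrt{k}+\sqrt{p}$ is also right --- that is a typo in the statement.
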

\begin{proof}[Proof of Lemma \ref{lemma:U_perp_M}]
    It is enough to note that $M^\top U \sim MN_{k, q}(0, \mathbf{I}_k, \mathbf{I}_q)$ and apply Theorem 4.6.1 of \citet{vershynin2018hdp}.
\end{proof}

\begin{lemma}\label{lemma:Delta_y_j}
    Let $\Delta \in \mathbb R^{n \times n}$ be the matrix defined in \eqref{eq:Delta}, then under Assumptions \ref{assumption:model}--\ref{assumption:sigma}, we have
    $$
    \max_{j=1, \dots, p} ||\mathbf{\Delta} y^{(j)}|| \lesssim C_{\sigma} C_{r, \{n_s\}_{s}^{S}}\big( \frac{1}{\sqrt{n_{\min}}} + \frac{\sqrt{n_{\max}}}{p} \big), 
    $$ 
    $$
     ||\mathbf{\Delta} \mathbf{Y}|| \lesssim (C_{r, \{n_s\}_{s}^{S}} C_{c, p}   )^2 \big(\frac{\sqrt{p}}{\sqrt{n_{\min}}} + \frac{\sqrt{n_{\max}}}{\sqrt{p}}\big),
    $$
    and
    $$
     \max_{j=1, \dots, p} ||\mathbf{e}^{(j)\top }\mathbf{\Delta} \mathbf y^{(j)}|| \lesssim C_{\sigma}^2 C_{r, \{n_s\}_{s}^{S}}^4 C_{c, p}^2 \big(1 + \frac{n_{\max}}{p}\big), 
    $$
    with probability at least $1 -o(1)$, where $n_{\max} = \max_{s=1, \dots, S} n_s$ and $n_{\min}= \min_{s=1, \dots, S} n_s$. 
    Replacing Assumptions \ref{assumption:distributions} and \ref{assumption:sigma} with Assumptions \ref{ass:bai-mean}--\ref{ass:sv_lf},
 with probability at least $1-o(1)$, we have 
 $$
     ||\mathbf{\Delta} \mathbf{Y}|| \lesssim \sqrt{n} + \sqrt{p}.
    $$
\end{lemma}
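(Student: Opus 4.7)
The proof plan rests on the block-diagonal structure of $\Delta$ defined in \eqref{eq:Delta}, whose $s$-th block equals $\mathbf{U}_{0s}^\perp \mathbf{U}_{0s}^{\perp \top}- \mathbf{U}_s^\perp \mathbf{U}_s^{\perp \top}$. Because $S$ is fixed, a union bound applied to Proposition \ref{prop:U_s_outer} across the finite index set $s = 1, \ldots, S$ provides, with probability at least $1-o(1)$, the simultaneous bound
\begin{equation*}
\left|\left|\mathbf{U}_{0s}^\perp \mathbf{U}_{0s}^{\perp \top}- \mathbf{U}_s^\perp \mathbf{U}_s^{\perp \top}\right|\right| \lesssim (C_{sr,n_s}C_{sc,p})^2 \left(\frac{1}{n_s} + \frac{1}{p}\right), \quad (s=1,\dots, S).
\end{equation*}
Under the moment-based assumptions, the second part of Proposition \ref{prop:U_s_outer} delivers the corresponding $\frac{1}{\sqrt{n_s}} + \frac{1}{\sqrt{p}}$ bound.

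For the first claim, I would decompose $\mathbf{\Delta} y^{(j)}$ block-wise as
\begin{equation*}
\left|\left|\mathbf{\Delta} y^{(j)}\right|\right|^2 = \sum_{s=1}^S \left|\left|\big(\mathbf{U}_{0s}^\perp \mathbf{U}_{0s}^{\perp \top} - \mathbf{U}_s^\perp \mathbf{U}_s^{\perp \top}\big) y_s^{(j)}\right|\right|^2 \leq \sum_{s=1}^S \left|\left|\mathbf{U}_{0s}^\perp \mathbf{U}_{0s}^{\perp \top} - \mathbf{U}_s^\perp \mathbf{U}_s^{\perp \top}\right|\right|^2 \left|\left|y_s^{(j)}\right|\right|^2,
\end{equation*}
and then invoke Lemma \ref{lemma:y_j} with a union bound over $j \leq p$ (sub-Gaussianity ensures the $\log p$ factor only inflates constants). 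Each summand is at most $(C_{sr,n_s}C_{sc,p})^4(1/n_s + 1/p)^2 \cdot C_\sigma^2 C_{sr,n_s}^2 n_s$, and applying $n_s = \mathcal O(n_{\min}^2)$ together with the definition of $C_{r,\{n_s\}_s^S}$ (summing the study-level envelopes) collapses the right-hand side to the stated $C_\sigma^2 C_{r,\{n_s\}_s^S}^2 (1/n_{\min} + n_{\max}/p^2)$ scaling after taking square roots.

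For the second claim, the same block decomposition gives $\left|\left|\mathbf{\Delta}\mathbf{Y}\right|\right|^2 \leq \sum_s \left|\left|\mathbf{U}_{0s}^\perp \mathbf{U}_{0s}^{\perp \top} - \mathbf{U}_s^\perp \mathbf{U}_s^{\perp \top}\right|\right|^2 \left|\left|\mathbf{Y}_s\right|\right|^2$, and combining with the $\left|\left|\mathbf{Y}_s\right|\right| \lesssim \sqrt{n_s p}$ bound from Lemma \ref{lemma:Y_s} yields, after reindexing with $n_s = \mathcal O(n_{\min}^2)$, the claimed $(C_{r,\{n_s\}_s^S} C_{c,p})^2\big(\sqrt{p/n_{\min}} + \sqrt{n_{\max}/p}\big)$ rate. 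The moment-based version follows identically using the analog of Proposition \ref{prop:U_s_outer} and the generalized $\|\mathbf{Y}\|$ bound from Lemma \ref{lemma:Y_s}. Finally, the third claim follows immediately by Cauchy–Schwarz: $|\mathbf{e}^{(j)\top}\mathbf{\Delta} y^{(j)}| \leq \|\mathbf{e}^{(j)}\| \cdot \|\mathbf{\Delta} y^{(j)}\|$, then apply Lemma \ref{lemma:y_j} to bound $\max_j \|\mathbf{e}^{(j)}\| \lesssim C_\sigma C_{r,\{n_s\}_s^S} \sqrt{n}$ and multiply by the already-established first bound, using $n \asymp n_{\max}$ (since $S$ is fixed) to reach the stated form.

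The main technical obstacle is bookkeeping of the constants rather than any delicate probabilistic argument: specifically, keeping track of how the envelope constants $C_{sr,n_s}$ and $C_{sc,p}$ aggregate across studies, translating between $\max_s n_s$, $\min_s n_s$, and $n$ using $n_s = \mathcal O(n_{\min}^2)$, and ensuring that the union bound over the $p$ columns (needed to pass from per-column concentration to a uniform $\max_j$ statement) only introduces logarithmic factors that are absorbed into the envelope constants. No genuinely new probabilistic inequalities beyond those already invoked in Propositions \ref{prop:U_s_outer} and Lemmas \ref{lemma:y_j}--\ref{lemma:Y_s} are required.
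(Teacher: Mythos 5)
Your block decomposition and the bounds for the first two claims follow essentially the same route as the paper: bound each block of $\mathbf{\Delta}$ by Proposition \ref{prop:U_s_outer}, pair it with $\|\mathbf{y}_s^{(j)}\|$ from Lemma \ref{lemma:y_j} (or $\|\mathbf{Y}_s\|$ from Lemma \ref{lemma:Y_s}), and sum over the finitely many studies. (Two small remarks: no union bound over $j$ is actually needed, since Lemma \ref{lemma:y_j} is already uniform in $j$ and the projection-difference bound does not depend on $j$; and your constant bookkeeping for the first claim lands on $C_{\sigma}C_{r,\{n_s\}_s^S}$ where the computation actually produces $C_{\sigma}C_{r,\{n_s\}_s^S}^{3}C_{c,p}^{2}$ — the same discrepancy is present between the paper's statement and its own proof, so this is not your error.)

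The genuine gap is in the third claim. You apply Cauchy--Schwarz globally, $|\mathbf{e}^{(j)\top}\mathbf{\Delta}\mathbf{y}^{(j)}|\le \|\mathbf{e}^{(j)}\|\,\|\mathbf{\Delta}\mathbf{y}^{(j)}\|$, and then multiply $\max_j\|\mathbf{e}^{(j)}\|\lesssim C_{\sigma}C_{r,\{n_s\}_s^S}\sqrt{n}$ by the first bound. With $n\asymp n_{\max}$ this yields, up to envelope constants,
\begin{equation*}
\sqrt{n}\left(\frac{1}{\sqrt{n_{\min}}}+\frac{\sqrt{n_{\max}}}{p}\right)\;\asymp\;\sqrt{\frac{n_{\max}}{n_{\min}}}+\frac{n_{\max}}{p},
\end{equation*}
whereas the lemma asserts $1+n_{\max}/p$. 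The factor $\sqrt{n_{\max}/n_{\min}}$ is not $O(1)$ under the paper's standing assumption $n_s=\mathcal O(n_{\min}^2)$, which permits severely unbalanced studies, so your bound is strictly weaker than the one you are asked to prove. The reason is that global Cauchy--Schwarz mismatches the scales: the $\sqrt{n}$ from $\|\mathbf{e}^{(j)}\|$ multiplies a term of size $1/\sqrt{n_{\min}}$ coming from the smallest study. The paper instead exploits the block structure one more time, writing $\mathbf{e}^{(j)\top}\mathbf{\Delta}\mathbf{y}^{(j)}=\sum_{s=1}^{S}\mathbf{e}_s^{(j)\top}\bigl(\mathbf{U}_{0s}^\perp\mathbf{U}_{0s}^{\perp\top}-\mathbf{U}_s^\perp\mathbf{U}_s^{\perp\top}\bigr)\mathbf{y}_s^{(j)}$ and applying Cauchy--Schwarz within each study, so that the two $\sqrt{n_s}$ factors from $\|\mathbf{e}_s^{(j)}\|$ and $\|\mathbf{y}_s^{(j)}\|$ cancel exactly against the $1/n_s$ in the projection bound, giving $1+n_s/p$ per block. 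Replacing your global Cauchy--Schwarz step with this block-wise pairing closes the gap.
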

\begin{proof}[Proof of Lemma \ref{lemma:Delta_y_j}]
First, note that
    \begin{equation*}
    \mathbf{\Delta} y^{(j)} = \begin{bmatrix}
        (\mathbf{U}_{01}^\perp \mathbf{U}_{01}^{\perp \top}- \mathbf{U}_1^\perp \mathbf{U}_1^{\perp \top}) \mathbf{y}_1^{(j)}\\
        \vdots \\
                (\mathbf{U}_{0S}^\perp \mathbf{U}_{0S}^{\perp \top}- \mathbf{U}_S^\perp \mathbf{U}_S^{\perp \top}) \mathbf{y}_S^{(j)}
    \end{bmatrix}
\end{equation*}
where $\mathbf{y}_s^{(j)}$ denotes the $j$-th column of $\mathbf{Y}_s$. 
Note that $|| (\mathbf{U}_{0s}^\perp \mathbf{U}_{0s}^{\perp \top}- \mathbf{U}_s^\perp \mathbf{U}_s^{\perp \top})\mathbf{y}_s^{(j)}|| \lesssim ||\mathbf{U}_{0s}^\perp \mathbf{U}_{0s}^{\perp \top}- \mathbf{U}_s^\perp \mathbf{U}_s^{\perp \top}|| ||\mathbf{y}_s^{(j)}|| \lesssim 
(C_{sr, n_s} C_{sc, p})^2  
\big(\frac{1}{n_s} + \frac{1}{p}\big)C_{\sigma} C_{sr, n_s}\sqrt{n_s} = C_{\sigma}C_{sr, n_s}^3 C_{sc, p}^2\big(\frac{1}{\sqrt{n_s}} + \frac{\sqrt{n_s}}{p}\big)$. Thus, $|| \mathbf{\Delta} y^{(j)} || \lesssim C_{\sigma}C_{r, \{n_s\}_{s}^{S}}^3 C_{c, p}^2\big( \frac{1}{\sqrt{n_{\min}}} + \frac{\sqrt{n_{\max}}}{p} \big)$, with probability at least $1- o(1)$, where $n_{\max} = \max_{s=1, \dots, S} n_s$ and $n_{\min}= \min_{s=1, \dots, S} n_s$, since $\max_{j=1, \dots, p} ||  \mathbf{y}_s^{(j)}|| \lesssim C_{\sigma} C_{sr, n_s}\sqrt{n_s}$ with probability at least $1- o(1)$, by Lemma \ref{lemma:y_j}.
For the second result, consider
\begin{equation*}
   \mathbf{\Delta Y} = \begin{bmatrix}
        (\mathbf{U}_{01}^\perp \mathbf{U}_{01}^{\perp \top}- \mathbf{U}_1^\perp \mathbf{U}_1^{\perp \top}) \mathbf Y_1\\
        \vdots \\
        (\mathbf{U}_{0S}^\perp \mathbf{U}_{0S}^{\perp \top}- \mathbf{U}_S^\perp \mathbf{U}_S^{\perp \top}) \mathbf Y_S
    \end{bmatrix}
\end{equation*}
with $$|| (\mathbf{U}_{0s}^\perp \mathbf{U}_{0s}^{\perp \top}- \mathbf{U}_s^\perp \mathbf{U}_s^{\perp \top})\mathbf{Y}_s|| \lesssim ||\mathbf{U}_{0s}^\perp \mathbf{U}_{0s}^{\perp \top}- \mathbf{U}_s^\perp \mathbf{U}_s^{\perp \top}|| ||\mathbf{Y}_s||
.$$ Applying the bounds in Proposition \ref{prop:U_s_outer} and Lemma \ref{lemma:Y_s} with $$||\mathbf{\Delta Y} || \lesssim S\max_{s=1, \dots, S}|| (\mathbf{U}_{0s}^\perp \mathbf{U}_{0s}^{\perp \top}- \mathbf{U}_s^\perp \mathbf{U}_s^{\perp \top})\mathbf{Y}_s|| 
$$ proves the result.
Finally, note $  \mathbf{e}^{(j)\top} \mathbf{\Delta} \mathbf y^{(j)} = \sum_{s=1}^S   \mathbf{e}_s^{(j)\top} (\mathbf{U}_{0s}^\perp \mathbf{U}_{0s}^{\perp \top}- \mathbf{U}_s^\perp \mathbf{U}_s^{\perp \top}) \mathbf{y}_s^{(j)} $, and $|\mathbf{e}_s^{(j)\top} (\mathbf{U}_{0s}^\perp \mathbf{U}_{0s}^{\perp \top}- \mathbf{U}_s^\perp \mathbf{U}_s^{\perp \top}) \mathbf{y}_s^{(j)}| \lesssim ||\mathbf{e}_s^{(j)}|| ||\mathbf{U}_{0s}^\perp \mathbf{U}_{0s}^{\perp \top}- \mathbf{U}_s^\perp \mathbf{U}_s^{\perp \top}|| ||\mathbf{y}_s^{(j)}||$. The bounds in Proposition \ref{prop:U_s_outer} and Lemma \ref{lemma:y_j} complete the proof. 

\end{proof}

\begin{lemma}\label{lemma:r_j}
    Let $\mathbf{r}^{(j)} =\mathbf{\Delta} y^{(j)}   - \mathbf{P}_0^\perp ( \mathbf{M}_{0} \mathbf{\lambda}_{0j} + \mathbf{e}^{(j)} )$, where $\Delta$ and $\mathbf{P}_0^\perp$ are defined in \eqref{eq:Delta} and \eqref{eq:P_0_perp} respectively.
    Then, under Assumption \ref{assumption:model}--\ref{assumption:sigma}, we have
    $$
    ||\mathbf{r}^{(j)}|| \lesssim C_{r, \{n_s\}_{s}^{S}}
\bigg(\frac{1}{\sqrt{n_{\min}}} + \frac{\sqrt{n_{\max}}}{p}\bigg) + \log p, 
    $$
     with probability at least $1 -o(1)$, where $n_{\max} = \max_{s=1, \dots, S} n_s$ and $n_{\min}= \min_{s=1, \dots, S} n_s$. 
\end{lemma}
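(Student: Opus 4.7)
The plan is a direct application of the triangle inequality combined with three bounds that are already established in the excerpt. By the definition of $\mathbf{r}^{(j)}$, I can split it as
\begin{equation*}
\|\mathbf{r}^{(j)}\| \le \|\mathbf{\Delta}\mathbf{y}^{(j)}\| + \|\mathbf{P}_0^\perp \mathbf{M}_0 \mathbf{\lambda}_{0j}\| + \|\mathbf{P}_0^\perp \mathbf{e}^{(j)}\|,
\end{equation*}
so the task reduces to controlling each of these three terms on a single high-probability event.

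For the first term, Lemma \ref{lemma:Delta_y_j} gives, with probability at least $1-o(1)$,
\begin{equation*}
\|\mathbf{\Delta}\mathbf{y}^{(j)}\| \lesssim C_{\sigma} C_{r,\{n_s\}_s^S}\Big(\tfrac{1}{\sqrt{n_{\min}}} + \tfrac{\sqrt{n_{\max}}}{p}\Big),
\end{equation*}
which matches the first piece of the claimed bound (the $C_\sigma$ factor is $O(1)$ by Assumption \ref{assumption:sigma}). For the second term, I will factor $\|\mathbf{P}_0^\perp \mathbf{M}_0 \mathbf{\lambda}_{0j}\| \le \|\mathbf{P}_0^\perp \mathbf{M}_0\|\cdot\|\mathbf{\lambda}_{0j}\|$, use Lemma \ref{lemma:Mt_P_M} to get $\|\mathbf{P}_0^\perp \mathbf{M}_0\|\asymp 1$ with probability at least $1-o(1)$, and Assumption \ref{assumption:Lambda} together with $\|\mathbf{\lambda}_{0j}\|\le \|\mathbf{\Lambda}_0\|_\infty \sqrt{k_0}\lesssim 1$ to conclude $\|\mathbf{P}_0^\perp \mathbf{M}_0 \mathbf{\lambda}_{0j}\|\lesssim 1$. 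For the third term, Lemma \ref{lemma:Mt_P_M} directly yields $\max_{j=1,\dots,p} \|\mathbf{P}_0^\perp \mathbf{e}^{(j)}\| \lesssim \log p$ with probability at least $1-o(1)$.

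Summing the three controls on the intersection of the three high-probability events (which still has probability $1-o(1)$ by a union bound over a fixed finite collection of events) delivers
\begin{equation*}
\|\mathbf{r}^{(j)}\| \lesssim C_{r,\{n_s\}_s^S}\Big(\tfrac{1}{\sqrt{n_{\min}}} + \tfrac{\sqrt{n_{\max}}}{p}\Big) + 1 + \log p \asymp C_{r,\{n_s\}_s^S}\Big(\tfrac{1}{\sqrt{n_{\min}}} + \tfrac{\sqrt{n_{\max}}}{p}\Big) + \log p,
\end{equation*}
which is the desired estimate. There is no real obstacle here: the lemma is essentially a bookkeeping corollary that packages previously established bounds on $\|\mathbf{\Delta}\mathbf{y}^{(j)}\|$, the projected signal $\mathbf{P}_0^\perp \mathbf{M}_0\mathbf{\lambda}_{0j}$, and the projected noise $\mathbf{P}_0^\perp \mathbf{e}^{(j)}$. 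The only mild subtlety worth mentioning is that the bound on $\|\mathbf{P}_0^\perp \mathbf{e}^{(j)}\|$ must hold uniformly in $j$ so that the resulting high-probability event is usable inside the CLT argument of Theorem \ref{thm:clt_mu_Lambda_outer_general}, which is exactly the uniform statement provided by Lemma \ref{lemma:Mt_P_M}.
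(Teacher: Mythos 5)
Your proof is correct and follows essentially the same route as the paper: the paper likewise applies the triangle inequality to split $\mathbf{r}^{(j)}$ into the three terms $\mathbf{P}_0^\perp \mathbf{M}_0\mathbf{\lambda}_{0j}$, $\mathbf{P}_0^\perp \mathbf{e}^{(j)}$, and $\mathbf{\Delta}\mathbf{y}^{(j)}$, and bounds them via Lemma \ref{lemma:Mt_P_M}, Lemma \ref{lemma:Delta_y_j}, and the entrywise bound on $\mathbf{\Lambda}_0$ from Assumption \ref{assumption:Lambda}. Your additional remark about uniformity in $j$ is a sensible clarification but does not change the argument.
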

\begin{proof}[Proof of Lemma \ref{lemma:r_j}]
    First, consider
    \begin{equation*}
         ||\mathbf{r}^{(j)}|| \leq || \mathbf{P}_0^\perp  \mathbf{M}_0^{\top}|| ||\mathbf{\lambda}_{0j}|| + || \mathbf{P}_0^\perp  \mathbf{e}^{(j)}|| + ||\Delta  \mathbf y^{c (j)}||.
    \end{equation*}
   Moreover, $|| \mathbf{P}_0^\perp  \mathbf{M}_0^{\top}|| \lesssim 1$, $|| \mathbf{P}_0^\perp  \mathbf{e}^{(j)}|| \lesssim \log p$ and $||\Delta y^{c (j)}|| \lesssim C_{\sigma} C_{r, \{n_s\}_{s}^{S}}
(\frac{1}{\sqrt{n_{\min}}} + \frac{\sqrt{n_{\max}}}{p}) $ with probability at least $1 -o(1)$ by Lemma \ref{lemma:Mt_P_M} and Lemma \ref{lemma:Delta_y_j} and 
  $ ||\mathbf{\lambda}_{0j}|| \lesssim  ||{\mathbf{\Lambda}}||_{\infty}\sqrt{k_0} \asymp 1$ by Assumption \ref{assumption:Lambda}. 
\end{proof}

\begin{lemma}\label{lemma:convergence}
    Suppose Assumptions \ref{assumption:model}--\ref{assumption:sigma} hold, $n_s = \mathcal O(n_{\min}^2)$, where $n_{\min} = \min_{s=1, \dots, S} n_s$, for all $s=1, \dots, S$, and $\sqrt{n}/p = o(1)$. Then, as $n_1, \dots, n_S, p \to \infty$, we have
    \begin{equation*}
        \begin{aligned}
        \mathbf{\mu}_{\lambda_j}^\top\mathbf{\mu}_{\lambda_{j'}}&\overset{pr}{\to} \mathbf{\lambda}_{0j}^\top \mathbf{\lambda}_{0j'},\\
            \frac{1}{n}||(\mathbf{I}_n - \mathbf{U}^c  \mathbf{U}^{c \top})\mathbf{\hat y}^{c (j)}||^2 &\overset{pr}{\to} \sigma_{0j}^2.
        \end{aligned}
    \end{equation*}
\end{lemma}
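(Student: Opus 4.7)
Both assertions will fall out of the same decomposition used to prove Theorem \ref{thm:clt_mu_Lambda_outer}. Rewriting
\begin{equation*}
\mathbf{\mu}_{\lambda_j}^\top \mathbf{\mu}_{\lambda_{j'}} = \frac{n}{(n + \tau_{\Lambda}^{-2})^2}\,\mathbf{\hat y}^{c (j)\top}\mathbf{U}^c\mathbf{U}^{c\top}\mathbf{\hat y}^{c (j')},
\end{equation*}
and expanding $\mathbf{\hat y}^{c (j)} = \mathbf{M}_0\mathbf{\lambda}_{0j} + \mathbf{e}^{(j)} + \mathbf{r}^{(j)}$ with $\mathbf{r}^{(j)}$ as in Lemma \ref{lemma:r_j}, I would first replace $\mathbf{U}^c\mathbf{U}^{c\top}$ by $\mathbf{U}_0^c\mathbf{U}_0^{c\top}$ using Proposition \ref{prop:recovery_M}; combined with the bound $\max_j ||\mathbf{\hat y}^{c (j)}|| \lesssim C_{\sigma} C_{r, \{n_s\}_{s}^{S}}\sqrt{n}$ from Lemma \ref{lemma:y_hat}, the replacement error becomes negligible after dividing by $n$. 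Because $\mathbf{U}_0^c$ spans the column space of $\mathbf{M}_0$, the leading contribution collapses to $\frac{1}{n}\mathbf{\lambda}_{0j}^\top \mathbf{M}_0^\top \mathbf{M}_0 \mathbf{\lambda}_{0j'}$, which converges in probability to $\mathbf{\lambda}_{0j}^\top \mathbf{\lambda}_{0j'}$ via the concentration bound $||\mathbf{M}_0^\top \mathbf{M}_0/n - \mathbf{I}_{k_0}|| \lesssim \sqrt{\log n/n}$ (Lemma E.1 of \citet{fable}).

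Every cross-term and remainder produced by the expansion is already controlled at scale $\sqrt{n}$ inside the proof of Theorem \ref{thm:clt_mu_Lambda_outer}; dividing by the extra $\sqrt{n}$ here makes each contribution $o_{pr}(1)$ under $\sqrt{n}/p = o(1)$ and $n_s = \mathcal O(n_{\min}^2)$. The relevant ingredients are all in Section \ref{subsec:additional_lemmas}: $||\mathbf{U}_0^{c\top}\mathbf{e}^{(j)}|| \lesssim \log p$ (Lemma \ref{lemma:U_c_e_j}), $||\mathbf{r}^{(j)}|| \lesssim \log p + C_{r, \{n_s\}_{s}^{S}}(1/\sqrt{n_{\min}} + \sqrt{n_{\max}}/p)$ (Lemma \ref{lemma:r_j}), $||\mathbf{M}_0^\top \mathbf{P}_0^\perp \mathbf{e}^{(j)}|| \lesssim \log p$ (Lemma \ref{lemma:Mt_P_M}), and $||\mathbf{M}_0|| \lesssim \sqrt{n}$. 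This yields the first claim.

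For the second statement, I would decompose
\begin{equation*}
\frac{1}{n}||(\mathbf{I}_n - \mathbf{U}^c\mathbf{U}^{c\top})\mathbf{\hat y}^{c (j)}||^2 = \frac{1}{n}||\mathbf{\hat y}^{c (j)}||^2 - \frac{1}{n}\mathbf{\hat y}^{c (j)\top}\mathbf{U}^c\mathbf{U}^{c\top}\mathbf{\hat y}^{c (j)}.
\end{equation*}
The second piece is the $j=j'$ case of the quadratic form handled above, and so converges in probability to $||\mathbf{\lambda}_{0j}||^2$. Expanding the first piece with the same decomposition, and absorbing every $\mathbf{r}^{(j)}$-contribution into an $o_{pr}(1)$ remainder by the same lemmas, yields $\frac{1}{n}||\mathbf{\hat y}^{c (j)}||^2 = \frac{1}{n}||\mathbf{M}_0\mathbf{\lambda}_{0j}||^2 + \frac{1}{n}||\mathbf{e}^{(j)}||^2 + o_{pr}(1)$; the first term converges to $||\mathbf{\lambda}_{0j}||^2$ as before, so that subtracting the two pieces cancels the signal-energy terms and leaves the limit of $\frac{1}{n}||\mathbf{e}^{(j)}||^2$. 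The only step that does not simply recycle material from the CLT proof is this noise law of large numbers, which I would justify by a Hanson--Wright-type inequality for the sub-Gaussian column $\mathbf{e}^{(j)}$, using the spectral-norm control of $\mathbf{\Sigma}_{0sr}$ and $\mathbf{\Sigma}_{0sc}$ in Assumption \ref{assumption:sigma}. The main subtlety is that under the weakest assumptions the concentration limit is actually the sample-size-weighted average $\sum_s (n_s/n)\sigma_{0scj}^2$, so identifying it with the single symbol $\sigma_{0j}^2$ implicitly invokes the homoscedasticity of Assumption \ref{assumption:homoscedasticity} (or interprets $\sigma_{0j}^2$ as that weighted average).
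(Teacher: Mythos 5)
Your proposal is correct and takes essentially the same route as the paper: the first claim is obtained by recycling the decomposition from the proof of Theorem \ref{thm:clt_mu_Lambda_outer} at scale $n$ rather than $\sqrt{n}$, and the second by splitting off $\frac{1}{n}\mathbf{\hat y}^{c(j)\top}\mathbf{U}^c\mathbf{U}^{c\top}\mathbf{\hat y}^{c(j)} \to ||\mathbf{\lambda}_{0j}||^2$ and showing $\frac{1}{n}||\mathbf{\hat y}^{c(j)}||^2 \to ||\mathbf{\lambda}_{0j}||^2 + \sigma_{0j}^2$ via a law of large numbers for the signal and noise energies (the paper organizes this through the exact quadratic form in $\mathbf{I}_n - \mathbf{P}_0^\perp$ with a $\chi^2$-plus-$\psi_n$ representation, but the substance is identical). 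Your closing caveat — that without Assumption \ref{assumption:homoscedasticity} the limit of the noise energy is the sample-size-weighted average $\sum_s (n_s/n)\sigma_{0scj}^2$ rather than a single $\sigma_{0j}^2$ — is a legitimate observation about a gap in the lemma's stated hypotheses, one the paper only acknowledges in the remark following Theorem \ref{thm:clt_mu_Lambda_outer}.
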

\begin{proof}[Proof of Lemma \ref{lemma:convergence}]
    The first result follows easily from the proof of Theorem \ref{thm:clt_mu_Lambda_outer}. For the second result, note 
        \begin{equation*}
            \begin{aligned}
                \mathbf{\hat y}^{c (j)\top}\mathbf{\hat y}^{c (j)} =& \left( \mathbf{M}_0 \mathbf{\lambda}_{j} +  \mathbf{e}^{(j)}\right)^\top \left(\mathbf{I}_n -  \mathbf{P}_0^\perp\right) \left( \mathbf{M}_0 \mathbf{\lambda}_{j} +  \mathbf{e}^{(j)}\right) +\mathbf{\hat y}^{c (j)\top}\mathbf{\Delta}^2\mathbf{\hat y}^{c (j)}\\
                &+ 2  \left( \mathbf{M}_0 \mathbf{\lambda}_{j} +  \mathbf{e}^{(j)}\right)^\top \left(\mathbf{I}_n -  \mathbf{P}_0^\perp\right) \mathbf{\Delta} y^{(j)}.
            \end{aligned}
        \end{equation*} 
        Moreover, $$\left( \mathbf{M}_0 \mathbf{\lambda}_{j} +  \mathbf{e}^{(j)}\right)^\top \left(\mathbf{I}_n -  \mathbf{P}_0^\perp\right) \left( \mathbf{M}_0 \mathbf{\lambda}_{j} +  \mathbf{e}^{(j)}\right) \sim \chi_n + \psi_n,$$
        with $\chi_n  \sim ||\mathbf{\lambda}_{0j}|| ^2\chi_{n - \sum_s q_s}^2$, $E[\psi_n] = (n -k) \sigma_{0j}^2$, $V(\psi_n) \asymp n$,  and $\chi_n$ being independent of  $\psi_n$,   while $  y^{c (j) \top}\mathbf{\Delta}^2\hat  y^{c (j)}  \lesssim \frac{1}{n} + \frac{1}{p}$ and $\left( \mathbf{M}_0 \mathbf{\lambda}_{j} +  \mathbf{e}^{(j)}\right)^\top \left(\mathbf{I}_n -  \mathbf{P}_0^\perp\right)\mathbf{\Delta} y^{ (j)} \asymp 1$ with probability at least $1 - o(1)$. Hence, $$\frac{1}{n} \left( \mathbf{M}_0 \mathbf{\lambda}_{j} +  \mathbf{e}^{(j)}\right)^\top \left(\mathbf{I}_n -  \mathbf{P}_0^\perp\right) \left( \mathbf{M}_0 \mathbf{\lambda}_{j} +  \mathbf{e}^{(j)}\right) \overset{pr}{\to} \sigma_{0j}^2 + ||\mathbf{\lambda}_{0j}||^2,$$ by the weak law of large numbers and $\frac{n}{n - \sum_s q_s} \asymp 1$, which in combination with\\ $ \frac{1}{n}\mathbf{\hat y}^{c (j)}  \mathbf{U}^c  \mathbf{U}^{c \top}\mathbf{\hat y}^{c (j')} \overset{pr}{\to} \mathbf{\lambda}_{0j}^\top \mathbf{\lambda}_{0j'}$, proves the result. 
\end{proof}

\begin{lemma}\label{lemma:convergence_Gamma_s}
    Under Assumptions \ref{assumption:model}--\ref{assumption:sigma}, $n_s = \mathcal O(n_{\min}^2)$, where $n_{\min} = \min_{s=1, \dots, S} n_s$, for all $s=1, \dots, S$, and $\sqrt{n}/p = o(1)$. Then, as $n_1, \dots, n_S, p \to \infty$, we have $$\mathbf{\mu}_{\gamma_{sj}}^\top\mathbf{\mu}_{\gamma_{sj'}}  \overset{pr}{\to} \mathbf{\gamma}_{0sj}^\top\mathbf{\gamma}_{0sj'}.$$
\end{lemma}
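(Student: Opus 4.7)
The plan is to mimic the decomposition used in the proof of Theorem~\ref{thm:clt_mu_Lambda_outer_general} and isolate a single leading term that converges to $\mathbf{\gamma}_{0sj}^\top\mathbf{\gamma}_{0sj'}$ by the weak law of large numbers, while all remaining cross-terms are shown to be $o_p(1)$. Concretely, I would start from
$$\mathbf{\mu}_{\gamma_{sj}} \;=\; \frac{\sqrt{n_s}}{n_s + \tau_{\Gamma_s}^{-2}}\, \mathbf{U}_s^{\perp\top}\bigl(\mathbf{y}_{sj} - \mathbf{\hat M}_s \mathbf{\mu}_{\lambda_j}\bigr),$$
write $\mathbf{y}_{sj} - \mathbf{\hat M}_s \mathbf{\mu}_{\lambda_j} = \mathbf{F}_{0s}\mathbf{\gamma}_{0sj} + \mathbf{e}_s^{(j)} + (\mathbf{M}_{0s}\mathbf{\lambda}_{0j} - \mathbf{\hat M}_s \mathbf{\mu}_{\lambda_j})$, and replace $\mathbf{U}_s^\perp \mathbf{U}_s^{\perp\top}$ by $\mathbf{U}_{0s}^\perp \mathbf{U}_{0s}^{\perp\top}$ at the cost of an operator-norm error governed by Proposition~\ref{prop:U_s_outer}.

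The leading contribution to $\mathbf{\mu}_{\gamma_{sj}}^\top\mathbf{\mu}_{\gamma_{sj'}}$ is
$$T_0 \;=\; \frac{n_s}{(n_s+\tau_{\Gamma_s}^{-2})^2}\, \mathbf{\gamma}_{0sj'}^\top \mathbf{F}_{0s}^\top\, \mathbf{U}_{0s}^{\perp} \mathbf{U}_{0s}^{\perp\top} \mathbf{F}_{0s}\,\mathbf{\gamma}_{0sj} \;=\; \frac{n_s}{(n_s+\tau_{\Gamma_s}^{-2})^2}\, \mathbf{\gamma}_{0sj'}^\top \mathbf{F}_{0s}^\top \mathbf{F}_{0s}\,\mathbf{\gamma}_{0sj},$$
since $\mathbf{U}_{0s}^\perp \mathbf{U}_{0s}^{\perp\top}$ is the orthogonal projection onto the column space of $\mathbf{F}_{0s}$ by construction. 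Because $n_s^2/(n_s + \tau_{\Gamma_s}^{-2})^2 \to 1$ and $n_s^{-1}\mathbf{F}_{0s}^\top\mathbf{F}_{0s} \overset{pr}{\to} \mathbf{I}_{q_s}$ by Corollary~E.1 of \citet{fable}, Slutsky's theorem yields $T_0 \overset{pr}{\to} \mathbf{\gamma}_{0sj}^\top\mathbf{\gamma}_{0sj'}$.

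What remains is to verify that every other term in the expansion is $o_p(1)$, and for this I would reuse the bookkeeping already carried out in the proof of Theorem~\ref{thm:clt_mu_Lambda_outer_general}: the mixed terms $\mathbf{\gamma}_{0sj'}^\top \mathbf{F}_{0s}^\top \mathbf{U}_{0s}^{\perp}\mathbf{U}_{0s}^{\perp\top}\mathbf{e}_s^{(j)}$ and $\mathbf{e}_s^{(j')\top}\mathbf{U}_{0s}^{\perp}\mathbf{U}_{0s}^{\perp\top}\mathbf{e}_s^{(j)}$ are of order $(\log p)/\sqrt{n_s}$ and $(\log^2 p)/n_s$, respectively, via Lemmas~\ref{lemma:Mt_P_M} and~\ref{lemma:U_c_F}; the contributions involving $\mathbf{M}_{0s}\mathbf{\lambda}_{0j} - \mathbf{\hat M}_s \mathbf{\mu}_{\lambda_j}$ are of order $(C_{r, \{n_s\}_{s}^{S}} C_{c, p})^2\big(1/\sqrt{n_s} + \sqrt{n_s}/p\big)$ through Proposition~\ref{prop:recovery_M} paired with Lemmas~\ref{lemma:E} and~\ref{lemma:Delta_y_j}; and the residual coming from swapping $\mathbf{U}_s^\perp\mathbf{U}_s^{\perp\top}$ for $\mathbf{U}_{0s}^\perp\mathbf{U}_{0s}^{\perp\top}$ is controlled by Proposition~\ref{prop:U_s_outer} combined with the bound $\|\mathbf{Y}_s\|\lesssim\sqrt{n_s p}$ of Lemma~\ref{lemma:Y_s}, giving a term of size $(C_{sr,n_s}C_{sc,p})^2(1 + p/n_s)$ after the $1/n_s$ scaling. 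The condition $\sqrt{n}/p = o(1)$ forces $\sqrt{n_s}/p \to 0$, and the polylogarithmic growth of $C_{r, \{n_s\}_{s}^{S}}$ and $C_{c, p}$ in Assumption~\ref{assumption:sigma} makes each of these remainders vanish. The only subtlety---and the main bookkeeping hurdle---is ensuring that the replacement of the empirical projection by the oracle projection $\mathbf{U}_{0s}^\perp \mathbf{U}_{0s}^{\perp\top}$ does not introduce a term that fails to vanish when multiplied by the $O(\sqrt{n_s p})$ norm of $\mathbf{y}_{sj} - \mathbf{\hat M}_s \mathbf{\mu}_{\lambda_j}$; the rate in Proposition~\ref{prop:U_s_outer} is exactly what is needed to absorb this growth.
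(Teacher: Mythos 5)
Your overall strategy is sound, and in fact more laborious than necessary: the paper proves this lemma in one line as an immediate corollary of Theorem~\ref{thm:clt_mu_Lambda_outer}, since $\sqrt{n_s}\left(\mathbf{\mu}_{\gamma_{sj}}^\top \mathbf{\mu}_{\gamma_{sj'}} - \mathbf{\gamma}_{0sj}^\top \mathbf{\gamma}_{0sj'}\right) \Longrightarrow N(0, \mathcal S_{0sjj'}^2)$ already forces $\mathbf{\mu}_{\gamma_{sj}}^\top \mathbf{\mu}_{\gamma_{sj'}} - \mathbf{\gamma}_{0sj}^\top \mathbf{\gamma}_{0sj'} = \mathcal O_{pr}(n_s^{-1/2})$, and there is no circularity because the proof of that CLT does not invoke this lemma. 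Re-deriving the decomposition from scratch is acceptable, and your identification of the leading term $T_0$ (including the identity $\mathbf{U}_{0s}^\perp \mathbf{U}_{0s}^{\perp\top}\mathbf{F}_{0s} = \mathbf{F}_{0s}$, which the paper also uses) and the LLN step are correct.

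However, there is a concrete error in your treatment of the projection-swap remainder. You bound
$\frac{1}{n_s}\left\|\mathbf{U}_s^{\perp}\mathbf{U}_s^{\perp\top} - \mathbf{U}_{0s}^{\perp}\mathbf{U}_{0s}^{\perp\top}\right\|\,\|\mathbf{Y}_s\|^2$ using the matrix-level bound $\|\mathbf{Y}_s\|\lesssim\sqrt{n_s p}$, arriving at $(C_{sr,n_s}C_{sc,p})^2\left(1 + p/n_s\right)$. This quantity does \emph{not} vanish: the constant term alone is bounded away from zero, and $p/n_s$ can diverge under the stated hypotheses (e.g.\ $p = n_s^2$ is compatible with $\sqrt{n}/p = o(1)$). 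That bound is the right one for the \emph{relative spectral-norm} statement in Theorem~\ref{thm:posterior_contraction_Lambda_outer}, where it is subsequently divided by $\|\mathbf{\Gamma}_{0s}\mathbf{\Gamma}_{0s}^\top\|\asymp p$, but it is too crude for the entrywise claim here. The fix is to work with column norms: with probability $1-o(1)$ one has $\|\mathbf{y}_{sj} - \mathbf{\hat M}_s\mathbf{\mu}_{\lambda_j}\|\lesssim\sqrt{n_s}$ (Lemma~\ref{lemma:y_j} together with the bound on $\|\mathbf{M}_{0s}\mathbf{\lambda}_{0j} - \mathbf{\hat M}_s\mathbf{\mu}_{\lambda_j}\|$), so the swap term is of order $\frac{1}{n_s}\cdot n_s\cdot(C_{sr,n_s}C_{sc,p})^2\left(\frac{1}{n_s}+\frac{1}{p}\right) = (C_{sr,n_s}C_{sc,p})^2\left(\frac{1}{n_s}+\frac{1}{p}\right)\to 0$, which is exactly how the paper handles this term in the proof of the CLT. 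With that correction your argument goes through; as written, the step fails.
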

\begin{proof}
    It follows from Theorem \ref{thm:clt_mu_Lambda_outer}.
\end{proof}

\begin{lemma}\label{lemma:sigma_j_tilde}
    Suppose Assumptions \ref{assumption:model}--\ref{assumption:sigma} hold. Then,
    \begin{equation*}
        \tilde \Pi \big(\max_{j=1, \dots, p} \tilde \sigma_j^2 \leq C_{\tilde \sigma}\big) = 1 -o(1)
    \end{equation*}
    with probability at least $1-o(1)$, where $C_{\tilde \sigma}<\infty$ is a finite constant. 
\end{lemma}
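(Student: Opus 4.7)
The plan is to exploit the explicit form of the posterior marginal $\tilde\sigma_j^2 \mid \mathbf{\hat y}^{c(j)} \sim IG(\gamma_n/2, \gamma_n \delta_j^2/2)$ and reduce the problem to two pieces: (i) a uniform bound on the sampling scales $\delta_j^2$ over $j$, and (ii) exponential concentration of the inverse-gamma measure around its mean for each $j$, followed by a union bound.

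\textbf{Step 1: Uniform control of $\{\delta_j^2\}_{j=1}^p$.} Substituting the expressions for $\mathbf{\mu}_{\lambda_j}$ and $\mathbf{K}$ from Section \ref{subsec:inference_Lambda} and using $\mathbf{\hat M}^\top \mathbf{\hat M} = n \mathbf{I}_{k_0}$, so $\mathbf{\hat M} \mathbf{K} \mathbf{\hat M}^\top = n(n+\tau_\Lambda^{-2})^{-1} \mathbf{U}^c \mathbf{U}^{c\top}$, a short algebraic manipulation gives
\begin{equation*}
\delta_j^2 \;=\; \frac{\nu_0 \sigma_0^2}{n+\nu_0} \;+\; \frac{1}{n+\nu_0}\Bigl\{ \|(\mathbf{I}_n - \mathbf{U}^c \mathbf{U}^{c\top})\mathbf{\hat y}^{c(j)}\|^2 \;+\; \tfrac{\tau_\Lambda^{-2}}{n + \tau_\Lambda^{-2}} \mathbf{\hat y}^{c(j)\top}\mathbf{U}^c\mathbf{U}^{c\top}\mathbf{\hat y}^{c(j)} \Bigr\}.
\end{equation*}
The first summand is $O(n^{-1})$, and the last piece inside the braces is $O(1)$ times $\tau_\Lambda^{-2}/(n+\tau_\Lambda^{-2}) = O(n^{-1})$, using $\|\mathbf{\hat y}^{c(j)}\|^2 \lesssim n$ uniformly in $j$ from Lemma \ref{lemma:y_hat}. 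Hence $\delta_j^2 = V_j + o_{pr}(1)$ uniformly in $j$, where $V_j = n^{-1}\|(\mathbf{I}_n - \mathbf{U}^c \mathbf{U}^{c\top})\mathbf{\hat y}^{c(j)}\|^2$ is the residual variance estimate of Lemma \ref{lemma:convergence}. I would reuse the decomposition of $V_j$ in the proof of that lemma, together with sub-Gaussian / $\chi^2$ tail bounds and a union bound over $j = 1, \dots, p$, to deduce that $\max_j V_j \lesssim C_\sigma^2$ with sampling probability at least $1-o(1)$; the uniform bound $\sigma_{0j}^2 \leq C_\sigma^2$ from Assumption \ref{assumption:sigma} is what allows the union bound to close. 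Let $A$ denote the resulting event, on which $\max_j \delta_j^2 \leq C_1$ for some finite $C_1$.

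\textbf{Step 2: Posterior tail bound and union bound.} Conditional on the data, $\tilde\sigma_j^2$ is inverse-gamma with shape $a = \gamma_n/2 \asymp n$ and scale $b_j = \gamma_n \delta_j^2/2 \asymp n$; its mean $b_j/(a-1)$ is within a constant of $\delta_j^2$, and the distribution has sub-exponential concentration at rate $n$. More precisely, writing $1/\tilde\sigma_j^2 \sim \mathrm{Gamma}(a, 1/b_j)$ and applying a Chernoff bound, one obtains
\begin{equation*}
\tilde\Pi\bigl(\tilde\sigma_j^2 > C_{\tilde\sigma} \,\big|\, \delta_j^2\bigr) \;\leq\; \exp(-c\, n)
\end{equation*}
for some constant $c = c(C_{\tilde\sigma}, C_1) > 0$, whenever $\delta_j^2 \leq C_1$ and $C_{\tilde\sigma}$ is chosen strictly larger than $C_1$. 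On $A$, a union bound then gives
\begin{equation*}
\tilde\Pi\bigl(\max_j \tilde\sigma_j^2 > C_{\tilde\sigma}\bigr) \;\leq\; \sum_{j=1}^p \tilde\Pi\bigl(\tilde\sigma_j^2 > C_{\tilde\sigma} \,\big|\, \delta_j^2\bigr) \;\leq\; p\,\exp(-c n) \;=\; o(1),
\end{equation*}
where the regime $\log p \ll n$, implicit under the standing assumptions on how $p$ and the $n_s$ jointly diverge, is invoked. Combining this with $P(A) \geq 1 - o(1)$ yields the claim.

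\textbf{Main obstacle.} The delicate part is the uniform bound $\max_j V_j \lesssim 1$ in Step 1: pointwise convergence $V_j \to \sigma_{0j}^2$ is already in Lemma \ref{lemma:convergence}, but upgrading it to a simultaneous bound over all $p$ coordinates requires exponential tails for the associated residual quadratic forms and a careful use of Assumptions \ref{assumption:distributions} and \ref{assumption:sigma}. Once this uniform sampling-side control is secured, the posterior-side concentration of the inverse-gamma is routine.
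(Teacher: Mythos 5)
Your proof is correct and follows essentially the same route as the paper's: bound $\max_j \delta_j^2$ by a constant using $\max_j\|\mathbf{\hat y}^{c (j)}\|^2 \lesssim n$ (Lemma \ref{lemma:y_hat}), then exploit concentration of the $IG(\mathbf{\gamma}_n/2,\mathbf{\gamma}_n\delta_j^2/2)$ posterior at rate $n$ together with a union bound over $j$ in the regime $\log p = o(n)$; the paper simply cites Theorem 5 of \citet{Zhang_Zou} with $v_n \asymp n$ for this last step where you give an explicit Gamma Chernoff bound. The uniform control of $V_j$ that you flag as the main obstacle is in fact immediate, since $V_j \leq n^{-1}\|\mathbf{\hat y}^{c (j)}\|^2 \lesssim 1$ uniformly in $j$ by Lemma \ref{lemma:y_hat} alone, with no need to revisit the decomposition from Lemma \ref{lemma:convergence}.
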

\begin{proof}[Proof of Lemma \ref{lemma:sigma_j_tilde}]
    Recall that $\tilde \sigma_j^2 \sim IG \big(\mathbf{\gamma}_n/2, \mathbf{\gamma}_n \delta_j^2/2\big)$, where $\mathbf{\gamma}_n = \nu_0 + n$ and $\delta_j^2 = \frac{1}{\mathbf{\gamma}_n} \big(\nu_0 \sigma_0^2  +\mathbf{\hat y}^{c (j) \top}\mathbf{\hat y}^{c (j)} -    \frac{1}{n + \tau_{\Lambda}^{-2}}\mathbf{\mu}_{\lambda_j} \mathbf{\mu}_{\lambda_j}\big).$
    Recall that $\mathbf{\hat y}^{c (j)} = \mathbf{M}_{0} \mathbf{\lambda}_{0j} +  \mathbf{e}^{(j)} + \mathbf{r}^{(j)}$, where $ \mathbf{r}^{(j)} = \Delta   y^{(j)}  - \mathbf{P}_0^\perp ( \mathbf{M}_{0} \mathbf{\lambda}_{0j} + \mathbf{e}^{(j)} )$, and $\Delta$ and $\mathbf{P}_0^\perp$ are defined in \eqref{eq:Delta} and \eqref{eq:P_0_perp}, respectively.
    Note that $|| \mathbf{M}_{0} \mathbf{\lambda}_{0j} +  \mathbf{e}^{(j)}||^2 = || \mathbf{M}_{0} \mathbf{\lambda}_{0j}||^2 + ||\mathbf{e}^{(j)}||^2$ and $  || \mathbf{M}_{0} \mathbf{\lambda}_{0j}||^2 \sim \sum_{s=1}^S  ||\mathbf{\lambda}_{0j}||^2 \chi_{n_s}^2 $. Applying Proposition 5.16  of \citet{vershynin_12}, we have $||\mathbf{\hat y}^{c (j)}||^2 \lesssim n$ with probability at least $1 - o(1)$. Moreover, $||\mathbf{\hat y}^{(j)}||^2 \lesssim || \mathbf{M}_{0} \mathbf{\lambda}_{0j} +  \mathbf{e}^{(j)}||^2 + ||\mathbf{r}^{(j)}||^2 \lesssim n$, since $||\mathbf{r}^{(j)}||^2 \lesssim 1$. Therefore, $\mathbf{\gamma}_s \delta_j^2 \lesssim ||\mathbf{\hat y}^{c (j)} ||^2 + n ||\mathbf{\mu}_{\lambda_j} ||^2 \lesssim n$, since $n ||\mathbf{\mu}_{\lambda_j}||^2 \lesssim ||\mathbf{\hat y}^{c (j)}||^2$ by 
    Theorem 5 of \citet{Zhang_Zou} with $v_n \asymp n$ concludes the proof. 
\end{proof}

\begin{lemma}\label{lemma:convergence_tilde_sigma}
     Suppose Assumptions \ref{assumption:model}--\ref{assumption:sigma} hold, $n_s = \mathcal O(n_{\min}^2)$, where $n_{\min} = \min_{s=1, \dots, S} n_s$, for all $s=1, \dots, S$, and $\sqrt{n}/p = o(1)$. Then, as $n_1, \dots, n_S, p \to \infty$, we have \begin{equation*}
       \max_{j=1, \dots, p} |   \tilde \sigma_j^2 - \sigma_{0j}| \lesssim \big(\frac{\log{p}}{n}\big)^{1/3} + \frac{1}{p}
\end{equation*}
with probability at least $1-o(1)$,  where $\tilde \sigma_j^2$ is a sample for the $j$-th error variance from $\tilde \Pi$.
\end{lemma}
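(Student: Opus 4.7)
The plan is to decompose
\begin{equation*}
    \tilde \sigma_j^2 - \sigma_{0j}^2 = \bigl(\tilde \sigma_j^2 - \delta_j^2\bigr) + \bigl(\delta_j^2 - \sigma_{0j}^2\bigr),
\end{equation*}
and control each piece uniformly over $j=1,\dots,p$. The first piece captures the posterior variability of $\tilde \sigma_j^2$ around the ``posterior scale'' $\delta_j^2$, the second piece captures how well $\delta_j^2$ estimates the true noise variance. Throughout I would condition on the high-probability event described in Lemma~\ref{lemma:sigma_j_tilde} and in the proof of Lemma~\ref{lemma:convergence}, so that $\delta_j^2 \lesssim 1$ uniformly in $j$.

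For the second piece, I would extend the pointwise argument used in Lemma~\ref{lemma:convergence} to a uniform one. Recall $\gamma_n \delta_j^2 = \nu_0\sigma_0^2 + \hat y^{c(j)\top}\hat y^{c(j)} - (n+\tau_\Lambda^{-2})^{-1} \mu_{\lambda_j}^\top \mathbf{K}^{-1}\mu_{\lambda_j}$, and write $\hat y^{c(j)} = (\mathbf I_n - \mathbf{P}_0^\perp)(\mathbf{M}_0 \lambda_{0j} + \mathbf{e}^{(j)}) + \mathbf{\Delta}\mathbf{y}^{(j)}$. The dominant contribution is $\mathbf{e}^{(j)\top}(\mathbf I_n - \mathbf{U}^c\mathbf{U}^{c\top})\mathbf{e}^{(j)}/n$, for which a Hanson--Wright inequality applied to the sub-Gaussian vector $\mathbf{e}^{(j)}$ together with a union bound over $j=1,\dots,p$ gives $\max_j |n^{-1}\mathbf{e}^{(j)\top}(\mathbf I_n - \mathbf P_0^\perp)\mathbf{e}^{(j)} - \sigma_{0j}^2| \lesssim \sqrt{\log p /n}$. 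The remaining cross terms involve $\mathbf{M}_0\lambda_{0j}$, $\mathbf{P}_0^\perp \mathbf{e}^{(j)}$, $\mathbf{\Delta}\mathbf{y}^{(j)}$, and the shrinkage correction $\mu_{\lambda_j}^\top \mathbf K^{-1}\mu_{\lambda_j}/(n+\tau_\Lambda^{-2})$; these can be bounded uniformly using Lemmas~\ref{lemma:Mt_P_M}, \ref{lemma:U_c_e_j}, \ref{lemma:Delta_y_j}, and~\ref{lemma:r_j} together with elementary union bounds, contributing at most $\sqrt{\log p/n}+1/p$ after division by $\gamma_n \asymp n$.

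For the first piece, represent $\tilde \sigma_j^2 = \gamma_n\delta_j^2/(2 Z_j)$ with $Z_j \sim \operatorname{Gamma}(\gamma_n/2,1)$ a priori independent across $j$. Standard sub-exponential tail bounds for Gamma distributions give $\Pr(|Z_j - \gamma_n/2| > t) \le 2\exp\{-c\min(t^2/\gamma_n, t)\}$, so taking $t \asymp \sqrt{\gamma_n \log p}$ and unioning over $j$ yields $\max_j |Z_j - \gamma_n/2| \lesssim \sqrt{n\log p}$ with $\tilde \Pi$-probability $1-o(1)$. Dividing through, and using $\delta_j^2 \lesssim 1$ uniformly, produces $\max_j|\tilde \sigma_j^2 - \delta_j^2| \lesssim \sqrt{\log p/n}$.

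Combining the two bounds yields $\max_j |\tilde \sigma_j^2 - \sigma_{0j}^2| \lesssim \sqrt{\log p/n}+1/p$, which is stronger than (hence implies) the stated $(\log p/n)^{1/3}+1/p$ rate. The main obstacle is the uniform-in-$j$ control of the cross terms in $\delta_j^2 - \sigma_{0j}^2$: the dependence induced by the spectral projections $\mathbf U^c \mathbf U^{c\top}$ and $\mathbf{P}_0^\perp$ on all $p$ columns of $\mathbf{E}$ means one cannot simply union bound over $j$ without first passing to $\mathbf P_0^\perp$ (which is independent of column $j$) and then absorbing the resulting spectral-approximation error via Proposition~\ref{prop:recovery_M}.
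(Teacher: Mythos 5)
Your proposal is correct in substance and, for the deterministic/cross-term part, follows essentially the same route as the paper: both arguments rest on the representation $\hat{\mathbf{y}}^{c(j)} = (\mathbf{I}_n - \mathbf{P}_0^\perp)(\mathbf{M}_0\mathbf{\lambda}_{0j} + \mathbf{e}^{(j)}) + \mathbf{\Delta}\mathbf{y}^{(j)}$, isolate a dominant quadratic form in $\mathbf{e}^{(j)}$ against a factor-dependent (hence $\mathbf{E}$-independent) projection, and absorb the data-dependent projection $\mathbf{U}^c\mathbf{U}^{c\top}$ via Proposition \ref{prop:recovery_M} together with the uniform bounds of Lemmas \ref{lemma:Mt_P_M}, \ref{lemma:U_c_e_j}, \ref{lemma:Delta_y_j} and \ref{lemma:r_j} --- this is exactly the $\delta_j^2 = C_j^2/n + R_j$ decomposition with $\max_j|R_j|\lesssim 1/n + 1/p$ in the paper's proof (note the relevant oracle projection for the main term is $\mathbf{I}_n - \mathbf{U}_0^c\mathbf{U}_0^{c\top}$ rather than $\mathbf{I}_n - \mathbf{P}_0^\perp$, a minor slip that changes nothing). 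Where you genuinely diverge is the concentration step: you apply Hanson--Wright plus a union bound to $C_j^2$ and sub-exponential Gamma tail bounds plus a union bound to the posterior draws $Z_j = \gamma_n\delta_j^2/(2\tilde\sigma_j^2)\sim\operatorname{Gamma}(\gamma_n/2,1)$ (which are indeed independent across $j$ under $\tilde\Pi$), obtaining $\sqrt{\log p/n}$ for both pieces, whereas the paper routes both through Lemma E.7 of \citet{fable}, which delivers only the cube-root rate $(\log p/n)^{1/3}$, and organizes the two pieces through the exact identity $\tilde\sigma_j^2 - \sigma_{0j}^2 = (1+2U_j/\gamma_n)^{-1}(\delta_j^2 - \sigma_{0j}^2 - 2U_j\sigma_{0j}^2/\gamma_n)$ rather than your additive split. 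Your route, carried out carefully, proves a strictly stronger statement that implies the lemma (since $(\log p/n)^{1/2}\le(\log p/n)^{1/3}$ in the relevant regime), at the cost of redoing the uniform concentration by hand; the paper's route trades sharpness for the convenience of citing an off-the-shelf lemma whose cube-root rate is what ultimately appears in the statement.
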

\begin{proof}[Proof of Lemma \ref{lemma:convergence_tilde_sigma}]
Recall that $\tilde \sigma_j^2 \sim IG(\mathbf{\gamma}_n/2, \mathbf{\gamma}_n \delta_j^2 /2)$. First consider 
\begin{equation*}
    \begin{aligned}
        \delta_j^2 = \frac{1}{\mathbf{\gamma}_n}\bigg\{\nu_0 \sigma_0^2 +\mathbf{\hat y}^{c(j)\top} \Big(\mathbf{I}_n - \frac{n}{n + \tau_{\Lambda}^{-2}} \mathbf{U}^c \mathbf{U}^{c \top} \Big)y^{c(j)} \bigg\}
    \end{aligned}
\end{equation*}
and recall from the proof of Proposition \ref{prop:recovery_M} that $\mathbf{\hat y}^{c (j)} = (\mathbf{I}_n - \mathbf{P}_0^\perp ) (\mathbf{M}_0 \mathbf{\lambda}_{0j} + \mathbf{e}^{(j)}) + \mathbf{\Delta} y^{(j)}$.
In the following, we show
\begin{equation}\label{eq:delta_j_representation}
    \delta_j^2 = \frac{C_j^2}{n} + R_j
\end{equation}
and $C_j^2 =  (\mathbf M_0 \mathbf{\lambda}_{0j} + \mathbf{e}^{(j)})^\top \big(\mathbf{I}_n -  \mathbf{U}_0^c \mathbf{U}_0^{c \top}\big) (\mathbf M_0 \mathbf{\lambda}_{0j} + \mathbf{e}^{(j)})$, with $E[C_j^2] = (n - k_0) \sigma_{0j}^2$ and $C_j^2$ admitting the representation as the sum of $n-k_0$ of squares of independent zero-mean sub-Gaussian random variables, 
and $\max_{j=1, \dots, p} |R_j| \lesssim \frac{1}{n} + \frac{1}{p}$, with probability at least $1-o(1)$,
where   
\begin{equation*}
\begin{aligned}
        R_j = & \frac{1}{\nu_n} \Bigg[\nu_0 \sigma_0^2 + \mathbf{e}^{(j)\top}(\mathbf{I}_n -  \mathbf{U}_0^c \mathbf{U}_0^{c \top}) \{-\mathbf{P}_0^\perp (\mathbf{M}_0\mathbf{\lambda}_{0j} + \mathbf{e}^{(j)}) + \mathbf{\Delta} y^{(j)}\}\\
        &+ \{-\mathbf{P}_0^\perp (\mathbf{M}_0\mathbf{\lambda}_{0j} + \mathbf{e}^{(j)}) + \mathbf{\Delta} y^{(j)}\}^\top(\mathbf{I}_n -  \mathbf{U}_0^c \mathbf{U}_0^{c \top})\mathbf{e}^{(j)}\\
        &+\mathbf{\hat y}^{c(j)\top} \big\{(\mathbf{U}_0^c \mathbf{U}_0^{c \top} -  \mathbf{U}^c \mathbf{U}^{c \top}) + \big(1 -\frac{n}{n + \tau_{\Lambda}^{-2}} \big)\mathbf{U}^c \mathbf{U}^{c \top}\big\}\mathbf{\hat y}^{c(j)} \Bigg] \\ & + \big(\frac{1}{\mathbf{\gamma}_n} - \frac{1}{n}\big) \mathbf{e}^{(j)\top} (\mathbf{I}_n -  \mathbf{U}_0^c \mathbf{U}_0^{c \top})\mathbf{e}^{(j)} 
    \end{aligned}.
\end{equation*}
Note 
\begin{equation*}
    \begin{aligned}
       \mathbf{\hat y}^{c(j)\top} \big(\mathbf{I}_n - \frac{n}{n + \tau_{\Lambda}^{-2}} \mathbf{U}^c \mathbf{U}^{c \top} \big)y^{c(j)}  = &\mathbf{\hat y}^{c(j)\top} (\mathbf{I}_n -  \mathbf{U}_0^c \mathbf{U}_0^{c \top})y^{c(j)} \\
        &+ \mathbf{\hat y}^{c(j)\top} \big[ (\mathbf{U}_0^c \mathbf{U}_0^{c \top} -  \mathbf{U}^c \mathbf{U}^{c \top}) + \big(1 -\frac{n}{n + \tau_{\Lambda}^{-2}} \big)\mathbf{U}^c \mathbf{U}^{c \top}\big]\mathbf{\hat y}^{c(j)}. 
    \end{aligned}
\end{equation*} 
First, we analyze $\mathbf{\hat y}^{c(j)\top} (\mathbf{I}_n -  \mathbf{U}_0^c \mathbf{U}_0^{c \top})\mathbf{\hat y}^{c(j)}$:
\begin{equation*}
    \begin{aligned}
       \mathbf{\hat y}^{c(j)\top} (\mathbf{I}_n -  \mathbf{U}_0^c \mathbf{U}_0^{c \top})\mathbf{\hat y}^{c(j)}  = & \mathbf{e}^{(j)\top} (\mathbf{I}_n -  \mathbf{U}_0^c \mathbf{U}_0^{c \top})\mathbf{e}^{(j)} \\
        &+ \mathbf{e}^{(j)\top}(\mathbf{I}_n -  \mathbf{U}_0^c \mathbf{U}_0^{c \top}) \{-\mathbf{P}_0^\perp (\mathbf{M}_0\mathbf{\lambda}_{0j} + \mathbf{e}^{(j)}) + \mathbf{\Delta} y^{(j)}\}\\
        &+ \{-\mathbf{P}_0^\perp (\mathbf{M}_0\mathbf{\lambda}_{0j} + \mathbf{e}^{(j)}) + \mathbf{\Delta} y^{(j)}\}^\top(\mathbf{I}_n -  \mathbf{U}_0^c \mathbf{U}_0^{c \top})\mathbf{e}^{(j)}
    \end{aligned}
\end{equation*}
since $(I - \mathbf{U}_0^c \mathbf{U}_0^{c \top})\mathbf{M}_0 =  0$. 
Note that $\mathbf{e}^{(j)\top} (\mathbf{I}_n -  \mathbf{U}_0^c \mathbf{U}_0^{c \top})\mathbf{e}^{(j)} \sim \sigma_{0j}^2 \chi_{n-k_0}^2$, while
\begin{equation*}
    \begin{aligned}
        \max_{j=1, \dots, p} |\mathbf{e}^{(j)\top}(\mathbf{I}_n -  \mathbf{U}_0^c \mathbf{U}_0^{c \top}) &\{-\mathbf{P}_0^\perp (\mathbf{M}_0\mathbf{\lambda}_{0j} + \mathbf{e}^{(j)}) + \mathbf{\Delta} y^{(j)}\} | \\ & \lesssim ||\mathbf{e}^{(j)\top} [\mathbf{P}_0^\perp (\mathbf{M}_0\mathbf{\lambda}_{0j} + \mathbf{e}^{(j)}) + \mathbf{\Delta} y^{(j)}]||\\
        & \quad  +|| \mathbf{e}^{(j)} \mathbf{U}_0^c || ||\mathbf{U}_0^{c \top}||  ||-\mathbf{P}_0^\perp (\mathbf{M}_0\mathbf{\lambda}_{0j} + \mathbf{e}^{(j)}) + \mathbf{\Delta} y^{(j)} ||\\
   & \lesssim  ||\mathbf{e}^{(j)\top}\mathbf{P}_0^\perp \mathbf{M}_0 || || \mathbf{\lambda}_{0j} ||  + ||\mathbf{e}^{(j)\top}\mathbf{P}_0^\perp \mathbf{e}^{(j)}|| + ||\mathbf{e}^{(j)\top} \mathbf{\Delta} y^{(j)} || \\
        & \quad + || \mathbf{e}^{(j)} \mathbf{U}_0^c || ||\mathbf{U}_0^{c \top}||  (||\mathbf{P}_0^\perp (\mathbf{M}_0\mathbf{\lambda}_{0j} + \mathbf{e}^{(j)}) ||+ || \mathbf{\Delta} y^{(j)} ||) \\
        & \lesssim \log p + (C_{r, \{n_s\}_{s}^{S}} C_{c, p}   )^4\frac{n_{\max}}{p}, 
    \end{aligned}
\end{equation*}
since, with probability at least $1-o(1)$, we have $$ \max_{j=1, \dots, p} || \mathbf{P}_0^\perp \mathbf{e}^{(j)}|| \lesssim \max_{j=1, \dots, p} ||\mathbf{e}^{(j)\top } \mathbf{P}_0^\perp \mathbf{M}_0|| \asymp \max_{j=1, \dots, p}||\mathbf{e}^{(j)\top } \mathbf{P}_0^\perp \mathbf{e}^{(j)}|| \lesssim  \log p $$
and $$|| \mathbf{P}_0^\perp \mathbf{M}_0 || \asymp 1,$$ by Lemma \ref{lemma:Mt_P_M}, and $\max_{j=1, \dots, p} |\mathbf{e}^{(j)\top }\mathbf{\Delta} y^{(j)}| \leq C_{\sigma}^2 C_{r, \{n_s\}_{s}^{S}}^4 C_{c, p}^2 \big(1 + \frac{n_{\max}}{p}\big) $, $\max_{j=1, \dots, p} ||\mathbf{\Delta} y^{(j)}|| \asymp C_{\sigma} C_{r, \{n_s\}_{s}^{S}} \big(\frac{1}{\sqrt{n_{\min}}} + \frac{\sqrt{n_{\max}}}{p}\big)$,  where $n_{\max} = \max_{s=1, \dots, S} n_s$ and $n_{\min}= \min_{s=1, \dots, S} n_s$, by Lemma \ref{lemma:Delta_y_j}, and $||  \mathbf{I}_n -  \mathbf{U}_0^c \mathbf{U}_0^{c \top}|| = 1$. Also, 
\begin{equation*}
    \begin{aligned}
   \mathbf{\hat y}^{c(j)\top} \big\{(\mathbf{U}_0^c \mathbf{U}_0^{c \top} -  \mathbf{U}^c \mathbf{U}^{c \top}) &+ \big(1 -\frac{n}{n + \tau_{\Lambda}^{-2}} \big)\mathbf{U}^c \mathbf{U}^{c \top}\big\}\mathbf{\hat y}^{c(j)} \\
    & \lesssim ||\mathbf{\hat y}^{c(j)} ||^2 \big(||\mathbf{U}_0^c \mathbf{U}_0^{c \top} -  \mathbf{U}^c \mathbf{U}^{c \top} || + \frac{\tau_{\Lambda}^2}{n + \tau_{\Lambda}^{-2}} ||\mathbf{U}^c \mathbf{U}^{c \top}|| \big)\\
    & \lesssim  (C_{r, \{n_s\}_{s}^{S}} C_{c, p} )^2  \big(1 + \frac{n}{p} \big),
    \end{aligned}
\end{equation*}
since, with probability at least $1-o(1)$, we have  $\max_{j=1, \dots, p} ||\mathbf{\hat y}^{c(j)} || \lesssim \sqrt{n}$ and $||\mathbf{U}_0^c \mathbf{U}_0^{c \top} -  \mathbf{U}^c \mathbf{U}^{c \top} || \lesssim (C_{r, \{n_s\}_{s}^{S}} C_{c, p})^2  \bigg(\frac{1}{n} +\frac{1}{p}\big))$ by Lemma \ref{lemma:y_hat} and Proposition \ref{prop:recovery_M}, and $||\mathbf{U}^c \mathbf{U}^{c \top}||= 1$. 
Finally, note that $\big(\frac{1}{\mathbf{\gamma}_n} - \frac{1}{n}\big) \mathbf{e}^{(j)\top} (\mathbf{I}_n -  \mathbf{U}_0^c \mathbf{U}_0^{c \top})\mathbf{e}^{(j)} \lesssim \frac{1}{n}$, with probability at least $1-o(1)$, since $\frac{1}{\mathbf{\gamma}_n} + \frac{1}{n} \asymp \frac{1}{n^2}$, and 
$\max_{j=1, \dots, p} |\mathbf{e}^{(j)\top} (\mathbf{I}_n -  \mathbf{U}_0^c \mathbf{U}_0^{c \top})\mathbf{e}^{(j)} - \sigma_{0j}^2|\lesssim \sqrt{n}$,  with probability at least $1-o(1)$, by Proposition 5.16 of \citet{vershynin_12}. The fact that $\mathbf{\gamma}_s \asymp n$ proves $\max_{j=1, \dots, p} |R_j| \lesssim \frac{1}{n} + \frac{1}{p}$, with probability at least $1-o(1)$.
Next, we show $\max_{j=1, \dots, p} |\tilde \sigma_j^2 - \sigma_{0j}^2|$ decreases to $0$ as $n$ and $p$ diverge, following the same steps as in the proof for Theorem 3.6 of \citet{fable}. 
Let $U_j = \frac{\mathbf{\gamma}_n \delta_j^2}{2} \tilde \sigma_j^{-2} - \frac{\mathbf{\gamma}_n}{2}$ and write $\tilde \sigma_j^2 - \sigma_{0j}^2$ as 
$$
\tilde \sigma_j^2 - \sigma_{0j} = \big(1 + \frac{2}{\mathbf{\gamma}_n} U_j\big)^{-1}\big(\delta_j^2 - \sigma_{0j}^2 - \frac{2}{\mathbf{\gamma}_n} U_j \sigma_{0j}^2\big).
$$
By Lemma E.7 in \citet{fable}, we have $\max_{j=1,\dots, p} |U_j|/\mathbf{\gamma}_n \lesssim (\log{p}/n)^{1/3}$ with probability at least $1-o(1)$. Thus, $\min_{j=1,\dots, p}\big|1 + \frac{2}{\mathbf{\gamma}_n} U_j\big| \gtrsim 1/2$ with probability at least $1-o(1)$. Hence,
\begin{equation*}
    \max_{j=1, \dots, p} |\tilde \sigma_j^2 - \sigma_{0j}^2| \lesssim  \max_{j=1, \dots, p} | \delta_j^2 - \sigma_{0j}^2| + \sigma_{0j}^2 \max_{j=1, \dots, p} \frac{|U_j|}{\mathbf{\gamma}_n}.
\end{equation*}
From the representation in \eqref{eq:delta_j_representation}, we have
\begin{equation*}
    \delta_j^2 - \sigma_{0j}^2 = \frac{\sigma_{0j}^2}{n} \bigg\{\frac{C_j}{\sigma_{0j}^2} - (n-k)\bigg\} - \frac{k}{n}\sigma_{0j}^2 + R_j,
\end{equation*}
    and Lemma E.7 of \citet{fable} gives
    \begin{equation*}
         \max_{j=1, \dots, p} \bigg| \frac{C_j}{\sigma_{0j}^2} - (n-k)  \bigg|
 \lesssim n \big(\frac{\log{p}}{n}\big)^{1/3},   \end{equation*}
 with probability at least $1-o(1)$. Combining all of the above, we get
\begin{equation*}
       \max_{j=1, \dots, p} |   \tilde \sigma_j^2 - \sigma_{0j}| \lesssim \big(\frac{\log{p}}{n}\big)^{1/3} + \frac{1}{p},
\end{equation*}
with probability at least $1-o(1)$.
\end{proof}

\section{Extensions to the heteroscedastic case}\label{sec:extensions}
We present an extension of the methodology presented in the paper to the heteroscedastic design. 
In such a case, the conditional conjugacy on the loading matrices seen in the homoscedastic is lost. Hence, instead of simply sampling sequentially ${\mathbf{\Lambda}}$ and then each ${\mathbf{\Gamma}}_s$ given $\mathbf{\Lambda}$, it becomes necessary to alternate between the shared and specific loadings within a Gibbs sampler. 
For the study-specific loadings and variances we specify conjugate Normal-Inverse Gamma priors
\begin{equation*}
\begin{aligned}
      \mathbf{\gamma}_{sj} &\mid \sigma_{sj}^2 \sim N_{q_s} \left(0, \tau_{\Gamma_s} \sigma_{sj}^2\right), \quad \sigma_{sj}^2 \sim IG \left( \frac{\nu_0}{2}, \frac{\nu_0 \sigma_0^2}{2}  \right)\\
\end{aligned}\quad j=1, \dots, p, \quad s=1, \dots, S
    \label{eq:prior_study_specific}
\end{equation*}
and 
assign semi-conjugate Normal inverse gamma priors
       \begin{equation*}
      \mathbf{\lambda}_{j} \mid \tau_{\mathbf{\lambda}_j}^2 \sim N_{k_0} \left(0, \tau_{\mathbf{\lambda}_j}^2\right), \quad \tau_{\mathbf{\lambda}_j}^2 \sim IG(\nu/2, \nu/2 \tau_0^2).\end{equation*}
After estimating the latent factors as in Section \ref{subsec:factor_pretraining}, the posterior computation for the loadings can be performed using the following Gibbs sampler.
\begin{enumerate}
    \item Given the previous sample for ${\mathbf{\Lambda}}$,  for $s=1, \dots, S$, define $\mathbf{\tilde Y_s} = \mathbf{Y}_s - \mathbf{M_s} {\mathbf{\Lambda}}^\top$ and sample $\{(\mathbf{\gamma}_{sj}, \sigma_{sj}^2)\}_{j=1}^p$ from their conditional $NIG$ posterior $(\mathbf{\gamma}_{sj}, \sigma_{sj}^2)$ as 
        \begin{equation*}
         (\mathbf{\gamma}_{sj}, \sigma_{sj}^2)\mid \mathbf{\tilde Y_s}\sim NIG \left(\mathbf{\mu}_{\gamma_{sj}}, \mathbf{K}_j, \nu_{n_s}, \nu_{n_s}\Delta_{sj}^2\right), \quad j=1, \dots, p,
        \end{equation*}
        where
        \begin{equation*}
            \begin{aligned}
                \mathbf{\mu}_{\gamma_{sj}} &= \left(\mathbf{\hat  F_s}^\top \mathbf{\hat  F_s} + \frac{\mathbf{I}_{q_s}}{\tau_{\mathbf{\gamma}_{sj}}^2}\right)^{-1}\mathbf{\hat  F_s}^\top\mathbf{\tilde y}_s^{(j)},\\
                \mathbf{K}_j &= \left(\mathbf{\hat  F_s}^\top \mathbf{\hat  F_s} + \frac{\mathbf{I}_{q_s}}{\tau_{\mathbf{\gamma}_{sj}}^2}\right)^{-1},\\
                \nu_{n_s} &= \nu_0 + n_s,\\
                \nu_{n_s}\Delta_{sj}^2 &= \nu_0\Delta_{0}^2 + \left(||\mathbf{\tilde y}_s^{(j)}||^2 - \mathbf{\mu}_{\gamma_{sj}}^{\top} K_j^{-1}\mathbf{\mu}_{\gamma_{sj}} \right),
            \end{aligned}
        \end{equation*}
        and $\mathbf{\tilde y}_s^{(j)}$ is the $j$-th column of $\mathbf{\tilde Y_s}$
            \item Given the samples for $\{{\mathbf{\Gamma}}_s\}_{s=1}^S$, sample $\{(\mathbf{\lambda}_j, \tau_{\mathbf{\lambda}_j}^2)\}_{j=1}^p$ from its full conditional. In particular, for $j=1, \dots, p$, define $ \mathbf{\bar Y} =\big[
                \mathbf{\bar Y}_s^\top ~  \cdots ~ \mathbf{\bar Y}_S^\top\big]^\top$, where $ \mathbf{\bar Y}_s = \mathbf{Y}_s - \mathbf{\hat  F_s} {\mathbf{\Gamma}}_s$, then
            \begin{enumerate}
                \item Sample $\mathbf{\lambda}_j$ from its full conditional
                \begin{equation*}
                    \mathbf{\lambda}_j \mid \bar y^{(j)}, \tau_{\mathbf{\lambda}_j}^2 \sim N_{k}(\mathbf{\mu}_{\lambda_j}, V_j),
                \end{equation*}
                where
                \begin{equation*}
                    \begin{aligned}
                        \mathbf{\mu}_{\lambda_j} &= V_j^{-1}\mathbf{\hat M}^\top {D_j}^{-1} \mathbf{\bar y}^{(j)},\\
                        V_j &= \left(\mathbf{\hat M}^\top {D_j}^{-1}\mathbf{\hat M} + \frac{\mathbf{I}_{k_0}}{\tau_{\mathbf{\lambda}_j}^2} \right)^{-1},\\
                       {D_j}& = \begin{bmatrix}
                           \sigma_{j1}^2 \mathbf{I}_{n_1} & 0 & \cdots & 0\\
                            0 &   \sigma_{j2}^2 \mathbf{I}_{n_2} & & 0 \\
                           \vdots\\
                             0  & \cdots &  0  & \sigma_{sj}^2 \mathbf{I}_{n_s}
                       \end{bmatrix},
                    \end{aligned}
                \end{equation*}
                and $\mathbf{\bar y}^{(j)}$ is the $j$-th column of $\mathbf{\bar Y}$.
                \item Sample $\tau_{\mathbf{\lambda}_j}^2$ from its full conditional
                \begin{equation*}
                \tau_{\mathbf{\lambda}_j}^2 \mid \mathbf{\bar y}^{(j)}, {\mathbf{\Lambda}} \sim  IG\left(\frac{\nu_0 + n}{2}, \frac{\nu_0 \tau_0^2  + ||\mathbf{\bar y}^{(j)} -\mathbf{\hat M} \mathbf{\lambda}_j||^2}{2}\right).
                \end{equation*}
            \end{enumerate}

\end{enumerate}
We expect this method to still suffer from mild undercoverage. A strategy similar to the one devised in Section \ref{subsec:rho} would be necessary to ensure the frequentist validity of coverage of credible intervals.


\section{Details about the estimation of the latent dimensions}
For the study $s$ and each $k$, we let $\mathbf{\hat M}_s^{(k)}$ be the matrix of left singular vectors of $\mathbf{Y}_s$ scaled by $\sqrt{n_s}$ and let $\mathbf{\hat \Lambda}_s^{(k)} = \frac{1}{n  + 1/\tau_s^2} \mathbf{Y}_s^\top \mathbf{\hat M}_s^{(k)}$ be the conditional posterior mean for the loadings, where $\tau_s$ is chosen as in Section \ref{subsec:hyperparam}. We estimate the residual error variances as $\sigma_{sj}^{(k)2} =  \frac{1}{n_s} ||(\mathbf{I}_{n_s} - \mathbf{\hat M}_s^{(k)}\mathbf{\hat M}_s^{(k)\top}/n_s)\mathbf{y}_s^{(j)} ||_F^2.$ Finally, we approximate the likelihood computed at the joint likelihood estimate  as 
\begin{equation*}
    \begin{aligned}
        l_{sk} \approx \hat l_{sk} &= l(\mathbf{\hat M}_s^{(k)}, \mathbf{\hat \Lambda}_s^{(k)},  \mathbf{\hat \Sigma}_s^{(k)}; \mathbf{Y}_s) \\
        &= - \frac{n_s}{2} \sum_{j=1}^p \log |\mathbf \Sigma_s^{(k)}| - \frac{1}{2} tr\big\{(\mathbf{Y}_s - \mathbf{\hat M}_s^{(k)} \mathbf{\hat \Lambda}_s^{(k) \top})\mathbf \Sigma_s^{(k)-1}  (\mathbf{Y}_s - \mathbf{\hat M}_s^{(k)} \mathbf{\hat \Lambda}_s^{(k)\top}) \big\},
    \end{aligned}
\end{equation*}
where $\mathbf \Sigma_s^{(k)} = \text{diag}(\sigma_{s1}^{(k)2}, \dots, \sigma_{sp}^{(k)2})$.

\section{Additional details about the numerical experiments} \label{sec:additional_details_experiments}
We report additional details about the numerical experiments presented in Section \ref{sec:numerical_experiments} of the main article. For \texttt{SVI}, we use a batch size of $20\%$ of the sample size. All other hyperparameters are set to default values. For all methods, we take $\mathbf{\hat \Lambda} \mathbf{\hat \Lambda}^\top$ and $\mathbf{\hat \Gamma}_s \mathbf{\hat \Gamma}_s^\top$ as estimators for $\mathbf{\Lambda} \mathbf{\Lambda}^\top$ and $\mathbf{\Gamma}_s \mathbf{\Gamma}_s^\top$ where $\mathbf{\hat \Lambda}$ and $\mathbf{\hat \Gamma}_s$ are the posterior mean for $\mathbf{\Lambda}$ and $\mathbf{\Gamma}_s$, respectively. 
We also considered alternative strategies, such as using posterior means for $\mathbf{\Lambda} \mathbf{\Lambda}^\top$ and $\mathbf{\Gamma}_s \mathbf{\Gamma}_s^\top$, but we noticed negligible differences. To estimate frequentist coverage of credible intervals, we obtained 500 posterior samples from the posterior distribution and computed the entry-wise equal-tail credible intervals. To implement the two variational inference algorithms from \citet{vi_bmsfa}, we use code available at \href{https://github.com/blhansen/VI-MSFA}{https://github.com/blhansen/VI-MSFA}. Code for implementing BLAST and replicating
results is available at \href{https://github.com/maurilorenzo/BLAST}{https://github.com/maurilorenzo/BLAST}. 
To compare computational costs, we performed the experiments on a Laptop with 11th Gen Intel(R) Core(TM) i7-1165G7 2.80 GHz and 16GB RAM.

\section{Additional experiments}\label{sec:additional_experiments}
{First, we report in Tables \ref{tab:accuracy_het} and \ref{tab:uq_het} the results on the heteroscedastic case of the first experiment presented in Section \ref{sec:numerical_experiments}. 
There are no significant differences in conclusions compared with the homoscedastic case. \texttt{BLAST} and \texttt{SPECTRAL} are the best performing methods, with  \texttt{BLAST} (\texttt{SPECTRAL}) having better accuracy when $p=500$ ($p=5000$ respectively). Table \ref{tab:uq_het} provides additional support for \texttt{BLAST} in terms of providing well-calibrated credible intervals. } 
\begin{table}
\caption{Comparison of the methods in terms of estimation accuracy with independently generated loading matrices in the heteroscedastic case. Estimation errors have been multiplied by $10^2$. We report the mean and standard error over 50 replications.   \label{tab:accuracy_het}}
\centering
{	\begin{tabular}{crrrr}
& \multicolumn{4}{c}{$p=500$, $n_s=500$}\\
Method & ${\mathbf{\Lambda}} {\mathbf{\Lambda}}^\top$ & $\mathbf{\Gamma}_s\mathbf{\Gamma}_s^\top$ & $\mathbf{M}_s$ & $\mathbf{F}_s$   \\
\hline
  \texttt{CAVI} & $52.78^{0.49}$ & $49.34^{0.79}$& $56.05^{0.57}$ & $32.88^{1.08}$ \\ 
  \texttt{SVI} & $78.29^{0.15}$ & $65.57^{0.74}$& $81.76^{20.10}$& $64.63^{1.66}$   \\ 
	 	 \texttt{SPECTRAL} & $17.34^{0.08}$ & $42.58^{0.30}$& $24.56^{0.13}$& $22.53^{0.10}$ \\ %
\texttt{BLAST} & $15.47^{0.06}$ & $37.25^{0.22}$& $23.46^{0.09}$& $22.25^{0.10}$\\ 
   & \multicolumn{4}{c}{$p=500$, $n_s=1000$}\\
Method & ${\mathbf{\Lambda}} {\mathbf{\Lambda}}^\top$ & $\mathbf{\Gamma}_s\mathbf{\Gamma}_s^\top$ & $\mathbf{M}_s$ & $\mathbf{F}_s$   \\ 
\hline
	\texttt{CAVI} & $57.12^{0.26}$ & $43.71^{0.65}$& $63.24^{0.33}$& $35.69^{0.78}$ \\ 
  		\texttt{SVI} & $72.22^{0.18}$ & $54.76^{0.73}$& $76.10^{0.12}$& $50.36^{1.10}$  \\ 
	       \texttt{SPECTRAL} & $12.84^{0.06}$ & $34.86^{0.29}$& $24.34^{0.12}$& $21.87^{0.09}$ \\ %
\texttt{BLAST}& $11.37^{0.04}$ & $26.89^{0.17}$& $22.22^{0.07}$& $21.88^{0.09}$\\ 
     & \multicolumn{4}{c}{$p=5000$, $n_s=500$}\\
Method & ${\mathbf{\Lambda}} {\mathbf{\Lambda}}^\top$ & $\mathbf{\Gamma}_s\mathbf{\Gamma}_s^\top$ & $\mathbf{M}_s$ & $\mathbf{F}_s$   \\
\hline
	\texttt{CAVI} & $77.68^{0.09}$ & $52.72^{0.38}$& $176.18^{2.14}$& $35.75^{0.46}$ \\ 
  		\texttt{SVI} &  $83.13^{0.05}$ & $58.60^{0.33}$& $118.79^{0.68}$& $43.09^{0.56}$\\ 
	       \texttt{SPECTRAL} & $15.08^{0.05}$ & $33.46^{0.13}$& $8.74^{0.04}$& $8.67^{0.08}$ \\ %
\texttt{BLAST}&$14.75^{0.04}$ & $35.87^{0.17}$& $12.36^{0.11}$& $8.67^{0.08}$\\ 
     & \multicolumn{4}{c}{$p=5000$, $n_s=1000$}\\
Method & ${\mathbf{\Lambda}} {\mathbf{\Lambda}}^\top$ & $\mathbf{\Gamma}_s\mathbf{\Gamma}_s^\top$ & $\mathbf{M}_s$ & $\mathbf{F}_s$   \\
\hline
	\texttt{CAVI} & $79.00^{0.09}$ & $39.70^{0.29}$& $166.60^{1.26}$& $25.49^{0.31}$ \\ 
   \texttt{SVI} & $80.41^{0.06}$ & $44.58^{0.32}$& $170.52^{0.61}$& $30.07^{0.39}$ \\ 
	       \texttt{SPECTRAL} & $10.66^{0.03}$ & $24.16^{0.08}$& $8.23^{0.03}$& $7.75^{0.05}$ \\ %
\texttt{BLAST} & $10.47^{0.03}$ & $25.90^{0.12}$& $9.94^{0.06}$& $7.75^{0.05}$\\ 
	\end{tabular}}
\end{table}

\begin{table}
\caption{Comparison of the methods in terms of frequentist coverage of $95\%$ credible intervals with independently generated loading matrices in the heteroscedastic case. Coverage level have been multiplied by $10^2$. We report the mean and standard error over 50 replications.   \label{tab:uq_het}}
\centering
{	\begin{tabular}{crrrrr}
& \multicolumn{5}{c}{$p=500$}\\
& \multicolumn{2}{c}{$n_s=500$} & & \multicolumn{2}{c}{$n_s=1000$}\\
Method & ${\mathbf{\Lambda}} {\mathbf{\Lambda}}^\top$ & $\mathbf{\Gamma}_s\mathbf{\Gamma}_s^\top$ & & ${\mathbf{\Lambda}} {\mathbf{\Lambda}}^\top$ & $\mathbf{\Gamma}_s\mathbf{\Gamma}_s^\top$   \\
\hline
	\texttt{CAVI} & $29.68^{0.52}$ & $75.25^{0.61}$& &$20.80^{0.34}$ & $64.70^{0.70}$ \\ 
  		\texttt{SVI} &$17.69^{0.29}$ & $62.22^{0.68}$& &$15.31^{0.23}$ & $56.02^{0.65}$  \\ 
	 \texttt{BLAST}& $93.08^{0.11}$ & $95.00^{0.10}$& &$92.40^{0.13}$ & $94.98^{0.08}$ \\ 

& \multicolumn{5}{c}{$p=5000$}\\
& \multicolumn{2}{c}{$n_s=500$} & & \multicolumn{2}{c}{$n_s=1000$}\\
Method & ${\mathbf{\Lambda}} {\mathbf{\Lambda}}^\top$ & $\mathbf{\Gamma}_s\mathbf{\Gamma}_s^\top$ & & ${\mathbf{\Lambda}} {\mathbf{\Lambda}}^\top$ & $\mathbf{\Gamma}_s\mathbf{\Gamma}_s^\top$   \\
\hline
	\texttt{CAVI} & $19.05^{0.25}$ & $72.61^{0.51}$& &$13.73^{0.23}$ & $70.94^{0.49}$  \\ 
  		\texttt{SVI} & $17.18^{0.24}$ & $70.84^{0.47}$& &$13.80^{0.23}$ & $67.45^{0.51}$   \\ 
	 \texttt{BLAST}& $94.32^{0.12}$ & $94.38^{0.10}$& &$94.59^{0.10}$ & $94.29^{0.07}$ \\ 
	\end{tabular}}
\end{table}

We also performed an additional experiment in the homescedastic case in a lower dimensional scenario.  In particular, we adopted the same setting as for the first experiment in Section \ref{sec:numerical_experiments} with $k_0 = 5$ and $q_s = 4$ and let $S=3$, $n_s=300$, and $p=200$. We also compare to the maximum likelihood estimate of the multi-study factor model obtained via the expectation maximization algorithm \citep{msfa} (\texttt{EM}, henceforth) and a Bayesian estimate performing Bayesian computation via a Gibbs sampler \citep{bmsfa} (\texttt{GIBBS}, henceforth), using the code available at \href{https://github.com/rdevito/MSFA}{https://github.com/rdevito/MSFA}.
For \texttt{EM}, we estimate latent factors via the Thomson's factor score, which corresponds to their conditional mean given the estimates for factor loadings and residuals variances, and we take $\mathbf{\hat \Lambda} \mathbf{\hat \Lambda}^\top$ and $\mathbf{\hat \Gamma}_s \mathbf{\hat \Gamma}_s^\top$ as estimates for $\mathbf{\Lambda} \mathbf{\Lambda}^\top$ and $\mathbf{\Gamma}_s \mathbf{\Gamma}_s^\top$ where $\mathbf{\hat \Lambda}$ and $\mathbf{\hat \Gamma}_s$ are the maximum likelihood estimates for $\mathbf{\Lambda}$ and $\mathbf{\Gamma}_s$ respectively. 
For \texttt{GIBBS}, we obtain point estimates for latent factors by averaging the posterior samples after alignment using the method of \citet{matchalign}, and estimate  $\mathbf{\Lambda} \mathbf{\Lambda}^\top$ and $\mathbf{\Gamma}_s \mathbf{\Gamma}_s^\top$ via their posterior mean. We run the Gibbs sampler for 20000 iterations, discarding the first 10000 as burn in, and retain one sample every 20, resulting in a total of 500 thinned posterior samples. 
For \texttt{BLAST}, we set $\tau = 0.2$.

Table \ref{tab:accuracy_additional} reports a comparison in terms of estimation accuracy and frequentist coverage of $95\%$ credible intervals. The maximum likelihood estimate (\texttt{EM}) and \texttt{SVI} achieve the best estimation accuracy. \texttt{BLAST} estimation accuracy is slightly worse but provides more precise uncertainty quantification than all competitors considered. 
Moreover, \texttt{BLAST} and \texttt{SPECTRAL} are much faster than all the alternatives (about $10$ times faster than \texttt{CAVI} and $150$ times faster than \texttt{SVI}) (see Table \ref{tab:time_additional}). Hence, even if \texttt{BLAST} was mainly motivated by high-dimensional applications, it still performs competitively in lower dimensional examples.

\begin{table}
\caption{Comparison of the methods in terms of estimation accuracy and frequentist coverage of $95\%$ credible intervals in the additional experiment. Values have been multiplied by $10^2$. We report the mean and standard error over 50 replications.   \label{tab:accuracy_additional} }
\centering
{	\begin{tabular}{crrrrrr}
\multirow{2}{*}{Method} &\multicolumn{4}{c}{Estimation Accuracy}&\multicolumn{2}{c}{Coverage}  \\
& ${\mathbf{\Lambda}} {\mathbf{\Lambda}}^\top$ & $\mathbf{\Gamma}_s\mathbf{\Gamma}_s^\top$ & $\mathbf{M}_s$ & $\mathbf{F}_s$  & ${\mathbf{\Lambda}} {\mathbf{\Lambda}}^\top$ & $\mathbf{\Gamma}_s\mathbf{\Gamma}_s^\top$  \\
\hline
  \texttt{EM} & $24.12^{0.19}$ & $46.99^{0.49}$& $30.64^{0.23}$ &$29.68^{0.24}$& NA & NA \\ 
  \texttt{GIBBS}  & $76.45^{0.14}$ & $77.76^{0.17}$ & $90.38^{0.61}$ & $86.31^{0.61}$ & $39.98^{0.54}$ & $53.99^{0.57}$  \\ 
  \texttt{CAVI} & $27.82^{0.27}$ & $48.22^{0.28}$ & $32.61^{0.24}$ & $31.54^{0.24}$ & $85.43^{0.32}$ & $84.03^{0.037}$ \\ 
  \texttt{SVI}  & $26.31^{0.31}$ & $46.22^{0.40}$ & $30.97^{0.23}$ & $30.89^{0.25}$ & $88.66^{0.31}$ & $87.51^{0.29}$ \\ 
	 \texttt{BLAST}  & $27.87^{0.25}$ & $47.14^{0.44}$ & $35.87^{0.20}$ & $35.92^{0.27}$ & $90.60^{0.16}$ & $95.49^{0.11}$  \\ 
     \texttt{SPECTRAL}  & $35.01^{0.45}$ & $62.15^{0.80}$ & $39.70^{0.44}$ & $35.87^{0.27}$ & NA & NA  \\ %
	\end{tabular}}
\end{table}

\begin{table}
\caption{Comparison of the methods in terms of running time in the additional experiment. We report the mean and standard error over 20 replications.   \label{tab:time_additional}}
\centering
{	\begin{tabular}{cr}
Method & Time (s)\\
\hline
  \texttt{EM} & $>10000$\\
  \texttt{GIBBS} & $4451.56^{190.23}$  \\ 
  \texttt{CAVI} & $43.36^{2.57}$ \\ 
  \texttt{SVI}  & $458.73^{24.35}$ \\
	 \texttt{BLAST} & $3.61^{0.49}$  \\
     \texttt{SPECTRAL} & $4.24^{0.58}$  \\
	\end{tabular}}
\end{table}

\section{Additional details and figures for the application to gene association among immune cells data}\label{sec:additional_details_application}
As pre-processing, we normalize the data by applying the inverse CDF transform estimating the marginal CDFs via their empirical counterpart and we standardize the variables to have zero mean and unit standard deviation. To apply the criterion in \eqref{eq:k_hat}, we set $\tau = 0.2$. 

We report some results from fitting our \texttt{BLAST} approach to data on $p=2846$ genes.  
Figure \ref{fig:emp_vs_est_corr} shows the reconstructed within-study correlation matrices (left and middle panels) and rescaled shared variances for 1000 genes. All elements whose 95\% credible intervals for the correlation included zero were set to zero in the plot. There are clear similarities in the correlation structure across the three studies. However, non-negligible differences in the strength of the signal are present, with study GSE109125, which is the study where all the methods had the lowest out-of-sample performance, having the weakest signal. A clear shared low-rank structure arises, which is present but less visible in the  reconstructed covariance for each study.

\begin{figure}
\begin{center}

\vfill
\begin{subfigure}[b]{\textwidth}
    \includegraphics[width=6.3in]{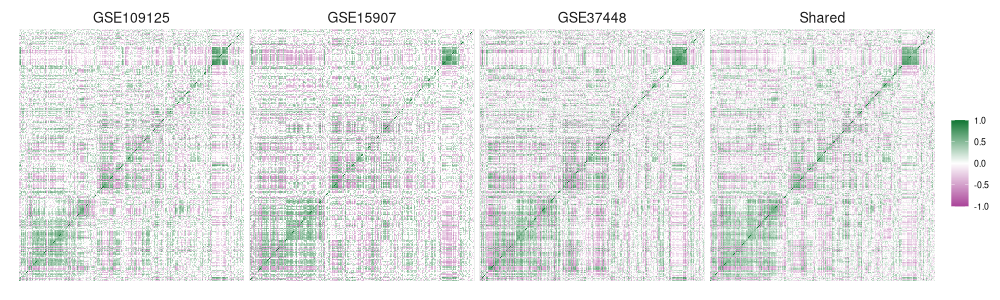}
\end{subfigure}
\end{center}
\vspace{-1em}
\caption{
Reconstructed 
within-study
correlation matrices (left and middle panels) and rescaled shared component (right panel) for 1000 genes. 
Elements for which the $95\%$ credible intervals included $0$ were set to $0$. \label{fig:emp_vs_est_corr}}
\end{figure}

Next, we study the correlation induced by the estimated shared covariance matrix $\mathbf\Lambda \mathbf\Lambda + \mathbf\Sigma$. Among the 1000 genes in Figure \ref{fig:emp_vs_est_corr}, we select $392$ ``hub'' genes that have absolute correlations greater than $0.5$ with at least 20 other genes. Focusing on these hub genes, we show the dependence network in Figure \ref{fig:network}, where the size of each gene (i.e. node) is proportional to the number of connections. We identify four main clusters of genes using the Louvain method \citep{cluster_algo}.
The first cluster (shown in green) is characterized by a strong representation of cell cycle and DNA replication genes, such as \textit{Dhfr}, \textit{Chaf1b}, \textit{Aspm}, and \textit{Aurkb}. 
In the second cluster (blue), we observe genes immune activation and inflammatory responses (e.g. \textit{Il18rap}, \textit{Tnfaip2} \citep{TNFAIP2}). 
The third (violet) contains genes related to TCR activation signaling and Lymphocyte proliferation, such as \textit{Zap70},  \textit{Cd247}, \textit{Cd28} and  \textit{Top2a}.
Finally, the fourth cluster (orange) has a distinct profile with an emphasis on metabolic enzymes alongside immune regulators and effector molecules (e.g. \textit{Mthfd1l} and \textit{Ahcy} \citep{Ducker2017OneCarbon}).

\begin{figure}
    \begin{center}
    \includegraphics[width=0.85\linewidth]{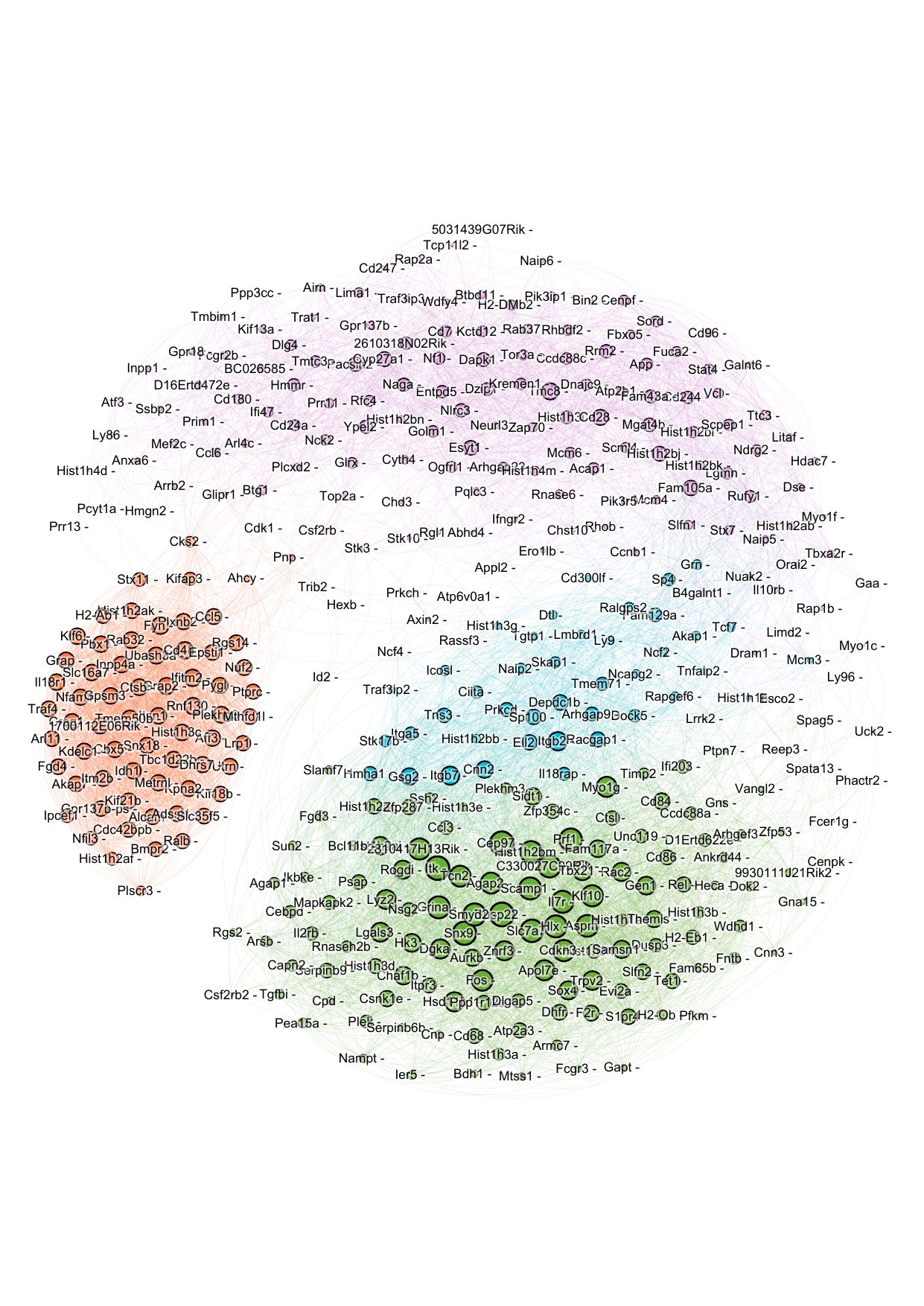}
    \end{center}
      \caption{ Common gene co-expression network obtained using \texttt{GEPHI} \citep{gephi} among 392 genes. Nodes (edges) represent genes (positive dependencies). Node size is proportional to the node degree. Nodes are divided into four main clusters based on their connections.}
    \label{fig:network}
\end{figure}

\end{document}